\newcommand{\tabitem}{~~\llap{\textbullet}~~}
\newtheorem{thm}{Theorem}[section]
\newtheorem{cor}{Corollary}[section]
\newtheorem{lemma}[thm]{Lemma}
\newtheorem{prop}[thm]{Proposition}
\newtheorem{assumption}{Assumption}
\newtheorem*{remark}{Remark}
\newcommand{\PP}{\mathbb{P}}
\newcommand{\Var}{\text{Var}}
\newcommand{\iid}{\overset{i.i.d.}{\sim}}
\DeclareMathOperator*{\argmin}{arg\,min}
\DeclareMathOperator*{\argmax}{arg\,max}
\definecolor{ps}{RGB}{0,0,200}
\newcommand{\mcl}{\mathcal}
\newcommand{\mcn}{\mathcal{N}}
\newcommand{\mbb}{\mathbb}
\newcommand{\eps}{\epsilon}
\newcommand{\df}{\textnormal{df}}
\newcommand{\op}{\textnormal{op}}
\newcommand{\rmL}{\textrm{L}}
\newcommand{\rmR}{\textrm{R}}
\newcommand{\ba}{\bm{a}}
\newcommand{\bb}{\bm{b}}
\newcommand{\bg}{\bm{g}}
\newcommand{\bh}{\bm{h}}
\newcommand{\bt}{\bm{t}}
\newcommand{\bu}{\bm{u}}
\newcommand{\bv}{\bm{v}}
\newcommand{\bw}{\bm{w}}
\newcommand{\bx}{\bm{x}}
\newcommand{\by}{\bm{y}}
\newcommand{\bz}{\bm{z}}
\newcommand{\bA}{\bm{A}}
\newcommand{\bB}{\bm{B}}
\newcommand{\bG}{\bm{G}}
\newcommand{\bI}{\bm{I}}
\newcommand{\bS}{\bm{S}}
\newcommand{\bT}{\bm{T}}
\newcommand{\bX}{\bm{X}}
\newcommand{\bZ}{\bm{Z}}
\newcommand{\bbeta}{\bm{\beta}}
\newcommand{\bep}{\bm{\epsilon}}
\newcommand{\bphi}{\bm{\phi}}
\newcommand{\bxi}{\bm{\xi}}
\newcommand{\bSigma}{\bm{\Sigma}}
\newcommand{\btheta}{\bm{\theta}}
\newcommand{\R}{\mbb{R}}
\newcommand{\E}{\mbb{E}}
\newcommand{\V}{V}
\newcommand{\convP}{\overset{P}{\rightarrow}}
\newcommand{\suplk}{\sup_{\lambda_k\in[\lambda_{\min},\lambda_{\max}]}}
\newcommand{\supl}{\sup_{\lambda\in[\lambda_{\min},\lambda_{\max}]}}
\newcommand{\supll}{\sup_{\lambda_{\textrm{L}}\in[\lambda_{\min},\lambda_{\max}]}}
\newcommand{\suplr}{\sup_{\lambda_{\textrm{R}}\in[\lambda_{\min},\lambda_{\max}]}}
\newcommand{\suplb}{\sup_{\lambda_{\textrm{L}},\lambda_{\textrm{R}}\in[\lambda_{\min},\lambda_{\max}]}}
\definecolor{ysr}{RGB}{0,200,200}
\begin{document}


\pagenumbering{gobble}
{
  \title{ HEDE: 
  Heritability estimation in high dimensions \\ by Ensembling Debiased Estimators}

  \author{Yanke Song, Xihong Lin, and Pragya Sur\thanks{Yanke Song is a graduate student ({\em  ysong@g.harvard.edu}) in the Department of Statistics at Harvard University. Xihong Lin  ({\em xlin@hsph.harvard.edu}) is Professor of Biostatistics at Harvard T.H. Chan School of Public Health and Professor of Statistics at Harvard University. Pragya Sur is Assistant Professor ({\em  
   pragya@fas.harvard.edu}) in the Department of Statistics at Harvard University.  Lin's work was supported by the National Institutes of Health grants R35-CA197449, R01-HL163560, U01-HG012064, U19-CA203654,  and P30-ES000002. Song and Sur's work was supported by NSF DMS-2113426 and a Dean's Competitive Fund for Promising Scholarship. Corresponding authors are Song and Sur.}\hspace{.2cm}\\
    Harvard University}
  \maketitle
} 

\bigskip
\begin{center}
 {\bf Abstract}
 \end{center}
 \vspace{-0.6in}
\begin{abstract}
Estimating heritability remains a significant challenge in statistical genetics. Diverse approaches have emerged over the years that are broadly categorized as either random effects or fixed effects heritability methods. In this work, we focus on the latter. We propose HEDE, an ensemble approach to estimate heritability or the signal-to-noise ratio in high-dimensional linear models where the sample size and the dimension grow proportionally. Our method ensembles post-processed versions of the debiased lasso and debiased ridge estimators, and incorporates a data-driven strategy for hyperparameter selection that significantly boosts estimation performance. We establish rigorous consistency guarantees that hold despite adaptive tuning. Extensive simulations demonstrate our method's superiority over existing state-of-the-art methods across various signal structures and genetic architectures, ranging from sparse to relatively dense and from evenly to unevenly distributed signals. Furthermore, we discuss the advantages of fixed effects heritability estimation compared to random effects estimation. Our theoretical guarantees hold for realistic genotype distributions observed in genetic studies, where genotypes typically take on discrete values and are often well-modeled by sub-Gaussian distributed random variables. We establish our theoretical results by deriving uniform bounds, built upon the convex Gaussian min-max theorem, and leveraging universality results. Finally, we showcase the efficacy of our approach in estimating height and BMI heritability using the UK Biobank.
\end{abstract}

\noindent%

{\it Keywords:}  Signal-to-noise ratio estimation, Debiased Lasso estimator, Debiased ridge estimator, Convex Gaussian min-max theorem, Ensemble estimation, Proportional asymptotics,  Universality. 

\pagenumbering{arabic}
\section{Introduction}\label{sec:introduction}

Accurate heritability estimation presents a significant challenge in statistical genetics. Distinguishing the contributions of genetic versus environmental factors is crucial for understanding how our genetic makeup influences the development of complex diseases.
Importantly, genetic research over the past decade has highlighted a significant discrepancy between heritability estimates from twin studies and those from Genome-Wide Association Studies (GWAS) \citep{manolio2009finding, yang2015genetic}
This discrepancy, known as the ``missing heritability problem", underscores the importance of developing reliable heritability estimation methods   to understand the extent of missing heritability.
In this paper, we propose an ensemble method for heritability estimation,   using high-dimensional Single Nucleotide Polymorphisms (SNPs) commonly collected in GWAS,  that exhibits superior performance across a wide variety of settings and significantly outperforms existing methods in practical scenarios of interest. We focus on the narrow-sense heritability \cite{yang2011gcta}, which refers to the proportion of phenotypic variance explained by additive genotypic effects. For conciseness, we henceforth refer to this simply as the heritability.

GWAS data typically contains more SNPs than individuals. This presents a
challenge when estimating heritability. Addressing this challenge necessitates modeling assumptions. Previous work in this area can be broadly categorized into  two classes based on these assumptions: random effects models and fixed effects models.

  Random effects models treat the underlying regression coefficients as random variables and have been widely used for heritability estimation. Arguably, the most popular approach in this vein is GCTA, introduced by \cite{yang2011gcta}. GCTA employs the classical restricted maximum likelihood method (REML) (also called genomic REML/GREML in their work) within a linear mixed effects model.   It assumes that the regression coefficients are 
i.i.d. draws from a  Gaussian distribution with zero mean and a constant variance. 
 Since the  genetic effect sizes/regression coefficients might depend on the design matrix (the genotypes), subsequent work provides various improvements that relax this assumption, including GREML-MS \citep{lee2013estimation}, GREML-LDMS \citep{yang2015genetic}, BOLT-LMM \citep{loh2015efficient}, among others.  Specifically, GREML-MS allows the signal variance to depend on minor allele frequencies (MAFs), while GREML-LDMS allows it to depende on linkage disequilibrium (LD) levels/quantiles.
Prior statistical literature has studied the theoretical properties of several random effects methods 
\citep{jiang2016high, bonnet2015heritability, dicker2016maximum, hu2022misspecification}. 

Despite this rich literature, a major limitation of random effects model based heritability estimation method is that its consistency often 
relies on the correct parametric assumptions of regression coefficients. Specifically, these methods assume that regression coefficients are either i.i.d, or depend on the design matrix in  specific ways.
When these assumptions are violated, random effects methods can produce significantly biased heritablity estimates. The issue is magnified when SNPs are in linkage disequilibrium (LD) or when there is additional heterogeneity across the  LD  blocks not captured by the random effects assumptions. We elaborate this issue further in Figure \ref{fig:attack}, where we demonstrate that  GREML, GREML-MS and GREML-LDMS all incur significant biases when their respective random effects assumptions are violated. 

These challenges associated with random effects methods have spurred a distinct strand of methods by positing regression coefficients to be fixed while treating the design matrix to be random. Referred to as fixed effects methods, these approaches often exhibit greater robustness to diverse realizations of the underlying signal without the need to make parametric assumptions on regression coefficients (Figure \ref{fig:attack}). Consequently, our focus lies on estimating the heritability within this fixed effects model framework. 

Several fixed effects methods have emerged for heritability estimation. However, a substantial body of the proposed work requires the underlying signal to be sparse \citep{fan2012variance, sun2012scaled, tony2020semisupervised, guo2019optimal}, and they are subject to loss of efficiency  when these sparsity assumptions are violated \citep{dicker2014variance,janson2017eigenprism,bayati2013estimating}, see Sections 5.2, 3.3, 2, respectively. 
The methods proposed by several authors \citep{chen2022statistical,dicker2014variance,janson2017eigenprism} operate under less stringent assumptions regarding the signal, but they depend on ridge-regression-analogous concepts. Therefore they are expected to have good efficiency when the signal is dense, and are subject to loss of efficiency when signals are sparse. We illustrate these issues further in Section \ref{section:simulations_fixed}.
In practice, the underlying genetic architecture for a given trait is unknown, i.e., the signal could be sparse or dense.  
Hence, it is desirable to develop a robust method  across a broad spectrum of signal regimes from sparse to dense. In addition, fixed effects methods often rely on assuming the covariate distribution to be Gaussian  \citep{bayati2013estimating, dicker2014variance, janson2017eigenprism, verzelen2018adaptive}. This assumption fails to capture genetic data where genotypes are discrete, taking values in ${0,1,2}$.

To address these issues,  
we propose a robust ensemble heritability estimation method HEDE (Heritability Estimation via Debiased Ensemble), which ensembles the debiased Lasso and ridge estimators, with degrees-of-freedom corrections \citep{bellec2019biasing, javanmard2014hypothesis}.
Debiased estimators, originally proposed to correct the bias for Lasso/ridge estimators \citep{zhang2014confidence, van2014asymptotically, javanmard2014confidence}, have been employed for tasks such as confidence interval construction and hypothesis testing, and have nice properties such as optimal testing properties \cite{javanmard2014confidence}.  Unlike these methods, HEDE 
utilizes debiasing for heritability estimation.
To develop HEDE, we first demonstrate that a reliable estimator for heritability can be formulated using any linear combination of these debiased estimators, in which we devise a consistent bias correction strategy. Subsequently, we refine the linear combination based on this bias correction strategy. This methodology enables the consistent estimation of heritability using any linear combination of the debiased Lasso and the debiased Ridge. HEDE then uses a data-driven approach to selecting an optimal ensemble from the array of possible linear combinations. Finally, we introduce an adaptive method for determining the underlying regularization parameters. We develop both adaptive strategies—ensemble selection and tuning parameter selection for the Lasso and the Ridge—by minimizing a certain mean square error. This results in enhanced statistical performance for heritability estimation.

Underpinning our method is a rigorous statistical theory that ensures consistency under minimal assumptions. Specifically, 
we operate under sub-Gaussian assumptions, which are consistent with a broad class of designs observed in GWAS that motivates this paper.
We show that the consistency of heritability estimation using HEDE guarantees remain intact despite employing adaptive strategies for selecting the tuning parameters and the debiased Lasso and ridge ensemble, all while accommodating sub-Gaussian designs.

We perform extensive simulation studies to evaluate the finite sample performance of HEDE compared with the existing methods in  a range of settings mimicking GWAS data.
We demonstrate  that HEDE overcomes the limitations in random effects methods, remaining unbiased across a wide variety of signal structures, specifically signals with different amplitudes across various combinations of MAF and LD levels, as well as LD blocks. 
our simulation also illustrates that HEDE consistently achieves the lowest mean square error when compared to popular fixed effects methods across a wide spectrum of sparse and dense signals, as well as low and high heritability values.  
We compare HEDE's performance with both random and fixed effects methods on the problems of estimating height and BMI heritability using the UK Biobank data.

The rest of this paper is organized as follows. Section \ref{section:setup} introduces the  problem setup. Section \ref{section:methodology} presents our HEDE while Section \ref{section:mathematical} discusses its theoretical properties. Section \ref{section:simulations} presents extensive simulation studies while Section \ref{section:real_data} complements these numerical investigations with a real-data application on the UK Biobank data \citep{sudlow2015uk}. Finally, Section \ref{section:discussion} concludes with discussions. 

\section{Problem Setup}\label{section:setup}
 In this section, we formally introduce our problem setup. We consider a high-dimensional regime where the covariate dimension grows with the sample size. To elaborate, we consider a sequence of problem instances $\{\by(n),\bX(n),\bbeta(n),\bep(n)\}_{n \geq 1}$ 
that satisfies a linear model
\begin{equation}\label{model:linear}
 \by(n) = \bX(n) \bbeta(n) + \bep(n),
\end{equation}
where $\by(n) =(y_1,\hdots,y_n)^{\top}\in \R^n$ represents observations of a real-valued phenotype of interest, such as height, BMI, blood pressure, for $n$ unrelated individuals from a population. The unknown regression coefficient vector, $\bbeta(n)=(\beta_1,\hdots,\beta_{p(n)})^{\top} \in \R^{p(n)}$, captures the effects of $p(n)$ SNPs and is assumed to be fixed parameters. The environmental noise, $\bep(n)=(\epsilon_1,\hdots,\epsilon_n) \in \R^n$, is independent of $\bX(n)$, with each entry also independent. Lastly, $\bX(n) \in \R^{n \times p(n)}$ signifies the matrix of observed genotypes with i.i.d.~rows $\bx_{i\bullet}$, which are normalized as follows
\begin{equation}
\label{eqn:X_normalization}
X_{ij} = \frac{G_{ij} - 2\bar{G}_j}{\sqrt{2{\bar{G}_j(1-\bar{G}_j)}}},
\end{equation}
where $G_{ij} = 0,1,2$ is the genotype containing the minor allele count (the number of copies of the less frequent allele) at SNP $j$ for individual $i$ and $\bar{G}_j=\sum_{i=1}^nG_{ij}/n$
is the minor allele frequency of SNP $j$ in the sample. 

We assume the covariates have sub-Gaussian distributions. Since $\bX$ is random and $\bbeta$ is deterministic, we have placed ourselves in the fixed effects framework. Moving forward, we suppress the dependence on $n$ when the meaning is clear from the context. Our parameter of interest is the heritability, which is the proportion of variance in $y_i$ explained by features in $\bm{x}_{i\bullet}$ defined as
\begin{equation}\label{def:heritability}
h^2 = 
\lim_{n,p \rightarrow\infty, \,\, p/n \rightarrow \delta}\frac{\Var(\bx_{i\bullet}^{\top}\bbeta)}{\Var(y_i)}.
\end{equation}

We emphasize the following remark regarding the aforementioned definition. We assume that $n$ and $p$ diverge with $p/n\rightarrow \delta > 0$. This setting--also known as the proportional asymptotics regime--has gained significant recent attention in high-dimensional statistics. A major advantage of this regime is that theoretical results derived under such asymptotics demonstrate attractive finite sample performance \citep{sur2019likelihood, sur2019modern,candes2020phase,zhao2022asymptotic,liang2022precise,jiang2022new}. 
Furthermore, prior research in heritability estimation \citep{jiang2016high, dicker2016maximum,janson2017eigenprism} utilized this framewrok to characterize theoretical properties of GREML and its variants. 
In practical situations, we observe a single value for $n$ and $p$, so we may substitute the corresponding ratio $p/n$ into our theory in place of $\delta$ to execute data analyses. Formal assumptions on the covariate and noise distributions, and the signal are deferred to Section \ref{section:mathematical}. 

\section{Methodology of HEDE: Heritability estimation Ensembling Debiased Estimators }\label{section:methodology}
  We begin by describing the overarching theme underlying our method. 
For simplicity, we discuss the case where $\bX$ has independent columns, deferring the extension to correlated columns to Section \ref{sec:cov}.
The denominator of $h^2$ in (\ref{def:heritability}) can be consistently estimated using the sample variance of $\by$. The numerator of $h^2$ in (\ref{def:heritability})   simplifies to $\|\bbeta\|_2^2$ in the setting of independent columns. Consequently, we seek to develop an accurate estimator for this norm. 

To this end, we turn to the debiased Lasso estimator and the debiased ridge estimator. Denote the Lasso and the ridge by $\hat\bbeta_{\rmL}$ and $\hat\bbeta_{\rmR}$ respectively. 
These penalized estimators are known to shrink estimated coefficients to $0$, therefore inducing bias. The existing literature provides debiased versions of these estimators, denoted as $\hat\bbeta_{\rmL}^d$ and $\hat\bbeta_{\rmR}^d$. These debiased estimators roughly satisfy the following property: $\hat\bbeta_{\rmL}^d \approx \bbeta + \sigma_{\rmL} Z_1$, $\hat\bbeta_{\rmR}^d \approx \bbeta +  \sigma_{\rmR} Z_2$ meaning that each entry of $\hat\bbeta_{\rmL}^d, \hat\bbeta_{\rmR}^d$ is approximately centered around the corresponding true signal coefficient with Gaussian fluctuations and variances given by $\sigma_L,\sigma_R$ respectively.  By taking norms on both sides, we obtain that $\|\hat\bbeta_k^d\|_2^2 - p\cdot \sigma_k^2 \approx \|\bbeta\|_2^2$, where $k=\rmL \,\, \text{or} \,\, \rmR$ depending on whether we consider the Lasso or the ridge. Thus, precise estimates for the variances $\sigma_k^2$'s produce accurate estimates  for $h^2$ starting from each debiased estimator.

This intuition extends to any linear combination of the debiased Lasso and ridge in the form of $\tilde\bbeta_C^d:=\alpha_{\textrm{L}} \hat \bbeta_{\textrm{L}}^d + (1 - \alpha_{\textrm{L}})\hat\bbeta_{\textrm{R}}^d$. A similar relationship holds for the norm: $\|\tilde\bbeta_C^d\|_2^2 -  p\cdot\tilde\tau_C^2 \approx   \|\bbeta \|_2^2$. We establish that for any $\alpha_L \in [0,1]$, we can derive a consistent estimator for the ensemble variance $\tilde\tau_C^2$. Among all possible ensemble choices with different ensembling parameter $\alpha_{\rmL}$ and different tuning parameters $\lambda_{\rmL},\lambda_{\rmR}$ for calculating the Lasso and the ridge, we adaptively choose the combination that minimizes the mean square error of the ensemble estimator.
This method, dubbed HEDE, harnesses the strengths of $L_1$ and $L_2$ regression and demonstrates strong practical performance, since we design our adaptive tuning process to particularly enhance statistical accuracy. We provide further elaboration on HEDE  in the subsequent subsections: Section \ref{debiasingfirstsection}, discusses the specific forms of the debiased estimators utilized in our framework; Section \ref{subsec:ensembling} describes our ensembling approach, and Section \ref{sec:ourmethod} discusses the final HEDE method with our adaptive tuning strategy.

\subsection{{Debiasing Regularized Estimators}}\label{debiasingfirstsection}
Define the Lasso and ridge estimators as follows
\begin{equation}\label{method:estimators_original}
\begin{aligned}
    \hat{\bbeta}_{\textrm{L}} := \argmin_{\bb} \frac{1}{2n} \|\by - \bX \bb\|_2^2 + \frac{\lambda_{\textrm{L}}}{\sqrt{n}} \|\bb\|_1, \quad 
    \hat{\bbeta}_{\textrm{R}} := \argmin_{\bb} \frac{1}{2n} \|\by - \bX \bb\|_2^2 + \frac{\lambda_{\textrm{R}}}{2} \|\bb\|_2^2,
\end{aligned}
\end{equation}
where $\lambda_{\textrm{L}}$ and $\lambda_{\textrm{R}}$ denote the respective tuning parameters. We remark that the different scaling in the regularization terms ensure comparable scales for the solutions in our high-dimensional regime (this is easy to see by noting that if each entry of $\bb$ satisfies  $b_j \sim 1/\sqrt{n}$, then $\|\bb\|_1 \sim \sqrt{n}$ whereas $\|\bb\|_2^2\sim 1$). These estimators incur a regularization bias. To tackle this issue, several authors proposed debiased versions of these estimators that remain asymptotically unbiased with Gaussian fluctuations \citep{zhang2014confidence,van2014asymptotically,javanmard2014confidence,bellec2019biasing,bellec2023debiasing}.  

Specifically, we work with the following   debiased estimators:
\begin{equation}\label{method:estimators_debiased}
\begin{aligned}
    \hat{\bbeta}_{\textrm{L}}^d = \hat{\bbeta}_{\textrm{L}} + \frac{\bX^\top (\by - \bX \hat{\bbeta}_{\textrm{L}})}{n - \|\hat{\bbeta}_{\textrm{L}}\|_0}, \, \quad \,
    \hat{\bbeta}_{\textrm{R}}^d = \hat{\bbeta}_{\textrm{R}} + \frac{\bX^\top (\by - \bX \hat{\bbeta}_{\textrm{R}})}{n - \text{Tr}((\frac{1}{n} \bX^\top \bX + \lambda_{\textrm{R}} \bm{I})^{-1}\frac{1}{n} \bX^\top \bX)}.
\end{aligned}
\end{equation}
Detailed in \cite{bellec2019biasing}, these versions are known as debiasing with degrees-of-freedom (DOF) correction, since the second term in the denominator is precisely the degrees-of-freedom of the regularized estimator. Prior debiasing theory \citep{bellec2019second,celentano2021cad,bellec2022observable} roughly established that under suitable conditions, for each coordinate $j$, the debiased estimators jointly satisfy (asymptotically)  

\begin{equation}\label{eqn:approx_debiased_true}
\begin{pmatrix}
\hat\beta_{\rmL,j}^d-\beta_j\\
\hat\beta_{\rmR,j}-\beta_j
\end{pmatrix} \approx \mcn\left(
\begin{pmatrix}
0\\
0
\end{pmatrix},
\begin{pmatrix}
\tau_{\rmL}^2 & \tau_{\rmL\rmR}\\
\tau_{\rmL\rmR} & \tau_{\rmR}^2\\
\end{pmatrix}\right),
\end{equation}
for appropriate variance and covariance parameters $\tau_\rmL^2,\tau_\rmR^2,\tau_{\rmL\rmR}$. We remark that this is a non-rigorous statement (hence the ``$\approx$" sign, and a precise statement is presented in Theorem \ref{main:thm:debiased_uniform}.
Additionally, these parameters can be consistently estimated using the following (\cite{celentano2021cad}):
\begin{equation}\label{main:eqn:tau_estimation}
\begin{aligned}
    \hat{\tau}_{\textrm{L}}^2 = \frac{ \|\by - \bX \hat{\bbeta}_{\textrm{L}}\|_2^2}{(n - \|\hat{\bbeta}_{\textrm{L}}\|_0)^2}, & \qquad
    \hat{\tau}_{\textrm{R}}^2 = \frac{ \|\by - \bX \hat{\bbeta}_{\textrm{R}}\|_2^2}{(n - \operatorname{Tr}((\frac{1}{n} \bX^\top \bX + \lambda_{\textrm{R}} \bm{I})^{-1}\frac{1}{n} \bX^\top \bX))^2},\\
    \hat{\tau}_{\rmL\rmR} &= \frac{ <\by - \bX \hat{\bbeta}_{\textrm{L}},\by - \bX \hat{\bbeta}_{\textrm{R}}>}{(n - \|\hat{\bbeta}_{\textrm{L}}\|_0)(n-\operatorname{Tr}((\frac{1}{n} \bX^\top \bX + \lambda_{\textrm{R}} \bm{I})^{-1}\frac{1}{n} \bX^\top \bX)}.
\end{aligned}
\end{equation}
This serves as the basis of our HEDE heritability estimator. 

\subsection{{ Ensembling and a Primitive Heritability Estimator}}\label{subsec:ensembling}
We seek to combine the strengths of $L_1$ and $L_2$ regression estimators. To achieve this, we consider linear combinations of the debiased Lasso and Ridge, taking the form  
\begin{equation}\label{eq:comb}
    \tilde\bbeta_C^d = \alpha_{\textrm{L}} \hat \bbeta_{\textrm{L}}^d + (1 - \alpha_{\textrm{L}})\hat\bbeta_{\textrm{R}}^d,
\end{equation}
where $\alpha_{\rmL} \in [0,1]$ is held fixed for the time being.  We remark that the well-known ElasticNet estimator \citep{zou2005regularization} also combines the strengths of the Lasso and the ridge. Though our theoretical framework allows potential extensions to the ElasticNet, we do not pursue this direction due to computational concerns, discussed further in the end of Section \ref{sec:cov}.

Combining \eqref{eqn:approx_debiased_true} and \eqref{main:eqn:tau_estimation}, we obtain that each coordinate of $\tilde\bbeta_C^d $ satisfies, roughly, 
\begin{equation}\label{eq:tildetauCdef}
\frac{\tilde\beta_{C,j}^d -\beta_j}{\sqrt{\tilde\tau_C^2}} \approx \mathcal{N}(0,1) \quad \text{with} \quad \tilde\tau_C^2 := \alpha_{\rmL}^2 \hat\tau_{\rmL}^2  + (1-\alpha_{\rmL})^2\hat\tau_{\rmR}^2   + 2 \alpha_{\rmL}(1-\alpha_{\rmL}) \hat\tau_{\rmL\rmR}.
\end{equation} 
Furthermore, the Gaussianity property \eqref{eq:tildetauCdef} holds on an ``average" sense as well, where the result can be effectively summed across all coordinates. We describe the precise sense next. Specifically,  for any Lipschitz function $\psi$, the difference  $\sum_{j=1}^p\psi(\tilde\beta_{C,j}^d)-\sum_{j=1}^p\psi(\beta_j)$ behaves roughly as $pE\{\psi(\sqrt{\tilde{\tau}_C^2} Z)\}$, where $Z \sim \mathcal{N}(0,1)$.   On setting $\psi(x) = x^2$ with $x$ being bounded, we obtain
\begin{equation}\label{eq:tauCdef}
\|\tilde\bbeta_C^d\|_2^2 - \|\bbeta \|_2^2 - p\cdot\tilde\tau_C^2  \approx 0,
\end{equation}
where from \eqref{eq:comb},
$\| \tilde\bbeta_C^d\|_2^2 = \alpha_{\rmL}^2\|\hat \bbeta_{\textrm{L}}^d \|_2^2 + (1-\alpha_{\rmL})^2\|\hat\bbeta_{\textrm{R}}^d\|_2^2+2 \alpha_{\rmL}(1-\alpha_{\rmL}) \langle \hat \bbeta_{\textrm{L}}^d , \hat\bbeta_{\textrm{R}}^d\rangle.$
Synthesizing these arguments, we observe that
\begin{equation}\label{eq:hgeneral}
\hat{h}^2_{\alpha_{\rmL},\lambda_{\rmL},\lambda_{\rmR}}
 = \min\left [1,\max\left\{0,\frac{\|\tilde\bbeta_C^d\|_2^2 - p\cdot\tilde{\tau}_C^2}{\widehat{\Var}(\by)}\right\}\right ] 
 \end{equation}
yields a consistent estimator of $h^2$. The clipping is crucial for ensuring that $h^2$ falls within the range [0,1], which is necessary considering $h^2$ represents a proportion. 
Given a fixed $\alpha\in[0,1]$ and $\lambda_{\rmL},\lambda_{\rmR}$ within a suitable range $[\lambda_\text{min},\lambda_\text{max}]$, we can establish that   $\hat{h}^2_{\alpha_{\rmL},\lambda_{\rmL},\lambda_{\rmR}}$ is consistent for $h^2$. 

For our final algorithm, we set a range $[t_{\min},t_{\max}]$ on the the degrees-of-freedom (the denominators in \eqref{method:estimators_debiased}), and filter out all $(\lambda_{\rmL},\lambda_{\rmR})$ values whose resulting degrees-of-freedom falls outside this range. Details on how we set the range is described in  Supplementary Materials \ref{sec:thresholding}.
This results in a class of consistent estimators for the heritability, each corresponding to a different choice of $\alpha_L,\lambda_L,\lambda_R$. Among these, we propose to select the estimator that minimizes the mean square error in estimating the signal $\bbeta$.  This leads to our adaptive tuning strategy, which we describe in the next subsection. 

\subsection{{The method HEDE}}\label{sec:ourmethod}
In order to develop our method, HEDE, we take advantage of the flexibility  provided by $\alpha_{\rmL},\lambda_{\rmL},\lambda_{\rmR}$ as adjustable hyperparameters. 
Initially, we fix $\lambda_{\rmL},\lambda_{\rmR}$ and adaptively select $\alpha_{\rmL}$. To this end, we minimize the estimated variability in the ensemble estimator, given by $\tilde{\tau}_C$ \eqref{eq:tildetauCdef}, for all possible choices of $\alpha_L$.
This results in a choice that consistently estimates each coordinate of $\bbeta$ with the least possible variability among all ensemble estimators of the form \eqref{eq:comb}.
As this holds true for every coordinate and $\tilde{\beta}_{C,j}$ is asymptotically unbiased for $\beta_j$, minimizing this variance concurrently minimizes the mean squared error. Formally,
minimizing \eqref{eq:tildetauCdef} as a function of $\alpha_{\rmL}$ leads to the following data-dependent choice:
\begin{equation}\label{main:thm_eqn:alpha_optimal}
    \hat\alpha_{\textrm{L}} = \max\left\{0,\min\left\{1,\frac{\hat\tau_{\textrm{R}}^2 - \hat\tau_{\rmL\rmR}}{\hat\tau_{\textrm{R}}^2 - 2\hat\tau_{\rmL\rmR}+\hat\tau_{\textrm{L}}^2}\right\}\right\}.
\end{equation}

Once again, the maximum and minimum operations ensure that $\hat\alpha_{\rmL} \in [0,1]$. Subsequently, by plugging in $\hat\alpha_{\rmL}$ into the ensembling formula \eqref{eq:comb}, we obtain the estimator
\begin{equation}\label{method:estimator_ensemble}
    \hat{\bbeta}_C^d = \hat\alpha_{\textrm{L}}\hat{\bbeta}_{\textrm{L}}^d + (1-\hat\alpha_{\textrm{L}})\hat{\bbeta}_{\textrm{R}}^d.
\end{equation}

For any fixed tuning parameter pair $(\lambda_{\rmL},\lambda_{\rmR})$, $ \hat{\bbeta}_C^d$ achieves the lowest mean squared error (MSE) for estimating each coordinate, among all possible ensembled estimators of the form \eqref{eq:comb}. Further, the mean square error of this estimator is given by 
\begin{equation}\label{main:thm_eqn:tau_alpha_optimal}
    \hat\tau_C^2 = \hat\alpha_{\textrm{L}}^2 \hat\tau_{\textrm{L}}^2 + 2\hat\alpha_{\textrm{L}}(1-\hat\alpha_{\textrm{L}})\hat\tau_{\rmL\rmR} + (1-\hat\alpha_{\textrm{L}})^2 \hat\tau_{\textrm{R}}^2.
\end{equation}

Note that $\hat\tau_C^2$ is $\tilde\tau_C^2$ evaluated at $\alpha_{\rmL}=\hat\alpha_{\rmL}$. Observe that both $\hat{\bbeta}_C^d $ and $\hat{\tau}^2_C$ are functions of the initially selected tuning parameter pair $(\lambda_{\rmL},\lambda_{\rmR})$. Therefore, to construct a competitive heritability estimator, we minimize $\hat{\tau}_C^2$ over a range of tuning parameter values, compute the corresponding $\hat{\bbeta}_C^d $, and construct a heritability estimator according to \eqref{eq:hgeneral} using this $\hat{\bbeta}_C^d $  and the minimum $\hat{\tau}_C^2$ (in place of $\tilde\bbeta_C^d$ and $\tilde{\tau}_C^2$, respectively). This yields our final estimator, HEDE.

\subsection{Correlated Designs}\label{sec:cov}
In the previous subsections, we discussed our method in the setting of independent covariates. In practical situations, covariates are typically correlated, i.e., SNPs are in LD in GWAS. Thus it is more realistic to assume that each row $\bx_{i\bullet}$ of our design matrix takes the form ${\bz}_{i\bullet}\bSigma^{1/2}$, where ${\bz}_{i\bullet}$ has independent entries and $\bSigma$ denotes the covariance of $\bx_{i\bullet}$.
Hence, if we can develop an accurate estimate $\hat{\bSigma
}$ for the covariance $\bSigma$, $\bx_{i\bullet}\hat{\bSigma}^{-1/2}$ should have roughly independent entries. Alternately, the whitened design matrix $\bX \hat{\bSigma}^{-1/2}$ should have roughly identity covariance and our methodology in the prior section would apply directly. 

Estimation of high-dimensional covariance matrices  poses challenges, and has been a rich area of research. Here, we consider a  special class of covariance matrices.  Our motivation stems from GWAS data, where chromosomes constitute approximately tens of thousands of independent LD blocks \citep{berisa2016approximately}, where the number of SNPs in each LD block is much smaller in size compared to the total number of SNPs. To reflect this, we assume  the population covariance matrix is block-diagonal. If the size of each block is negligible compared to the total sample size, the block-wise sample covariance matrix estimates the corresponding block of the population covariance matrix well.   We establish in Supplementary Materials, Proposition \ref{examples_covariance_estimation} that stitching together these block-wise sample covariances provides an accurate estimate of the population covariance matrix in the operator norm, i.e., $\|\hat\bSigma-\bSigma\|_{\op}\rightarrow 0$. The block diagonal approximation suffices for our purposes (see Section \ref{section:real_data}). 

Other scenarios exist where accurate estimates for the population covariance matrix is available, but we will not discuss these further. 
Our final procedure, therefore, first uses the observed data to calculate $\hat{\bSigma}$, then uses $\hat{\bSigma}$ to whiten the covariates, and finally applies HEDE to the whitened data.

\begin{table}
\caption{Heritability Estimation via Debiased Ensembles (HEDE)}\label{algo:algo}
\begin{tabular*}{\textwidth}{@{\extracolsep{\fill}}l}
\toprule
\multicolumn{1}{c}{\textbf{Input}: $\by,\bX,t$}\\
\multicolumn{1}{c}{\textbf{Output}: $\hat{h}^2$}\\
\midrule
0. If necessary, estimate $\hat\bSigma$ as block-diagonal and whiten the data.\\
1.~Fix a range $[t_{\min},t_{\max}]\subset (0,1)$. For all $\lambda_{\rmL},\lambda_{\rmR}$:\\
\tabitem Calculate $\hat\bbeta_{\rmL},\hat\bbeta_{\rmR}$ in \eqref{method:estimators_original}.\\
\tabitem If $\frac{\|\hat\bbeta_{\rmL}(\lambda_{\rmL})\|_0}{n} \notin [t_{\min},t_{\max}]$ or $\frac{1}{n}\operatorname{Tr}((\frac{1}{n} \bX^\top \bX + \lambda_{\textrm{R}} \bm{I})^{-1}\frac{1}{n} \bX^\top \bX) \notin [t_{\min},t_{\max}]$, drop $(\lambda_{\rmL},\lambda_{\rmR})$.\\
\tabitem Calculate $\hat\tau_{\rmL}^2,\hat\tau_{\rmR}^2,\hat\tau_{\rmL\rmR}$ in \eqref{main:eqn:tau_estimation}.\\
\tabitem Calculate $\hat\alpha_{\rmL}$ in \eqref{main:thm_eqn:alpha_optimal}, and $\hat\tau_C^2$ in \eqref{main:thm_eqn:tau_alpha_optimal}.\\
2.~Select $(\hat\lambda_{\rmL},\hat\lambda_{\rmR}):=\argmin_{\lambda_{\rmL},\lambda_{\rmR}}\hat\tau_C^2(\lambda_{\rmL},\lambda_{\rmR})$. Denote $\hat\tau_{C,\min}^2:=\hat\tau_C^2(\hat\lambda_{\rmL},\hat\lambda_{\rmR})$.\\
3.~Calculate $\hat\bbeta_{\rmL}^d,\hat\bbeta_{\rmR}^d$ from \eqref{method:estimators_debiased}, and $\hat\bbeta_{C}^d$ from \eqref{method:estimator_ensemble} corresponding to $\hat \lambda_L,\hat\lambda_R$.\\
4.~Calculate the heritability estimator using the formula\\
\vbox{\begin{equation}\label{method:estimator_FEH}
    \hat{h}^2_{\hat\alpha_{\rmL},\hat{\lambda}_{\rmL},\hat{\lambda}_{\rmR}} = \min\left\{1,\max\left\{0,\frac{\|\hat\bbeta_C^d\|_2^2 - p\cdot\hat{\tau}_{C,\min}^2}{\widehat{\Var}(y_i)}\right\}\right\},
\end{equation}}\\
where $\widehat{\Var}(y_i)$ denotes the sample variance of the outcome. \\
\bottomrule
\end{tabular*}
\end{table}

We algorithmically summarize HEDE in Table \ref{algo:algo}. 
As we can see for step $1$, assuming our hyperparameter grids have $m_{\rmL},m_{\rmR}$ different values of $\lambda_{\rmL},\lambda_{\rmR}$, respectively, we need $m_{\rmL}+m_{\rmR}$ \textit{glmnet} calls. As for the well known ElasticNet estimator, the same grids would require $m_{\rmL}m_{\rmR}$ \textit{glmnet} calls, making it much more expensive than the current ensembling method. We also remark that we did not make full effort into the computational/algorithmic optimization. With more advanced techniques such as \textit{snpnet} from \citep{li2022fast}, HEDE could potentially be scaled to very large dimensions.

\section{Theoretical Properties}\label{section:mathematical}
We turn to discuss the key theoretical properties of HEDE. We first focus on the case of independent covariates. Recall that throughout, we assume an asymptotic framework where $p,n \rightarrow \infty$ with  $\lim_{n\rightarrow \infty} p/n \rightarrow \delta \in (0,\infty)$. We work with the following assumptions on the covariate, signal, and noise. 
\begin{assumption}\label{assumptions}
\ 
\begin{enumerate}
    \item[(i)] Each entry $X_{ij}$ of $\bX$ is independent (but not necessarily identically distributed), has mean $0$ and variance $1$, and are uniformly sub-Gaussian.
    \item[(ii)] The norm of the signal $\bbeta$ satisfies: $0 \leq \|\bbeta\|_2^2 \convP \sigma_\beta^2 \leq \sigma_{\max}^2 < \infty$.
    \item[(iii)] The noise $\bep$ has independent entries (not necessarily identically distributed) with mean $0$ and variance $\sigma^2$ that is bounded ($\sigma^2 \leq \sigma_{\max}^2$), and are uniformly sub-Gaussian. Further, $\bep$ and $\bX$ are independent.
\end{enumerate}
\end{assumption}

It is worth noting that the uniform sub-Gaussianity in Assumption \ref{assumptions}(i) implies that the $\bar{G}_j$'s from \eqref{eqn:X_normalization} are bounded.  This is satisfied in GWAS, as $\bar{G}_j$ is minor allele frequency (MAF) \citep{psychiatric2009genomewide}.
The remaining Assumptions \ref{assumptions}(ii) and (iii) are all weak and natural to impose.
With Assumption \ref{assumptions} in mind, our primary goal is to establish that HEDE consistently estimates the heritability. The following theorem establishes this result under 
mild additional conditions on the minimum singular values of submatrices of $\bX$ (Assumption \ref{assumptionL}, Supplementary Materials \ref{additional_assumptions}). The proof is provided in Supplementary Materials \ref{subsec:proof:consistency}.

\begin{thm}\label{method:thm:consistency}
Recall $\hat{h}_{\alpha_{\rmL},\lambda_{\rmL},\lambda_{\rmR}}^2$ from \eqref{eq:hgeneral} and $h_{\hat\alpha_{\rmL},\hat\lambda_{\rmL},\hat\lambda_{\rmR}}^2$ from \eqref{method:estimator_FEH}. Under Assumptions \ref{assumptions} and \ref{assumptionL}, the following conclusions hold:
\begin{itemize}
\item[(i)] $\sup_{\lambda_{\textrm{L}},\lambda_{\textrm{R}} \in [\lambda_{\min},\lambda_{\max}],\alpha_{\rmL}\in[0,1]} \left|\hat{h}_{\alpha_{\rmL},\lambda_{\rmL},\lambda_{\rmR}}^2 - h^2\right|  \convP 0$
\item[(ii)] $\left|\hat h_{\hat\alpha_{\rmL},\hat\lambda_{\rmL},\hat\lambda_{\rmR}}^2 - h^2\right| \convP 0$.
\end{itemize}
\end{thm}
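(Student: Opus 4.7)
I prove part (i) by combining a uniform debiased Gaussian approximation over $(\lambda_\rmL,\lambda_\rmR)\in[\lambda_{\min},\lambda_{\max}]^2$ with uniform consistency of the variance estimators, then derive part (ii) from part (i) by noting that the data-driven $(\hat\alpha_\rmL,\hat\lambda_\rmL,\hat\lambda_\rmR)$ always lies in the compact search space $[0,1]\times[\lambda_{\min},\lambda_{\max}]^2$. For the denominator, Assumption \ref{assumptions} together with a sub-Gaussian LLN gives $\widehat{\Var}(\by) \convP \Var(y_i) = \sigma_\beta^2+\sigma^2 > 0$. For correlated designs, I first reduce to the independent-covariate setting via the whitening $\bX\hat\bSigma^{-1/2}$ with transformed signal $\hat\bSigma^{1/2}\bbeta$; Proposition \ref{examples_covariance_estimation} ($\|\hat\bSigma-\bSigma\|_\op\convP 0$) preserves both the asymptotic identity-covariance structure and the signal norm.

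\textbf{Uniform approximations (the heart of the proof).} The key tool is Theorem \ref{main:thm:debiased_uniform}, which delivers (roughly) that for a suitable class of test functions $\psi$,
\begin{equation*}
\suplb \left| \frac{1}{p}\sum_{j=1}^p \left[\psi(\hat\beta^d_{\rmL,j}, \hat\beta^d_{\rmR,j}, \beta_j) - \E\{\psi(\beta_j+Z_1, \beta_j+Z_2, \beta_j)\mid \beta_j\}\right] \right| \convP 0,
\end{equation*}
where $(Z_1,Z_2)$ is centered bivariate Gaussian with variances $\tau_\rmL^2,\tau_\rmR^2$ and covariance $\tau_{\rmL\rmR}$. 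I apply this with $\psi_1(u,v,w)=u^2$, $\psi_2(u,v,w)=v^2$, $\psi_3(u,v,w)=uv$ to obtain uniform limits for $\|\hat\bbeta_\rmL^d\|_2^2, \|\hat\bbeta_\rmR^d\|_2^2, \langle\hat\bbeta_\rmL^d,\hat\bbeta_\rmR^d\rangle$. Since these $\psi_i$ are not Lipschitz, I carry out a truncation at a large level $K$: bound $\max_j|\hat\beta_{\rmL,j}^d|$ and $\max_j|\hat\beta_{\rmR,j}^d|$ with high probability using sub-Gaussianity of $\bX$ and Assumption \ref{assumptionL}, apply the theorem to the Lipschitz truncations of $\psi_i$, then let $K\to\infty$ using Gaussian tails on the limit side. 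Uniform convergence of $\hat\tau_\rmL^2, \hat\tau_\rmR^2, \hat\tau_{\rmL\rmR}$ to $\tau_\rmL^2,\tau_\rmR^2,\tau_{\rmL\rmR}$ follows analogously from residual-based functionals and uniform control of the degrees-of-freedom terms.

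\textbf{Assembling and the main obstacle.} Both $\|\tilde\bbeta_C^d\|_2^2$ and $\tilde\tau_C^2$ are quadratic polynomials in $\alpha_\rmL$ whose coefficients are controlled uniformly in $(\lambda_\rmL,\lambda_\rmR)$ by the previous step, so
\begin{equation*}
\suplb\sup_{\alpha_\rmL\in[0,1]}\left|\|\tilde\bbeta_C^d\|_2^2 - p\tilde\tau_C^2 - \|\bbeta\|_2^2\right| \convP 0.
\end{equation*}
Dividing by $\widehat{\Var}(\by)$ and applying the $1$-Lipschitz clipping $x\mapsto\min\{1,\max\{0,x\}\}$ yields part (i). Part (ii) is immediate since $\hat\alpha_\rmL\in[0,1]$ by the clipping in \eqref{main:thm_eqn:alpha_optimal} and $(\hat\lambda_\rmL,\hat\lambda_\rmR)\in[\lambda_{\min},\lambda_{\max}]^2$ by the filtering step. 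The principal technical obstacle is the uniform debiased approximation itself: pointwise results at fixed tuning are well developed, but upgrading to uniformity over a continuum of $(\lambda_\rmL,\lambda_\rmR)$ jointly for Lasso and Ridge, while accommodating sub-Gaussian designs, requires either an $\epsilon$-net combined with Lipschitz-in-$\lambda$ stability of the regularized solutions, or a direct convex Gaussian min-max theorem plus universality argument that is naturally uniform in the tuning parameters---this is the chief technical contribution embedded in Theorem \ref{main:thm:debiased_uniform}.
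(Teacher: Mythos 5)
Your overall strategy matches the paper's closely: both rely on Theorem \ref{main:thm:debiased_uniform} (the uniform joint Gaussian approximation for the debiased Lasso and Ridge) combined with uniform consistency of the variance estimators (Theorem \ref{main:thm:tau_estimation}) to prove the intermediate uniform bound of Proposition \ref{main:cor:linear_combination}, then divide by the consistent denominator $\widehat{\Var}(\by)\convP\sigma_\beta^2+\sigma^2$, apply the $1$-Lipschitz clipping, and deduce (ii) from (i) using that $\hat\alpha_\rmL\in[0,1]$ by construction and $(\hat\lambda_\rmL,\hat\lambda_\rmR)\in[\lambda_{\min},\lambda_{\max}]^2$ by the filtering step. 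You also correctly isolate the uniform debiased approximation as the chief technical contribution.

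However, your proposed reduction from Lipschitz to quadratic functionals has a genuine gap. You take $\psi_i$ to be coordinate-wise quadratics and propose to truncate them, bounding $\max_j|\hat\beta_{\rmL,j}^d|$ in the process. The problem is a scaling mismatch: if the coordinate-wise truncation level is $K$, then the functional $(\ba,\bb,\bc)\mapsto\sum_j\psi_i^K(a_j,b_j,c_j)$ has $\ell_2$-Lipschitz constant of order $K\sqrt{p}$. Since $\|\bbeta\|_2$ is only bounded (Assumption \ref{assumptions}(ii)) and may concentrate on a handful of $O(1)$ coordinates, any truncation level $K$ small enough to keep the Lipschitz constant bounded (i.e.\ $K=O(1/\sqrt{p})$) would destroy the contribution of spiky $\beta_j$'s, and any $K=O(1)$ large enough to retain them gives a Lipschitz constant that diverges with $p$. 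Applying Theorem \ref{main:thm:debiased_uniform} then yields only an $o(1)\cdot K\sqrt{p}$ bound, which does not vanish. (Your display also carries a spurious $1/p$ normalization; the target quantities $\|\hat\bbeta_k^d\|_2^2$, $\langle\hat\bbeta_\rmL^d,\hat\bbeta_\rmR^d\rangle$, $p\tilde\tau_C^2$ are all $\Theta(1)$, not $\Theta(p)$, so the bound must be $o(1)$ on the unnormalized scale.) The paper sidesteps this via Lemma \ref{lemma:concentration_second_moment} (following \cite[Lemma G.1]{celentano2021cad}): one applies the theorem to the \emph{global} norm functionals $\|\ba\|_2$, $\|\ba\pm\bb\|_2$, which are $1$- and $\sqrt{2}$-Lipschitz with no dimension dependence, then passes from $\|\cdot\|_2$ to $\|\cdot\|_2^2$ using that $\|\hat\bbeta_k^{f,d}\|_2$ is a $\tau_k$-Lipschitz function of a standard Gaussian vector with $\tau_k^2=O(1/n)$, so its variance vanishes, and recovers the cross-term by polarization. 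Replacing your coordinate-wise truncation by this norm-level argument closes the gap without changing your overall architecture.

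One side remark: your paragraph on reducing correlated designs to the independent case via whitening is not needed here --- Theorem \ref{method:thm:consistency} is already stated for independent covariates, with the correlated case treated separately in Theorem \ref{method:thm:consistency_estimated_cov}.
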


Theorem \ref{method:thm:consistency} states that HEDE is consistent when $\lambda_L,\lambda_R$ and $\alpha_{\rmL}$ are arbitrarily chosen in $[\lambda_{\min}, \lambda_{\max}]$ and $[0,1]$, respectively. It follows that HEDE is consistent using our proposed choices $\hat\alpha_{\rmL},\hat\lambda_{\rmL},\hat\lambda_{\rmR}$. Here $[\lambda_{\min}, \lambda_{\max}]$ denotes the range of $\lambda_L,\lambda_R$ that we use. See details in Supplementary Materials \ref{sec:thresholding}. 
 
Proving the consistency for given fixed choices of these hyperparameters is relatively straightforward, However, establishing the uniform gurantee is non-trivial. It requires analyzing the debiased Lasso and ridge estimators jointly, and furthermore, pinning down this joint behavior uniformly over all hyperparameter choices. We achieve both these goals under relatively mild assumptions on the covariate and noise distributions. Once we establish the uniform control, Part (ii) follows trivially since the uniform result provides consistency for the particular choices $\hat{\alpha}_L,\hat{\lambda}_L,\hat{\lambda}_R$ produced by our algorithm. Below, we present the key theorem necessary for proving Theorem \ref{method:thm:consistency}. 

\begin{prop}\label{main:cor:linear_combination}
Recall $\tilde \bbeta_C^d$ from \eqref{eq:comb} and $\tilde\tau_C^2 $ from \eqref{eq:tildetauCdef}. Under Assumptions \ref{assumptions} and \ref{assumptionL},
\begin{equation*}
    \sup_{\lambda_{\textrm{L}},\lambda_{\textrm{R}} \in [\lambda_{\min},\lambda_{\max}]} \sup_{\alpha_{L} \in [0,1]} \left|\|\tilde \bbeta_C^d\|_2^2 - \|\bbeta\|_2^2 -  p\cdot \tilde\tau_C^2\right|\convP 0,
\end{equation*}
\end{prop}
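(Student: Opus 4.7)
The plan is to reduce the uniform statement to three base uniform results about individual debiased estimators and their inner product, and then appeal to the uniform debiasing theorem (Theorem \ref{main:thm:debiased_uniform}) that the paper has already established. The key observation is that the ensemble decomposes linearly in the parameter $\alpha_{\rmL}$ in a way that perfectly matches the decomposition of $\tilde\tau_C^2$.

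First, I would expand the squared norm via the standard polarization identity
\begin{equation*}
\|\tilde\bbeta_C^d\|_2^2 = \alpha_{\rmL}^2 \|\hat\bbeta_{\rmL}^d\|_2^2 + (1-\alpha_{\rmL})^2 \|\hat\bbeta_{\rmR}^d\|_2^2 + 2\alpha_{\rmL}(1-\alpha_{\rmL})\langle \hat\bbeta_{\rmL}^d,\hat\bbeta_{\rmR}^d\rangle,
\end{equation*}
then use $1 = (\alpha_{\rmL} + (1-\alpha_{\rmL}))^2$ to write $\|\bbeta\|_2^2$ with the same weights, and recall that $\tilde\tau_C^2$ is defined with exactly the same coefficients. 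Subtracting these three decompositions gives
\begin{equation*}
\|\tilde\bbeta_C^d\|_2^2 - \|\bbeta\|_2^2 - p\tilde\tau_C^2 = \alpha_{\rmL}^2 A_{\rmL} + (1-\alpha_{\rmL})^2 A_{\rmR} + 2\alpha_{\rmL}(1-\alpha_{\rmL}) A_{\rmL\rmR},
\end{equation*}
where $A_{\rmL} := \|\hat\bbeta_{\rmL}^d\|_2^2 - \|\bbeta\|_2^2 - p\hat\tau_{\rmL}^2$, $A_{\rmR} := \|\hat\bbeta_{\rmR}^d\|_2^2 - \|\bbeta\|_2^2 - p\hat\tau_{\rmR}^2$, and $A_{\rmL\rmR} := \langle \hat\bbeta_{\rmL}^d, \hat\bbeta_{\rmR}^d\rangle - \|\bbeta\|_2^2 - p\hat\tau_{\rmL\rmR}$.

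Since $\alpha_{\rmL} \in [0,1]$ implies $\alpha_{\rmL}^2, (1-\alpha_{\rmL})^2, 2\alpha_{\rmL}(1-\alpha_{\rmL}) \in [0,1]$, the triangle inequality gives
\begin{equation*}
\sup_{\alpha_{\rmL}\in[0,1]} \left|\|\tilde\bbeta_C^d\|_2^2 - \|\bbeta\|_2^2 - p\tilde\tau_C^2\right| \leq |A_{\rmL}| + |A_{\rmR}| + 2|A_{\rmL\rmR}|,
\end{equation*}
so taking the supremum over $\lambda_{\rmL}, \lambda_{\rmR} \in [\lambda_{\min},\lambda_{\max}]$ reduces the proposition to showing $\suplk |A_k| \convP 0$ for $k \in \{\rmL,\rmR\}$ and $\suplb |A_{\rmL\rmR}| \convP 0$. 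Each of these is a direct consequence of the uniform joint debiasing statement of Theorem \ref{main:thm:debiased_uniform} applied to the quadratic test functions $(x,y)\mapsto x^2$, $(x,y)\mapsto y^2$, and $(x,y)\mapsto xy$ summed across coordinates: the debiased components are asymptotically $\bbeta + $ Gaussian noise with the stated covariance, the cross-signal terms $\sum_j \beta_j (\hat\beta_{\cdot,j}^d - \beta_j)$ are negligible since they have variance $O(\|\bbeta\|_2^2)$, and the pure noise quadratic forms concentrate at $p\tau_k^2$ and $p\tau_{\rmL\rmR}$, which are in turn consistently tracked by $\hat\tau_k^2,\hat\tau_{\rmL\rmR}$ by \eqref{main:eqn:tau_estimation}.

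The main obstacle is not the present proposition itself, which is essentially an algebraic reduction, but the uniform debiasing theorem it relies on: establishing joint uniform control over the two-dimensional tuning parameter range requires the convex Gaussian min-max theorem together with a universality step (to pass from Gaussian to sub-Gaussian designs), and the quadratic test functions needed here are unbounded, so a truncation step is required before applying the Lipschitz-bounded CGMT machinery. All of that heavy lifting is absorbed into Theorem \ref{main:thm:debiased_uniform}; once one is willing to invoke it, the proof of Proposition \ref{main:cor:linear_combination} is the three-line decomposition above.
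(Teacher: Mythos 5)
Your reduction is essentially the same as the paper's: expand $\|\tilde\bbeta_C^d\|_2^2$, write $\|\bbeta\|_2^2 = (\alpha_{\rmL}+(1-\alpha_{\rmL}))^2\|\bbeta\|_2^2$ so the ensemble coefficients match, and exploit $\alpha_{\rmL}^2+(1-\alpha_{\rmL})^2+2\alpha_{\rmL}(1-\alpha_{\rmL})=1$ to take the $\sup$ over $\alpha_{\rmL}$ trivially. The paper executes the same decomposition but keeps the population quantities $\tau_{\rmL}^2,\tau_{\rmR}^2,\tau_{\rmL\rmR}$ in the intermediate terms and then separately controls $p\,|\tilde\tau_C^2-\tau_C^2|$ via Theorem \ref{main:thm:tau_estimation}; you fold the estimated $\hat\tau$'s directly into your $A_k$'s, which is fine but implicitly requires the uniform consistency of $\hat\tau_{\rmL}^2,\hat\tau_{\rmR}^2,\hat\tau_{\rmL\rmR}$ that Theorem \ref{main:thm:tau_estimation} supplies --- your citation of \eqref{main:eqn:tau_estimation} alone is insufficient since that display merely \emph{defines} the estimators, it does not assert their consistency. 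One further precision worth flagging: Theorem \ref{main:thm:debiased_uniform} is stated for $1$-Lipschitz test functions, whereas $\|\cdot\|_2^2$ and $\langle\cdot,\cdot\rangle$ are not Lipschitz; the passage from the Lipschitz statement to these quadratic statistics is carried out in the paper by Lemma \ref{lemma:concentration_second_moment} (adapted from \cite{celentano2021cad}), not by a truncation internal to the CGMT argument that one might imagine Theorem \ref{main:thm:debiased_uniform} already performs. You correctly notice the unboundedness obstacle but misattribute its resolution; the decomposition step you give is otherwise exactly right.
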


 We now discuss how Proposition \ref{main:cor:linear_combination} leads to our previous Theorem \ref{method:thm:consistency}. 
Proposition \ref{main:cor:linear_combination} equivalently proposes a consistent estimator for $\|\bbeta\|_2^2$, which is directly related to $h^2$ by a factor of the population variance of $y_i$. Since the population variance of $y_i$ can be consistently estimated by the sample variance of the observed outcomes, we can safely replace the former by the latter in the denominator. Theorem \ref{method:thm:consistency}(i) then follows on using the final fact that clipping does not affect consistency. 

Proposition is \ref{main:cor:linear_combination} proved in Supplementary materials \ref{subsec:proof:linear_combination}. Here we briefly outline our main idea. Recall from \eqref{eq:comb} that $\tilde \bbeta_C^d$ is a linear combination of $\hat\bbeta_{\rmL}^d$ and $\hat\bbeta_{\rmR}^d$. To develop uniform guarantees for $\tilde \bbeta_C^d$, it therefore suffices to establish the corresponding uniform guarantees for the debiased Lasso and ridge jointly, which we establish in the following theorem. 

\begin{thm}\label{main:thm:debiased_uniform}
Recall the estimators \eqref{method:estimators_debiased}. If Assumptions \ref{assumptions} and \ref{assumptionL} hold, then for any $1$-Lipschitz function $\phi_\beta:(\R^p)^3\rightarrow \R$, we have
$$\sup_{\lambda_{\textrm{L}},\lambda_{\textrm{R}} \in [\lambda_{\min},\lambda_{\max}]}|\phi_\beta(\hat\bbeta_{\textrm{L}}^d,\hat\bbeta_{\textrm{R}}^d,\bbeta) - \E[\phi_\beta(\hat\bbeta_{\textrm{L}}^{f,d},\hat\bbeta_{\textrm{R}}^{f,d},\bbeta)]| \convP 0,$$
where $(\hat\bbeta_{\textrm{L}}^{f,d}, \hat\bbeta_{\textrm{R}}^{f,d}) = (\bbeta + \bg_{\textrm{L}}^f, \bbeta + \bg_{\textrm{R}}^f)$ with $(\bg_{\textrm{L}}^f,\bg_{\textrm{R}}^f) \sim \mathcal{N}(0,\bS \otimes \bm{I}_p)$, where $\bS = \begin{pmatrix}\tau_{\textrm{L}}^2 & \tau_{LR} \\ \tau_{LR} & \tau_{\textrm{R}}^2\end{pmatrix}$ is formally defined via the system of equations \eqref{def:fixed_point}.
\end{thm}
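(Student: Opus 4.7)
\textbf{Proof Proposal for Theorem \ref{main:thm:debiased_uniform}.}

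My plan is to attack this in three layers: (i) first establish the pointwise-in-$(\lambda_L,\lambda_R)$ statement for Gaussian designs via the Convex Gaussian Min-max Theorem (CGMT); (ii) upgrade pointwise convergence to uniform convergence over $[\lambda_{\min},\lambda_{\max}]^2$ via an equicontinuity / finite-grid argument; (iii) transfer the result from Gaussian to sub-Gaussian designs through a universality argument. The statement already uses the language of Lipschitz test functions, which is the natural setting for all three steps.

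For step (i), I would fix $(\lambda_L,\lambda_R)$ and analyze the \emph{joint} Lasso--Ridge system. The key observation is that the pair $(\hat\bbeta_L,\hat\bbeta_R)$ is the joint minimizer of a decoupled but stackable convex objective $\tfrac{1}{2n}\|\by-\bX\bb_L\|_2^2 + \tfrac{\lambda_L}{\sqrt n}\|\bb_L\|_1 + \tfrac{1}{2n}\|\by-\bX\bb_R\|_2^2 + \tfrac{\lambda_R}{2}\|\bb_R\|_2^2$ run under a single Gaussian design. Applying CGMT in the block variable $\bm{w}=(\bb_L-\bbeta,\bb_R-\bbeta)$, I would reduce the primary optimization to an auxiliary one in which $\bX$ is replaced by two independent Gaussian vectors, and then scalarize it to a four-dimensional deterministic saddle problem. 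Its first-order conditions are precisely the fixed-point equations \eqref{def:fixed_point} defining the covariance $\bS$. Standard CGMT machinery then yields: for every bounded Lipschitz test function, $\phi_\beta(\hat\bbeta_L,\hat\bbeta_R,\bbeta) - \E[\phi_\beta(\hat\bbeta_L^f,\hat\bbeta_R^f,\bbeta)]\convP 0$, where $\hat\bbeta_L^f,\hat\bbeta_R^f$ are the prox-denoiser outputs at effective noise levels determined by the scalarized saddle point. Because the debiasing transformation in \eqref{method:estimators_debiased} can be written as $\hat\bbeta^d = \hat\bbeta + \bX^\top \br/(n-\df)$ with $\br$ the residual and $\df$ a concentrating scalar, and since CGMT also controls $\bX^\top\br$ jointly with $\hat\bbeta$ (this is the content of the Celentano--Montanari--Wei / Bellec--Zhang style arguments), pushing through the debiasing map gives the stated Gaussian limit at the debiased level.

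For step (ii), the challenge is that $\hat\bbeta_L^d$ depends on $\lambda_L$ through a discontinuous object ($\|\hat\bbeta_L\|_0$). I would handle this by showing stochastic equicontinuity on the event $\{\|\hat\bbeta_L\|_0/n\in[t_{\min},t_{\max}]\}$ (and the analogous event for the Ridge trace). First, the Lasso path $\lambda_L\mapsto\hat\bbeta_L(\lambda_L)$ is $O(1)$-Lipschitz in $\|\cdot\|_2$ by strong duality, and the ridge denominator $n-\operatorname{Tr}((\tfrac1n\bX^\top\bX+\lambda_R\bI)^{-1}\tfrac1n\bX^\top\bX)$ is a smooth, uniformly bounded-away-from-zero function of $\lambda_R$ under Assumption \ref{assumptionL}. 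Second, although $\|\hat\bbeta_L\|_0$ is piecewise constant in $\lambda_L$, its limiting counterpart from the scalar fixed-point system is continuous, and the CGMT concentration bounds can be made quantitative and applied on a $\epsilon$-net of $[\lambda_{\min},\lambda_{\max}]^2$ of cardinality $O(\epsilon^{-2})$. Combining the Lipschitz control of $\phi_\beta\circ(\hat\bbeta_L^d,\hat\bbeta_R^d,\bbeta)$ between grid points with the union-bounded pointwise convergence at grid points upgrades the pointwise result to the uniform statement. I expect this to be the main obstacle, because the Lasso's non-smooth dependence on $\lambda_L$ forces one to argue at the level of residuals $\bX^\top(\by-\bX\hat\bbeta_L)$ and the effective proximal parameters, rather than $\hat\bbeta_L$ itself.

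For step (iii), the Gaussian--sub-Gaussian universality of Lipschitz functionals of regularized M-estimators in the proportional regime is by now well established (Han--Shen, Montanari--Saeed, Dudeja--Lu--Sen, among others); these results directly transfer the limit of $\phi_\beta(\hat\bbeta_L,\hat\bbeta_R,\bbeta)$ (and of the auxiliary scalars entering the debiasing correction) from Gaussian entries to any entries satisfying the uniform sub-Gaussianity and moment-matching in Assumption \ref{assumptions}(i). Since the debiasing map involves only Lipschitz post-processing of quantities controlled by universality (namely $\hat\bbeta$, $\bX^\top\br$, $\|\br\|_2^2$, and the trace functional of the ridge), the debiased limits match across the two ensembles. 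Combining the Gaussian uniform statement from step (ii) with this universality yields the theorem as stated. The only delicate point in this last step is ensuring universality holds \emph{uniformly} over the compact hyperparameter set, which follows by the same grid-plus-equicontinuity recipe used in step (ii) applied to the Gaussian equivalent.
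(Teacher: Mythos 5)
Your Step (i) contains a genuine gap that undermines the whole plan. You propose to stack the Lasso and Ridge objectives and apply CGMT once to the block variable $\bw=(\bb_L-\bbeta,\bb_R-\bbeta)$, replacing $\bX$ by ``two independent Gaussian vectors.'' But CGMT applies to bilinear forms of the type $\bu^\top\bG\bw$ where $\bG$ has i.i.d.\ Gaussian entries. When you dualize the stacked objective, the bilinear term is $\bu_L^\top\bX\bw_L+\bu_R^\top\bX\bw_R$, i.e.\ $(\bu_L,\bu_R)^\top\,\mathrm{blkdiag}(\bX,\bX)\,(\bw_L,\bw_R)$, and the block matrix $\mathrm{blkdiag}(\bX,\bX)$ is \emph{not} an i.i.d.\ Gaussian matrix: the two diagonal blocks are identical, hence perfectly correlated. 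CGMT simply does not apply to this stacked problem, and the ``scalarization to a four-dimensional deterministic saddle'' you describe does not follow. Indeed the need to characterize two estimators sharing the same design matrix is exactly why \cite{celentano2021cad} introduced the conditional Gordon argument, which is what the paper actually uses: it first pins down the Lasso via CGMT, identifies the explicit realization $\hat\bg_L$ of the Lasso's effective Gaussian noise, \emph{conditions} on $\bg_L^f=\hat\bg_L$, and only then runs a conditional CGMT for the Ridge (Lemmas \ref{lemma:convergence_gl}, \ref{lemma:bridge_1}, \ref{lemma:bridge_2}, \ref{lemma:ridge_conditional_bw}). The cross-covariance $\tau_{LR}$ is produced precisely by this conditioning step, not by a single joint CGMT.

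A secondary issue is your claim that debiasing is a Lipschitz post-processing of $\hat\bbeta$, so that universality and CGMT transfer automatically. This is true for the Ridge (cf.\ Lemma \ref{lemma:debiased_lipschitz_hat}), but false for the Lasso: by the KKT condition $\bX^\top(\by-\bX\hat\bbeta_L)=\sqrt{n}\lambda_L\bs$ is proportional to the Lasso subgradient, which is not a Lipschitz function of $\hat\bbeta_L$. The paper must detour through the $\alpha$-smoothed Lasso to obtain a surrogate whose debiasing map is Lipschitz (Lemma \ref{lemma:marginal_both}), and separately prove closeness of the smoothed and unsmoothed objects. Relatedly, in Step (ii) the Lasso path is not globally $O(1)$-Lipschitz in $\lambda_L$; the paper only gets a high-probability ``weak-Lipschitz'' bound valid for $|\lambda_1-\lambda_2|\geq\epsilon$ (Lemma \ref{lemma:wf_weak_lipschitz}), which is enough for the $\epsilon$-net argument but weaker than you assert. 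Steps (ii) and (iii) are otherwise sound at a high level and broadly match the paper, including the observation that universality must itself be argued uniformly via the same grid-plus-equicontinuity machinery; but without a correct joint Gaussian characterization in Step (i) there is nothing to make uniform or to transfer.
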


Theorem \ref{main:thm:debiased_uniform} is proved in Supplementary materials \ref{subsec:proof:debiased_uniform}. It states that the joint ``Gaussianity" described in \eqref{eqn:approx_debiased_true} occurs at the population level in terms of the value of Lipschitz functions of the estimators, in addition to a per-coordinate basis. While results of this nature have appeared in the recent literature on exact high-dimensional asymptotics, 
majority of these works focus on a single regularized estimator (or its debiased version) with a fixed tuning parameter value. Theorem \ref{main:thm:debiased_uniform} provides the first joint characterization of multiple debiased estimators that is uniform over a range of tuning parameter values. 

Among related works, \cite{celentano2021cad} established joint characterization of two estimators and their debiased versions for given fixed tuning parameter choices. Meanwhile,  \cite{miolane2021distribution} characterized certain properties of the Lasso and the debiased Lasso uniformly across tuning parameter values.  However, developing a heritability estimator that performs well in practice across sparse to dense signals requires the stronger result of the form in Theorem \ref{main:thm:debiased_uniform}. Dealing with the debiased lasso and ridge jointly, while requiring uniform characterization over tuning parameters as well as allowing sub-Gaussian covariates, poses unique challenges. We overcome these challenges in our work, building upon 
\cite{miolane2021distribution}, \cite{ celentano2020lasso}, \cite{celentano2021cad} and universality techniques proposed in \cite{han2022universality}, which in turn rely on the Convex Gaussian Minmax Theorem (GCMT) \citep{thrampoulidis2015regularized}. We detail the connections and differences in the Supplementary Materials.

Note that Theorem \ref{main:thm:debiased_uniform} involves the variance-covariance parameters $\tau_L^2,\tau_{LR},\tau_R^2$ (formal definitions deferred to later \eqref{def:bS_g} \eqref{def:fixed_point} in the interest of space.
We establish in Supplementary Materials, Theorem \ref{main:thm:tau_estimation} that $\hat\tau_{\rmL}^2,\hat\tau_{\rmR}^2,\hat\tau_{\rmL\rmR}$, defined in \eqref{main:eqn:tau_estimation}, consistently estimate these parameters. Furthermore, we show that this consistency holds uniformly over the required range of tuning parameter values. This result, coupled with Theorem \ref{main:thm:debiased_uniform}, yields our key Proposition \ref{main:cor:linear_combination}. 

Finally we turn to the case where the observed features are correlated. Recall from Section \ref{sec:cov} that here we consider the design matrix to take the form $\bX = \bZ \bSigma^{1/2}$, where $\bZ$ now
has independent entries and satisfies the conditions in Assumption \ref{assumptions}(i). Thus, our previous theorems apply to $\bZ$ (which is unobserved), but not to the observed $\bX$. 
Applying whitening with estimated $\hat\bSigma$ results in final features of the form $\bX\hat{\bSigma}^{-1/2}=\bZ\bSigma^{1/2}\hat{\bSigma}^{-1/2}$. When $\hat{\bSigma}^{1/2}$ approximates $\bSigma$ accurately, $\bSigma^{1/2}\hat{\bSigma}^{-1/2}$ behaves as the identity matrix $\bI$. In this case, $\bX \hat{\bSigma}^{-1/2}$ is close to $\bZ$ which satisfies our previous Assumption \ref{assumptions}. HEDE therefore continues to enjoy uniform consistency guarantees, as formalized below.

\begin{thm}\label{method:thm:consistency_estimated_cov}
Consider the setting of Theorem \ref{method:thm:consistency} except that the observed features now take the form $\bX = \bZ \bSigma^{1/2}$ where $\bZ$ satisfies Assumptions \ref{assumptions}(i) and \ref{assumptionL}(i). 
Let $\bSigma$ be a sequence of deterministic symmetric matrices whose eigenvalues are bounded in $[1/M,M]$ for some $M>1$. Let $\hat\bSigma$ be a sequence of random matrices such that $\|\hat\bSigma - \bSigma\|_{\op} \convP 0$. If $\bA = \bSigma^{1/2}\hat\bSigma^{-1/2}$ satisfies Assumption \ref{assumptionA} 
and $\hat{h}^2_{\alpha_L,\lambda_L,\lambda_R}$ denotes the estimator \ref{eq:hgeneral} calculated using the whitened data $(\by, \bX\hat{\bSigma}^{-1/2}) $, then
\begin{equation}\label{method:thm:equation:consistency_estimated_cov}
    \sup_{\lambda_{\textrm{L}},\lambda_{\textrm{R}} \in [\lambda_{\min},\lambda_{\max}],\alpha_{\rmL}\in[0,1]} \left|\hat{h}_{\alpha_{\rmL},\lambda_{\rmL},\lambda_{\rmR}}^2 - h^2\right| \convP 0,
\end{equation}
Thus, the corresponding HEDE estimator $\hat{h}^2_{\hat{\alpha}_L,\hat{\lambda}_L,\hat{\lambda}_R}$ is also consistent for $h^2$. 
\end{thm}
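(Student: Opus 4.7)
The plan is to reduce Theorem \ref{method:thm:consistency_estimated_cov} to Theorem \ref{method:thm:consistency}(i) by coupling the whitened design $\tilde\bX := \bX\hat\bSigma^{-1/2} = \bZ\bA$ against the ideal independent-entry design $\bZ$. First I would put the whitened data in a standard linear-model form. Since $\bX = \bZ\bSigma^{1/2}$, we have $\by = \tilde\bX \tilde\bbeta + \bep$ with $\tilde\bbeta := \bA^{-1}\bSigma^{1/2}\bbeta = \hat\bSigma^{1/2}\bbeta$, and a direct calculation gives $\Var(\tilde\bx_{i\bullet}^\top \tilde\bbeta) = \Var(\bx_{i\bullet}^\top\bbeta)$, so the heritability that HEDE on $(\by,\tilde\bX)$ targets is still $h^2$. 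In parallel, the ideal linear model $\by = \bZ(\bSigma^{1/2}\bbeta)+\bep$ has heritability $\|\bSigma^{1/2}\bbeta\|_2^2/\Var(y_i) = \bbeta^\top\bSigma\bbeta/\Var(y_i) = h^2$, since $\bZ$ has independent, unit-variance entries.

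Next I would verify two ingredients. First, because $\|\hat\bSigma-\bSigma\|_{\op}\convP 0$ with the spectrum of $\bSigma$ confined to $[1/M,M]$, Lipschitz continuity of $x\mapsto x^{-1/2}$ on this compact interval yields $\|\hat\bSigma^{-1/2}-\bSigma^{-1/2}\|_{\op}\convP 0$ and hence $\|\bA-\bI\|_{\op}\convP 0$; combined with Assumption \ref{assumptions}(ii), this also gives $\|\tilde\bbeta - \bSigma^{1/2}\bbeta\|_2\convP 0$. Second, since $\bZ$ satisfies Assumptions \ref{assumptions}(i) and \ref{assumptionL}(i) and $\bSigma^{1/2}\bbeta$ has bounded squared norm (by the spectral bound on $\bSigma$), Theorem \ref{method:thm:consistency}(i) applies directly to $(\by,\bZ)$ with signal $\bSigma^{1/2}\bbeta$, delivering
\[
\sup_{\lambda_\rmL,\lambda_\rmR\in[\lambda_{\min},\lambda_{\max}],\,\alpha_\rmL\in[0,1]} \bigl|\hat h^2_{\alpha_\rmL,\lambda_\rmL,\lambda_\rmR}(\by,\bZ) - h^2\bigr|\convP 0.
\]

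The crux is then a coupling step: show that replacing $\bZ$ by $\tilde\bX = \bZ\bA$ perturbs the entire HEDE pipeline by $o_P(1)$, uniformly over the tuning grid. Concretely, I would compare (i) the penalized solutions $\hat\bbeta_\rmL,\hat\bbeta_\rmR$ via stability of the Lasso/ridge optimization problems under the multiplicative perturbation $\bA$; (ii) the debiased estimators in \eqref{method:estimators_debiased}, including their degrees-of-freedom corrections; and (iii) the plug-in variance estimators \eqref{main:eqn:tau_estimation}. The operator-norm closeness $\|\bA-\bI\|_{\op}\convP 0$, combined with Assumption \ref{assumptionA} and the uniform Lasso/ridge bounds developed in the proofs of Theorems \ref{main:thm:debiased_uniform} and \ref{main:cor:linear_combination}, should propagate to $o_P(1)$ control of each building block, uniformly in $\lambda_\rmL,\lambda_\rmR$. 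A triangle inequality then combines the coupling with the ideal-case uniform consistency to give \eqref{method:thm:equation:consistency_estimated_cov}, and consistency of HEDE with the data-adaptive choice $(\hat\alpha_\rmL,\hat\lambda_\rmL,\hat\lambda_\rmR)$ follows automatically because this triple lies in the set over which the supremum is taken.

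The main obstacle will be this coupling step, specifically the combination of uniformity over tuning parameters with non-continuous functionals. While the Lasso and ridge coefficients depend continuously on the design, the $\ell_0$ count $\|\hat\bbeta_\rmL\|_0$ entering the degrees-of-freedom denominator of \eqref{method:estimators_debiased} is not, so controlling it requires a genuine argument, most likely via the KKT conditions together with a stability analysis of the Lasso active set under the perturbation $\bA$. Uniform control of $(\tilde\bX^\top\tilde\bX/n + \lambda_\rmR\bI)^{-1}$ versus $(\bZ^\top\bZ/n + \lambda_\rmR\bI)^{-1}$ will follow from operator-norm closeness of the Gram matrices combined with uniform spectral bounds; this is where Assumption \ref{assumptionA} is expected to carry most of the weight.
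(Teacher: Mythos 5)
Your proposal follows essentially the same route as the paper's proof: verify $\|\bA-\bI\|_{\op}\convP 0$ from the covariance hypotheses, establish a uniform stability lemma showing that replacing $\bZ$ by $\bZ\bA$ perturbs the penalized solutions, degrees-of-freedom, and variance estimators by $o_P(1)$ (this is exactly the paper's Lemma \ref{lemma:consistency_estimated_cov}, proved separately for ridge and Lasso, with Assumption \ref{assumptionA} doing the heavy lifting and KKT/active-set stability handling the non-continuous $\|\hat\bbeta_{\rmL}\|_0$), then combine with the ideal-design consistency of Theorem \ref{method:thm:consistency}(i) via triangle inequality. The only minor cosmetic difference is that you go directly through operator-Lipschitzness of $x\mapsto x^{-1/2}$ while the paper first controls $\|\bA^{-1}-\bI\|_{\op}$ and passes to $\bA$ by a Neumann series, but these are equivalent.
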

Although we present this theorem for a general sequence of covariance matrices $\bSigma$ that allow accurate estimators in operator norm, we use it only for block diagonal covariances relevant for our application, where the block sizes are relatively small compared to the sample size. In Supplementary Materials, Proposition \ref{examples_covariance_estimation}, we establish that for such covariance matrices, the block-wise sample covariance matrix provides such an acccurate estimator. Theorem \ref{method:thm:consistency_estimated_cov} is proved in Supplementary Materials \ref{sec:robustness_proof}. The key idea lies in establishing that the estimation error incurred while replacing $\bSigma$ by $\hat{\bSigma}$ does not affect the uniform consistency properties of HEDE. 

\section{Simulation Results}\label{section:simulations}
We evaluate the performance of HEDE through simulation studies. In Section \ref{section:simulations_random}, we compare HEDE to some representative random effects methods for estimating the heritability under different signal structures. In Section \ref{section:simulations_fixed}, we compare HEDE against several fixed effect based methods for estimating the  heritability.

\subsection{Comparison with random effects methods}\label{section:simulations_random}
Given the widespread adoption of the random effects assumption in genetics, it is important to benchmark HEDE against various random-effects-based methods. A diverse array of such methods exist, each based on a different random effects assumption. We show that a random-effects-based method fails to provide unbiased heritability estimates when corresponding assumptions are violated (i.e. signals are not generated as assumed). In contrast, HEDE constantly offers unbiased estimates (albeit with a slightly higher variance). 

For conciseness, we focus on three representative random effects methods: GREML-SC \citep{yang2011gcta}: effectively GCTA, which assumes no stratification; GREML-MS \citep{lee2013estimation}, which is a modified version of GREML that accounts for minor allele frequency (MAF) stratification; GREML-LDMS \citep{yang2015genetic}, which is a modified version of GREML that accounts for both linkage disequilibrium (LD) level and MAF stratification. 
Their corresponding random effects assumptions state that within respective stratifications, the signal entries are i.i.d.~draws from a zero-inflated normal distribution (equivalently, the non-zero entries are randomly/evenly located and normally distributed).

For our simulations, we use genotype data from unrelated white British individuals in the UK Biobank dataset \citep{sudlow2015uk}. We select common variants with a minor allele frequency (MAF) greater than $1 \%$ from the UK Biobank Axiom array using PLINK. Additionally, we exclude SNPs with genotype missingness over $1 \%$ and Hardy-Weinberg disequilibrium p-values greater than $10^{-6}$. We impute the remaining small number of missing values as 0. This results in a design matrix $\tilde{\bX}$ with $n=332,430$ individuals and $p = 533,169$ variants across 22 chromosomes, which is then normalized so each column has a mean of 0 and variance of 1. To expedite the process, our simulation analysis focuses on chromosome $22$. We use $100$ randomly sampled disjoint subsets of $3000$ individuals each to simulate $100$ independent random draws of $\bX$. We then generate a random $\bbeta$ based on different assumptions and generate $\by$ following the linear model in \eqref{model:linear}. This results in $100$ heritability estimates, which are estimated using GREML methods and HEDE. We then compare the average of these estimates with the true population heritability value.

While our formal mathematical guarantees are established under the fixed effects assumption, it is desirable to stress-test the performance of our method in the random effects simulation settings, especially since we wish to benchmark our method against popular methods from statistical genetics. We further discuss the connection between the random effect assumption and the fixed effect assumption in terms of the true population parameter (Supplementary Materials \ref{subsec:random_simulation_heritability_definition}), with the discussion aimed at explaining why the empirical comparison we perform here is  reasonable, given the methods were developed under different assumptions. 

For the GREML methods, we use $4$ LD bins generated from the quantiles of individual LD values \citep{evans2018comparison}, and $3$ MAF bins: $[0.01,0.05],[0.05,0.1]$, and $[0.1,0.5]$. 
For HEDE, we perform the whitening step by estimating the population covariance $\bSigma$ as a block diagonal with blocks specified in \citep{berisa2016approximately}. We estimate the block-wise population covariances using sample covariances. Given that $n=332,430$ and $p<1000$ for each LD block, this approach results in minimal high-dimensional error.

\begin{figure}[ht]
    \centering
    \includegraphics[width=\linewidth]{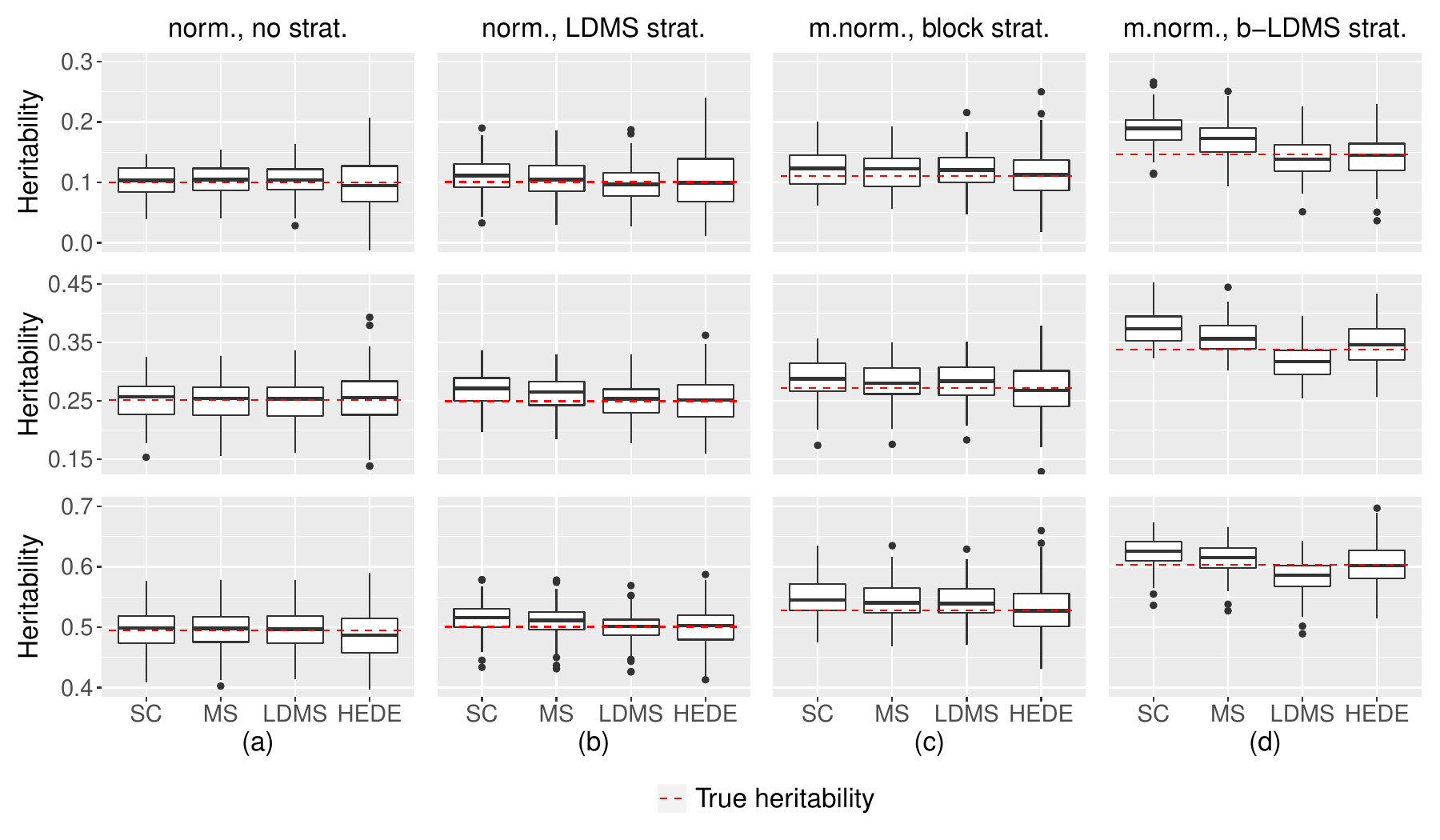}
    \caption{Box plots comparing three random effects methods, namely GREML with different stratifications, with our fixed effect method HEDE. Meaning of method abbreviations: "SC": single-component \citep{yang2011gcta}, "MS": MAF stratified, "LDMS": LD and MAF stratified \citep{yang2015genetic} Different rows use different true $\|\bbeta\|_2$ values. Each column indicates a different distribution from which the true signal is drawn: (a) zero-inflated normal, where the nonzero entries are uniformly distributed (no stratification). (b) zero-inflated normal, where the nonzero entries have different densities in each LDMS stratification. (c) zero-inflated mixture of two normals, where the nonzero entries have different densities in each (consecutive) LD block. (d) zero-inflated mixture of two normals, where the nonzero entries have different densities in each (consecutive) LD block and in each LDMS stratification. Red dotted horizontal line represents true heritability values (which varies depending on the signal distribution). Boxes represents distributions of different methods' estimates from $100$ independent draws of $\bX$ (from UKB data chromosome $22$, 3000 individuals each) and $\bbeta$ (from the aforementioned distribtions). GREML methods show biases under model misspecification while HEDE remains unbiased. See details in Section \ref{section:simulations_random}. }
    \label{fig:attack}
\end{figure}

To assess the robustness of different methods against the misspecification of typical random effects assumptions on the regression coefficients, we created $4$ types of random signals. These signals vary in the locations and distributions of non-zero entries (\textit{causal variants}) as shown in Figure \ref{fig:attack}. In Figure \ref{fig:attack}a, the non-zero locations of $\bbeta$ are uniformly distributed, with non-zero components normally distributed. Here the assumptions for all three GREML methods are satisfied. In Figure \ref{fig:attack}b, the non-zero locations of $\bbeta$ vary in concentration across  LDMS strata, with non-zero components normally distributed. Here only the assumption for GREML-LDMS is satisfied. In Figure \ref{fig:attack}c, the non-zero locations of $\bbeta$ are more concentrated in the last LD block, with non-zero components distributed as a mixture of two normals (centered symmetrically around zero). Here none of the assumptions of the three GREML methods is satisfied. In Figure \ref{fig:attack}d, the non-zero locations of $\bbeta$ vary in concentrations across both LD blocks and LDMS strata, with non-zero components distributed as a mixture of two normals. Here none of the assumptions of the three GREML methods are satisfied, with varying degrees of violations. The full details of the signal generation process are described in Supplementary Materials \ref{subsec:random_simulation_details}.

Figure \ref{fig:attack} clearly demonstrates that the GREML methods are susceptible to signal distribution misspecifications, exhibiting increased bias as the extent of assumption violation rises.  In contrast, HEDE is robust and provides unbiased estimates in all cases, despite exhibiting higher variance. This pattern remains robust across various levels of true population heritability ($0.1,0.25$ and $0.5$ shown).

We believe these findings are likely to apply to other random-effects-based methods. A random effects method can only guarantee unbiased estimates when its underlying random effects assumptions are satisfied---a condition that is often difficult to verify with real data. In contrast, HEDE is not susceptible to such biases, demonstrating that the fixed effect based HEDE method 
applies under much diverse signal structures. 

\subsection{Comparison of HEDE with other fixed effects methods}\label{section:simulations_fixed}
In this section, we compare the performance of HEDE with other fixed effects heritability estimation methods.  Recall we showed in Section \ref{sec:cov} and Theorem \ref{method:thm:consistency_estimated_cov} that HEDE remains consistent under accurate covariance estimation strategies. Also, fixed effects heritability estimation methods all assume $\bSigma$ is known or can be accurately estimated. In light of these, we assume without loss of generality that $\bSigma=\bI$, in order to directly compare the relative performance of fixed effect methods.
To mimic discrete covariates in genetic scenarios, we generate the design matrix $\bX \in \R^{n \times p}$ via \eqref{eqn:X_normalization}, where $G_{ij} \sim \text{Binomial}(2,\pi_j), 
\forall i,j$ and $\pi_j \sim\text{Unif}[0.01,0.5],\forall j$, representing common variants with a minor allele frequency of at least $1\%$. We generate the true signal $\bbeta \in \R^p$ as i.i.d.~zero-inflated normals with non-zero co-ordinate weight $\kappa$ and heritability $h^2$: $\beta_j \iid \kappa \cdot \mcn(0,h^2/ (p\kappa(1-h^2))) + (1-\kappa) \cdot \delta_0$. 
We generate the noise $\bep \in \R^n$ with i.i.d.~$\mcn(0,1)$ entries. We vary $n,p,\kappa,h^2$ across our simulation settings.

For each simulation setting, we compare HEDE with a Lasso-based method, AMP \citep{bayati2013estimating}\footnote{We label this method as AMP, since it was created based on approximate message passing (AMP) algorithms.}, and three Ridge-based methods: Dicker's moment-based method \citep{dicker2014variance}, Eigenprism \citep{janson2017eigenprism}, and Estimating Equation \citep{chen2022statistical}. For AMP, \citep{bayati2013estimating} does not provide a strategy for tuning the regularization parameter. Therefore,  we choose the tuning parameter based on cross validation (specifically the  Approximate Leave one Out (ALO) \cite{rad2020scalable} approach) 
performed over $10$ smaller datasets
with the same values of $\delta,h^2,\kappa$. All aforementioned methods operate under similar fixed effects assumption as HEDE, except that Eigenprism only handles $n\leq p$ in its current form. We first compared these methods in terms of bias and found that they all remain unbiased across a wide variety of settings. In Supplementary Matrials \ref{sec:bias_illustration}, we demonstrate their unbiasedness for a specific choice of $n,p,h^2,\kappa$. In light of this, here we present comparisons between these estimators in terms of their mean square errors (MSEs). 

\begin{figure}[ht]
    \centering
\includegraphics[width=\linewidth]{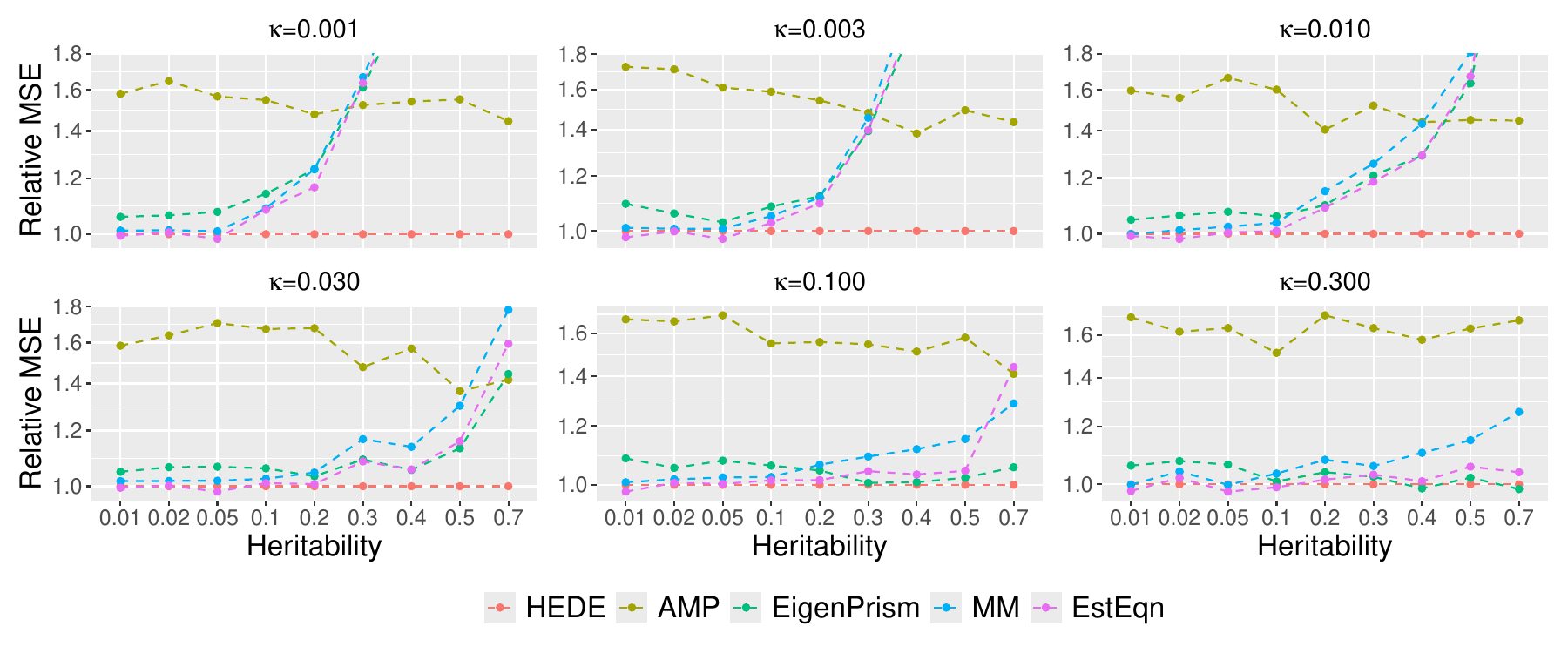}
    \caption{Ratio of the MSEs (called relative MSE) of different fixed effects methods---HEDE (ours), AMP \citep{bayati2013estimating}, EigenPrism \citep{janson2017eigenprism}, MM \citep{dicker2014variance}, EstEqn \citep{chen2022statistical}---with respect to HEDE. Thus the baseline is 1, represented by the orange line.  The design matrices satisfy \eqref{eqn:X_normalization}, where $G_{ij} \sim \text{Binomial}(2,\pi_j), 
    \forall i,j$ and $\pi_j \sim\text{Unif}[0.01,0.5],\forall j$,
     with dimensions $n,p$. Recall that signals are drawn from i.i.d.~zero-inflated normals with non-zero co-ordinate weight $\kappa$ and true heritability $h^2$: $\beta_j \iid \kappa \cdot \mcn(0,h^2/ (p\kappa(1-h^2))) + (1-\kappa) \cdot \delta_0$.   We generate $10$ draws of signals with $100$ draws of design matrices and display average MSEs. Parameter values: $n=1000,p=10000$, $\kappa$ varies across panels as indicated in the subtitles, and $h^2$ varies across the $x$-axis as indicated by the label. 
     See full details in Section \ref{section:simulations_fixed}.
     }
    \label{fig:change_eps_h_10000_com}
\end{figure}

In Figure \ref{fig:change_eps_h_10000_com}, we display the relative MSEs of heritability estimates in relation to $\kappa$ and  $h^2$, while fixing sample size $n=1000$ and number of variants $p=10000$. We vary sparsity from $0.001$ to $0.3$ on a roughly evenly spaced log scale, and vary heritability from $0.01$ to $0.7$.  This range mimics the potential true heritability values for traits such as autoimmune disorders (lower heritability) and height (higher heritability) \citep{hou2019accurate}. Although the methods under comparison operate under fixed effects assumptions, we still simulate $10$ signals to mitigate the influence of any particular signal choice. For each signal, we generate $100$ random samples and calculate the MSE for estimating heritability. We then average the error over the 10 signals.  
\begin{figure}[ht]
    \centering
    \includegraphics[width=\linewidth]{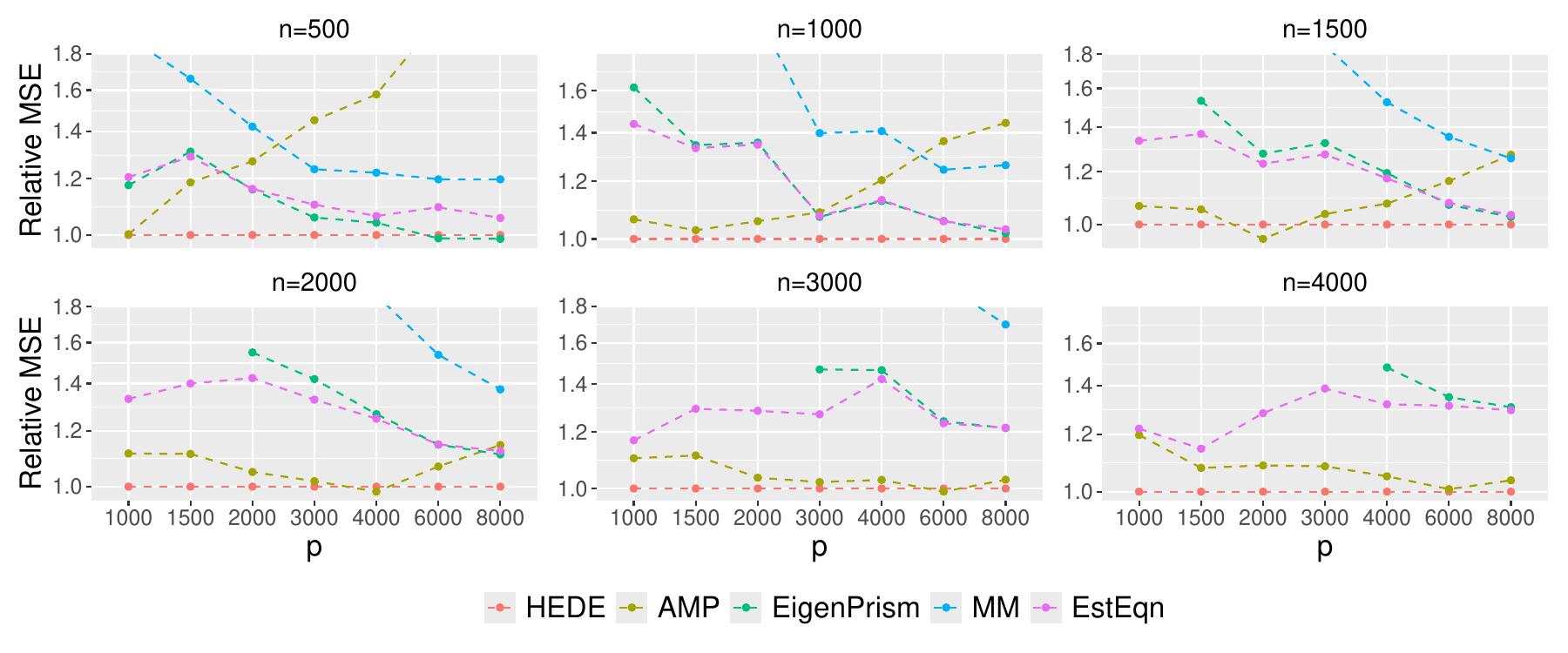}
    \caption{Settings same as Figure \ref{fig:change_eps_h_10000_com}. Parameter values: $h^2=0.5,\kappa=0.1$, $n$ varies across panels as indicated in the subtitles, and $p$ varies across the $x$-axis as indicated by the label.
     }
    \label{fig:change_n_p_0100_05_com}
\end{figure}

We observe that HEDE outperforms ridge-type methods particularly when the true signal is sparse and relatively strong, while performing comparably in other scenarios. Since HEDE utilizes the strengths of both Lasso and ridge, the performance gain  over pure ridge-based methods, such as EigenPrism, MM and EstEqn, is natural. The pure Lasso-based method (AMP, green) performs sub-optimal compared to all the others, and its higher MSE values are dominating the y-axis range. For a closer view, we present a zoomed in version of the plot in Figure \ref{fig:zoomedin} where we exclude AMP (we defer the figure to Supplementary Materials \ref{sec:bias_illustration} in light of space).   Figures \ref{fig:change_eps_h_10000_com} and \ref{fig:zoomedin} clearly demonstrate the efficacy of our ensembling strategy. It leads to a superior heritability estimator across signals ranging from less sparse to highly sparse.

To better investigate our relative performance, in Figure \ref{fig:change_n_p_0100_05_com}, we consider a specific setting---$\kappa=0.1,h^2=0.5$---where HEDE has a moderate advantage in Figure \ref{fig:change_eps_h_10000_com}, We vary $n$ and $p$ so the ratio $\delta=p/n$ ranges from $0.25$ to $20$.
We clearly observe that HEDE's advantage across a broad spectrum of configurations. Note that Eigenprism's results are missing for $n > p$  as it currently only supports $n \leq p$. 
Also observe that AMP's relative performance degrades sharply with larger $\delta$ values, as shown in the first panel of Figure \ref{fig:change_n_p_0100_05_com}. This explains its sub-optimal performance in Figure \ref{fig:change_eps_h_10000_com} when $\delta=10$. Figure \ref{fig:change_n_p_0100_05_com} demonstrates that HEDE leads to superior performance over a wide range of sample sizes, dimensions and dimension-to-sample ratios. It consistently outperforms the other methods for the setting examined.

To determine if HEDE's superior performance extends to  broader contexts, we conducted additional investigations. We examined settings where the designs contain larger sub-Gaussian norms, and in addition, settings where the non-zero entries of the signal were drawm  from non-normal distributions and/or are distributed with stratifications (see Section \ref{section:simulations_random}). Results for the former investigation are presented in Supplementary Materials. \ref{sec:more_simulations}, whereas we omit those for the latter in light of the performances looking similar. Across the board, we observed that HEDE maintained maintains its competitive advantage over other methods. 
We also investigated a Lasso-based method CHIVE \citep{tony2020semisupervised} in our experiments. However, CHIVE exhibited considerable bias in many settings, particularly when its underlying sparsity assumptions were violated. For this reason, we have omitted CHIVE from our result displays.

\section{Real data application}\label{section:real_data}
We next describe our real data experiments. We apply HEDE on the unrelated white British individuals from the UK biobank GWAS data which contain common variants  \citep{sudlow2015uk} to estimate the heritability for two commonly studied phenotypes: height and BMI. 
For the design matrix $\bX$, we follow the preprocessing steps outlined in Section \ref{section:simulations_random}, but include all $22$ chromosomes, unlike Section \ref{section:simulations_random} which considers only chromosome $22$. For each phenotype, we exclude individuals with missing values. We center the outcomes and denote the resultant vector by $\tilde \by$. We account for non-genetic factors by including covariates for age, age$^2$, sex as well as $40$ ancestral principal components. We derive the final response vector $\by$ by regressing out the influence of these non-genetic covariates from $\tilde \by$ using Ordinary Least Squares (OLS). The large sample size of approximately $330,000$ individuals compared to only $43$ non-genetic covariate, ensures that OLS can be confidently applied.

We apply HEDE to preprocessed $\bX$ and $\by$. For comparison, we also apply GREML-SC and GREML-LDMS (see Section \ref{section:simulations_random} for details) from the random effects methods family, as well as MM and AMP from the fixed effects methods family. We did not include Eigenprism and EstEqn since their runtimes are both $O(n^3)$ (assuming $p/n \rightarrow \delta$), so they could not be completed within a reasonable time frame.   We calculate standard errors using the standard deviation of point estimates from $15$ disjoint random subsets, each containing $20,000$ individuals.\footnote{In our high-dimensional setting, it is known that traditional resampling techniques for estimating standard errors such as the bootstrap, subsampling, etc. fail \citep{el2018can, clarte2024analysis, bellec2024asymptotics}. Corrections have been proposed to address issues with using these for estimating standard errors while inferring individual regression vector coefficients and closely related problems. However, such strategies for heritability estimation are yet to be developed. To circumvent this issue, we use disjoint subsets of observed data to mimic independent training data draws from the population}.  Additionally, due to memory constraints, many methods (including HEDE) could not process all $22$ chromosomes simultaneously. Therefore, for all methods, we estimate the total heritability by summing the heritabilities calculated for each chromosome separately. This approach is expected to introduce minimal error, given the widely accepted assumption of independence across chromosomes. 

\begin{table}
\centering
\begin{tabular}{c c c c c c c}
\toprule
\textbf{Trait} & \textbf{HEDE} & \textbf{MM} & \textbf{AMP} & \textbf{GREML-SC}  & \textbf{GREML-LDMS}\\
\midrule
\text{Height} & 0.681  & 0.813  & 0.436  & 0.741 & 0.605\\
\text{(SE)} &(7.47E-02) & (9.30E-02) &  (3.01E-01) &  (2.87E-02) & (3.19E-02)\\
\hline
\text{BMI} & 0.279 & 0.369  & 0.510  & 0.316  & 0.265\\
\text{(SE)} & (5.20E-02) &  (6.66E-02) & (3.52E-01) &  (2.85E-02) & (3.22E-02)\\
\bottomrule
\end{tabular}
\caption{Estimates of heritability from HEDE (ours), MM \citep{dicker2014variance}, AMP \citep{bayati2013estimating}, GREML-SC \citep{yang2011gcta} and GREML-LDMS \citep{yang2015genetic} for certain traits in the UKB dataset. $533,169$ SNPs are present in $22$ chromosomes. For fixed effect methods, we assume independence across LD blocks from \cite{berisa2016approximately} for covariance estimation. Standard errors are computed from $15$ disjoint random subsets of $20,000$ individuals each. See full details in Section \ref{section:real_data}.  } 
\label{table:real_data}
\end{table}

We summarize the results in Table \ref{table:real_data}.
First, we observe that MM and AMP estimates differ more from HEDE compared to other methods. This is consistent with Figure \ref{fig:change_eps_h_10000_com}, where we observe that AMP often has larger MSE than other fixed effects methods. Thus, we expect AMP estimates to differ significantly from HEDE, and MM. For MM, the situation is more nuanced. Prior work suggests the true height heritability should be on the higher end of the range shown in Figure \ref{fig:change_eps_h_10000_com}. In this region, MM has higher MSE than HEDE across all sparsity levels. Therefore, MM estimates of height heritability may be less accurate than HEDE. For BMI, prior work provides heritability estimates within the range $0.285$ to $0.436$ using various methods. Referring to Figure \ref{fig:change_eps_h_10000_com}, we see that MM's relative MSE compared to HEDE depends on the sparsity level in this BMI heritability range. Without knowing the exact sparsity, it is difficult to draw conclusions. However, we still observe the general trend of MM producing larger estimates than other methods for BMI, similar to height.

Comparing HEDE to the two random effect methods, we observe that HEDE point estimates fall between GREML-SC and GREML-LDMS ones. For both height and BMI, HEDE estimates lie between GREML-SC and GREML-LDMS,, with the latter always undershooting.  This is consistent with the trend observed in the last column of Figure \ref{fig:attack}. This difference may indicate a heterogeneous underlying genetic signal, especially in a way that differentiates the SC and LDMS approaches within the GREML framework.

\section{Discussion}\label{section:discussion}
In this paper, we introduce HEDE, a new method for heritability estimation that harnesses the strengths of $\ell_1$ and $\ell_2$ regression in high-dimensional settings through a sophisticated ensemble approach. We develop data-driven techniques for selecting an optimal ensemble and fine-tuning key hyperparameters, aiming to enhance statistical performance in heritability estimation. As a result, we present a competitive estimator that surpasses existing fixed effects heritability estimators across diverse signal structures and heritability values. Notably, our approach circumvents bias issues inherent in random effects methods. In summary, our contribution offers a dependable heritability estimator tailored for high-dimensional genetic problems, effectively accommodating underlying signal variations. Importantly, our method maintains consistency guarantees even with adaptive tuning and under minimal assumptions on the covariate distribution. We validate the efficacy of our approach through comprehensive simulations and real-data analysis. 

HEDE's computational complexity is primarily driven by the Lasso/ridge solutions, making it more efficient than many existing heritability estimators (e.g. Eigenprism \citep{janson2017eigenprism}, where an SVD is required, and EstEqn \citep{chen2022statistical}, where chained matrix inversions are involved). HEDE is potentially scalable to much larger dimensions via \textit{snpnet} \citep{rad2020scalable}. Furthermore, with proper Lasso/ridge solving techniques, HEDE only needs knowledge of summary statistics $\bX^\top\bX$ and $\bX^\top\by$. This makes it an attractive option in situations where individual data sharing is difficult due to privacy concerns.  However, it is important to note that these discussions assume the availiability of an accurate estimate of the LD blocks and strength. This is a common limitation that other fixed effects methods also face. Any potential errors in covariance estimation could naturally lead to additional errors in heritability estimation. Therefore, improving and scaling covariance estimation in high dimensions could independently benefit heritability estimation. We manage to circumvent this issue by leveraging the hypothesis of independence across chromosomes, which significantly aids us computationally. 

Our theoretical framework relied on two pivotal assumptions: the independence of observed samples and a sub-Gaussian tailed distribution for covariates. Natural settings occur where these assumptions are violated. Within the genetics context, familial relationships and repeated measures in longitudinal studies among observed samples can introduce dependence. For the broader problem of signal-to-noise ratio estimation in various scientific fields, both assumptions might be violated.
For instance, applications such as wireless communications \citep{eldar2003asymptotic} and magnetic resonance imaging \citep{benjamini2013shuffle} may introduce challenges such as temporal dependence or heavier-tailed covariate distributions. Exploring analogues of HEDE under such conditions presents an intriguing avenue for future study. Recent advancements in high-dimensional regression, which extend debiasing methodology and proportional asymptotics theory for regularized estimators to these complex scenarios \citep{li2023spectrum,lahiry2023universality}, offer valuable insights for further exploration. We defer these investigations to future research.

We focus on continuous traits in this paper. Discrete disease traits occur commonly in genetics. Such situations are typically modeled using logistic, probit, or binomial regression models, depending on the number of possible categorical values of the response. Adapting our current methodology to these scenarios is a crucial avenue for future research. Debiasing methodologies for generalized linear models exist in the literature. A compelling question would be to understand how various debiasing techniques can be ensembled to optimize their benefits for heritability estimation, similar to our approach in this work for the linear model. 

\bibliographystyle{apalike} 
\bibliography{main.bib}

\appendix

\section{Additional Methodology Details}
\subsection{Additional Technical Assumptions}\label{additional_assumptions}
We present additional technical assumptions needed for our mathematical guarantees in Section \ref{section:mathematical}. The following assumption is needed for independent covariates:

\begin{assumption}\label{assumptionL}
\ 
\begin{enumerate}
    \item The design matrix $\bX$ satisfies for any $\gamma < 1$, there exists a constant $\kappa_{\min}>0$ such that $\PP(\frac{1}{\sqrt{n}}\kappa_-(\bX,\min\{p,\gamma n\}) \leq \kappa_{\min}) \rightarrow 0$ as $n \rightarrow \infty$, where $\kappa_-(\bX,s)$ denotes the minimum singular value of $\bX_S$ over all subsets $S$ of the columns with $|S|\leq s$.
    \item The tuning parameters are bounded, that is, there exists $\lambda_{\min},\lambda_{\max}$ such that $0 < \lambda_{\min} \leq \lambda_{\textrm{L}}, \lambda_{\textrm{R}} \leq \lambda_{\max} < \infty$.
\end{enumerate}
\end{assumption}

Assumption \ref{assumptionL}(1) states that the minimum singular value of $\bX$, across all subsets of a certain size, is lower bounded by some positive constant with high probability. This assumption is required for our universality proof. In fact, we conjecture that this assumption is true given Assumption \ref{assumptions}(1) and Assumption \ref{assumptions}(2) (c.f. Section B.5.4 in \cite{celentano2020lasso} for a proof in Gaussian case). We leave the proof of this assumption to future work.

Assumption \ref{assumptionL}(2) restricts the range of $\lambda_\rmL, \lambda_\rmR$ to a predetermined range. We require a specific lower bound $\lambda_{\min} = \lambda_{\min}(\sigma_{\max}^2,\delta)$. The specific functional form is complicated, but plays a crucial role in our proof in Section \ref{sec:universality} (which builds upon results from \cite{han2022universality}). On the other hand, $\lambda_{\max}$ is not restricted and can take any proper value. See Section \ref{sec:thresholding} for a heuristic method that we use to determine $(\lambda_{\min},\lambda_{\max})$ in practice.

The following assumption is needed for estimated covariance under correlated case, effectively perturbing HEDE by a matrix $\bA \approx \bI$:

\begin{assumption}\label{assumptionA}
Let $\bA \in \R^{p\times p}$ be a sequence of random matrices. Denote $\mcl{L}_{\bA}(\bb) := \frac{1}{2n} \|\by-\bX\bA\bb\|_2^2 + \frac{\lambda_{\rmL}}{\sqrt{n}}\|\bb\|_1$ to be the perturbed Lasso cost function when we replace $\bX$ by $\bX\bA$, and denote
$\hat\bbeta(\bA) := \argmin_{\bb} \mcl{L}_{\bA}(\bb)$ to be the perturbed Lasso solution. We say $\bA$ satisfies Assumption \ref{assumptionA} if

\begin{enumerate}
    \item $\supll\|\hat\bbeta(\bA) \leq M$ for some constant $M$ with probability converging to $1$ as $n \rightarrow \infty$
    \item $\forall \lambda_{\rmL},\lambda_{\rmL}' \in [\lambda_{\min},\lambda_{\max}], \mcl{L}_{\lambda_{\rmL}',\bA}(\hat\bbeta_{\lambda_{\rmL}}(\bA)) \leq \mcl{L}_{\lambda_{\rmL}',\bA}(\hat\bbeta_{\lambda_{\rmL}'}(\bA)) + K|\lambda_{\rmL}-\lambda_{\rmL}'|$ with probability converging to $1$ as $n \rightarrow \infty$.
\end{enumerate}
\end{assumption}

Assumption \ref{assumptionA} is required as technical steps in our proof under sub-Gaussian design. They are automatically satisfied when the design $\bX$ is Gaussian (see proofs in Section \ref{sec:robustness_proof}).

\subsection{Hyperparameter Selection}\label{sec:thresholding}
Our algorithm necessitates a tuning parameter range $[\lambda_{\min},\lambda_{\max}]$. Assumption \ref{assumptionL}(2) defines $\lambda_{\min}(\sigma_{\max}^2,\delta)$ as a function of the unknown $\sigma_{\max}^2$, as well as an unconstrained $\lambda_{\max}$, for completely technical reasons, both of which give little practical guidance. Here we propose to determine the range by the empirical degrees-of-freedom from Lasso and Ridge, defined in \ref{def:hat_df}.

For $\lambda_{\max}$, there is no point increasing it infinitely since $\hat\bbeta_{\rmL},\hat\bbeta_{\rmR}$ will approach $0$, yielding very similar $\hat\bbeta_{\rmL}^d,\hat\bbeta_{\rmR}^d$ values. Thus, we choose $t_{\min}=0.01$, therefore filtering out excessively large $\lambda_{\rmL},\lambda_{\rmR}$ values.

For $\lambda_{\min}$, we will similarly pick an upper bound for $\frac{\|\hat\bbeta_{\rmL}(\lambda_{\rmL})\|_0}{n}$ and $\frac{1}{n}\operatorname{Tr}((\frac{1}{n} \bX^\top \bX + \lambda_{\textrm{R}} \bm{I})^{-1}\frac{1}{n} \bX^\top \bX)$. Due to numerical precision/stability issue, \emph{glmnet} (and possibly other numerical solvers) yields incorrect solutions for tiny values of $\lambda$. Also, the lower bound $\lambda_{\min}(\sigma_{\max}^2,\delta)$-although unknown-prohibits tiny values of $\lambda$. Therefore, simply choosing an upper bound such as $t_{\max}=0.99$ does not suffice. Heuristically, we find any value from $0.3$ to $0.7$ acceptable, and we take $t_{\max}=0.5$ for simplicity.

Once the range for $\lambda_{\rmL},\lambda_{\rmR}$ are determined, we discretize it on the log scale with grid width $0.1$. This yields the final discrete collection of $\lambda_{\rmL},\lambda_{\rmR}$, from which we minimize $\hat\tau_c^2$.

\subsection{Covariance Estimation}
In Section \ref{sec:cov}, we discussed estimating the blocked population covariance matrix by block-wise sample covariance matrices. The following proposition supports its consistency under mind conditions:

\begin{prop}\label{examples_covariance_estimation}
Let $\bSigma \in \R^{p \times p}$ be a real symmetric matrix whose spectral distribution has bounded support $\mu_{\bSigma} \subset [\kappa_{\min},\kappa_{\max}]$ with $0 < \kappa_{\min} \leq \kappa_{\max} < \infty$. Further suppose that $\bSigma$ is block-diagonal with blocks $\bSigma_1,...,\bSigma_k$ such that the size of each block $\bSigma_i$ is bounded by some constant $m$. Let $\bX \in \R^{n \times p}$ with each row $\bx_{i\bullet}$ satisfying $\bx_{i\bullet} = \bz_{i\bullet}\bSigma^{1/2}$ where $\bz_{i\bullet}$ has independent sub-Gaussian entries with bounded sub-Gaussian norm. If we estimate $\hat\bSigma$ as also block-diagonal with blocks $\hat\bSigma_1,...,\hat\bSigma_k$ where $\hat\bSigma_i=\frac{1}{n}\bX_i^\top\bX_i$ are corresponding sample covariances, then as $p/n\rightarrow \delta \in (0,\infty)$, we have $\|\hat\bSigma-\bSigma\|_{\op} \convP 0$.
\end{prop}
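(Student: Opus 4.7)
\section*{Proof Proposal for Proposition \ref{examples_covariance_estimation}}

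The plan is to reduce the operator norm bound to a blockwise bound, apply a standard concentration inequality for the sample covariance of a sub-Gaussian random vector to each block separately, and then union bound over the blocks using the fact that the block count grows only linearly in $n$ while each block has bounded size.

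First, I would observe that since both $\bSigma$ and $\hat\bSigma$ share the same block-diagonal structure, the difference $\hat\bSigma - \bSigma$ is block-diagonal, and hence
\begin{equation*}
\|\hat\bSigma - \bSigma\|_{\op} \;=\; \max_{1 \leq i \leq k}\, \|\hat\bSigma_i - \bSigma_i\|_{\op}.
\end{equation*}
Next, partitioning the columns of $\bZ$ according to the block structure as $\bZ = [\bZ_1 \mid \cdots \mid \bZ_k]$, the assumption $\bX = \bZ \bSigma^{1/2}$ together with block-diagonality of $\bSigma^{1/2}$ gives $\bX_i = \bZ_i \bSigma_i^{1/2}$, so that
\begin{equation*}
\hat\bSigma_i - \bSigma_i \;=\; \bSigma_i^{1/2}\bigl(\tfrac{1}{n}\bZ_i^\top \bZ_i - \bI_{m_i}\bigr)\bSigma_i^{1/2},
\end{equation*}
and therefore $\|\hat\bSigma_i - \bSigma_i\|_{\op} \leq \kappa_{\max} \cdot \|\tfrac{1}{n}\bZ_i^\top \bZ_i - \bI_{m_i}\|_{\op}$, where I used $\|\bSigma_i^{1/2}\|_{\op}^2 \leq \kappa_{\max}$ by the assumed spectral support.

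The key probabilistic step is then to apply a standard sub-Gaussian sample covariance concentration inequality (for instance, Theorem 4.6.1 / Theorem 4.7.1 of Vershynin's \emph{High-Dimensional Probability}) to $\bZ_i$, which has i.i.d. isotropic sub-Gaussian rows of dimension $m_i \leq m$. This yields, for any $\eps \in (0,1)$,
\begin{equation*}
\PP\left(\left\|\tfrac{1}{n}\bZ_i^\top \bZ_i - \bI_{m_i}\right\|_{\op} > \eps\right) \;\leq\; 2 \cdot 9^{m} \exp\!\bigl(-c\, n\, \eps^2\bigr),
\end{equation*}
where $c>0$ depends only on the uniform sub-Gaussian norm bound on the entries of $\bZ$. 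Since $m$ is bounded by a constant, the prefactor $9^m$ is an absolute constant. A union bound over the at most $k \leq p$ blocks, combined with $p \leq Cn$ (which follows from $p/n \to \delta$), gives
\begin{equation*}
\PP\!\left(\|\hat\bSigma - \bSigma\|_{\op} > \kappa_{\max}\eps\right) \;\leq\; 2 C n \cdot 9^{m} \exp\!\bigl(-c\, n\, \eps^2\bigr) \;\longrightarrow\; 0
\end{equation*}
as $n \to \infty$ for any fixed $\eps > 0$, which is the desired convergence in probability.

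The main obstacle, such as it is, is not in the argument structure but in selecting a sub-Gaussian concentration bound whose dependence on the ambient dimension $m_i$ is of the right polynomial/exponential form to survive the union bound, and in tracking the sub-Gaussian norm carefully through the linear transformation $\bz_{i\bullet} \mapsto \bz_{i\bullet}\bSigma^{1/2}$ (which only multiplies the norm by $\sqrt{\kappa_{\max}}$ by virtue of the spectral bound). Beyond these routine bookkeeping items, the proof is essentially a direct application of concentration plus the bounded block-size hypothesis, which is precisely why the approach succeeds where a naive $p$-dimensional sample covariance bound would fail under $p/n \to \delta > 0$.
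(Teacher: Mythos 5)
Your proposal is correct and follows essentially the same route the paper sketches: reduce the operator norm to a max over blocks, apply a sub-Gaussian sample-covariance concentration inequality to each bounded-size block, and union bound over the $O(n)$ blocks. The only difference is cosmetic—you cite Vershynin's HDP Theorem 4.6.1/4.7.1 whereas the paper cites Wainwright Theorem 6.5, which are interchangeable here.
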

\begin{proof}
The proof is straightforward by applying \cite[Theorem 6.5]{wainwright2019high} on each block component, in conjunction with a union bound. Thus we omit it here.  
\end{proof}

\section{Additional Simulation Details}
\subsection{Random effects comparison: Heritability definition}\label{subsec:random_simulation_heritability_definition}
The common setting in Section \ref{section:simulations_random} assumes both $\bX$ and $\bbeta$ are zero-mean random and that $\bX$ has population covariance $\bSigma$. Under random effects setting, a random $\bX$ is allowed, and the denominator in the heritability definition \eqref{def:heritability} reads
$$\Var(\bx_{i\bullet}^{\top}\bbeta)
= \E[\Var[\bx_{i\bullet}^{\top}\bbeta|\bbeta]] + \Var[\E[\bx_{i\bullet}^{\top}\bbeta|\bbeta]]
=\E[\bbeta^\top \bSigma\bbeta],$$
which is a well-defined population quantity. Yet, $\E[\bbeta^\top \bSigma\bbeta]$ does not have a closed form formula if $\bbeta$ is generated with stratifications as in Section \ref{section:simulations_random}, so we approximated it with $100$ iterations.

Under fixed effect setting, conditioning on a given $\bbeta$, the same definition reads
$$\Var(\bx_{i\bullet}^{\top}\bbeta)
= \bbeta^\top \bSigma\bbeta,$$
which is an empirical quantity that fluctuates around the population quantity $\E[\bbeta^\top \bSigma\bbeta]$. Now with $100$ random $\bbeta$, estimating the corresponding $100$ individual $\bbeta^\top \bSigma\bbeta$ approximately constitutes as estimating its expectation. This, therefore, facilitates fair comparisons between random and fixed effects methods. 

\subsection{Random effects comparison: Signal generation details}\label{subsec:random_simulation_details}
To generate $\bbeta$ with varying non-zero entry locations and distributions, we employed the following generation process. 1) We fixed two levels of concentration: $c_{l} = 0.05$ and $c_h = 0.5$. 2) We determine $k_s$, the number of stratifications needed. For uniformly distributed non-zero entries $k_s=1$. For LDMS stratification, $k_s=12$: the cross product of $4$ LD stratifications and $3$ MAF stratifications mentioned in Section \ref{section:simulations_random}. For LDMS plus block stratification, $k_s=24$: the cross product of the previous $12$ LDMS stratifications and $2$ blocks: the last LD block vs the rest, specified in \cite{berisa2016approximately}. 3) For each of the $k_s$ stratifications determined, we alternatively assign $c_l$ and $c_h$ as the concentration, and generate non-zero entries uniformly randomly with the assigned concentration. In the special case $k_s=1$, $c_l$ is assigned. 4) After selecting non-zero entry locations, count the number of non-zero entries $K$, and then calculate the entrywmeshyise variance $\sigma_+^2 = h^2/K$ where $h^2$ is the desired heritability value. 5) Lastly, pick the non-zero entry distribution. For random normal entries, the distribution is $\mcn(0,\sigma_+^2)$. For mixture-of-normal entries, the distribution is $\mcn(\pm\frac{\sigma_+}{\sqrt{10}},\frac{9\sigma_+^2}{10})$ (an equal mixture of two symmetric normals).

\subsection{Fixed effects comparisons: Unbiasedness and a closer look}\label{sec:bias_illustration}
In this section, we present two plots. The first investigates the bias properties of different fixed effects methods. For a representative example, we chose a setting with $n=1000, p=10000, h^2=0.1$, and $\kappa=0.003$, though this selection is not particularly unique. Using a specific $\bbeta$, we generated 100 random instances of $\bX$. The box plots in Figure \ref{fig:bias_illustration} depict the heritability estimates from all the methods under consideration. We observe that every method is (approximately) unbiased, which was confirmed by in additional settings. We also note that truncating negative estimates to $0$ would yield lower MSEs but more bias, so we  opted to keep the negative estimates.

\begin{figure}[ht]
    \centering
    \includegraphics[width=0.6\linewidth]{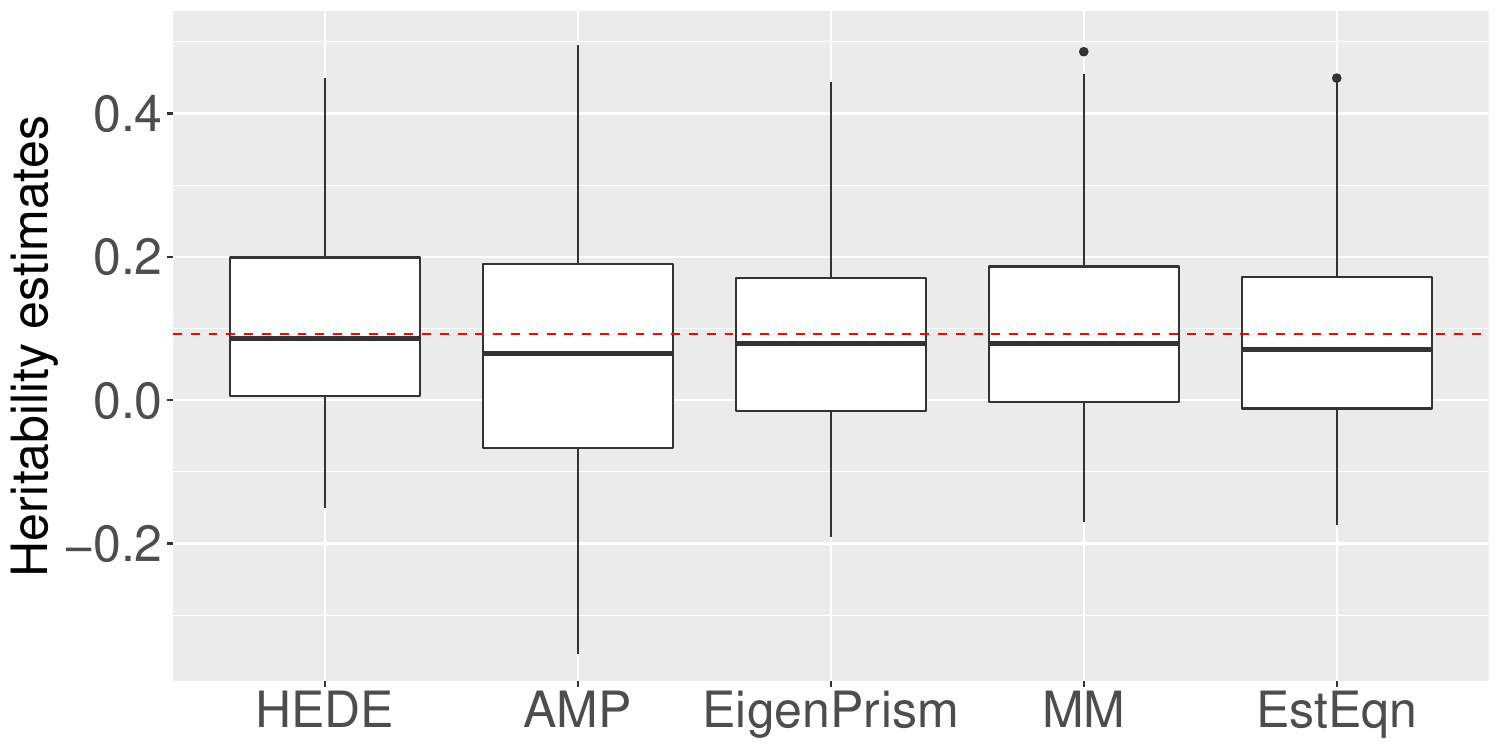}
    \caption{Same setting as Figure \ref{fig:change_eps_h_10000_com}, showing box plots of estimates for a certain realized signal with $100$ draws of design matrices. All methods are unbiased.}
    \label{fig:bias_illustration}
\end{figure}
As a second line of investigation, we recall Figure \ref{fig:change_eps_h_10000_com} from the main manuscript, which showed that the AMP MSEs are often much higher than the remaining. In this light, we zoom into Figure \ref{fig:change_eps_h_10000_com} to obtain a clearer picture for the performance of our method relative to others. Figure \ref{fig:zoomedin} shows this zoomed in version. Note that our relative superiority is mostly maintained across diverse settings. 
\begin{figure}[ht]
    \centering
    \includegraphics[width=\linewidth]{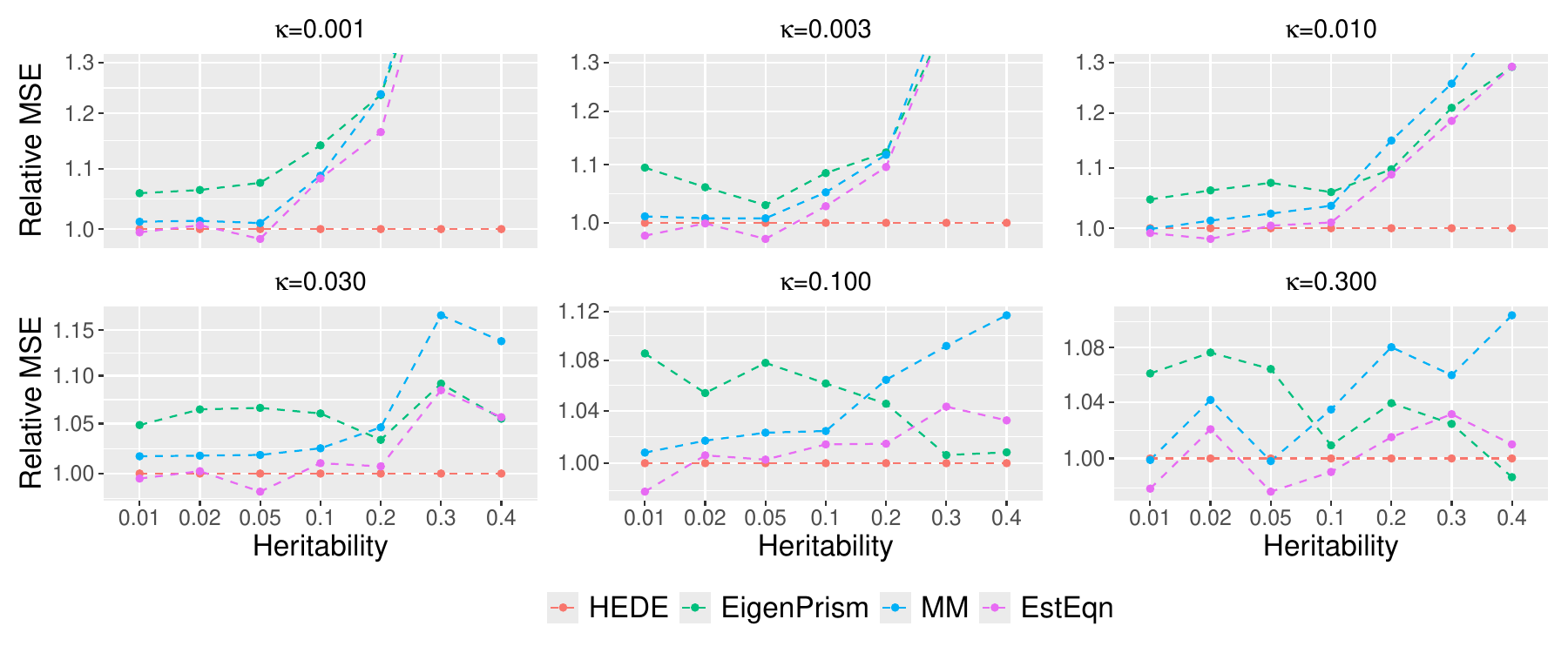}
    \caption{Zoomed in of Figure \ref{fig:change_eps_h_10000_com}.}
    \label{fig:zoomedin}
\end{figure}
\subsection{Fixed effects comparison: Designs with larger sub-Gaussian norms}\label{sec:more_simulations}
To generate $\bX$ with larger sub-Gaussian norms, we followed the exact same setup as in Section \ref{section:simulations_fixed}, except that $\bar{G}_j \sim \text{Unif}[0.005,0.01]$ instead of $[0.01,0.5]$ in \eqref{eqn:X_normalization}. This mimicks lower-frequency variants, leading to larger sub-Gaussian norms in the normalized $\bX$. Comparing Figures \ref{fig:change_eps_h_10000_low},  \ref{fig:change_n_p_0100_05_low} with Figures \ref{fig:change_eps_h_10000_com}, \ref{fig:change_n_p_0100_05_com}, we observe similar trends across the board, with HEDE having either similar or dominating performance.

\begin{figure}[ht]
    \centering
    \includegraphics[width=\linewidth]{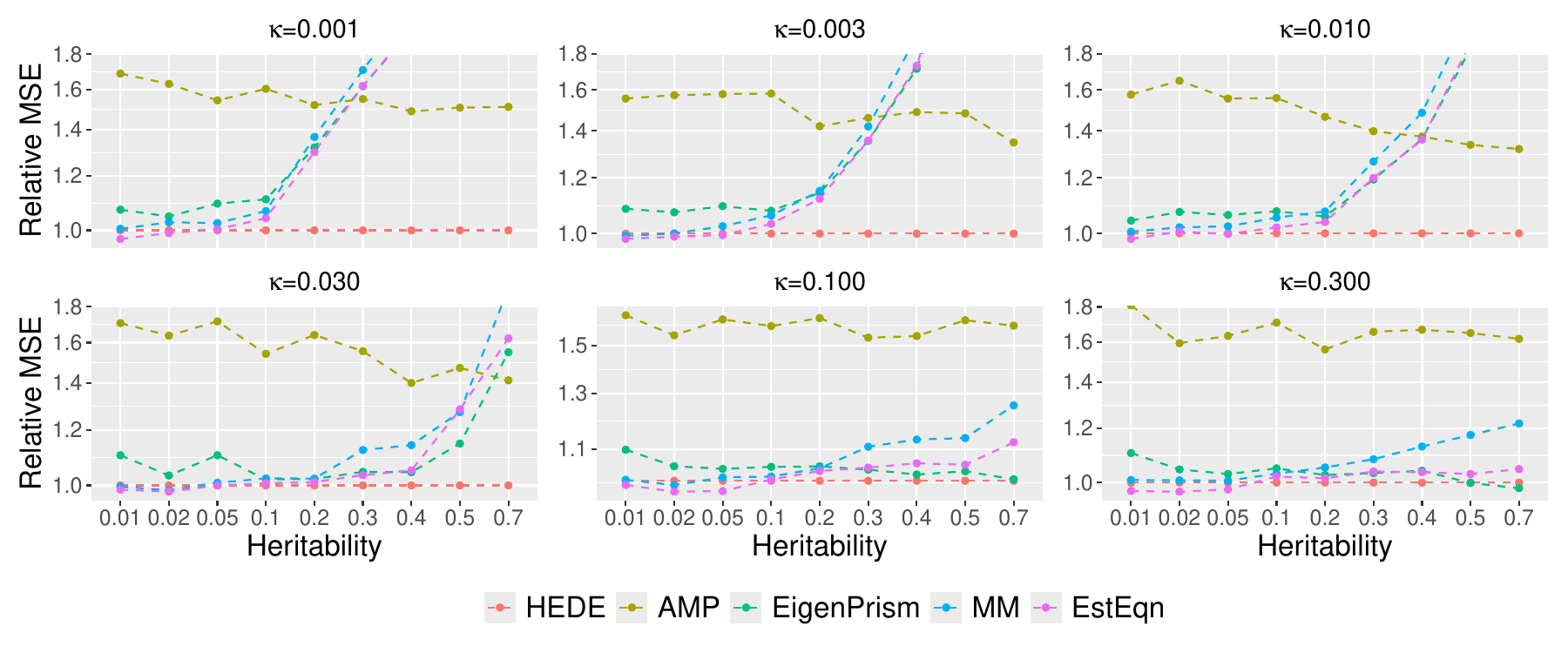}
    \caption{Same setting as Figure \ref{fig:change_eps_h_10000_com}, except that $\pi_j \sim \text{Unif}[0.005,0.01]$.}
    \label{fig:change_eps_h_10000_low}
\end{figure}

\begin{figure}[ht]
    \centering
    \includegraphics[width=\linewidth]{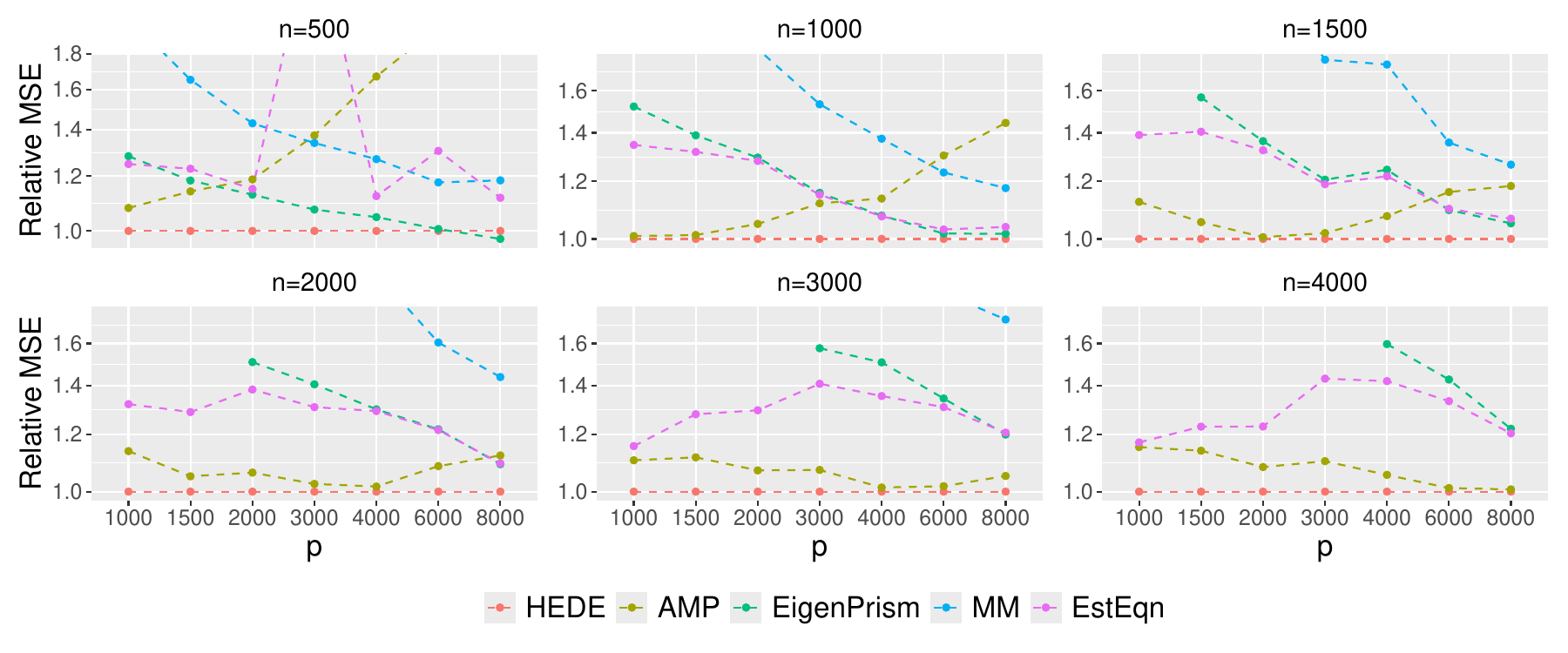}
    \caption{Same setting as Figure \ref{fig:change_n_p_0100_05_com}, except that $\pi_j \sim \text{Unif}[0.005,0.01]$.}
    \label{fig:change_n_p_0100_05_low}
\end{figure}

\section{Proof Notations and Conventions}
This section introduces notations that we will use through the rest of the proof. We consider both the Lasso and ridge estimators in the context of linear models. Several important quantities in our calculations appear in two versions---one computed based on the ridge and the other based on the Lasso. We use subscripts $\textrm{R}, \textrm{L}$ respectively to denote the versions of these quantities corresponding to the ridge and the Lasso. Recall that the setting we study may be expressed via the linear model $\by = \bX \bbeta + \sigma\bz$ where $\by \in \R^n$ denote the responses, $\bX \in \R^{n\times p}$ the design matrix, $\bbeta \in \R^p$ the unknown regression coefficients, and $\bz$ the noise. Then for $k= \textrm{L}, \textrm{R}$, the corresponding Lasso and Ridge estimators $\hat\bbeta_k \in \R^p$ are defined as
\begin{equation}\label{def:hat_beta}
    \hat \bbeta_k = \argmin_{\bb} \left \{ \frac{1}{2n} \|\by - \bX \bb\|_2^2 + \Omega_k(\bb) \right\},
\end{equation}
where $\Omega_{\textrm{L}}(\bb) = \frac{\lambda_{\textrm{L}}}{\sqrt{n}}  \|\bb\|_1, \Omega_{\textrm{R}}(\bb) = \frac{\lambda_{\textrm{R}}}{2} \|\bb\|_2^2$.

Our methodology relies on debiased versions of these estimators, defined as  
\begin{equation}\label{def:hat_beta_d}
        \hat\bbeta_k^d = \hat \bbeta_k + \frac{\bX^\top (\by - \bX \hat \bbeta_k)}{n - \hat\df_k}, \,\, \text{for} \,\, k= \textrm{L}, \textrm{R},
\end{equation}
where  $\hat\df_k \in \R$ are terms that may be interpreted as degrees-of-freedom, and are defined in each of the aforementioned cases as follows,
\begin{equation}\label{def:hat_df}
    \begin{aligned}
        \hat{\text{df}}_{\textrm{L}} &= \|\hat \bbeta_{\rmL}\|_0\\
        \hat{\text{df}}_{\textrm{R}} &= \text{Tr}((\frac{1}{n} \bX^\top \bX + \lambda_{\textrm{R}} \bm{I})^{-1}\frac{1}{n} \bX^\top \bX).
    \end{aligned}
\end{equation}
Our proofs use some intermediate quantities that we introduce next. First, we will require intermediate quantities that replace $\hat{\textrm{df}}_k$ by a set of parameters $\textrm{df}_k$ that rely on underlying problem parameters. For $k=\textrm{L},\textrm{R}$, we define these to be \begin{equation}\label{def:tilde_beta_d}
    \tilde\bbeta_k^d = \hat \bbeta_k + \frac{\bX^\top (\by - \bX \hat \bbeta_k)}{n - \df_k}.
\end{equation}
The exact definition of $\df_k$ is involved, so we defer its presentation to \ref{def:fixed_point}.
Second, we need another set of intermediate quantities $\hat\bbeta_k^f$ that form solutions to optimization problems of the form \eqref{def:hat_beta}, but where the observed data $\{\bm{y},\bm{X}\}$ is replaced by random variables $ \hat\bbeta_k^{f,d}$ that can be expressed as gaussian perturbations of the true signal, with appropriate adjustments to the penalty function. For $k=\textrm{L},\textrm{R}$, these are defined as follows:
\begin{equation}\label{def:hat_beta_f}
    \hat\bbeta_k^f := \eta_k(\hat\bbeta_k^{f,d};\zeta_k) := \argmin_{\bb} \left\{\frac{1}{2}\|\hat\bbeta_k^{f,d} - \bb\|_2^2 + \frac{1}{\zeta_k} \Omega_k(\bb) \right\}, 
\end{equation}
\begin{equation}\label{def:hat_beta_fd}
\text{where} \quad     \hat\bbeta_k^{f,d} = \bbeta + \bg_k^f, \quad (\bg_{\textrm{L}}^f,\bg_{\textrm{R}}^f) \sim \mcn(\bm{0},\bS \otimes \bm{I}_p),
\end{equation}
with $\bS$ of the form
    
\begin{equation}\label{def:bS_g}
    \bS := \begin{pmatrix}\tau_{\textrm{L}}^2 & \rho\tau_{\textrm{L}}\tau_{\textrm{R}} \\ \rho\tau_{\textrm{L}}\tau_{\textrm{R}} & \tau_{\textrm{R}}^2\end{pmatrix},
\end{equation}
for suitable choices of $\tau_\textrm{L},\tau_{\textrm{R}}, \rho,\zeta_k$ that we define later in \eqref{def:fixed_point}.  
Our exposition so far refrains from providing additional insights regarding the necessity of these intermediate quantities--however, the role of these quantities will unravel in due course through the proof. As an aside, note that  the randomness in $\hat\bbeta_k^{f,d},\hat\bbeta_k^{f}$ comes from $\bg_{\textrm{k}}^f$, which is independent of the observed data. We use the superscript `f' (standing for fixed) to denote that these do not depend on our observed data.

Our aforementioned notations are complete once we define the parameters $\textrm{df}_k$ from \eqref{def:tilde_beta_d}, $\zeta_k$ from \eqref{def:hat_beta_f}, and $\bS$ from \eqref{def:bS_g}. We define these as solutions to the following system of equations in the variables $\{ \check{\bS},\check\zeta_{k},\check\df_k, k=\textrm{L},\textrm{R}\}$. 
\begin{equation}\label{def:fixed_point}
    \begin{aligned}
    \check{\bS} &= \frac{1}{n} (\sigma^2 \bI + \E[(\eta_{\rmL}(\hat\bbeta_{\textrm{L}}^{f,d};\check\zeta_{\rmL})-\bbeta,\eta_{\rmR}(\hat\bbeta_{\textrm{R}}^{f,d};\check\zeta_{\rmR})-\bbeta)^\top(\eta_{\rmL}(\hat\bbeta_{\textrm{L}}^{f,d};\check\zeta_{\rmL})-\bbeta,\eta_{\rmR}(\hat\bbeta_{\textrm{R}}^{f,d};\check\zeta_{\rmR})-\bbeta)])\\
    \check\df_k &= \E[\text{div}\eta_k(\hat\bbeta_k^{f,d};\check\zeta_k)],k=\rmL,\rmR\\
    \check\zeta_k &= 1 - \frac{\check\df_k}{n},k=\rmL,\rmR,\\
    \end{aligned}
\end{equation}
where $\text{div}\eta_k(\hat\bbeta_k^{f,d};\check\zeta_k) := \sum_{j=1}^p \frac{\partial}{\partial \hat\bbeta_{k,j}^{f,d}}\eta_k(\hat\bbeta_k^{f,d};\check\zeta_k)$ is defined as the divergence of $\eta_k(\hat\bbeta_k^{f,d};\check\zeta_k)$ with respect to its first argument and recall that we defined $\eta_k(\cdot,\cdot)$ in \eqref{def:hat_beta_f}. 

Extracting the first and the last entries of the first equation in \eqref{def:fixed_point}, we observe that $\tau_{\rmL},\zeta_\rmL,\df_\rmL$ depends on $\lambda_\rmL$ and not on $\lambda_\rmR$, and similarly for the corresponding Ridge parameters. We also note that $\tau_{\rmL\rmR}$ depends on both $\lambda_{\rmL}$ and $\lambda_{\rmR}$. We will use this observation multiple times in our proofs later. 

The system of equations \ref{def:fixed_point} first arose in the context of the problem studied in \cite{celentano2021cad} and Lemma C.1 from the aforementioned paper establishes the uniqueness of the solutions. Furthermore, it follows that 
there exist positive constants $ (\tau_{\min},\tau_{\max},\zeta_{\min},\rho_{\max})$ such that
\begin{equation}\label{lemma:existence_fixed_point}
\tau_{\min}^2+\sigma^2 < n\tau_{k}^2 < \tau_{\max}^2+\sigma^2, \zeta_{\min} < \zeta_k \leq 1, |\rho| < \rho_{\max} < 1,
\end{equation}
for $k=L,R$ where $(\tau_L,\tau_R,\zeta_L,\zeta_R, \rho)$ denotes the unique solution to equation \eqref{def:fixed_point}.

Whenever we mention constants in our proof below, we mean values that only depend on the model parameters $\{\kappa_{\min},\kappa_{\max},\delta,\sigma_{\max},\lambda_{\min},\lambda_{\max}\}$ laid out in Assumptions \ref{assumptions} and \ref{assumptionL}, and do not depend on any other variable (especially $n,p$, and realization of any random quantities). 

Finally in Sections \ref{subsec:proof:debiased_uniform}--\ref{sec:dependencies}, we prove our results under the stylized setting where the design matrix entries are i.i.d. $\mcn(0,1)$ and the noise vector entries are i.i.d. $\mcn(0,1)$. In Section \ref{sec:universality} we establish universality results that show that the same conclusions hold under the setting of Assumptions \ref{assumptions} and \ref{assumptionL}, where the covariate and error distributions are more general. To avoid confusion, we define below a stylized version of Assumptions \ref{assumptions} and \ref{assumptionL}, where everything remains the same except the design, error distributions are taken to be i.i.d. Gaussian.
So, Sections \ref{subsec:proof:debiased_uniform}-\ref{sec:dependencies} work under Assumption \ref{assumptions2}, while Section \ref{sec:universality} works under  Assumptions \ref{assumptions} and \ref{assumptionL}.
\begin{assumption}\label{assumptions2}
    \ 
\begin{enumerate}
    \item Same as Assumption \ref{assumptions}(1).
    \item Each entry of $\bX$ satisfies $X_{ij}\iid \mcn(0,1)$. 
    \item Same as Assumption \ref{assumptions}(3).
    \item The noise $\bep$ satisfies $\epsilon_i \iid \mcn(0,\sigma^2)$, with $\sigma^2 \leq \sigma_{\max}^2$.
    \item Same as Assumption \ref{assumptionL}(2).
\end{enumerate}
\end{assumption}

Note that we don't need Assumption \ref{assumptionL}(1) since it is true for Gaussians (see B.5.4 in \cite{celentano2020lasso}).

\section{Proof of Theorem \ref{main:thm:debiased_uniform}}\label{subsec:proof:debiased_uniform}

Theorem \ref{main:thm:debiased_uniform} forms the backbone of our results so we begin by presenting its proof here. 
Towards this goal, we first prove the following slightly different version.

\begin{thm}\label{main:thm:debiased_uniform_tilde}
Suppose that Assumption \ref{assumptions2} holds. Then for any $1$-Lipschitz function $\phi_\beta:(\R^p)^3\rightarrow \R$, we have 
$$\sup_{\lambda_{\textrm{L}},\lambda_{\textrm{R}} \in [\lambda_{\min},\lambda_{\max}]}|\phi_\beta(\tilde\bbeta_{\textrm{L}}^d,\tilde\bbeta_{\textrm{R}}^d,\bbeta) - \E[\phi_\beta(\hat\bbeta_{\textrm{L}}^{f,d},\hat\bbeta_{\textrm{R}}^{f,d},\bbeta)]| \convP 0,$$
where $\tilde\bbeta_{k}^d,\hat\bbeta_{k}^{f,d}$ are as defined in \eqref{def:tilde_beta_d}, \eqref{def:hat_beta_fd} for $k = \textrm{L},\textrm{R}$ respectively. 
\end{thm}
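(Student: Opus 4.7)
\textbf{Proof proposal for Theorem \ref{main:thm:debiased_uniform_tilde}.}

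The plan is to reduce the uniform statement to a combination of (i) a pointwise in-probability limit at every fixed $(\lambda_L,\lambda_R)$, and (ii) stochastic equicontinuity of the random map $(\lambda_L,\lambda_R)\mapsto \phi_\beta(\tilde\bbeta_L^d,\tilde\bbeta_R^d,\bbeta)$ together with ordinary continuity of its deterministic limit. Once both pieces are in place, a standard $\varepsilon$-net argument on the compact parameter set $[\lambda_{\min},\lambda_{\max}]^2$ yields the uniform conclusion.

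First, for each fixed pair $(\lambda_L,\lambda_R)$, I would invoke the joint CGMT-based characterization developed in \cite{celentano2021cad}. Under Assumption \ref{assumptions2}, that work establishes that the joint empirical behavior of $(\hat\bbeta_L,\hat\bbeta_R)$, and hence (after the explicit algebraic transformation in \eqref{def:tilde_beta_d}) of $(\tilde\bbeta_L^d,\tilde\bbeta_R^d)$, is captured by Lipschitz test functions through the Gaussian surrogate $(\hat\bbeta_L^{f,d},\hat\bbeta_R^{f,d})$ driven by the unique solution of the fixed point system \eqref{def:fixed_point}. Uniqueness of the fixed point and the bounds in \eqref{lemma:existence_fixed_point} guarantee that all relevant population parameters are well-defined and stay inside a compact set uniformly over $(\lambda_L,\lambda_R)\in[\lambda_{\min},\lambda_{\max}]^2$.

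The core technical step, and the one I expect to be the main obstacle, is stochastic equicontinuity. I would show that there exists a constant $C$ (depending only on model constants) such that with probability tending to one,
\[
\bigl|\phi_\beta(\tilde\bbeta_L^d(\lambda_L),\tilde\bbeta_R^d(\lambda_R),\bbeta)-\phi_\beta(\tilde\bbeta_L^d(\lambda_L'),\tilde\bbeta_R^d(\lambda_R'),\bbeta)\bigr|\le C\bigl(|\lambda_L-\lambda_L'|+|\lambda_R-\lambda_R'|\bigr).
\]
Since $\phi_\beta$ is $1$-Lipschitz, this reduces to uniform Lipschitz control of $\lambda_L\mapsto \tilde\bbeta_L^d(\lambda_L)$ and $\lambda_R\mapsto \tilde\bbeta_R^d(\lambda_R)$ in $\ell_2$. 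For the ridge part this is essentially explicit, since $\hat\bbeta_R$ is an analytic function of $\lambda_R$ and one can differentiate and bound the derivative using the spectral bounds provided by Assumption \ref{assumptionL}(1) (or its Gaussian consequence under Assumption \ref{assumptions2}); $\df_R$ is likewise smooth in $\lambda_R$ via the fixed point equations. For the lasso part, I would use the piecewise-linearity of the Lasso path and bound the slope via the KKT conditions restricted to the active set, again using the minimum singular value lower bound on submatrices. Continuity in $\lambda_L$ of $\df_L$ follows by differentiating the fixed point equation in \eqref{def:fixed_point} and noting that Stein-type identities render $\E[\mathrm{div}\,\eta_L]$ smooth in $\lambda_L$.

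Continuity of the deterministic limit $(\lambda_L,\lambda_R)\mapsto \E[\phi_\beta(\hat\bbeta_L^{f,d},\hat\bbeta_R^{f,d},\bbeta)]$ follows by implicit differentiation of the fixed point system \eqref{def:fixed_point} combined with $1$-Lipschitzness of $\phi_\beta$; the required nondegeneracy of the implicit derivative at the unique solution is again supplied by the structural results in \cite{celentano2021cad}. With Lipschitz control in hand on both sides, I would choose an $\varepsilon$-net $\mathcal{N}_\varepsilon$ of size $O(\varepsilon^{-2})$ in $[\lambda_{\min},\lambda_{\max}]^2$, apply the pointwise in-probability limit together with a union bound over $\mathcal{N}_\varepsilon$ to control deviations at the net, and then bridge to arbitrary $(\lambda_L,\lambda_R)$ via the equicontinuity estimate at scale $\varepsilon$. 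Letting $\varepsilon\downarrow 0$ after $n\to\infty$ completes the proof.
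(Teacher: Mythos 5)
Your high-level plan---pointwise convergence at each fixed $(\lambda_L,\lambda_R)$, stochastic equicontinuity, then an $\varepsilon$-net on the compact parameter square---matches the paper's overall skeleton, and the citation of \cite{celentano2021cad} for the pointwise joint Gaussian characterization of $(\tilde\bbeta_L^d,\tilde\bbeta_R^d)$ is the right one. The difference, and the place where your argument has a genuine gap, lies in the equicontinuity ingredient you propose.

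You reduce equicontinuity to \emph{path Lipschitzness} of $\lambda_L\mapsto\tilde\bbeta_L^d(\lambda_L)$ in $\ell_2$, to be proved via piecewise-linearity of the Lasso path plus a slope bound on each piece from KKT and the submatrix singular-value condition. The slope on a piece with active set $A$ is $-\tfrac{1}{\sqrt n}(\tfrac1n \bX_A^\top\bX_A)^{-1}\mathrm{sgn}_A$, whose $\ell_2$ norm is at most $\tfrac{1}{\kappa_{\min}^2}\sqrt{|A|/n}$; to make this $O(1)$ you must invoke Assumption \ref{assumptionL}(1), which only controls $\kappa_-(\bX,s)$ for $s\le \gamma n$ with $\gamma<1$. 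Thus the slope bound silently requires that $\|\hat\bbeta_L(\lambda)\|_0\le\gamma n$ \emph{simultaneously for all $\lambda\in[\lambda_{\min},\lambda_{\max}]$} on a high-probability event. That uniform sparsity control is itself a nontrivial result: it is essentially Lemma \ref{lemma:df_diff}, which the paper proves by a CGMT/uniform-concentration argument of the same kind that you are trying to avoid by going through deterministic path analysis. Your proposal does not establish it, nor does it reduce it to anything that was already in hand, so as written the Lasso half of the equicontinuity claim is unsubstantiated.

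The paper sidesteps this difficulty entirely. Rather than proving Lipschitzness of the solution map $\lambda\mapsto\hat\bbeta(\lambda)$, it proves Lipschitzness \emph{in $\lambda'$ of the losses evaluated at a fixed minimizer} (Lemmas \ref{lemma:ridge_loss_C_difference_lambda} and \ref{lemma:ridge_loss_difference_lambda}), which follows from elementary algebra on the penalty term and needs no control of active sets. This is then combined with the local-stability / concentration statements coming out of the CGMT (Lemmas \ref{lemma:ridge_conditional_bw}, \ref{lemma:ridge_W2}): if $\hat\bbeta(\lambda)$ nearly minimizes the $\lambda'$-loss, it must be close to $\hat\bbeta(\lambda')$. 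That transfers concentration from a net point $\lambda_i$ to any nearby $\lambda$ without ever touching the slope of the Lasso path. Relatedly, the only ``path Lipschitzness''-type fact the paper dares to assert is a \emph{weak} one (Lemmas \ref{lemma:wf_weak_lipschitz} and \ref{lemma:lipschitz_psi_conditional}): $\|\hat\bbeta_L(\lambda_1)-\hat\bbeta_L(\lambda_2)\|_2\le M|\lambda_1-\lambda_2|$ only for $|\lambda_1-\lambda_2|\ge\varepsilon$, obtained by approximating $\hat\bbeta_L$ by its deterministic surrogate $\hat\bbeta_L^f$ (which is genuinely Lipschitz, being a soft-threshold) and absorbing an $\varepsilon$ approximation error. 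This is precisely as strong as what is achievable without the uniform active-set control you are implicitly assuming, and it is enough for the $\varepsilon$-net argument because the net scale and the approximation error can be taken commensurate. Finally, the paper handles the \emph{joint} $(\lambda_L,\lambda_R)$ behavior by conditioning on $\bg_L^f=\hat\bg_L$ (Eqn.~\eqref{def:hat_gl}) and splitting via Lemmas \ref{lemma:bridge_1}--\ref{lemma:bridge_2}; if you want to treat the pointwise result from \cite{celentano2021cad} as a black box, you should check that their joint-characterization statement (fixed $\lambda$'s) actually delivers a form amenable to uniformization, since their proof also passes through this conditioning and would need the same uniform-in-$\lambda$ strengthening.

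In summary: the $\varepsilon$-net scaffolding is right, but your proposed equicontinuity lemma is not proved, and the stated route (slope of the piecewise-linear Lasso path) is weaker than what you would actually need and requires an unestablished uniform sparsity bound. The paper's route via loss-in-$\lambda$ Lipschitzness plus localization is more robust precisely because it avoids this.
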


Theorem \ref{main:thm:debiased_uniform_tilde} differs from Theorem \ref{main:thm:debiased_uniform} in that $\hat\bbeta_k^d$ is now replaced by $\tilde{\bbeta}_k^d$. The difference between these lies in the fact that $\hat{\textrm{df}}_k$ is replaced by $\textrm{df}_k$ from the former to the latter (c.f.~Eqns \ref{def:hat_beta_d} and \ref{def:tilde_beta_d}).

Given Theorem \ref{main:thm:debiased_uniform_tilde}, the proof of Theorem \ref{main:thm:debiased_uniform} follows a two step procedure: we first establish that the empirical quantities $\hat{\textrm{df}}_k$ and the parameters 
$\textrm{df}_k$ 
are uniformly close asymptotically. This allows us to establish that $\tilde{\bbeta}_k^d$ and $\hat{\bbeta}_k^d$ are asymptotically close. Theorem \ref{main:thm:debiased_uniform} thereby follows from here, when applied in conjunction with Theorem \ref{main:thm:debiased_uniform_tilde} and the fact that $\phi_\beta$ is Lipschitz. We formalize these arguments below. 

\begin{lemma}\label{lemma:df_diff}
Recall the definitions of $\hat\df_k$ and $\df_k$ from \eqref{def:hat_df}, \eqref{def:fixed_point}.
Under Assumption \ref{assumptions2}, for $k=\rmL,\rmR$,
\begin{equation*}
\suplk \left|\frac{\hat\df_k}{p} - \frac{\df_k}{p}\right| \convP 0.
\end{equation*}
\end{lemma}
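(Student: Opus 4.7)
The plan is to treat the Ridge and Lasso cases separately, as they rely on quite different structural features.

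For the \textbf{Ridge} case, write $\hat\df_\rmR(\lambda_\rmR)/p = p^{-1}\sum_i \hat\sigma_i^2/(\hat\sigma_i^2 + \lambda_\rmR)$, where $\hat\sigma_i^2$ are the eigenvalues of $\bX^\top\bX/n$. As a function of $\lambda_\rmR$ this quantity is bounded in $[0,1]$, monotone decreasing, and $(1/\lambda_{\min})$-Lipschitz on $[\lambda_{\min},\lambda_{\max}]$. Pointwise convergence in probability follows from the Marchenko--Pastur theorem under Assumption \ref{assumptions2}; the limit coincides with $\df_\rmR(\lambda_\rmR)/p$ upon simplifying \eqref{def:fixed_point} with the explicit ridge denoiser $\eta_\rmR(u;\zeta) = u/(1+\lambda_\rmR/\zeta)$. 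A union bound over a fine grid in $\lambda_\rmR$, combined with the uniform Lipschitz bound (and the analogous continuity of the limit in $\lambda_\rmR$, inherited from \eqref{def:fixed_point}), then upgrades pointwise to uniform convergence.

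For the \textbf{Lasso} case, pointwise convergence $\|\hat\bbeta_\rmL(\lambda_\rmL)\|_0/p \convP \df_\rmL(\lambda_\rmL)/p$ is a standard Gaussian AMP consequence (\cite{miolane2021distribution, celentano2020lasso, celentano2021cad}), with the limit given by $\PP(|\beta + \tau_\rmL Z| > \lambda_\rmL/\zeta_\rmL)$ for $Z\sim\mcn(0,1)$; this limit is continuous in $\lambda_\rmL$ since $(\tau_\rmL,\zeta_\rmL)$ depend continuously on $\lambda_\rmL$ through \eqref{def:fixed_point}. The obstacle to uniformity is the $\ell_0$ discontinuity. I circumvent it by introducing, for small $\epsilon > 0$, the smoothed proxy $\hat\df_\rmL^\epsilon := \#\{j : |\hat\beta_{\rmL,j}| > \epsilon\}$ and its population analog $\df_\rmL^\epsilon/p := \PP(|\eta_\rmL(\beta + \tau_\rmL Z;\zeta_\rmL)| > \epsilon)$. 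Writing $\hat\df_\rmL/p - \df_\rmL/p$ as three telescoping differences, the middle term $\hat\df_\rmL^\epsilon/p - \df_\rmL^\epsilon/p$ converges uniformly in $\lambda_\rmL$ via stochastic equicontinuity: the Lasso path $\lambda_\rmL \mapsto \hat\bbeta_\rmL(\lambda_\rmL)$ is deterministically Lipschitz (directly from the KKT/subgradient conditions), so the counting proxy fluctuates in a controlled way as $\lambda_\rmL$ moves within a fine grid, and pointwise convergence on the grid follows by union bound. The two end-terms $|\hat\df_\rmL/p - \hat\df_\rmL^\epsilon/p|$ and $|\df_\rmL/p - \df_\rmL^\epsilon/p|$ tend to $0$ as $\epsilon\to 0$, uniformly in $\lambda_\rmL$, by controlling the empirical and population mass of small-magnitude but nonzero Lasso entries.

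The \textbf{main obstacle} is the uniform-in-$\lambda_\rmL$ control of the first (empirical) end-term, i.e., bounding the fraction of ``barely nonzero'' Lasso coordinates uniformly in $\lambda_\rmL$. This requires the joint empirical distribution of $(\beta_j,\hat\beta_{\rmL,j})$ to converge uniformly in $\lambda_\rmL$, which I will derive from the Lipschitz dependence of $(\tau_\rmL,\zeta_\rmL)$ on $\lambda_\rmL$—guaranteed by \eqref{lemma:existence_fixed_point} together with the strict contraction of the AMP state evolution map—and the absolute continuity of the limiting distribution near the soft-thresholding boundary. Balancing the $\epsilon$-smoothing error against the $\epsilon$-net discretization error then yields the claim.
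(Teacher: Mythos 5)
Your Ridge argument is sound and takes a genuinely different, more elementary route than the paper's. The paper invokes the anisotropic local law of Knowles--Yin (\cite[Theorem~3.7]{knowles2017anisotropic}, via \cite[Lemma~H.1]{celentano2021cad}) to get a quantitative concentration rate that is uniform in the spectral parameter; you instead use the Marchenko--Pastur theorem for pointwise convergence of the Stieltjes-transform-like quantity $p^{-1}\sum_i\hat\sigma_i^2/(\hat\sigma_i^2+\lambda_\rmR)$, and then observe that both the empirical quantity and its limit are deterministically $O(1/\lambda_{\min})$-Lipschitz in $\lambda_\rmR$, so an $\epsilon$-net upgrades pointwise to uniform. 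Both are valid; your route trades the sharper Knowles--Yin rate for a softer but more transparent weak-convergence argument. The identification of the MP limit with the fixed-point $\df_\rmR/p$ via the explicit ridge proximal map is correct.

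Your Lasso argument, however, has a real gap. First, the claim that the Lasso path is ``deterministically Lipschitz directly from the KKT/subgradient conditions'' is false: on an active set $S$ the slope is $\sqrt{n}(\bX_S^\top\bX_S)^{-1}\mathrm{sgn}(\hat\bbeta_S)$, whose norm blows up if $\bX_S$ is ill-conditioned; Lipschitzness holds only with high probability under a restricted-singular-value condition (Assumption~\ref{assumptionL}(1), or Gaussianity). More importantly, the stochastic-equicontinuity step for the middle term $\hat\df_\rmL^\epsilon/p-\df_\rmL^\epsilon/p$ does not follow from path-Lipschitzness alone: the indicator $\mathbbm{1}(|\cdot|>\epsilon)$ is still discontinuous, so a small move in $\lambda_\rmL$ can flip $\Theta(p)$ indicators if a macroscopic fraction of coordinates sits in an $O(|\Delta\lambda_\rmL|)$-neighborhood of $\epsilon$. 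Ruling this out requires precisely the uniform-in-$\lambda_\rmL$ control of the empirical mass near the soft-thresholding boundary that you flag as ``the main obstacle'' but resolve only heuristically; note also that because $\tau_\rmL\asymp n^{-1/2}$ the limiting density near the kink is $O(\sqrt{n})$, forcing $\epsilon$ and the grid spacing to shrink with $n$ at a coordinated rate and making the union bound delicate. The paper sidesteps all of this by directly invoking \cite[Theorem~F.1]{miolane2021distribution}, which proves exactly the uniform-in-$\lambda_\rmL$ convergence of $\|\hat\bbeta_\rmL\|_0/p$; your sketch essentially reconstructs that theorem's strategy but leaves the hard quantitative parts unresolved.
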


\begin{proof}
\cite[Theorem F.1.]
{miolane2021distribution} established the aformentioned for $k=\textrm{L}$.

For $k=R$, \cite[Lemma H.1]{celentano2021cad} (which further cites \cite[Theorem 3.7]{knowles2017anisotropic}) showed that that $\hat\df_\rmR/p$ converges to $\df_\rmR/p$ for any fixed $\lambda_\rmR$. So for our purposes it suffices to extend this proof to all $\lambda_\rmR \in [\lambda_{\min},\lambda_{\max}]$ simultaneously. We achieve this below.

Recall from Eqn. \ref{def:fixed_point} that $\hat\df_{\textrm{R}}/p = 1-\lambda_\rmR\text{Tr}(\frac{1}{p}(\frac{1}{n} \bX^\top \bX + \lambda_{\textrm{R}} \bm{I})^{-1})$, where the trace on the right hand side is the negative resolvent 
of $\frac{1}{n} \bX^\top \bX$ evaluated at $-\lambda_\rmR$ and normalized by $1/p$. \cite[Theorem 3.7]{knowles2017anisotropic} (with some notation-transforming algebra) states that this negative normalized resolvent converges in probability to $(1-\df_\rmR/p)/\lambda_\rmR$ with fluctuation of level $O(N^{-1/2}\kappa^{-1/4})$, where $\kappa = \lambda_{\rmR} + 1$ is the distance from $-\lambda_{\rmR}$ to the spectrum of $\bm{I}$. Therefore, $\hat\df_{\rmR}/p$ concentrates around $\df_{\rmR}/p$ with fluctuation of level $O(N^{-1/2}\lambda_{\rmR}(1+\lambda_{\rmR})^{-1/4}) = O(N^{-1/2}(1+\lambda_{\max})^{3/4})$, for all $\lambda_{\rmR} \in [\lambda_{\min},\lambda_{\max}]$. 
\end{proof}

As a direct consequence, we have

\begin{cor}\label{cor:df_bounded}
Under Assumption \ref{assumptions2}, we have for $k=\rmL,\rmR$,
\begin{equation*}
\begin{aligned}
        \suplk \left|1-\frac{\df_k}{n}\right| &=\Theta_p(1),\\
        \suplk \left|1-\frac{\hat{\df}_k}{n}\right| &=\Theta_p(1).
\end{aligned}
\end{equation*}
\end{cor}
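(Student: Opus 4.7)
The plan is to reduce both claims to Lemma \ref{lemma:df_diff} together with the a priori bounds on the fixed point parameters recorded in \eqref{lemma:existence_fixed_point}. No new analysis of the Lasso or ridge is needed, so this should be essentially a one- or two-paragraph corollary with no serious technical obstacle.

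First I would handle the deterministic quantity $1 - \df_k/n$. The third relation in the fixed point system \eqref{def:fixed_point} reads $\check\zeta_k = 1 - \check\df_k/n$, so by uniqueness of the fixed point one has the exact identity $|1 - \df_k/n| = \zeta_k$. Plugging in the two-sided bound $\zeta_{\min} < \zeta_k \leq 1$ from \eqref{lemma:existence_fixed_point}, which depends only on the model constants and is therefore uniform over $\lambda_k \in [\lambda_{\min}, \lambda_{\max}]$, yields a deterministic $\Theta(1)$ conclusion for $|1 - \df_k/n|$, and hence a $\Theta_p(1)$ conclusion, uniformly in $\lambda_k$.

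Next I would transfer this to the empirical counterpart $1 - \hat\df_k/n$. Using $p/n \to \delta \in (0,\infty)$ together with Lemma \ref{lemma:df_diff},
\[
\suplk \left|\frac{\hat\df_k}{n} - \frac{\df_k}{n}\right| \;=\; \frac{p}{n}\cdot \suplk \left|\frac{\hat\df_k}{p} - \frac{\df_k}{p}\right| \;\convP\; 0.
\]
Combined with the previous step, this shows that with probability tending to $1$, uniformly over $\lambda_k \in [\lambda_{\min}, \lambda_{\max}]$, the quantity $1 - \hat\df_k/n$ lies in a fixed interval $[\zeta_{\min}/2,\, 1 + \zeta_{\min}/2]$, which immediately gives the required $\Theta_p(1)$ bound.

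The only point that warrants a moment of care is verifying that the constant $\zeta_{\min}$ supplied by \eqref{lemma:existence_fixed_point} can be chosen independently of $\lambda_k$ across the compact interval $[\lambda_{\min}, \lambda_{\max}]$. I expect this to follow from continuity (and a standard monotonicity argument) of the scalar fixed point map in $\lambda_k$ on a compact interval, so that the positive infimum is attained. If not already baked into the cited existence statement, it can be extracted by a direct compactness argument, but I do not anticipate it being a substantive obstacle.
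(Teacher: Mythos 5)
Your proposal is correct and follows essentially the same route as the paper: identify $|1-\df_k/n|$ with $\zeta_k$ via \eqref{def:fixed_point}, invoke the uniform bounds $\zeta_{\min}<\zeta_k\le 1$ from \eqref{lemma:existence_fixed_point}, and transfer to $\hat\df_k/n$ via Lemma \ref{lemma:df_diff}. Your final concern about uniformity of $\zeta_{\min}$ is already covered by the paper's convention that such constants depend only on $\{\kappa_{\min},\kappa_{\max},\delta,\sigma_{\max},\lambda_{\min},\lambda_{\max}\}$; the only stylistic difference is that you phrase the second step as trapping $1-\hat\df_k/n$ in a fixed interval (which makes the lower bound of the $\Theta_p(1)$ claim explicit), whereas the paper writes a one-sided triangle inequality.
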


\begin{proof}
The first line follows directly from \eqref{def:fixed_point} and the fact that $\zeta_k$ is bounded by \ref{lemma:existence_fixed_point}.
For the second line, we have
$$\suplk \left|1-\frac{\hat{\df}_k}{n}\right| \leq \suplk \left|1-\frac{{\df}_k}{n}\right| + \suplk \left|\frac{\hat\df_k}{n} - \frac{\df_k}{n} \right|.$$
where the first term is $\Theta_p(1)$ and the second term is $o_P(1)$ by Lemma \ref{lemma:df_diff}.
\end{proof}

From here, one can derive the following Lemma using the definitions of $\hat\bbeta_k^{d}$, $\tilde\bbeta_k^d$. One should compare the following to \cite[Lemma H.1.ii]{celentano2021cad} that proved a pointwise version of this result without any supremum over the tuning parameters. 

\begin{lemma}\label{lemma:beta_emp_diff}
Under Assumption \ref{assumptions2}, for $k=\rmL,\rmR$, 
\begin{equation*}
    \suplk\|\hat\bbeta_k^{d} - \tilde\bbeta_k^d\|_2 \convP 0.
\end{equation*}
\end{lemma}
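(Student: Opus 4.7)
The plan is to express the difference $\hat\bbeta_k^d - \tilde\bbeta_k^d$ as a single scaled residual and then control each factor uniformly in $\lambda_k$. Subtracting the definitions \eqref{def:hat_beta_d} and \eqref{def:tilde_beta_d} algebraically gives
$$
\hat\bbeta_k^d - \tilde\bbeta_k^d \;=\; \frac{\hat\df_k - \df_k}{(n-\hat\df_k)(n-\df_k)}\,\bX^\top(\by-\bX\hat\bbeta_k),
$$
so taking $\ell_2$ norms it suffices to bound (i) $|\hat\df_k - \df_k|$, (ii) the denominator $(n-\hat\df_k)(n-\df_k)$, and (iii) the residual norm $\|\bX^\top(\by-\bX\hat\bbeta_k)\|_2$, each uniformly over $\lambda_k \in [\lambda_{\min},\lambda_{\max}]$.

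The first two pieces are immediate from already-established results. Lemma \ref{lemma:df_diff} gives $\suplk|\hat\df_k - \df_k|/p \convP 0$, and since $p/n \to \delta \in (0,\infty)$ this is equivalent to $\suplk|\hat\df_k - \df_k| = o_P(n)$. Corollary \ref{cor:df_bounded} provides $\suplk|1-\hat\df_k/n|$ and $\suplk|1-\df_k/n|$ both of order $\Theta_p(1)$, so the denominator is $\Theta_p(n^2)$ uniformly.

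For the residual term I would handle the Lasso and Ridge separately via their first-order conditions. For the Lasso, KKT at $\hat\bbeta_\rmL$ gives $\|\bX^\top(\by-\bX\hat\bbeta_\rmL)\|_\infty \leq \lambda_\rmL\sqrt{n}$, hence
$$
\|\bX^\top(\by-\bX\hat\bbeta_\rmL)\|_2 \;\leq\; \sqrt{p}\,\|\bX^\top(\by-\bX\hat\bbeta_\rmL)\|_\infty \;\leq\; \lambda_{\max}\sqrt{np} \;=\; O(n)
$$
deterministically and uniformly. For the Ridge, stationarity yields $\bX^\top(\by-\bX\hat\bbeta_\rmR) = n\lambda_\rmR\hat\bbeta_\rmR$, and comparing the Ridge cost at $\hat\bbeta_\rmR$ with its value at $\bm 0$ produces $\|\hat\bbeta_\rmR\|_2 \leq \|\by\|_2/\sqrt{n\lambda_\rmR}$; together with $\|\by\|_2/\sqrt{n}=O_p(1)$ under Assumption \ref{assumptions2}, this gives $\|\bX^\top(\by-\bX\hat\bbeta_\rmR)\|_2 = n\lambda_\rmR\|\hat\bbeta_\rmR\|_2 = O_p(n\sqrt{\lambda_\rmR}) = O_p(n)$ uniformly over $\lambda_\rmR \leq \lambda_{\max}$.

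Combining the three bounds yields
$$
\suplk \|\hat\bbeta_k^d - \tilde\bbeta_k^d\|_2 \;\leq\; \frac{o_P(n)\cdot O_p(n)}{\Theta_p(n^2)} \;=\; o_P(1),
$$
which is the claim. I do not anticipate a substantive obstacle; the only subtlety is to verify that each bound is genuinely uniform in $\lambda_k$, and this is automatic since the first-order-condition arguments depend on $\lambda_k$ only through $\lambda_\rmL \leq \lambda_{\max}$ and $\lambda_\rmR \in [\lambda_{\min},\lambda_{\max}]$, while Lemma \ref{lemma:df_diff} and Corollary \ref{cor:df_bounded} are already uniform statements.
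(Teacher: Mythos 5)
Your proof is correct, and you reach the same conclusion as the paper via the same overall decomposition, but you take a genuinely different route for bounding the residual term. The paper writes $\|\bX^\top(\by-\bX\hat\bbeta_k)\|_2 \le \|\bX\|_{\op}\,\|\by-\bX\hat\bbeta_k\|_2$, invokes Corollary \ref{largest_singular_value} for the operator norm and Lemma \ref{lemma:uvw_convergence} for the residual norm $\|\hat\bu_k\|_2/\sqrt{n}$, and bounds $\bigl|\tfrac{1}{1-\hat\df_k/n} - \tfrac{1}{1-\df_k/n}\bigr|$ by Lemma \ref{lemma:df_diff} and Corollary \ref{cor:df_bounded}. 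You instead bound $\|\bX^\top(\by-\bX\hat\bbeta_k)\|_2$ directly from the first-order conditions: for the Lasso the $\ell_\infty$ subgradient bound and $\|\cdot\|_2\le\sqrt{p}\|\cdot\|_\infty$ give a deterministic $O(n)$ bound; for the Ridge, stationarity plus the cost comparison at $\bm 0$ gives $\|\hat\bbeta_\rmR\|_2 \le \|\by\|_2/\sqrt{n\lambda_\rmR}$. The advantage of your approach is that it avoids Lemma \ref{lemma:uvw_convergence}, which is a substantive CGMT-based result, making this step more elementary and self-contained. The trade-off is that the paper's version exploits a lemma that is needed anyway elsewhere, so it costs nothing extra, whereas your Ridge bound still implicitly relies on $\|\by\|_2/\sqrt{n}=O_P(1)$ (true under Assumption \ref{assumptions2}, but worth stating via $\by=\bX\bbeta+\bep$ and Corollary \ref{largest_singular_value} or a direct concentration bound). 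Your algebraic identity $\hat\bbeta_k^d-\tilde\bbeta_k^d = \tfrac{\hat\df_k-\df_k}{(n-\hat\df_k)(n-\df_k)}\bX^\top(\by-\bX\hat\bbeta_k)$ is correct, and the scalings $|\hat\df_k-\df_k|=o_P(n)$, $(n-\hat\df_k)(n-\df_k)\gtrsim n^2$ with high probability, and residual norm $O_P(n)$ combine to give $o_P(1)$ as claimed, with all bounds uniform in $\lambda_k$ as you note.
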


\begin{proof}
By definition,
\begin{align*}
    &\suplk\|\hat\bbeta_k^{d} - \tilde\bbeta_k^d\|_2 \\
    =&\suplk \left\|\frac{\bX^\top(\by-\bX\hat\bbeta_k)}{n-\hat\df_k} - \frac{\bX^\top(\by-\bX\hat\bbeta_k)}{n-\df_k}\right\|_2\\
    \leq & \frac{\|\bX\|_{\op}}{\sqrt{n}} \suplk\frac{\|\by-\bX\hat\bbeta_k\|_2}{\sqrt{n}} \suplk\left|\frac{1}{{1-\hat\df_k/n}}-\frac{1}{{1-\df_k/n}}\right|.
\end{align*}
By Corollary \ref{largest_singular_value}, $\|\bX\|_{\op}/\sqrt{n}$ is bounded with probability $1-o(1)$. 
By Lemma \ref{lemma:uvw_convergence}, $\suplk\|\by-\bX\hat\bbeta_k\|/\sqrt{n}$ is bounded with probability $1-o(1)$. Lastly,
$$\suplk\left|\frac{1}{{1-\hat\df_k/n}}-\frac{1}{{1-\df_k/n}}\right|
    =\suplk \left|\frac{\df_k/n-\hat\df_k/n}{(1-\hat\df_k/n)(1-\df_k/n)}\right| \convP 0,$$
where the numerator is $o_P(1)$ by Lemma \ref{lemma:df_diff} and the denominator is $\Theta_P(1)$ by Corollary \ref{cor:df_bounded}.
\end{proof}

We next turn to prove Theorem \ref{main:thm:debiased_uniform}.

\begin{proof}[Proof of Theorem \ref{main:thm:debiased_uniform}]
By triangle inequality,
\begin{align*}
     &\sup_{\lambda_{\textrm{L}},\lambda_{\textrm{R}} \in [\lambda_{\min},\lambda_{\max}]}|\phi_\beta(\hat\bbeta_{\textrm{L}}^d,\hat\bbeta_{\textrm{R}}^d,\bbeta) - \E[\phi_\beta(\hat\bbeta_{\textrm{L}}^{f,d},\hat\bbeta_{\textrm{R}}^{f,d},\bbeta)]| \\
     \leq &\sup_{\lambda_{\textrm{L}},\lambda_{\textrm{R}} \in [\lambda_{\min},\lambda_{\max}]}|\phi_\beta(\tilde\bbeta_{\textrm{L}}^d,\tilde\bbeta_{\textrm{R}}^d,\bbeta) - \E[\phi_\beta(\hat\bbeta_{\textrm{L}}^{f,d},\hat\bbeta_{\textrm{R}}^{f,d},\bbeta)]|  + \sup_{\lambda_{\textrm{L}},\lambda_{\textrm{R}} \in [\lambda_{\min},\lambda_{\max}]}|\phi_\beta(\hat\bbeta_{\textrm{L}}^d,\hat\bbeta_{\textrm{R}}^d,\bbeta) - \phi_\beta(\tilde\bbeta_{\textrm{L}}^d,\tilde\bbeta_{\textrm{R}}^d,\bbeta)| \\
     \leq & o_P(1) + \suplb \|\hat\bbeta_{\textrm{L}}^d - \tilde\bbeta_{\textrm{L}}^d\|_2 + \suplb \|\hat\bbeta_{\textrm{R}}^d - \tilde\bbeta_{\textrm{R}}^d\|_2\\
     \convP & \  0,
\end{align*}
where we used Theorem \ref{main:thm:debiased_uniform_tilde} and Lemma \ref{lemma:beta_emp_diff}, as well as the fact that $\phi_\beta$ is $1$-Lipschitz.
\end{proof}

Thus, our proof of Theorem \ref{main:thm:debiased_uniform} is complete if we  prove Theorem
\ref{main:thm:debiased_uniform_tilde}. We present this in the next sub-section.

\subsection{Proof of Theorem \ref{main:thm:debiased_uniform_tilde}}
The overarching structure of our proof of Theorem \ref{main:thm:debiased_uniform_tilde} is inspired by \cite[Section D]{celentano2021cad}. However, unlike in our setting below, the results in the aforementioned paper do not require uniform convergence over a range of values of the tuning parameter. This leads to novel technical challenges in our setting that we handle as we proceed. 

To prove Theorem \ref{main:thm:debiased_uniform_tilde}, we introduce an intermediate quantity $\E[\phi_\beta (\hat\bbeta_{\rmL}^{f,d},\hat\bbeta_{\textrm{R}}^{f,d},\bbeta)|\bg_{\textrm{L}}^f = \hat\bg_{\textrm{L}}] $ that conditions on $\bg_{\textrm{L}}^f$ (recall the definition from \eqref{def:hat_beta_fd}) taking on a specific value $\hat\bg_{\textrm{L}}$ defined as follows:
\begin{equation}\label{def:hat_gl}
\begin{gathered}
    \hat\bg_{\textrm{L}} = \frac{n\tau_{\textrm{L}}\zeta_{\textrm{L}}\sqrt{n\tau_{\textrm{L}}^2-\sigma^2}}{\|\by - \bX\hat\bbeta_{\textrm{L}}\|_2\|\hat\bbeta_{\textrm{L}}-\bbeta\|_2}(\hat\bbeta_{\textrm{L}}-\bbeta) + \frac{\tau_{\textrm{L}}}{\|\by - \bX\hat\bbeta_{\textrm{L}}\|_2}\bX^\top(\by - \bX\hat\bbeta_{\textrm{L}}),
\end{gathered}
\end{equation}
where $\tau_\textrm{L},\zeta_{\textrm{L}}$ are defined as in \eqref{def:fixed_point}. Recall from \eqref{lemma:existence_fixed_point} that 
 $n \tau_{\textrm{L}}^2 \geq \sigma^2$ so that the square root above is well-defined. 
 
\begin{remark}\label{remark_gl}
The definition of $\hat\bg_{\textrm{L}}$ is non-trivial.
However, the takeaway is that the realization $\bg_{\rmL}^f=\hat\bg_{\rmL}$ should be understood as the coupling of $\bX$ and $\bg_{\rmL}^f$ such that $\hat\bbeta_{\rmL}^f,\hat\bbeta_{\rmL}^{f,d}$ equals $\hat\bbeta_{\rmL},\tilde\bbeta_{\rmL}^d$, respectively. We refer readers to \cite[Section F.1 and Section L]{celentano2021cad} for the underlying intuition as well as the proof of this equivalence.
\end{remark}

Since $\hat\bbeta_{\rmL}^{f,d}$ becomes $\tilde\bbeta_{\rmL}^d$ conditioning on $\bg_{\rmL}^f = \hat\bg_{\rmL}$, we have $\E[\phi_\beta (\hat\bbeta_{\rmL}^{f,d},\hat\bbeta_{\textrm{R}}^{f,d},\bbeta)|\bg_{\textrm{L}}^f = \hat\bg_{\textrm{L}}] = \E[\phi_\beta (\tilde\bbeta_{\textrm{L}}^d,\hat\bbeta_{\textrm{R}}^{f,d},\bbeta)|\bg_{\textrm{L}}^f = \hat\bg_{\textrm{L}}]$, where the randomness inside the expectation comes from $\hat\bbeta_{\rmR}^{f,d}$, and $\tilde\bbeta_{\rmL}^d$ is fixed. In fact, analogous to $\bg_{\rmL}^f=\hat\bbeta_{\rmL}^{f,d}-\bbeta$, $\hat\bg_{\textrm{L}}$ approximates $\tilde\bbeta_{\textrm{L}}^d-\bbeta$, as formalized in the result below. 

\begin{lemma}\label{lemma:convergence_gl}
Under Assumption \ref{assumptions2},
\begin{align*}
\sup_{\lambda_{\textrm{L}}\in[\lambda_{\min},\lambda_{\max}]}\|\hat\bg_{\textrm{L}} - (\tilde\bbeta_{\textrm{L}}^d - \bbeta) \|_2 \convP 0.
\end{align*}
\end{lemma}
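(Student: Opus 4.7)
\textbf{Proof plan for Lemma \ref{lemma:convergence_gl}.}

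The plan is to compare $\hat\bg_{\rmL}$ term-by-term with $\tilde\bbeta_{\rmL}^d - \bbeta$. Using the relation $\zeta_{\rmL} = 1 - \df_{\rmL}/n$ from \eqref{def:fixed_point}, I first rewrite
$$\tilde\bbeta_{\rmL}^d - \bbeta = (\hat\bbeta_{\rmL} - \bbeta) + \frac{1}{n\zeta_{\rmL}} \bX^\top(\by - \bX \hat\bbeta_{\rmL}),$$
so that both $\hat\bg_{\rmL}$ and $\tilde\bbeta_{\rmL}^d-\bbeta$ are linear combinations of the two ``direction'' vectors $\hat\bbeta_{\rmL}-\bbeta$ and $\bX^\top(\by-\bX\hat\bbeta_{\rmL})$. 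Applying the triangle inequality, it suffices to show that each of the scalar coefficients in $\hat\bg_{\rmL}$ matches the corresponding coefficient of $\tilde\bbeta_{\rmL}^d-\bbeta$ asymptotically, uniformly over $\lambda_{\rmL}\in[\lambda_{\min},\lambda_{\max}]$, provided the two direction vectors have norms that are $O_P(\sqrt{n})$ uniformly.

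Step 1 (norm control, to be obtained from Lemma \ref{lemma:uvw_convergence} and Corollary \ref{largest_singular_value}): Establish that $\supl \|\hat\bbeta_{\rmL}-\bbeta\|_2 = O_P(\sqrt{n})$ and $\supl \|\bX^\top(\by-\bX\hat\bbeta_{\rmL})\|_2 = O_P(n)$. Since $\zeta_{\rmL},\tau_{\rmL}$ are bounded away from $0$ and $\infty$ uniformly in $\lambda_{\rmL}$ by \eqref{lemma:existence_fixed_point}, this reduces the problem to showing that the ratios of coefficients each tend to $1$ uniformly.

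Step 2 (scalar concentration, the core of the argument): Establish two uniform identities, namely
$$\supl\bigl| \tfrac{1}{n}\|\hat\bbeta_{\rmL}-\bbeta\|_2^2 - (n\tau_{\rmL}^2 - \sigma^2)/n \bigr| \convP 0, \qquad \supl\bigl| \tfrac{1}{n^2}\|\by-\bX\hat\bbeta_{\rmL}\|_2^2 - \tau_{\rmL}^2 \zeta_{\rmL}^2 \bigr| \convP 0.$$
These follow from the fixed-point system \eqref{def:fixed_point} combined with the uniform characterization of $\hat\bbeta_{\rmL}$ already in hand (the uniform-in-$\lambda_{\rmL}$ analogues of the Lasso scalar identities of Miolane--Montanari and Celentano--Montanari--Wei, see references cited around Theorem \ref{main:thm:debiased_uniform}). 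Coupled with the boundedness of $\tau_{\rmL},\zeta_{\rmL}$, these give
$$\frac{n\tau_{\rmL}\zeta_{\rmL}\sqrt{n\tau_{\rmL}^2-\sigma^2}}{\|\by-\bX\hat\bbeta_{\rmL}\|_2\,\|\hat\bbeta_{\rmL}-\bbeta\|_2} \convP 1, \qquad \frac{n\zeta_{\rmL}\tau_{\rmL}}{\|\by-\bX\hat\bbeta_{\rmL}\|_2} \convP 1,$$
uniformly in $\lambda_{\rmL}\in[\lambda_{\min},\lambda_{\max}]$, by the continuous mapping theorem applied to ratios whose denominators are bounded away from $0$.

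Step 3 (combine): Write
\begin{align*}
\hat\bg_{\rmL} - (\tilde\bbeta_{\rmL}^d - \bbeta) &= \bigl(A_{\rmL}-1\bigr)(\hat\bbeta_{\rmL}-\bbeta) + \bigl(B_{\rmL}-\tfrac{1}{n\zeta_{\rmL}}\bigr) \bX^\top(\by-\bX\hat\bbeta_{\rmL}),
\end{align*}
where $A_{\rmL},B_{\rmL}$ are the two scalar coefficients appearing in \eqref{def:hat_gl}. Taking norms and using Steps 1--2, each summand is $o_P(1)\cdot O_P(\sqrt{n}) \cdot (1/\sqrt{n})$-type, which is $o_P(1)$ uniformly, finishing the proof.

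The main obstacle is Step 2: passing from \emph{pointwise} Lasso scalar identities to uniform-in-$\lambda_{\rmL}$ versions. I anticipate handling this by discretizing $[\lambda_{\min},\lambda_{\max}]$ on a logarithmic grid of polynomial size in $n$, applying a union bound for pointwise concentration, and using the uniform Lipschitz dependence of $\hat\bbeta_{\rmL}(\lambda_{\rmL})$, $\|\hat\bbeta_{\rmL}-\bbeta\|_2$, and $\|\by-\bX\hat\bbeta_{\rmL}\|_2$ on $\lambda_{\rmL}$ (which follows from strong convexity of the Lasso subproblem on the event that the minimum restricted singular value is bounded below by $\kappa_{\min}$, guaranteed under Assumption \ref{assumptionL}(1)). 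This is analogous to the discretization argument underpinning \cite[Theorem F.1]{miolane2021distribution} and Lemma \ref{lemma:df_diff} above, but applied now to the residual and the estimation error rather than to $\hat\df_{\rmL}$.
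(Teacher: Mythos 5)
Your decomposition and overall plan coincide with the paper's proof: triangle-inequality split into the $\hat\bbeta_{\rmL}-\bbeta$ and $\bX^\top(\by-\bX\hat\bbeta_{\rmL})$ directions, then uniform scalar control via Lemma \ref{lemma:uvw_convergence}. However, the scalings you state in Steps 1--2 are off by powers of $n$, and taken literally they do not close the argument.

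Concretely: in the paper's normalization (penalties $\lambda_{\rmL}\|\bb\|_1/\sqrt{n}$, $\|\bbeta\|_2^2\to\sigma_\beta^2$, and $n\tau_k^2-\sigma^2\in(\tau_{\min}^2,\tau_{\max}^2)$ from \eqref{lemma:existence_fixed_point}), the quantities $\|\hat\bbeta_{\rmL}-\bbeta\|_2$, $\|\by-\bX\hat\bbeta_{\rmL}\|_2/\sqrt{n}$ and $\sqrt{n}\tau_{\rmL}\zeta_{\rmL}$ are all $\Theta(1)$, not $\Theta(\sqrt{n})$. Thus your Step-1 claim $\supl\|\hat\bbeta_{\rmL}-\bbeta\|_2=O_P(\sqrt{n})$ overestimates by a $\sqrt{n}$ factor; combined with a merely $o_P(1)$ coefficient error this would give only $o_P(\sqrt{n})$ for the first summand, which does not vanish. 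More seriously, the Step-2 displays $\supl|\tfrac1n\|\hat\bbeta_{\rmL}-\bbeta\|_2^2-(n\tau_{\rmL}^2-\sigma^2)/n|\convP0$ and $\supl|\tfrac1{n^2}\|\by-\bX\hat\bbeta_{\rmL}\|_2^2-\tau_{\rmL}^2\zeta_{\rmL}^2|\convP0$ are vacuously true: both sides are $\Theta(1/n)$, so the convergence follows from boundedness alone and does not imply the ratios converge to $1$. (Convergence of a difference to zero does not give convergence of a ratio to $1$ when both terms vanish.) The statement you actually need is precisely what Lemma \ref{lemma:uvw_convergence} provides, without the extra $1/n$ factors: $\supl|\|\hat\bbeta_{\rmL}-\bbeta\|_2-\sqrt{n\tau_{\rmL}^2-\sigma^2}|\convP0$ and $\supl|\|\by-\bX\hat\bbeta_{\rmL}\|_2/\sqrt{n}-\sqrt{n}\tau_{\rmL}\zeta_{\rmL}|\convP0$, where both compared quantities are $\Theta(1)$ and bounded away from $0$. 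With those corrected normalizations the coefficient ratios tend to $1$ uniformly and Step 3 goes through with the first summand bounded by $o_P(1)\cdot O_P(1)$ and the second by $o_P(1/n)\cdot O_P(n)$. Since you already cite Lemma \ref{lemma:uvw_convergence}, this is a repairable slip in bookkeeping rather than a wrong approach, but as written the Step-2 claims would not yield the conclusion.
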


With these definitions in hand, the proof is complete on combining the following two lemmas. 

\begin{lemma}\label{lemma:bridge_1}
Recall the definitions of $\tilde\bbeta_k^d$ from Eqn. \ref{def:tilde_beta_d} and $\hat\bbeta_k^{f,d}$ from Eqn. \ref{def:hat_beta_fd}. Under Assumption \ref{assumptions2}, for any $1$-Lipschitz function $\phi_\beta:(\R^p)^3\rightarrow \R$,
\begin{equation}\label{eq:interim1}
\begin{gathered}
    \suplb|\E[\phi_\beta (\hat\bbeta_{\textrm{L}}^{f,d},\hat\bbeta_{\textrm{R}}^{f,d},\bbeta)|\bg_{\textrm{L}}^f = \hat\bg_{\textrm{L}}] - \E[\phi_\beta(\hat\bbeta_{\textrm{L}}^{f,d},\hat\bbeta_{\textrm{R}}^{f,d},\bbeta)]| \convP 0.
\end{gathered}
\end{equation}
\end{lemma}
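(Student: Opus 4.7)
The plan is to reduce Lemma \ref{lemma:bridge_1} to a Gaussian Lipschitz concentration statement via the joint Gaussian structure of $(\bg_L^f,\bg_R^f)$, and then to upgrade the resulting pointwise bound to a uniform bound through an $\epsilon$-net covering of $[\lambda_{\min},\lambda_{\max}]^2$. First I would use the conditional decomposition $\bg_R^f = (\rho\tau_R/\tau_L)\bg_L^f + \sqrt{1-\rho^2}\,\tau_R\,\tilde{\bg}$, where $\tilde{\bg}\sim\mcn(\bm{0},\bI_p)$ is independent of $\bg_L^f$, and define
\begin{equation*}
F(\bv;\lambda_L,\lambda_R) := \E_{\tilde{\bg}}\!\left[\phi_\beta\!\left(\bbeta+\bv,\; \bbeta + \tfrac{\rho\tau_R}{\tau_L}\bv + \sqrt{1-\rho^2}\,\tau_R\,\tilde{\bg},\; \bbeta\right)\right].
\end{equation*}
With this definition, the quantity in \eqref{eq:interim1} is exactly $|F(\hat{\bg}_L;\lambda_L,\lambda_R) - \E_{\bg_L^f} F(\bg_L^f;\lambda_L,\lambda_R)|$.

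Next, since $\phi_\beta$ is $1$-Lipschitz, $F(\cdot;\lambda_L,\lambda_R)$ is $L_F$-Lipschitz in its first argument with $L_F = \sqrt{1+(\rho\tau_R/\tau_L)^2}$, which is uniformly bounded on $[\lambda_{\min},\lambda_{\max}]^2$ by \eqref{lemma:existence_fixed_point}. Applying Borell--TIS Gaussian Lipschitz concentration to $\bg_L^f\sim\mcn(\bm{0},\tau_L^2\bI_p)$ then yields, for each fixed $(\lambda_L,\lambda_R)$,
\begin{equation*}
|F(\bg_L^f;\lambda_L,\lambda_R) - \E F(\bg_L^f;\lambda_L,\lambda_R)| = O_P(L_F \tau_L) = O_P(n^{-1/2}) = o_P(1),
\end{equation*}
since $\tau_L^2 = \Theta(1/n)$ by \eqref{lemma:existence_fixed_point}. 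The remaining pointwise step is $F(\hat{\bg}_L;\lambda_L,\lambda_R) - \E F(\bg_L^f;\lambda_L,\lambda_R) = o_P(1)$. For this I would exploit that $\hat{\bg}_L$ in \eqref{def:hat_gl} was engineered to match the second-order profile of $\bg_L^f$---namely, $\|\hat{\bg}_L\|_2^2/p \to \tau_L^2$ and $\langle\bbeta,\hat{\bg}_L\rangle/\|\bbeta\|_2 \to 0$ in probability---both of which are consequences of the KKT optimality of the Lasso together with the specific normalization chosen in \eqref{def:hat_gl}. Substituting $\hat{\bg}_L$ into $F$, the conditional mean shift $(\rho\tau_R/\tau_L)\hat{\bg}_L$ and the reduced conditional covariance $\tau_R^2(1-\rho^2)\bI_p$ recombine (in norm, in inner product with $\bbeta$, and more generally in Lipschitz statistics) to reproduce the profile generated by the marginal $\bg_R^f\sim\mcn(\bm{0},\tau_R^2\bI_p)$.

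To upgrade the pointwise result to the desired uniform statement, I would cover $[\lambda_{\min},\lambda_{\max}]^2$ by an $\epsilon_n$-net with $\epsilon_n\to 0$ sufficiently slowly, apply a union bound over the net, and then interpolate across the net using Lipschitz continuity of $(\tau_L,\tau_R,\rho,\zeta_L,\zeta_R)$ in $(\lambda_L,\lambda_R)$---which follows from the implicit function theorem applied to \eqref{def:fixed_point}, paralleling arguments in \cite{celentano2021cad}---together with Lipschitz-in-$(\lambda_L,\lambda_R)$ controls on the empirical quantities $\hat{\bbeta}_L,\hat{\bbeta}_R,\hat{\bg}_L$ that extend Lemma \ref{lemma:df_diff} and Lemma \ref{lemma:convergence_gl}.

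The hardest step will be the pointwise matching $F(\hat{\bg}_L;\lambda_L,\lambda_R) \approx \E F(\bg_L^f;\lambda_L,\lambda_R)$: since $L_F = O(1)$ while $\|\hat{\bg}_L - \bg_L^f\|_2$ does not vanish under any obvious coupling (both vectors have $\ell_2$-norm of order $\sqrt{p}\,\tau_L = O(1)$), a naive Lipschitz interpolation is insufficient. One must instead exploit that $\hat{\bg}_L$ acts as a sufficient summary of the Lasso data, so that the induced conditional law of $(\hat{\bbeta}_L^{f,d},\hat{\bbeta}_R^{f,d})$ asymptotically agrees with the unconditional law against every Lipschitz test function. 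This is the core of the CGMT/AMP machinery of \cite{celentano2021cad,celentano2020lasso,miolane2021distribution}, which here must additionally be combined with the sub-Gaussian universality results of \cite{han2022universality} to cover the full Assumption \ref{assumptions2} setting rather than the purely Gaussian one.
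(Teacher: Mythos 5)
Your proposal aligns with the paper's overall strategy: use the conditional Gaussian decomposition of $\bg_\rmR^f$ given $\bg_\rmL^f$ to define a conditional-expectation function (your $F$; the paper's $\phi_{\beta|L}$), observe it is Lipschitz in its vector argument, and then couple the realization $\hat\bg_\rmL$ to the marginal law of $\bg_\rmL^f$ via CGMT-type machinery rather than by any naive $\ell_2$ interpolation between $\hat\bg_\rmL$ and $\bg_\rmL^f$ — you correctly flag that the latter fails since $\|\hat\bg_\rmL - \bg_\rmL^f\|_2$ does not vanish. The ingredient you describe informally as ``$\hat\bg_\rmL$ acting as a sufficient summary, so that Lipschitz statistics of the conditional law match the unconditional law'' is exactly the paper's Corollary~\ref{cor:marginal_beta_g_joint}, which gives
$\supll|\phi(\hat\bbeta_\rmL,\hat\bg_\rmL)-\E[\phi(\hat\bbeta_\rmL^f,\bg_\rmL^f)]|\convP 0$
for Lipschitz $\phi$; applied to $\phi=\phi_{\beta|L}$ and combined with the tower identity $\E[\phi_{\beta|L}(\bg_\rmL^f)]=\E[\phi_\beta(\hat\bbeta_\rmL^{f,d},\hat\bbeta_\rmR^{f,d},\bbeta)]$, this immediately yields Lemma~\ref{lemma:bridge_1_onlyL}.

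The one genuine divergence concerns how uniformity is achieved. You propose an $\epsilon$-net over both $\lambda_\rmL$ and $\lambda_\rmR$, which requires Lipschitz (or quasi-Lipschitz) control of the \emph{random} quantity $\hat\bg_\rmL$ in $\lambda_\rmL$; for the Lasso this is not a deterministic fact and only holds in the ``weak-Lipschitz'' sense of Lemma~\ref{lemma:gl_weak_lipschitz} (a high-probability statement for $|\lambda_1-\lambda_2|\geq\epsilon$). That is doable — indeed the paper proves it, but uses it for Lemma~\ref{lemma:bridge_2}, not here — so your route would be correct but more laborious than necessary. The paper avoids this by observing that Corollary~\ref{cor:marginal_beta_g_joint} is \emph{already} uniform in $\lambda_\rmL$, so only the $\lambda_\rmR$-dependence needs a net. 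The $\lambda_\rmR$-Lipschitzness of $\Psi(\lambda_\rmR)=\supll|\phi_{\beta|L}(\hat\bg_\rmL)-\E[\phi_\beta(\cdot,\hat\bbeta_\rmR^{f,d}(\lambda_\rmR),\bbeta)]|$ is much cleaner (Lemma~\ref{lemma:lipschitz_Psi_bridge1}) because $\lambda_\rmR$ enters only through the deterministic fixed-point parameters $\tau_\rmR,\rho,\zeta_\rmR$, which are genuinely Lipschitz by Lemma~\ref{lemma:lipschitz_fixed_point}. Two small corrections: your Borell--TIS observation that $F(\bg_\rmL^f)-\E F(\bg_\rmL^f)=O_P(L_F\tau_\rmL)$ is true but does not feed into the proof of the step that actually matters, $F(\hat\bg_\rmL)-\E F(\bg_\rmL^f)\convP 0$; and Assumption~\ref{assumptions2} already stipulates Gaussian design and noise, so the sub-Gaussian universality results of \cite{han2022universality} play no role in this particular lemma (they enter only later, in Section~\ref{sec:universality}).
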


\begin{lemma}\label{lemma:bridge_2}
 Under Assumption \ref{assumptions2}, for any $1$-Lipschitz function $\phi_\beta:(\R^p)^3\rightarrow \R$,
\begin{equation*}
    \suplb|\E[\phi_\beta (\hat\bbeta_{\textrm{L}}^{f,d},\hat\bbeta_{\textrm{R}}^{f,d},\bbeta)|\bg_{\textrm{L}}^f = \hat\bg_{\textrm{L}}] - \phi_\beta(\tilde\bbeta_{\textrm{L}}^{d},\tilde\bbeta_{\textrm{R}}^{d},\bbeta)| \convP 0.
\end{equation*}
\end{lemma}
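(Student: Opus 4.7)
The plan is to reduce the claim to a Gaussian concentration argument about the conditional law of $\hat\bbeta_\rmR^{f,d}$, then identify the conditional expectation with a point evaluation at $\tilde\bbeta_\rmR^d$, and finally upgrade pointwise bounds to uniform ones.

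First, by the coupling of Remark \ref{remark_gl}, conditioning on $\bg_\rmL^f = \hat\bg_\rmL$ fixes $\hat\bbeta_\rmL^{f,d} = \tilde\bbeta_\rmL^d$ deterministically, so the conditional expectation becomes $\E[\phi_\beta(\tilde\bbeta_\rmL^d, \hat\bbeta_\rmR^{f,d}, \bbeta) \mid \bg_\rmL^f = \hat\bg_\rmL]$. Using the bivariate Gaussian structure \eqref{def:bS_g}, the conditional distribution of $\bg_\rmR^f$ given $\bg_\rmL^f = \hat\bg_\rmL$ is $\mcn(\mu, (1-\rho^2)\tau_\rmR^2 \bm{I}_p)$ with $\mu = (\rho \tau_\rmR / \tau_\rmL) \hat\bg_\rmL$, so $\hat\bbeta_\rmR^{f,d}$ is conditionally distributed as $\bbeta + \mu + \sqrt{1-\rho^2}\,\tau_\rmR \bZ$ for $\bZ \sim \mcn(0, \bm{I}_p)$. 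The map $\bZ \mapsto \phi_\beta(\tilde\bbeta_\rmL^d, \bbeta + \mu + \sqrt{1-\rho^2}\,\tau_\rmR \bZ, \bbeta)$ is $\sqrt{1-\rho^2}\,\tau_\rmR = O(1/\sqrt{n})$-Lipschitz in $\bZ$, so standard Gaussian concentration gives concentration of this quantity around its conditional expectation with Gaussian tails of width $O(1/\sqrt{n})$.

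Second, to identify this expectation with $\phi_\beta(\tilde\bbeta_\rmL^d, \tilde\bbeta_\rmR^d, \bbeta)$, I would introduce the ridge analog of $\hat\bg_\rmL$: a data-dependent vector $\hat\bg_\rmR$ defined mirror-symmetrically to \eqref{def:hat_gl} so that the coupling $\bg_\rmR^f = \hat\bg_\rmR$ recovers $\hat\bbeta_\rmR^{f,d} = \tilde\bbeta_\rmR^d$. One would then decompose $\hat\bg_\rmR = \mu + \sqrt{1-\rho^2}\,\tau_\rmR \hat\bZ$ and verify that $\hat\bZ$ has the key geometric features of a typical draw from $\mcn(0,\bm{I}_p)$: $\|\hat\bZ\|_2^2 / p \convP 1$, while the inner products $\langle \hat\bZ, \bbeta\rangle$ and $\langle \hat\bZ, \hat\bg_\rmL \rangle$ are $o_P(\sqrt{p})$. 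These properties follow from the fixed-point equations \eqref{def:fixed_point}, Lemma \ref{lemma:convergence_gl} (together with its ridge analog), and the characterization of $\tau_{\rmL\rmR}$ as the limit of $\langle \tilde\bbeta_\rmL^d - \bbeta, \tilde\bbeta_\rmR^d - \bbeta \rangle/p$, building on \cite{celentano2021cad} at fixed tuning parameters. Combined with the Lipschitz modulus $O(1/\sqrt{n})$ from Step 1, this identifies $\phi_\beta(\tilde\bbeta_\rmL^d, \tilde\bbeta_\rmR^d, \bbeta)$ with the conditional expectation up to $o(1)$ error at each fixed $(\lambda_\rmL, \lambda_\rmR)$.

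Third, uniformity over $\lambda_\rmL, \lambda_\rmR \in [\lambda_{\min}, \lambda_{\max}]$ follows from a covering argument on a finite $\epsilon$-net: both the conditional expectation and the point evaluation $\phi_\beta(\tilde\bbeta_\rmL^d, \tilde\bbeta_\rmR^d, \bbeta)$ are equicontinuous in the tuning parameters with modulus independent of $n$. This exploits stability of $\hat\bbeta_k$ in $\lambda_k$ (analogous to the arguments of \cite{miolane2021distribution}), continuity of the solutions to \eqref{def:fixed_point} established in \cite{celentano2021cad}, and Lemmas \ref{lemma:df_diff}--\ref{lemma:convergence_gl} together with their ridge analogs. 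The main obstacle I foresee is Step 2: establishing the ``typical Gaussian draw'' property of $\hat\bZ$ is not automatic from marginal distributional matches and requires the joint CAD-type characterization to control the relevant inner products; moreover, in our setting these inner-product estimates must be strengthened to hold uniformly over the tuning parameters in order to feed into Step 3, which is a nontrivial strengthening of the fixed-$\lambda$ results in the literature.
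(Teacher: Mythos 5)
Your Step 2 is where the argument breaks. The ``geometric typicality'' of $\hat\bZ$ that you propose to verify---$\|\hat\bZ\|_2^2/p \to 1$ together with $\langle\hat\bZ,\bbeta\rangle = o_P(\sqrt p)$ and $\langle\hat\bZ,\hat\bg_\rmL\rangle = o_P(\sqrt p)$---is structurally insufficient to conclude that $\phi_\beta(\tilde\bbeta_\rmL^d, \tilde\bbeta_\rmR^d, \bbeta)$ matches the conditional expectation for an \emph{arbitrary} $1$-Lipschitz $\phi_\beta$. Take $\phi_\beta(x,y,z) = \langle v, y\rangle$ for a fixed unit vector $v$ lying in the $(p-2)$-dimensional complement of $\mathrm{span}\{\bbeta, \hat\bg_\rmL\}$. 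Then the discrepancy you need to show is $o_P(1)$ reduces to $\sqrt{1-\rho^2}\,\tau_\rmR\,\langle v, \hat\bZ\rangle$, and since $\tau_\rmR\sqrt p$ is bounded away from zero and $\|\hat\bZ\|_2 \approx \sqrt p$, your three verified conditions leave room for $\langle v, \hat\bZ\rangle$ to be as large as order $\sqrt p$: the norm and the two inner-product constraints say nothing about how $\hat\bZ$ aligns with directions orthogonal to $\bbeta$ and $\hat\bg_\rmL$. Ruling this out for every fixed $v$ is precisely to establish the joint Gaussian-limit characterization of $(\tilde\bbeta_\rmL^d, \tilde\bbeta_\rmR^d)$, i.e.\ Theorem \ref{main:thm:debiased_uniform_tilde} itself, which cannot be fed back into the proof of Lemma \ref{lemma:bridge_2}. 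The ridge marginal (Lemma \ref{lemma:marginal_single}) is also no shortcut: it compares $\tilde\bbeta_\rmR^d$ to the \emph{unconditional} law $\bbeta + \tau_\rmR\bZ$, whose center and scale differ from the conditional law by $\|\mu\|_2$ and $(1-\sqrt{1-\rho^2})\tau_\rmR\sqrt p$, both generically of order $1$; the two expectations need not agree to $o(1)$, so ``shift and rescale'' does not rescue the marginal.

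The paper instead closes this gap with a conditional CGMT argument, Lemma \ref{lemma:ridge_conditional_bw} (adapted from \cite[Lemma F.4]{celentano2021cad}), applied to the ridge cost $\mcl{C}_{\lambda_\rmR}$. The content is a cost-landscape localization: with probability $1-o(\epsilon^2)$, \emph{every} $\bw$ whose conditional ridge cost lies within $C\epsilon^2$ of the minimum has $|\phi_w(\bw) - \E[\phi_w(\hat\bbeta_\rmR^f - \bbeta)\mid\bg_\rmL^f = \hat\bg_\rmL]| < \epsilon$, for the given $1$-Lipschitz $\phi_w$. Convexity of the objective together with the Gaussian comparison theorem does the work simultaneously for all test functions; one never needs to match finitely many moments of a data-dependent $\hat\bZ$. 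Your Step 3 covering argument is structurally similar to the paper's, but note that the transfer from the grid to the continuum is nontrivial and relies on Lemma \ref{lemma:ridge_loss_C_difference_lambda} (stability of the ridge cost in $\lambda_\rmR$) and Lemma \ref{lemma:lipschitz_psi_conditional} (a weak-Lipschitz property of the conditional expectation in both $\lambda_\rmL$ and $\lambda_\rmR$); these are not free equicontinuity statements and are established separately in the paper.
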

We present the proofs of these supporting lemmas in the following subsection.

\subsection{Proof of Lemmas \ref{lemma:convergence_gl}--\ref{lemma:bridge_2}}
\label{sec:proof:lemma:bridge_1}
We start with the proof of Lemma \ref{lemma:convergence_gl} since it plays a crucial role in the proofs of the others. To this end, perhaps the most important step is the following lemma that establishes the  boundedness and limiting behavior of certain special entities. 

\begin{lemma}\label{lemma:uvw_convergence}
For $k = \rmL,\rmR$, define $\hat\bw_k,\hat\bu_k,\hat\bv_k$ as
\begin{equation}\label{uvw}
    \hat\bw_k = \hat\bbeta_k - \bbeta, \hat\bu_k = \by - \bX\hat\bbeta_k,\hat\bv_k = \bX^\top(\by-\bX\hat\bbeta_k).
\end{equation}
Under Assumption \ref{assumptions2},
\begin{itemize}
    \item Convergence: 
    \begin{align*}
        &\suplk \left| \|\hat\bw_k\|_2 - \sqrt{n\tau_k^2-\sigma^2} \right| \convP 0.\\
        &\suplk \left| \|\hat\bu_k\|_2/\sqrt{n} - \sqrt{n}\tau_k\zeta_k \right| \convP 0.
    \end{align*}
    \item Boundedness: there exists positive constants $c,C$ such that with probability $1-o(1)$,
    $$c\leq\suplk\|\hat\bw_k\|_2, \suplk\|\hat\bu_k\|_2/\sqrt{n}, \suplk\|\hat\bv_k\|_2/n\leq C.$$
\end{itemize}
\end{lemma}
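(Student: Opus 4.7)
My strategy is the standard ``pointwise $+$ stochastic equicontinuity yields uniform'' argument, specialized to the Lasso and ridge solution paths on the compact interval $[\lambda_{\min},\lambda_{\max}]$.

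\textbf{Step 1 (pointwise convergence).} For each fixed $\lambda_k \in [\lambda_{\min},\lambda_{\max}]$, the stated limits for $\|\hat\bw_k\|_2$ and $\|\hat\bu_k\|_2/\sqrt n$ are, in the isotropic Gaussian setting of Assumption \ref{assumptions2}, essentially the pointwise CGMT/AMP conclusions already available in the literature: for the ridge they follow from the anisotropic resolvent calculations already invoked in the proof of Lemma \ref{lemma:df_diff} (cf.\ \cite{knowles2017anisotropic}); for the Lasso they follow from \cite{miolane2021distribution,celentano2021cad}. Once $\|\hat\bu_k\|_2/\sqrt n$ is controlled pointwise, $\|\hat\bv_k\|_2/n \le (\|\bX\|_{\op}/\sqrt n)\cdot\|\hat\bu_k\|_2/\sqrt n$ together with Corollary~\ref{largest_singular_value} delivers pointwise boundedness of the third quantity.

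\textbf{Step 2 (Lipschitzness in $\lambda_k$).} The key new content is to show that, on an event of probability $1-o(1)$, the maps $\lambda_k\mapsto \|\hat\bw_k\|_2$, $\lambda_k\mapsto\|\hat\bu_k\|_2/\sqrt n$, $\lambda_k\mapsto \|\hat\bv_k\|_2/n$ are Lipschitz with a deterministic constant depending only on the model parameters. For the ridge, $\hat\bbeta_\rmR(\lambda_\rmR)$ has the closed form $(\bX^\top\bX/n+\lambda_\rmR\bm I)^{-1}\bX^\top\by/n$, so $\partial\hat\bbeta_\rmR/\partial\lambda_\rmR = -(\bX^\top\bX/n+\lambda_\rmR\bm I)^{-1}\hat\bbeta_\rmR$, whose norm is uniformly bounded on the event $\{\|\bX\|_\op/\sqrt n\le C\}$ since $\lambda_\rmR\ge\lambda_{\min}>0$. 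For the Lasso, I would use the two-point optimality comparison: adding the two sub-optimality inequalities for $(\lambda^{(1)}_\rmL,\lambda^{(2)}_\rmL)$ gives
\[
\tfrac{1}{n}\bigl\|\bX\bigl(\hat\bbeta_\rmL(\lambda^{(1)}_\rmL)-\hat\bbeta_\rmL(\lambda^{(2)}_\rmL)\bigr)\bigr\|_2^2 \le \tfrac{|\lambda^{(1)}_\rmL-\lambda^{(2)}_\rmL|}{\sqrt n}\bigl\|\hat\bbeta_\rmL(\lambda^{(1)}_\rmL)-\hat\bbeta_\rmL(\lambda^{(2)}_\rmL)\bigr\|_1,
\]
and then invoke the restricted minimum eigenvalue Assumption \ref{assumptionL}(1), together with a cone inclusion on the difference (which, since the Lasso sparsity is $O(n)$ by the already-controlled $\hat\df_\rmL$ in Lemma~\ref{lemma:df_diff}, yields $\|\cdot\|_1\lesssim\sqrt s\|\cdot\|_2$), to convert the left-hand side into $\kappa_{\min}^2\|\hat\bbeta_\rmL(\lambda^{(1)}_\rmL)-\hat\bbeta_\rmL(\lambda^{(2)}_\rmL)\|_2^2$. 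This produces a Lipschitz bound on $\hat\bw_\rmL$; Lipschitzness of $\hat\bu_k$ and $\hat\bv_k$ then follows from $\hat\bu_k=\by-\bX\hat\bbeta_k$, $\hat\bv_k=\bX^\top\hat\bu_k$, and the controlled operator norm of $\bX/\sqrt n$.

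\textbf{Step 3 (continuity of the limit).} The fixed-point system \eqref{def:fixed_point} defines $\tau_k,\zeta_k$ as implicit functions of $\lambda_k$ alone (using the observation after \eqref{def:fixed_point} that $\tau_\rmL,\zeta_\rmL$ depend only on $\lambda_\rmL$, and analogously for $\rmR$). Standard monotonicity/implicit-function arguments, together with the bounds in \eqref{lemma:existence_fixed_point}, give continuity---in fact local Lipschitz---of $\lambda_k\mapsto(\tau_k,\zeta_k)$ on $[\lambda_{\min},\lambda_{\max}]$, and hence of the target expressions $\sqrt{n\tau_k^2-\sigma^2}$ and $\sqrt n\tau_k\zeta_k$.

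\textbf{Step 4 (union bound on an $\epsilon$-net and boundedness).} Cover $[\lambda_{\min},\lambda_{\max}]$ by an $\epsilon$-net with $O(1/\epsilon)$ points; apply Step~1 with a union bound at each grid point (cardinality independent of $n$), then use Steps~2--3 to interpolate between grid points, obtaining a uniform error of order $\epsilon$ plus an $o_P(1)$ term; letting $\epsilon\to 0$ after $n\to\infty$ yields the uniform convergence. The boundedness claim follows immediately from this uniform convergence combined with the uniform bounds on $\tau_k,\zeta_k$ in \eqref{lemma:existence_fixed_point}; the lower bound on $\|\hat\bu_k\|_2/\sqrt n$ uses $n\tau_k^2>\sigma^2$ and $\zeta_k>\zeta_{\min}>0$, and the upper bound on $\|\hat\bv_k\|_2/n$ uses the operator-norm bound on $\bX/\sqrt n$.

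The main obstacle I expect is the Lasso Lipschitz estimate in Step~2: unlike the ridge there is no closed form, and one has to verify that the restricted-eigenvalue condition in Assumption \ref{assumptionL}(1) is applicable to the \emph{difference} $\hat\bbeta_\rmL(\lambda^{(1)}_\rmL)-\hat\bbeta_\rmL(\lambda^{(2)}_\rmL)$, which requires showing a cone/sparsity containment uniformly over the tuning range. This in turn relies on already having uniform control of $\|\hat\bbeta_\rmL\|_0$, which is where the uniform version of Lemma~\ref{lemma:df_diff} becomes essential and must be invoked before the Lipschitz step.
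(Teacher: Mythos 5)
Your overall strategy (pointwise limits plus a Lipschitz/equicontinuity interpolation on a compact $\lambda$-grid) is the right shape, but the paper does something structurally different and cleaner: it first establishes the \emph{uniform-in-$\lambda$} convergence $\suplk|\phi_\beta(\hat\bbeta_k)-\E\phi_\beta(\hat\bbeta_k^f)|\convP 0$ for arbitrary $1$-Lipschitz $\phi_\beta$ (Lemma \ref{lemma:marginal_single}, proved by CGMT plus an $\epsilon$-net argument carried out at the level of the \emph{loss function}, Lemmas \ref{lemma:ridge_W2} and \ref{lemma:ridge_loss_difference_lambda}), and only then specializes to the norm via Lemmas \ref{lemma:concentration_second_moment} and \ref{sqrt_convergence}. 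Crucially, the conversion from ``loss values close'' to ``estimators close'' happens in the auxiliary (Gaussian) optimization, where the problem is genuinely strongly convex; the paper never has to compare two different Lasso solutions to each other in the primary optimization. Your ridge Lipschitz bound in Step~2 is correct and amounts to the same mechanism; the paper's Lemma \ref{lemma:ridge_loss_difference_lambda} plays that role.

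Your Lasso Lipschitz estimate in Step~2, however, has a genuine gap. The two-point monotonicity bound
\[
\tfrac{1}{n}\bigl\|\bX\bigl(\hat\bbeta_\rmL(\lambda_1)-\hat\bbeta_\rmL(\lambda_2)\bigr)\bigr\|_2^2 \le \tfrac{|\lambda_1-\lambda_2|}{\sqrt n}\bigl\|\hat\bbeta_\rmL(\lambda_1)-\hat\bbeta_\rmL(\lambda_2)\bigr\|_1
\]
is fine, but you then invoke Assumption~\ref{assumptionL}(1) on the difference $\hat\bbeta_\rmL(\lambda_1)-\hat\bbeta_\rmL(\lambda_2)$, whose support is contained in the \emph{union} of the two Lasso supports. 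Assumption~\ref{assumptionL}(1) only controls $\kappa_-(\bX,s)$ for $s\le \min\{p,\gamma n\}$ with $\gamma<1$, whereas Lemma~\ref{lemma:df_diff} only guarantees that each individual support has size of order $n(1-\zeta_k)\le n(1-\zeta_{\min})$. The union can thus be as large as $2n(1-\zeta_{\min})$, which exceeds $n$ whenever $\zeta_{\min}<1/2$; in that case $\bX$ restricted to the union has more columns than rows, the restricted minimum singular value is zero, and the chain of inequalities breaks. Nothing in Assumption~\ref{assumptions2} or \ref{assumptionL}(2) forces $\zeta_{\min}>1/2$, so this is not a removable technicality. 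Notice that the paper's own local strong convexity result for the Lasso (Lemma~\ref{lemma:uniform_local_convexity}) is stated as a one-sided bound around a \emph{single} solution $\hat\bbeta$, using $\kappa_-(\bX,n(1-\zeta/4))$; it never requires controlling a two-solution union, precisely because the paper routes the $\lambda$-uniformity through loss-value comparisons and the auxiliary optimization rather than a direct estimator-path Lipschitz bound. To repair your route you would need either to guarantee $\zeta_{\min}>1/2$ from the choice of $\lambda_{\min}$ (not claimed anywhere), or to show that for $|\lambda_1-\lambda_2|$ small the supports differ only on a set of controlled size (a nontrivial additional estimate), or to switch to the paper's AO-based argument.
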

\begin{proof} 

Lemma \ref{lemma:uvw_convergence} is proved in Section \ref{subsec:proof:lemma:uvw_convergence} 
\end{proof}

We will see below that Lemma \ref{lemma:uvw_convergence} forms crux of the proof of Lemma \ref{lemma:convergence_gl}. 
\subsubsection{Proof of Lemma \ref{lemma:convergence_gl}}\label{subsec:proof:convergence_gl}
\begin{proof}[Proof of Lemma \ref{lemma:convergence_gl}]
\cite[Section J.1]{celentano2021cad} proved a pointwise version of this lemma, without any supremum over the tuning parameters. Thus, with 
Lemma \ref{lemma:uvw_convergence} at our disposal, the proof of Lemma \ref{lemma:convergence_gl} follows by a suitable combination with the strategy in \cite{celentano2021cad}. 
Recall from Eqns. \ref{def:hat_gl} and \ref{uvw}, we have
$$\hat\bg_{\textrm{L}} = \frac{n\tau_{\textrm{L}}\zeta_{\textrm{L}}\sqrt{n\tau_{\textrm{L}}^2-\sigma^2}}{\|\hat\bu_{\textrm{L}}\|_2\|\hat\bw_{\textrm{L}}\|_2}\hat\bw_{\textrm{L}} + \frac{\tau_{\textrm{L}}}{\|\hat\bu_{\textrm{L}}\|_2}\hat\bv_{\textrm{L}}.$$
Next recall the definition of $\tilde\bbeta_{\textrm{L}}^d$ from \eqref{def:tilde_beta_d} and that $n - \df_{\textrm{L}} = n\zeta_{\textrm{L}}$ from our system of equations \eqref{def:fixed_point}. On applying  triangle inequality and Lemma \ref{prop:sup_inequalities}, we obtain that
\begin{align}\label{eq:glhatbound}
    \supll \|\hat\bg_{\textrm{L}} - (\tilde\bbeta_{\textrm{L}}^d - \bbeta_{\textrm{L}})\|_2 & \leq \supll \left| \frac{n\zeta_{\textrm{L}}\tau_{\textrm{L}}\sqrt{n\tau_{\textrm{L}}^2-\sigma^2}}{\|\hat\bu_{\textrm{L}}\|_2\|\hat\bw_{\textrm{L}}\|_2}-1\right|\cdot \supll \|\hat\bw_{\textrm{L}}\|_2 \\
    & + \supll \left| \frac{n\tau_{\textrm{L}}}{\|\hat\bu_{\textrm{L}}\|_2} - \frac{1}{\zeta_{\textrm{L}}}\right| \cdot \supll \frac{\|\hat\bv_{\textrm{L}}\|_2}{n}. \nonumber 
\end{align}
For the first summand, note that by Lemma \ref{lemma:uvw_convergence}, $\supll\|\hat\bw_{\textrm{L}}\|_2 = O_P(1)$. Thus it suffices to show that the first term in the first summand in \eqref{eq:glhatbound} is $o_P(1)$. To see this, observe that we can simplify this as follows
\begin{align*}
    &\supll\left| \frac{n\tau_{\textrm{L}}\zeta_{\textrm{L}}\sqrt{n\tau_{\textrm{L}}^2-\sigma^2}}{\|\hat\bu_{\textrm{L}}\|_2\|\hat\bw_{\textrm{L}}\|_2}-1\right| \\
    =&\supll \left| \frac{(\sqrt{n}\tau_{\textrm{L}}\zeta_{\textrm{L}})\sqrt{n\tau_{\textrm{L}}^2-\sigma^2}}{(\|\hat\bu_{\textrm{L}}\|_2/\sqrt{n})\|\hat\bw_{\textrm{L}}\|_2}-1\right|\\
    =&\supll \left| \frac{(\sqrt{n}\tau_{\textrm{L}}\zeta_{\textrm{L}}) (\sqrt{n\tau_{\textrm{L}}^2-\sigma^2}-\|\hat\bw_{\textrm{L}}\|_2) + \|\hat\bw_{\textrm{L}}\|_2(\sqrt{n}\tau_{\textrm{L}}\zeta_{\textrm{L}} - \|\hat\bu_{\textrm{L}}\|_2/\sqrt{n})}{(\|\hat\bu_{\textrm{L}}\|_2/\sqrt{n})\|\hat\bw_{\textrm{L}}\|_2}\right|\\
    \leq &  \supll \left|\frac{\sqrt{n}\tau_{\textrm{L}}\zeta_{\textrm{L}}}{(\|\hat\bu_{\textrm{L}}\|_2/\sqrt{n})\|\hat\bw_{\textrm{L}}\|_2}\right|\supll \left|\sqrt{n\tau_{\textrm{L}}^2-\sigma^2}-\|\hat\bw_{\textrm{L}}\|_2\right|\\
    &+ \supll \left|\frac{1}{\|\hat\bu_{\textrm{L}}\|_2/\sqrt{n}}\right|\supll \left|(\sqrt{n}\tau_{\textrm{L}}\zeta_{\textrm{L}} - \|\hat\bu_{\textrm{L}}\|_2/\sqrt{n})\right|\convP 0 \,\, \text{by Lemma \ref{lemma:uvw_convergence} }.
\end{align*}
We turn to the second summand in \eqref{eq:glhatbound}. It suffices to show that the first term is $o_P(1)$ since the second  is $O_P(1)$ by Lemma \ref{lemma:uvw_convergence}. But this follows from the same result on observing that
$$\supll \left| \frac{n\tau_{\textrm{L}}}{\|\hat\bu_{\textrm{L}}\|_2} - \frac{1}{\zeta_{\textrm{L}}}\right|
=\supll \left|\frac{\sqrt{n}\tau_{\textrm{L}}\zeta_{\textrm{L}} - \|\hat\bu_{\textrm{L}}\|_2/\sqrt{n}}{\|\hat\bu_{\textrm{L}}\|_2/\sqrt{n} \cdot \zeta_{\textrm{L}}}\right|$$
and that
 $\zeta_L$ is bounded below by a positive constant by \eqref{lemma:existence_fixed_point}. This completes the proof.
\end{proof}

\subsubsection{Proof of Lemma \ref{lemma:bridge_1}}\label{subsec:proof:lemma:bridge_1}
For the purpose of clarity, we sometimes make the dependence of estimators on $\lambda_{\rmL}$ and/or $\lambda_\rmR$ explicit (by writing $\hat\bbeta_{\rmL}(\lambda_\rmL)$, for example).
We first introduce a supporting Lemma that characterizes a convergence result individually for the Lasso and Ridge.

\begin{lemma}\label{lemma:marginal_both}
Under Assumption \ref{assumptions2}, for $k=\rmL,\rmR,$ and any $1$-Lipschitz function $\phi_\beta:(\R^p)^3\rightarrow \R$,
\begin{gather*}  \sup_{\lambda_k\in[\lambda_{\min},\lambda_{\max}]} \left|\phi_\beta(\hat\bbeta_k,\tilde\bbeta_k^d,\bbeta) - \E[\phi_\beta(\hat\bbeta_k^f,\hat\bbeta_k^{f,d},\bbeta)]\right| \convP 0.
\end{gather*}
\end{lemma}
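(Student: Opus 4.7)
The plan is to establish Lemma \ref{lemma:marginal_both} via a standard net-plus-equicontinuity argument. By compactness of $[\lambda_{\min},\lambda_{\max}]$, it suffices to (a) verify pointwise convergence on a finite $\epsilon$-net $\Lambda_\epsilon\subset[\lambda_{\min},\lambda_{\max}]$, and (b) establish a stochastic equicontinuity property in $\lambda_k$ for both the empirical quantity $\phi_\beta(\hat\bbeta_k,\tilde\bbeta_k^d,\bbeta)$ and the deterministic quantity $\E[\phi_\beta(\hat\bbeta_k^f,\hat\bbeta_k^{f,d},\bbeta)]$. The pointwise convergence at each fixed $\lambda_k$ is precisely the kind of CGMT-based statement proven in \cite{celentano2020lasso,celentano2021cad} for the Lasso (after replacing $\hat\df_\rmL$ by the population counterpart $\df_\rmL$ as in the definition of $\tilde\bbeta_\rmL^d$) and via resolvent calculus for the Ridge. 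A union bound over $|\Lambda_\epsilon|<\infty$ then yields pointwise convergence uniformly over $\Lambda_\epsilon$.

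For equicontinuity on the empirical side, I would show that with probability $1-o(1)$ the maps $\lambda_k\mapsto\hat\bbeta_k(\lambda_k)$ and $\lambda_k\mapsto\tilde\bbeta_k^d(\lambda_k)$ are Lipschitz in $\lambda_k$ in $\ell_2$ with a constant that is $O_P(1)$ uniformly over $[\lambda_{\min},\lambda_{\max}]$. For the Ridge case this is immediate from the closed form $\hat\bbeta_\rmR(\lambda_\rmR)=(n^{-1}\bX^\top\bX+\lambda_\rmR\bI)^{-1}n^{-1}\bX^\top\by$ together with the operator-norm bound $\|(\hat\bSigma+\lambda_\rmR\bI)^{-1}\|_{\op}\le 1/\lambda_{\min}$ and the boundedness of $\|\bX\|_{\op}/\sqrt n$ (Corollary \ref{largest_singular_value}). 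For the Lasso, I would use the KKT conditions combined with the restricted strong convexity furnished by Assumption \ref{assumptionL}(1): for $\lambda_\rmL,\lambda_\rmL'\in[\lambda_{\min},\lambda_{\max}]$, the supports of $\hat\bbeta_\rmL(\lambda_\rmL)$ and $\hat\bbeta_\rmL(\lambda_\rmL')$ have size $\lesssim n$ (cf.~Section \ref{sec:thresholding} and the $t_{\max}$ filtering step), so the difference lies in a set of controllably sparse vectors on which the minimum singular value of $\bX/\sqrt n$ is bounded below by $\kappa_{\min}>0$ with high probability. Subtracting KKT conditions and using Cauchy–Schwarz yields $\|\hat\bbeta_\rmL(\lambda_\rmL)-\hat\bbeta_\rmL(\lambda_\rmL')\|_2\lesssim|\lambda_\rmL-\lambda_\rmL'|$. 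The Lipschitz control then transfers to $\tilde\bbeta_\rmL^d$ by the product rule, using boundedness of $\|\by-\bX\hat\bbeta_\rmL\|_2/\sqrt n$ from Lemma \ref{lemma:uvw_convergence}, boundedness of $\|\bX\|_{\op}/\sqrt n$, and the fact that $|1-\df_k/n|$ is bounded away from zero by \eqref{lemma:existence_fixed_point} and Corollary \ref{cor:df_bounded}. Composing with the $1$-Lipschitz $\phi_\beta$ yields the desired equicontinuity.

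For the deterministic side, I would establish continuity (in fact Lipschitzness) of the fixed-point solutions $(\tau_k(\lambda_k),\zeta_k(\lambda_k),\df_k(\lambda_k))$ in $\lambda_k$ on $[\lambda_{\min},\lambda_{\max}]$. By the uniqueness of the fixed point (\cite[Lemma C.1]{celentano2021cad}) and the implicit function theorem applied to \eqref{def:fixed_point}, these functions depend continuously on $\lambda_k$ with controlled modulus. Hence the law of $(\hat\bbeta_k^f,\hat\bbeta_k^{f,d})$ varies continuously in $\lambda_k$ (in, e.g., Wasserstein-2 distance), and since $\phi_\beta$ is $1$-Lipschitz while the Gaussian fluctuations have bounded variance (Eqn.~\eqref{lemma:existence_fixed_point}), $\lambda_k\mapsto \E[\phi_\beta(\hat\bbeta_k^f,\hat\bbeta_k^{f,d},\bbeta)]$ is itself Lipschitz on the compact interval. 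Choosing the net width $\epsilon\to0$ at a slow enough rate and combining with the pointwise statement over $\Lambda_\epsilon$ closes the argument.

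The main obstacle is the Lasso equicontinuity. The Ridge is analytic in $\lambda_\rmR$ and causes no trouble, and the deterministic side is standard. The Lasso, however, can exhibit support transitions as $\lambda_\rmL$ varies, and the challenge is to show that $\hat\bbeta_\rmL(\lambda_\rmL)$ is nonetheless globally Lipschitz in $\ell_2$ with a constant that is $O_P(1)$ and independent of $\lambda_\rmL$. This depends crucially on the restricted strong convexity (Assumption \ref{assumptionL}(1)) holding uniformly over sparsity levels corresponding to the admissible $\hat\df_\rmL/n$ window $[t_{\min},t_{\max}]$. Once this is in place, the remainder of the proof is a mechanical combination of the three ingredients above.
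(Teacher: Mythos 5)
Your proposal takes a genuinely different route from the paper's proof. The paper bootstraps from the already-uniform single-estimator result (Lemma \ref{lemma:marginal_single}): for the Ridge it shows $\hat\bbeta_\rmR\mapsto\tilde\bbeta_\rmR^d$ is Lipschitz via the Ridge KKT identity, and for the Lasso it detours through the $\alpha$-smoothed Lasso because there is no single Lipschitz map that simultaneously sends $\hat\bbeta_\rmL\mapsto\tilde\bbeta_\rmL^d$ on the empirical side and $\hat\bbeta_\rmL^f\mapsto\hat\bbeta_\rmL^{f,d}$ on the proxy side (the proxy side is ``inverse soft-thresholding''), then invokes closeness of the Lasso and the smoothed Lasso as $\alpha\to 0$. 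You instead propose a net-plus-equicontinuity argument on the joint pointwise characterization of $(\hat\bbeta_k,\tilde\bbeta_k^d)$, with the equicontinuity furnished by direct Lipschitzness of $\lambda_\rmL\mapsto\hat\bbeta_\rmL(\lambda_\rmL)$. This sidesteps the ``no common Lipschitz map'' problem entirely, which is a real structural difference.

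However, your Lasso equicontinuity step has a gap. Your KKT-plus-Cauchy--Schwarz inequality gives
\[
\kappa_{\min}^2\|\hat\bbeta_\rmL(\lambda_1)-\hat\bbeta_\rmL(\lambda_2)\|_2^2\le\frac{|\lambda_1-\lambda_2|}{\sqrt n}\sqrt{|S|}\,\|\hat\bbeta_\rmL(\lambda_1)-\hat\bbeta_\rmL(\lambda_2)\|_2,
\]
where $S=\supp(\hat\bbeta_\rmL(\lambda_1))\cup\supp(\hat\bbeta_\rmL(\lambda_2))$, so the restricted minimum singular value must be applied at sparsity level $|S|$. The uniform high-probability bound on $\hat\df_\rmL/n$ coming from \eqref{lemma:existence_fixed_point} (via $\zeta_\rmL\ge\zeta_{\min}>0$) only gives $\hat\df_\rmL/n\le 1-\zeta_{\min}/2 + o(1)$, so $|S|$ is bounded by roughly $(2-\zeta_{\min})n$, which exceeds $n$ whenever $\zeta_{\min}\le 1$ --- i.e., always, since $\zeta_k\le 1$. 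Assumption \ref{assumptionL}(1) only controls $\kappa_-(\bX,s)$ for $s\le\gamma n$ with $\gamma<1$, and for $p>n$ the submatrix on a support of size $>n$ necessarily has a vanishing smallest singular value. Citing the $t_{\max}=0.5$ filter in Section \ref{sec:thresholding} does not close this: it is a practical device, not a consequence of the model assumptions, and the theorem is stated for all $\lambda\in[\lambda_{\min},\lambda_{\max}]$ without that filter. The paper's own treatment of Lasso path regularity (Lemma \ref{lemma:wf_weak_lipschitz}) deliberately proves only a ``weak-Lipschitz'' property --- $\|\hat\bbeta_\rmL(\lambda_1)-\hat\bbeta_\rmL(\lambda_2)\|_2\le M|\lambda_1-\lambda_2|$ only for $|\lambda_1-\lambda_2|\ge\epsilon$, with $\epsilon$ absorbed via the triangle inequality through the proxy $\hat\bbeta_\rmL^f$ --- precisely because this uses no restricted strong convexity and therefore avoids the support-union issue. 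One can imagine repairing your argument by moving to a piecewise argument along the LARS path (each piece has active-set size $\hat\df_\rmL\le(1-\zeta_{\min}/2)n$, so the per-piece derivative is controlled), but that is a materially different and more delicate argument than the one you sketched; as written, the union-of-supports step breaks.
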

\begin{proof}

First we consider $k=\rmR$. From Lemma \ref{lemma:marginal_single}, we know 
$$\suplr |\phi_\beta(\hat\bbeta_{\rmR}) - \E[\phi_\beta(\hat\bbeta_{\rmR}^{f})]| \convP 0$$ 
for any $1$-Lipschitz function $\phi_\beta$. Since $\tilde\bbeta_{\rmR}^d$ is a Lipschitz function of $\hat\bbeta_{\rmR}$ by Lemma \ref{lemma:debiased_lipschitz_hat}, the proof follows by the definition of these from \eqref{def:hat_beta_fd}.

Now for $k=\rmL$, however, $\tilde\bbeta_{\rmL}^d$ is not necessarily a Lipschitz function of $\hat\bbeta_{\rmL}$. Thus, we turn to the $\alpha$-smoothed Lasso ($\alpha > 0$) instead \cite{celentano2020lasso}, defined as
$$\hat\bbeta_\alpha = \argmin_{\bb} \frac{1}{2n}\|\by - \bX\bb\|_2^2 
 + \frac{\lambda_L}{\sqrt{n}}\inf_{\btheta}\left\{\frac{\sqrt{n}}{2\alpha}\|\bb-\btheta\|_2^2 + \|\btheta\|_1 \right\}.$$
Based on this, 
$\tilde\bbeta_\alpha^d,\hat\bbeta_\alpha^f,\hat\bbeta_\alpha^{f,d}$ can be similarly defined.
We omit the full details for simplicity since they are similar to \eqref{def:hat_beta_d}-\eqref{def:fixed_point}, but we note that they satisfy the following relation by KKT conditions:

\begin{equation*}
\begin{aligned}
    \tilde\bbeta_\alpha^d &= \hat\bbeta_\alpha + \frac{\lambda_{\rmL} \nabla M_\alpha(\hat\bbeta_\alpha)}{\sqrt{n}\xi_\alpha}\\
    \hat\bbeta_\alpha^{f,d} &= \hat\bbeta_\alpha^f + \frac{\lambda_{\rmL} \nabla M_\alpha(\hat\bbeta_\alpha^f)}{\sqrt{n}\xi_\alpha},
\end{aligned}
\end{equation*}
where $M_\alpha(\bb) := \inf_{\btheta}\left\{\frac{\sqrt{n}}{2\alpha}\|\bb-\btheta\|_2^2 + \|\btheta\|_1 \right\}$ and $\bxi_\alpha$ is defined similarly as \eqref{def:fixed_point}.
\cite[Theorem B.1]{celentano2020lasso} establishes a pointwise version of Lemma \ref{lemma:marginal_single} for the $\alpha$-smoothed Lasso, i.e. without supremum over the tuning parameter range, but uniform control results can be easily obtained following our techniques for Lemma \ref{lemma:marginal_single}. Furthermore, \cite[Section B.5.2]{celentano2020lasso} shows that $\tilde\bbeta_\alpha^d$ is an $C/\alpha$-Lipschitz function of $\hat\bbeta_\alpha$ for some positive constant $C$ (by checking the Lipschitzness of $\nabla M_\alpha$).  
Combining these, we obtain that for a fixed $\alpha > 0$ 
\begin{equation}\label{eqn:marginal_both_alpha}
\sup_{\lambda_{\rmL}\in[\lambda_{\min},\lambda_{\max}]} \left|\phi_\beta(\hat\bbeta_\alpha,\tilde\bbeta_\alpha^d,\bbeta) - \E[\phi_\beta(\hat\bbeta_\alpha^f,\hat\bbeta_\alpha^{f,d},\bbeta)]\right| \convP 0. 
\end{equation}
In addition, \cite[Lemma B.8 and Section B.5.1]{celentano2020lasso} 
establishes the closeness of (debiased) Lasso and (debiased) $\alpha$-smoothed Lasso as follows: there exists a constant $\alpha_{\max}>0$ such that
\begin{equation}\label{closeness_alpha_smoothed}
\begin{gathered}
    \PP\left(\|\hat\bbeta_\alpha - \hat\bbeta_{\rmL}\|_2 \leq C_1\sqrt{\alpha},\forall \alpha \leq \alpha_{\max}\right) = 1 - o(1)\\
    \PP\left(\|\hat\bbeta_\alpha^d - \hat\bbeta_{\rmL}^d\|_2 \leq C_1\sqrt{\alpha},\forall \alpha \leq \alpha_{\max}\right) = 1 - o(1),
\end{gathered}
\end{equation}

which can be easily extended to the uniform version since $C_1,\alpha_{\max}$ and constants hiding in $o(1)$ do not depend on $\lambda$, and estimators for all $\lambda$ share the same source of randomness.

Finally, \cite[Lemma A.5]{celentano2020lasso} shows there exists constants $C_2$ and $\alpha_{\max}$ such that $\forall \alpha\leq \alpha_{\max}$:
\begin{equation}\label{eqn:tau_xi_alpha}
    |\tau_{\rmL}-\tau_\alpha| \leq C_2 \sqrt{\alpha},\ \ |\xi_{\rmL}-\xi_\alpha| \leq C_2 \sqrt{\alpha},
\end{equation}

which can be extended to uniform version similarly.

Once uniform control results are extended, the rest of the proof stays the same as in \cite[Section B.5]{celentano2020lasso}, on combining \eqref{eqn:marginal_both_alpha}-\eqref{eqn:tau_xi_alpha}.
\end{proof}

As a consequence of 
Lemmas \ref{lemma:convergence_gl} and \ref{lemma:marginal_both}, we obtain the following corollary. An analogous result was established in \cite[Corollary J.2]{celentano2021cad} but unlike their case,  our guarantee here is uniform over the tuning parameter space. 
Thus the result below relies crucially on our earlier results that provide such uniform guarantees.

\begin{cor}\label{cor:marginal_beta_g_joint}
Recall  the definition of $\hat\bg_{\textrm{L}}$ from \eqref{def:hat_gl} and that $ \bg_L^f = \hat\bbeta_L^{f,d} -\bbeta$ from \eqref{def:hat_beta_f}. Under Assumption \ref{assumptions2}, for $1$-Lipschitz functions $\phi_\beta:(\R^p)^2\rightarrow \R$,
\begin{equation*}
    \supll |\phi_\beta(\hat\bbeta_{\textrm{L}},\hat\bg_{\textrm{L}}) - \E[\phi_\beta(\hat\bbeta_{\textrm{L}}^f,\bg_{\textrm{L}}^f)]| \convP 0.
\end{equation*}
\end{cor}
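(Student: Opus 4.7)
The plan is to combine Lemma \ref{lemma:marginal_both} (applied with $k=\textrm{L}$) and Lemma \ref{lemma:convergence_gl} via a Lipschitz-composition argument, coupling the unconditional marginal convergence statement with the approximation of $\hat\bg_{\textrm{L}}$ by $\tilde\bbeta_{\textrm{L}}^d - \bbeta$.

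First, I would introduce the auxiliary function $\tilde\phi : (\R^p)^3 \to \R$ defined by $\tilde\phi(\ba_1,\ba_2,\ba_3) := \phi_\beta(\ba_1,\ba_2-\ba_3)$. Since $\phi_\beta$ is $1$-Lipschitz, $\tilde\phi$ is $\sqrt{2}$-Lipschitz in its three arguments jointly; after rescaling by $1/\sqrt{2}$, Lemma \ref{lemma:marginal_both} applies to $\tilde\phi/\sqrt{2}$ with $k=\textrm{L}$. Substituting $(\ba_1,\ba_2,\ba_3) = (\hat\bbeta_{\textrm{L}}, \tilde\bbeta_{\textrm{L}}^d, \bbeta)$ on the empirical side and $(\hat\bbeta_{\textrm{L}}^f, \hat\bbeta_{\textrm{L}}^{f,d}, \bbeta)$ on the population side, and recalling from \eqref{def:hat_beta_fd} that $\hat\bbeta_{\textrm{L}}^{f,d} - \bbeta = \bg_{\textrm{L}}^f$, this yields
\begin{equation*}
    \supll \left| \phi_\beta\bigl(\hat\bbeta_{\textrm{L}},\, \tilde\bbeta_{\textrm{L}}^d - \bbeta\bigr) - \E\bigl[\phi_\beta(\hat\bbeta_{\textrm{L}}^f,\bg_{\textrm{L}}^f)\bigr] \right| \convP 0.
\end{equation*}

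Next, I would use $1$-Lipschitzness of $\phi_\beta$ in its second argument to pass from $\tilde\bbeta_{\textrm{L}}^d - \bbeta$ to $\hat\bg_{\textrm{L}}$:
\begin{equation*}
    \supll \left| \phi_\beta(\hat\bbeta_{\textrm{L}},\hat\bg_{\textrm{L}}) - \phi_\beta(\hat\bbeta_{\textrm{L}}, \tilde\bbeta_{\textrm{L}}^d - \bbeta) \right| \;\leq\; \supll \|\hat\bg_{\textrm{L}} - (\tilde\bbeta_{\textrm{L}}^d - \bbeta)\|_2 \convP 0,
\end{equation*}
where the final convergence is precisely the content of Lemma \ref{lemma:convergence_gl}. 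A triangle inequality combining the two displays gives the claim.

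The argument is short because both heavy pieces, the uniform-in-$\lambda_{\textrm{L}}$ Gaussian replacement and the uniform-in-$\lambda_{\textrm{L}}$ control of $\hat\bg_{\textrm{L}} - (\tilde\bbeta_{\textrm{L}}^d-\bbeta)$, are already in place. The only subtlety to verify is that the Lipschitz constant of $\tilde\phi$ does not depend on $\lambda_{\textrm{L}}$ (which is immediate from its explicit form) and that the same random realization drives $\hat\bbeta_{\textrm{L}}(\lambda_{\textrm{L}})$ and $\hat\bg_{\textrm{L}}(\lambda_{\textrm{L}})$ for every $\lambda_{\textrm{L}}$ in the range, so the two $\supll$ estimates can legitimately be combined before passing to the limit. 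This is the mildest potential obstacle; no further concentration or universality input is required beyond what the previous lemmas provide.
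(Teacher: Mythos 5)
Your proof is correct and takes essentially the same route as the paper: a triangle-inequality split into a term controlled by Lemma \ref{lemma:convergence_gl} (replacing $\hat\bg_{\rmL}$ by $\tilde\bbeta_{\rmL}^d - \bbeta$) and a term controlled by Lemma \ref{lemma:marginal_both} after composing $\phi_\beta$ with the linear map $(\ba_1,\ba_2,\ba_3)\mapsto(\ba_1,\ba_2-\ba_3)$. You are merely more explicit than the paper about the $\sqrt{2}$ Lipschitz constant of this composition, which the paper invokes implicitly.
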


\begin{proof}
By triangle inequality,
\begin{align*}
    &\supll |\phi_\beta(\hat\bbeta_{\textrm{L}},\hat\bg_{\textrm{L}}) - \E[\phi_\beta(\hat\bbeta_{\textrm{L}}^f,\bg_{\textrm{L}}^f)]|\\
    \leq & \supll |\phi_\beta(\hat\bbeta_{\textrm{L}},\hat\bg_{\textrm{L}}) - \phi_\beta(\hat\bbeta_{\textrm{L}},\tilde\bbeta_{\textrm{L}}^d-\bbeta)| + \supll |\phi_\beta(\hat\bbeta_{\textrm{L}},\tilde\bbeta_{\textrm{L}}^d-\bbeta) - \E[\phi_\beta(\hat\bbeta_{\textrm{L}}^f,\bg_{\textrm{L}}^f)]|\\
    \leq & \supll\|\hat\bg_{\textrm{L}} - (\tilde\bbeta_{\textrm{L}}^d-\bbeta)\|_2 + \supll|\phi_\beta(\hat\bbeta_{\textrm{L}},\tilde\bbeta_{\textrm{L}}^d-\bbeta) - \E[\phi_\beta(\hat\bbeta_{\textrm{L}}^f,\hat\bbeta_L^{f,d} -\bbeta)]|\convP 0,
\end{align*}
since the first summand vanishes due to Lemma \ref{lemma:convergence_gl} and the second summand vanishes due to Lemma \ref{lemma:marginal_both}.
\end{proof}

To prove Lemma \ref{lemma:bridge_1}, that is, \eqref{eq:interim1}, we need to establish a convergence result that is uniform over $\lambda_L$ and $\lambda_R$. We divide this goal into a two-step strategy, where we first establish that \eqref{eq:interim1} holds with the supremum only over $\lambda_L$ (Lemma \ref{lemma:bridge_1_onlyL}) 
Then by appropriate Lipschitzness arguments (Lemma \ref{lemma:lipschitz_Psi_bridge1}), we extend the result to hold with supremum simultaneously over $\lambda_L$ and $\lambda_R$. 

\begin{lemma}\label{lemma:bridge_1_onlyL}
Denote the quantity $\E[\phi_\beta (\hat\bbeta_{\rmL}^{f,d},\hat\bbeta_{\textrm{R}}^{f,d},\bbeta)|\bg_{\textrm{L}}^f = \hat\bg_{\textrm{L}}]$ as $\phi_{\beta|L}(\hat\bg_{\textrm{L}})$. Then under Assumption \ref{assumptions2},
\begin{equation}\label{bridge_1_onlyL}  \supll|\phi_{\beta|L}(\hat\bg_{\textrm{L}}) - \E[\phi_\beta(\hat\bbeta_{\textrm{L}}^{f,d},\hat\bbeta_{\textrm{R}}^{f,d},\bbeta)]| \convP 0.
\end{equation}
\end{lemma}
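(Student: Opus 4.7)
The approach is to express $\phi_{\beta|L}(\hat\bg_{\rmL})$ via Gaussian conditioning so that it becomes a Lipschitz functional of $\hat\bg_{\rmL}$, after which I can appeal to Corollary \ref{cor:marginal_beta_g_joint}. Since $(\bg_{\rmL}^f, \bg_{\rmR}^f) \sim \mcn(\bm{0}, \bS \otimes \bI_p)$, the conditional law of $\bg_{\rmR}^f$ given $\bg_{\rmL}^f = \bm{u}$ is $\mcn(a(\lambda_{\rmL})\bm{u}, b(\lambda_{\rmL})^2 \bI_p)$ with $a(\lambda_{\rmL}) := \rho\tau_{\rmR}/\tau_{\rmL}$ and $b(\lambda_{\rmL}) := \sqrt{1-\rho^2}\,\tau_{\rmR}$; since $\lambda_{\rmR}$ is fixed throughout this lemma, $\tau_{\rmR}$ is a constant. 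Defining $F_{\lambda_{\rmL}}(\bm{u}) := \E_{\bz \sim \mcn(\bm{0},\bI_p)}\phi_\beta(\bbeta+\bm{u},\; \bbeta + a(\lambda_{\rmL})\bm{u} + b(\lambda_{\rmL})\bz,\; \bbeta)$, I obtain $\phi_{\beta|L}(\hat\bg_{\rmL}) = F_{\lambda_{\rmL}}(\hat\bg_{\rmL})$ and $\E\phi_\beta(\hat\bbeta_{\rmL}^{f,d},\hat\bbeta_{\rmR}^{f,d},\bbeta) = \E_{\bg_{\rmL}^f} F_{\lambda_{\rmL}}(\bg_{\rmL}^f)$, so the goal reduces to showing $\supll|F_{\lambda_{\rmL}}(\hat\bg_{\rmL}) - \E F_{\lambda_{\rmL}}(\bg_{\rmL}^f)| \convP 0$.

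I would carry this out via an $\epsilon$-net argument supported by two Lipschitz estimates: (a) each $F_{\lambda_{\rmL}}$ is $C$-Lipschitz in $\bm{u}$ with $C = \sqrt{1+a^2}$ bounded by a universal constant (using $|\rho|\leq \rho_{\max}<1$ and $\tau_{\rmR}/\tau_{\rmL} = O(1)$ from \eqref{lemma:existence_fixed_point}); and (b) for $\|\bm{u}\|_2 \leq R$, the bound $|F_{\lambda_{\rmL}}(\bm{u}) - F_{\lambda_{\rmL}'}(\bm{u})| \leq R|a(\lambda_{\rmL}) - a(\lambda_{\rmL}')| + \sqrt{p}\,|b(\lambda_{\rmL})-b(\lambda_{\rmL}')|$ holds, where the $\sqrt{p}$ factor arises from $\E\|\bz\|_2 \leq \sqrt{p}$ but is absorbed by the fact that $\sqrt{p}\,\tau_{\rmR} = O(\sqrt{p/n}) = O(1)$. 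Combined with Lipschitzness of $\tau_{\rmL}(\lambda_{\rmL})$ and $\rho(\lambda_{\rmL},\lambda_{\rmR})$ in $\lambda_{\rmL}$ (which is extracted from \eqref{def:fixed_point} via the implicit function theorem, using its nondegenerate Jacobian together with the bounds of \eqref{lemma:existence_fixed_point}; analogous pointwise arguments appear in \cite[Section J]{celentano2021cad}), this yields a uniform $\lambda_{\rmL}$-Lipschitz modulus $K(R)$ for $F$ on the ball of radius $R$. I then discretize $[\lambda_{\min},\lambda_{\max}]$ by an $\epsilon$-net $\{\lambda_{\rmL}^{(i)}\}_{i=1}^N$ and, for each $i$, apply Corollary \ref{cor:marginal_beta_g_joint} to the $1$-Lipschitz test function $\tilde\Psi_i(\bb,\bg) := F_{\lambda_{\rmL}^{(i)}}(\bg)/C$ (which does not depend on $\bb$), obtaining $\supll|F_{\lambda_{\rmL}^{(i)}}(\hat\bg_{\rmL}) - \E F_{\lambda_{\rmL}^{(i)}}(\bg_{\rmL}^f)| \convP 0$ for each fixed $i$; since $N$ is finite, the maximum over $i$ also vanishes.

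Combining via triangle inequality: for any $\lambda_{\rmL}$, I select the nearest grid point $\lambda_{\rmL}^{(i)}$ and, on the high-probability event $\{\supll\|\hat\bg_{\rmL}\|_2 \leq R$ and $\|\bg_{\rmL}^f\|_2 \leq R\}$ (ensured by Lemma \ref{lemma:uvw_convergence} for $\hat\bg_{\rmL}$ and by Gaussian concentration with $\tau_{\rmL}\sqrt{p}=O(1)$ for $\bg_{\rmL}^f$), each of the two $\lambda_{\rmL}$-Lipschitz corrections contributes at most $K(R)\epsilon$, yielding
\begin{equation*}
\supll\bigl|F_{\lambda_{\rmL}}(\hat\bg_{\rmL}) - \E F_{\lambda_{\rmL}}(\bg_{\rmL}^f)\bigr|
\leq 2K(R)\epsilon + \max_{i\leq N}\supll\bigl|F_{\lambda_{\rmL}^{(i)}}(\hat\bg_{\rmL}) - \E F_{\lambda_{\rmL}^{(i)}}(\bg_{\rmL}^f)\bigr|.
\end{equation*}
Sending $n\to\infty$ drives the second term to $0$ in probability, and then $\epsilon\to 0$ concludes the proof. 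The main obstacle I anticipate is step (b): the $\sqrt{p}$ factor produced by the Gaussian expectation must be exactly absorbed by the vanishing scale of $\tau_{\rmR}$, and one must simultaneously verify that the $\lambda_{\rmL}$-Lipschitz constants of $\tau_{\rmL}$ and $\rho$ (with the correct $\sqrt{n}$-normalization) remain uniformly bounded on $[\lambda_{\min},\lambda_{\max}]$, which requires a careful stability analysis of the fixed-point system \eqref{def:fixed_point}.
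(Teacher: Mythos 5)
Your proposal follows the same reduction as the paper: both use the tower identity $\E[\phi_{\beta|\rmL}(\bg_{\rmL}^f)] = \E[\phi_\beta(\hat\bbeta_{\rmL}^{f,d},\hat\bbeta_{\rmR}^{f,d},\bbeta)]$, both reduce to showing $\supll|\phi_{\beta|\rmL}(\hat\bg_{\rmL}) - \E\phi_{\beta|\rmL}(\bg_{\rmL}^f)| \convP 0$, and both ultimately rest on Corollary \ref{cor:marginal_beta_g_joint} together with Lipschitzness of $\phi_{\beta|\rmL}$. Where you differ is that the paper's proof treats $\phi_{\beta|\rmL}$ as a single Lipschitz test function and invokes Corollary \ref{cor:marginal_beta_g_joint} directly (deferring the Lipschitzness verification to an argument mirroring Section J.2 of Celentano, Montanari and Wei), whereas you make the $\lambda_{\rmL}$-dependence of the test function explicit — since $a=\rho\tau_{\rmR}/\tau_{\rmL}$ and $b=\tau_{\rmR}\sqrt{1-\rho^2}$ both move with $\lambda_{\rmL}$ — and discharge it through an extra $\epsilon$-net over $\lambda_{\rmL}$. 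This is a genuinely more explicit route: Corollary \ref{cor:marginal_beta_g_joint} as stated gives uniformity over $\lambda_{\rmL}$ for a \emph{fixed} $\phi_\beta$, so applying it to a $\lambda_{\rmL}$-indexed family strictly speaking requires either your discretization or an argument that the underlying Lemma \ref{lemma:marginal_both} extends to families with a uniformly bounded Lipschitz modulus; you spell out the former, while the paper leaves this implicit. Your accounting that the $\sqrt{p}$ from $\E\|\bz\|_2$ is absorbed by $\tau_{\rmR}=O(1/\sqrt{n})$ is exactly right.

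Two small remarks. First, the Lipschitz control of $\tau_{\rmL}$, $\zeta_{\rmL}$, $\rho$, $\sqrt{1-\rho^2}$ in $\lambda_{\rmL}$ that you propose to obtain via the implicit function theorem is already supplied by Lemma \ref{lemma:lipschitz_fixed_point}, so the stability analysis of \eqref{def:fixed_point} you flag as the main obstacle is in fact available off the shelf and need not be re-derived. Second, in the triangle inequality your phrasing "on the high-probability event $\{\|\bg_{\rmL}^f\|_2\leq R\}$" is slightly off: $\bg_{\rmL}^f$ is integrated out inside $\E F_{\lambda_{\rmL}}(\bg_{\rmL}^f)$, so rather than conditioning on a bound for $\|\bg_{\rmL}^f\|_2$, the correct bound for the third leg $|\E F_{\lambda_{\rmL}^{(i)}}(\bg_{\rmL}^f)-\E F_{\lambda_{\rmL}}(\bg_{\rmL}^f)|$ is $|a-a'|\,\E\|\bg_{\rmL}^f\|_2 + \sqrt{p}\,|b-b'|$ and one uses $\E\|\bg_{\rmL}^f\|_2\leq\sqrt{p\tau_{\rmL}^2}=O(1)$ directly. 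With that patch the argument is sound.
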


\begin{proof}
From the definition, we know $\E[\phi_{\beta|L}(\bg_{\textrm{L}}^f)] = \E[\phi_\beta(\hat\bbeta_{\textrm{L}}^{f,d},\hat\bbeta_{\textrm{R}}^{f,d},\bbeta)]$, since, per the remark below \eqref{def:hat_gl}, $\tilde\bbeta_{\rmL}^d$ equals $\hat\bbeta_{\rmL}^{f,d}$ conditional on $\bg_{\textrm{L}}^f = \hat\bg_{\textrm{L}}$.

Therefore, we just need to show
$$\supll|\phi_{\beta|L}(\hat\bg_{\textrm{L}}) - \E[\phi_{\beta|L}(\bg_{\textrm{L}}^f)]| \convP 0,$$
which follows from Corollary \ref{cor:marginal_beta_g_joint}, if we can show that $\phi_{\beta|L}$ is a Lipschitz function.
The Lipschitzness follows by an argument similar to \cite[Section J.2]{celentano2021cad}, so we omit the details here. 
\end{proof}

Note that Corollary \ref{cor:marginal_beta_g_joint}, which in turn relies on Lemmas \ref{lemma:convergence_gl} and \ref{lemma:marginal_both}, forms a crucial ingredient for the preceding proof.

\begin{lemma}\label{lemma:lipschitz_Psi_bridge1}
Under Assumption \ref{assumptions2}, with probability $1-o(1)$, 
$\Psi(\lambda_{\textrm{R}})$ is an M-Lipschitz 
 function 
 of 
$\lambda_{\textrm{R}}$ 
for some positive constant $M$, where
\begin{equation*}
    \Psi(\lambda_{\textrm{R}}) := \supll |\phi_{\beta|L}(\hat\bg_{\textrm{L}}) - \E[\phi_\beta(\hat\bbeta_{\textrm{L}}^{f,d}(\lambda_{\rmL}),\hat\bbeta_{\textrm{R}}^{f,d}(\lambda_{\rmR}),\bbeta)]|.
\end{equation*}
\end{lemma}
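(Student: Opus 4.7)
The plan is to show that the two functions inside the supremum,
\[
A(\lambda_\rmL,\lambda_\rmR) := \phi_{\beta|L}(\hat\bg_\rmL) \text{ and } B(\lambda_\rmL,\lambda_\rmR) := \E[\phi_\beta(\hat\bbeta_\rmL^{f,d}(\lambda_\rmL),\hat\bbeta_\rmR^{f,d}(\lambda_\rmR),\bbeta)],
\]
are each $M$-Lipschitz in $\lambda_\rmR$ uniformly in $\lambda_\rmL$ on an event of probability $1-o(1)$. Since $||A-B|(\lambda_\rmR)-|A-B|(\lambda_\rmR')|\leq |A(\lambda_\rmR)-A(\lambda_\rmR')|+|B(\lambda_\rmR)-B(\lambda_\rmR')|$ and the supremum over $\lambda_\rmL$ of an $M$-Lipschitz family is itself $M$-Lipschitz, this suffices.

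The key observation is that neither $\hat\bg_\rmL$ (data-dependent, but a function of $\lambda_\rmL$ only) nor $\hat\bbeta_\rmL^{f,d}(\lambda_\rmL)$ depends on $\lambda_\rmR$; the entire $\lambda_\rmR$-dependence flows through the covariance $\bS$ in \eqref{def:bS_g}, equivalently through the scalars $\tau_\rmR$ and $\rho$ defined by the fixed-point system \eqref{def:fixed_point}. Using the square-root factorization $\bg_\rmL^f=\tau_\rmL\bZ_\rmL$ and $\bg_\rmR^f=\rho\tau_\rmR\bZ_\rmL+\tau_\rmR\sqrt{1-\rho^2}\bZ_\rmR$ for independent $\bZ_\rmL,\bZ_\rmR\sim\mathcal{N}(0,\bI_p)$, both $A$ and $B$ become explicit Gaussian expectations in the scalars $(\tau_\rmR,\rho\tau_\rmR,\tau_\rmR\sqrt{1-\rho^2})$; for $A$ the variable $\bg_\rmL^f$ is frozen to $\hat\bg_\rmL$ as in the remark after \eqref{def:hat_gl}. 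I would first establish that $\tau_\rmR(\lambda_\rmL,\lambda_\rmR)$ and $\rho(\lambda_\rmL,\lambda_\rmR)$ are continuously differentiable on $[\lambda_{\min},\lambda_{\max}]^2$ with partial derivatives in $\lambda_\rmR$ bounded by $C/\sqrt{n}$ uniformly in $\lambda_\rmL$. This follows from the implicit function theorem applied to \eqref{def:fixed_point}: uniqueness of the fixed point (Lemma C.1 of \cite{celentano2021cad}) together with the a priori bounds \eqref{lemma:existence_fixed_point} guarantees a non-degenerate Jacobian, while the $1/n$ prefactor on the right-hand side of \eqref{def:fixed_point} delivers the $1/\sqrt{n}$-scaling.

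With this scaling in hand, the $1$-Lipschitzness of $\phi_\beta$ on $(\R^p)^3$ together with Jensen's inequality yields
\[
|B(\lambda_\rmL,\lambda_\rmR)-B(\lambda_\rmL,\lambda_\rmR')|\leq |\rho\tau_\rmR-\rho'\tau_\rmR'|\E\|\bZ_\rmL\|_2 + |\tau_\rmR\sqrt{1-\rho^2}-\tau_\rmR'\sqrt{1-\rho'^2}|\E\|\bZ_\rmR\|_2,
\]
and the analogous bound for $A$ using $\|\hat\bg_\rmL\|_2$ in place of $\E\|\bZ_\rmL\|_2$. Since $\E\|\bZ_k\|_2\leq\sqrt{p}$ is balanced exactly by the $1/\sqrt{n}$-scaling from Step 1, and $\sup_{\lambda_\rmL}\|\hat\bg_\rmL\|_2$ is $O_P(1)$ by Lemma \ref{lemma:uvw_convergence} applied to the Lipschitz combination of $\hat\bw_\rmL$ and $\hat\bv_\rmL/\sqrt{n}$ appearing in \eqref{def:hat_gl}, both differences are $O(|\lambda_\rmR-\lambda_\rmR'|)$ with a universal constant. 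The principal obstacle is verifying uniform non-degeneracy of the Jacobian of \eqref{def:fixed_point} on the compact rectangle and handling the $\bbeta$-dependence on the right-hand side of the fixed-point equations; the explicit forms of $\eta_\rmL$ (soft-thresholding) and $\eta_\rmR$ (linear shrinkage) reduce those computations to Gaussian-weighted integrals of indicator and linear functions, and restricting to $\{\|\bbeta\|_2^2\leq 2\sigma_{\max}^2\}$ from Assumption \ref{assumptions}(ii) provides the required uniformity without loss of generality.
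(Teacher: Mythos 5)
Your decomposition --- show that $A(\lambda_\rmL,\lambda_\rmR):=\phi_{\beta|L}(\hat\bg_\rmL)$ and $B(\lambda_\rmL,\lambda_\rmR):=\E[\phi_\beta(\hat\bbeta_\rmL^{f,d},\hat\bbeta_\rmR^{f,d},\bbeta)]$ are each Lipschitz in $\lambda_\rmR$ uniformly in $\lambda_\rmL$ on a $1-o(1)$ event, then apply the triangle inequality together with the preservation of Lipschitz constants under $\sup_{\lambda_\rmL}$ --- is exactly the paper's proof, including the same coupling $\bg_\rmR^f=\rho\tau_\rmR\bZ_\rmL+\tau_\rmR\sqrt{1-\rho^2}\,\bZ_\rmR$ and the same $O_P(1)$ bound on $\sup_{\lambda_\rmL}\|\hat\bg_\rmL\|_2$. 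Two remarks.

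First, there is a minor misstatement in what you propose to prove as ``Step 1.'' You assert that both $\tau_\rmR$ and $\rho$ have $\lambda_\rmR$-derivative $O(1/\sqrt n)$. That is wrong for $\rho$: since $\rho$ is a dimensionless correlation, its Lipschitz constant in $\lambda_\rmR$ is $O(1)$, not $O(1/\sqrt n)$ --- this is precisely what Lemma~\ref{lemma:lipschitz_fixed_point}(c) records (with $\tau_k$ being the $C/\sqrt n$-Lipschitz quantity in part (a)). Your downstream computation is still correct, but for a different reason: the quantities actually entering the displayed bound are the products $\rho\tau_\rmR$ and $\tau_\rmR\sqrt{1-\rho^2}$, and these are $O(1/\sqrt n)$-Lipschitz because $\tau_\rmR$ is both $O(1/\sqrt n)$ in magnitude and $O(1/\sqrt n)$-Lipschitz while $\rho,\sqrt{1-\rho^2}$ are merely $O(1)$-bounded and $O(1)$-Lipschitz; this is what balances $\E\|\bZ_k\|_2\asymp\sqrt p$. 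Stating the scaling for $\rho$ alone would have been a gap if you had actually relied on it.

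Second, the detour through the implicit function theorem is unnecessary and is the one place your plan carries real unfinished work (``verifying uniform non-degeneracy of the Jacobian''). The paper's Lemma~\ref{lemma:lipschitz_fixed_point} already supplies the needed regularity of $\tau_k$, $\zeta_k$, $\rho$, $\rho^\perp$ directly --- parts (a)--(b) by citing \cite{miolane2021distribution} and part (c) by a Cauchy--Schwarz argument --- so invoking it would close the proof without the Jacobian computation.
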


\begin{proof}[Proof of Lemma \ref{lemma:lipschitz_Psi_bridge1}]

We define (with a slight overload of notations) an auxiliary function:
$$\psi(\lambda_{\textrm{L}},\lambda_{\textrm{R}}) := |\phi_{\beta|L}(\hat\bg_{\textrm{L}}) - \E[\phi_\beta(\hat\bbeta_{\textrm{L}}^{f,d}(\lambda_{\rmL}),\hat\bbeta_{\textrm{R}}^{f,d}(\lambda_{\rmR}),\bbeta)]|.$$

First, $\tau_{\rmL}$ and therefore $\hat\bbeta_{\textrm{L}}^{f,d}$ does not depend on $\lambda_{\textrm{R}}$. 
Also, $\hat\bbeta_{\textrm{R}}^{f,d} = \bbeta + \bg_{\textrm{R}}^f = \bbeta + \tau_{R}\tilde\bxi$ where $\tilde\bxi \sim \mcn(\bm{0},\bm{I}_p)$, $\tau_{R}$ is $C_1/\sqrt{n}$-Lipschitz in $\lambda_{\textrm{R}}$ from Lemma \ref{lemma:lipschitz_fixed_point}. Thus, we have
\begin{align*}
    &\left|\E[\phi_\beta(\hat\bbeta_{\textrm{L}}^{f,d}(\lambda_{\rmL}),\hat\bbeta_{\textrm{R}}^{f,d}(\lambda_1),\bbeta)] - \E[\phi_\beta(\hat\bbeta_{\textrm{L}}^{f,d}(\lambda_{\rmL}),\hat\bbeta_{\textrm{R}}^{f,d}(\lambda_2),\bbeta)]\right|\\
    \leq & \E [\|\hat\bbeta_{\textrm{R}}^{f,d}(\lambda_1) - \hat\bbeta_{\textrm{R}}^{f,d}(\lambda_2)\|_2]\\
    \leq & \sqrt{\E [\|(\tau_{\rmR}(\lambda_1) - \tau_{\rmR}(\lambda_2))\tilde\bxi\|_2^2]}\\
    \leq & C_1 |\lambda_1-\lambda_2|,
\end{align*}

which yields $\E[\phi_\beta(\hat\bbeta_{\textrm{L}}^{f,d}(\lambda_{\rmL}),\hat\bbeta_{\textrm{R}}^{f,d}(\lambda_{\rmR}),\bbeta)]|$ is $C$-Lipschitz in $\lambda_{\textrm{R}}$. 
Next note that conditioning on $\bg_{\textrm{L}}^f=\hat\bg_{\textrm{L}}$ necessitates  $\bg_{\textrm{R}}^f$ to take the following value so that their joint covariance structure matches \eqref{def:bS_g},
\begin{equation}\label{def:gr_f_conditional}
    \bg_{\textrm{R}}^f = \tau_{\textrm{R}}\rho/\tau_{\textrm{L}}\cdot \hat\bg_{\textrm{L}} + \tau_{\textrm{R}}\sqrt{1-\rho^2}\cdot \bxi, \,\, \text{where} \,\, \bxi\sim\mcn(0,\bI_p).
\end{equation}
This means that conditional on $\bg_{\textrm{L}}^f = \hat\bg_{\textrm{L}}$, $\hat\bbeta_{\textrm{R}}^{f,d} = \bbeta + \tau_{R}\rho/\tau_{L} \hat\bg_{\textrm{L}} + \tau_{R}\sqrt{1-\rho^2}\bxi$, whereas $\hat\bbeta_{\textrm{L}}^{d}$ never depends on $\lambda_{\textrm{R}}$.
Also, from \ref{lemma:existence_fixed_point} and Lemma \ref{lemma:lipschitz_fixed_point}, $\tau_{R}$ is bounded in $[\sqrt{\tau_{\min}^2/n},\sqrt{(\tau_{\max}^2+\sigma_{\max}^2)/n}]$ and is $C$-Lipschitz in $\lambda_{\textrm{R}}$ 
while both $\rho$ and $\sqrt{1-\rho^2}$ are bounded in $[0,1]$ and $C'$-Lipschitz in $\lambda_{\textrm{R}}$, and $1/\tau_{L} \leq \sqrt{n}/\tau_{\min}$. Combining Lemma \ref{lemma:convergence_gl} and \ref{lemma:beta_emp_diff} yields that $\|\hat\bg_{\textrm{L}}\|_2 \leq 2\|\hat\bbeta_{\textrm{L}}^d - \bbeta\|_2 $ with probability $1-o(1)$. In conjunction with  
Lemma \ref{lemma:marginal_both}, this yields that  $\|\hat\bg_{\textrm{L}}\|_2 \leq 
4 \E[\|\hat\bbeta_{\textrm{L}}^{f,d}-\bbeta\|_2] \leq 4 \sqrt {\E[\|\hat\bbeta_{\textrm{L}}^{f,d}-\bbeta\|_2^2]} = 4n\tau_{{\textrm{L}}}^2 \leq 4\sqrt{\tau_{\max}^2+\sigma_{\max}^2}$ with probability $1-o(1)$. Thus,
\begin{align*}
& \left| \E[\phi_\beta (\hat\bbeta_{\rmL}^{f,d},\hat\bbeta_{\textrm{R}}^{f,d}(\lambda_1),\bbeta)|\bg_{\textrm{L}}^f = \hat\bg_{\textrm{L}}] - \E[\phi_\beta (\hat\bbeta_{\rmL}^{f,d},\hat\bbeta_{\textrm{R}}^{f,d}(\lambda_2),\bbeta)|\bg_{\textrm{L}}^f = \hat\bg_{\textrm{L}}]\right|\\
\leq & \E\left[\left|\phi_\beta (\hat\bbeta_{\rmL}^{f,d},\hat\bbeta_{\textrm{R}}^{f,d}(\lambda_1),\bbeta)-\phi_\beta (\hat\bbeta_{\rmL}^{f,d},\hat\bbeta_{\textrm{R}}^{f,d}(\lambda_2),\bbeta)\right| |\bg_{\textrm{L}}^f = \hat\bg_{\textrm{L}}\right]\\
\leq & \E [\|\hat\bbeta_{\textrm{R}}^{f,d}(\lambda_1) - \hat\bbeta_{\textrm{R}}^{f,d}(\lambda_2)\|_2|\bg_{\textrm{L}}^f = \hat\bg_{\textrm{L}}]\\
    \leq & \sqrt{\E [\|\hat\bbeta_{\textrm{R}}^{f,d}(\lambda_1) - \hat\bbeta_{\textrm{R}}^{f,d}(\lambda_2)\|_2^2|\bg_{\textrm{L}}^f = \hat\bg_{\textrm{L}}]}\\
    \leq & \left(4(C'+\sqrt{\tau_{\max}^2+\sigma_{\max}^2}C)\sqrt{\tau_{\max}^2+\sigma_{\max}^2}/\tau_{\min} + (C'+\sqrt{\tau_{\max}^2+\sigma_{\max}^2}C)\right) |\lambda_1-\lambda_2|.
\end{align*}
Thus, we conclude that $\phi_{\beta|L}(\hat\bg_{\textrm{L}})$ is $(4(C'+\sqrt{\tau_{\max}^2+\sigma_{\max}^2}C)\sqrt{\tau_{\max}^2+\sigma_{\max}^2}/\tau_{\min} + (C'+\sqrt{\tau_{\max}^2+\sigma_{\max}^2}C))$-Lipschitz in $\lambda_{\textrm{R}}$ with probability $1-o(1)$. 

Therefore, with probability $1-o(1)$, $\psi(\lambda_{\textrm{L}},\lambda_{\textrm{R}})$ is an $M$-Lipschitz function of $\lambda_{\textrm{R}}$ for $M = 4(C'+\sqrt{\tau_{\max}^2+\sigma_{\max}^2}C)\sqrt{\tau_{\max}^2+\sigma_{\max}^2}/\tau_{\min} + (C'+\sqrt{\tau_{\max}^2+\sigma_{\max}^2}C) + C$. Notice that $M$ does not depend on $\lambda_{\textrm{L}}$. Therefore, by Lemma \ref{lemma:lipschitz_function_sup}, $\Psi(\lambda_{\textrm{R}}) := \sup_{\lambda_{\textrm{L}}}\psi(\lambda_{\textrm{L}},\lambda_{\textrm{R}})$ is also an $M$-Lipschitz function of $\lambda_{\textrm{R}}$, hence completing the proof.
\end{proof}

Now we are ready to prove Lemma \ref{lemma:bridge_1}.
\begin{proof}[Proof of Lemma \ref{lemma:bridge_1}]
Consider the high probability event in Lemma \ref{lemma:lipschitz_Psi_bridge1}. For any $\epsilon > 0$, define $\epsilon' = \epsilon/2M$. Let $k = \lceil(\lambda_{\max}-\lambda_{\min})/\epsilon'\rceil$. Define, for $i=0,...,k$: $\lambda_i = \lambda_{\min}+i\epsilon'$. Then by Lemma \ref{lemma:lipschitz_Psi_bridge1}, we know $\sup_{\lambda_{\textrm{R}} \in [\lambda_{i-1},\lambda_i]} \Psi(\lambda_{\textrm{R}}) \leq \Psi(\lambda_i) + M\epsilon'$.

By union bound, we have
\begin{equation*}
\begin{aligned}
    &\PP\left(\suplr \Psi(\lambda_{\textrm{R}}) \geq \epsilon\right)\\
    \leq &\sum_{i=1}^k \PP\left(\sup_{\lambda_{\textrm{R}} \in [\lambda_{i-1},\lambda_i]}\Psi(\lambda_{\textrm{R}}) \geq \epsilon\right)\\
    \leq&\sum_{i=1}^k \PP\left(\Psi(\lambda_i) \geq \epsilon-M\epsilon'\right)\\
    =&\sum_{i=1}^k \PP\left(\supll |\phi_{\beta|L}(\hat\bg_{\textrm{L}}(\lambda_{\rmL})) - \E[\phi_\beta(\hat\bbeta_{\textrm{L}}^{f,d}(\lambda_{\rmL}),\hat\bbeta_{\textrm{R}}^{f,d}(\lambda_i),\bbeta)]|\geq\epsilon/2\right)\\
    =&o(1),
\end{aligned}
\end{equation*}
on using Lemma \ref{lemma:bridge_1_onlyL}
\end{proof}

\subsubsection{Proof of Lemma \ref{lemma:bridge_2}}\label{subsec:proof:lemma:bridge_2}
Recall the definition of $\hat\bw_k$ from \eqref{uvw}. We define a loss function that yields $\hat\bw_\rmR$ as the minimizer:
\begin{equation*}
    \mcl{C}_{\lambda_\rmR}(\bw) := \frac{1}{2n}\|\bX\bw-\sigma\bz\|_2^2 + \frac{\lambda_\rmR}{2}\|\bw+\bbeta\|_2^2.
\end{equation*}
We next introduce some supporting lemmas.
\begin{lemma}\label{lemma:ridge_conditional_bw}
Recall the definition of $\hat\bg_{\textrm{L}}$ from \ref{def:hat_gl}. Under Assumption \ref{assumptions2}, for any $\epsilon > 0$, there exists a constant $C$ such that for a $1$-Lipschitz function $\phi_w:\R^p \rightarrow \R$,
$$\suplb \PP\left(\exists \bw \in \R^p, \left|\phi_w(\bw) - \E[\phi_w(\hat\bbeta_\rmR^f-\bbeta)|\bg_{\textrm{L}}^f = \hat\bg_{\textrm{L}}]\right| \geq \epsilon \ \text{and} \ \mcl{C}_{\lambda_{\textrm{R}}}(\bw) \leq \min \mcl{C}_{\lambda_{\textrm{R}}} + C\epsilon^2\right) = o(\epsilon^2).$$
\end{lemma}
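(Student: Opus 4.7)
The plan is to leverage the strong convexity of the ridge cost $\mcl{C}_{\lambda_\rmR}$: its Hessian is $\tfrac{1}{n}\bX^\top\bX + \lambda_\rmR \bI_p \succeq \lambda_{\min}\bI_p$ uniformly for $\lambda_\rmR \in [\lambda_{\min},\lambda_{\max}]$, so $\mcl{C}_{\lambda_\rmR}$ is $\lambda_{\min}$-strongly convex. Hence any near-minimizer $\bw$ with $\mcl{C}_{\lambda_\rmR}(\bw) \le \min \mcl{C}_{\lambda_\rmR} + C\epsilon^2$ satisfies
\[
\|\bw - \hat\bw_\rmR\|_2 \le \sqrt{2C/\lambda_{\min}}\,\epsilon,
\]
where $\hat\bw_\rmR := \hat\bbeta_\rmR - \bbeta$ is the unique minimizer. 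Picking $C := \lambda_{\min}/8$, the $1$-Lipschitzness of $\phi_w$ immediately yields $|\phi_w(\bw)-\phi_w(\hat\bw_\rmR)| \le \epsilon/2$. The lemma therefore reduces to the $\bw$-free bound
\[
\suplb \PP\!\left(|\phi_w(\hat\bw_\rmR) - \phi_{w|L}(\hat\bg_\rmL)| \ge \epsilon/2\right) = o(\epsilon^2), \qquad \phi_{w|L}(\bg):=\E\!\left[\phi_w(\hat\bbeta_\rmR^f-\bbeta)\mid \bg_\rmL^f=\bg\right].
\]

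Next I analyze $\phi_{w|L}$ explicitly. Since the ridge proximal map is linear, $\hat\bbeta_\rmR^f - \bbeta = (\bg_\rmR^f - (\lambda_\rmR/\zeta_\rmR)\bbeta)/(1+\lambda_\rmR/\zeta_\rmR)$, and using the conditional form \eqref{def:gr_f_conditional} I can write
\[
\phi_{w|L}(\bg) = \E_{\bxi\sim\mcn(0,\bI_p)}\!\left[\phi_w\!\left(\frac{(\tau_\rmR\rho/\tau_\rmL)\,\bg + \tau_\rmR\sqrt{1-\rho^2}\,\bxi - (\lambda_\rmR/\zeta_\rmR)\bbeta}{1+\lambda_\rmR/\zeta_\rmR}\right)\right].
\]
Combining the boundedness in \eqref{lemma:existence_fixed_point} with the Lipschitzness of the fixed-point parameters (Lemma \ref{lemma:lipschitz_fixed_point}) shows that $\phi_{w|L}$ is Lipschitz in $\bg$ and also jointly Lipschitz in $(\lambda_\rmL,\lambda_\rmR)$, both with constants bounded uniformly on $[\lambda_{\min},\lambda_{\max}]^2$.

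The core step is a joint Gaussian-equivalence for $(\hat\bw_\rmR, \hat\bg_\rmL)$ against $(\hat\bbeta_\rmR^f-\bbeta, \bg_\rmL^f)$, uniformly over $(\lambda_\rmL,\lambda_\rmR)$. This extends Corollary \ref{cor:marginal_beta_g_joint} to incorporate the ridge estimator: one first proves a pointwise-in-$(\lambda_\rmL,\lambda_\rmR)$ joint version of Lemma \ref{lemma:marginal_both} for the pair $(\hat\bbeta_\rmL, \hat\bbeta_\rmR)$---which is precisely the CGMT analysis underlying the joint fixed-point system \eqref{def:fixed_point}---and then invokes Lemma \ref{lemma:convergence_gl} to exchange $\tilde\bbeta_\rmL^d-\bbeta$ for $\hat\bg_\rmL$. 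Applying this equivalence with the Lipschitz test function $\Phi(\bw,\bg) := \phi_w(\bw)-\phi_{w|L}(\bg)$, whose expectation vanishes by the tower property, $\E[\Phi(\hat\bbeta_\rmR^f-\bbeta,\bg_\rmL^f)] = 0$, produces the required pointwise bound.

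Uniformity over $(\lambda_\rmL,\lambda_\rmR)$ is then obtained by the discretization scheme of Lemma \ref{lemma:lipschitz_Psi_bridge1}: both $\phi_w(\hat\bw_\rmR)$ and $\phi_{w|L}(\hat\bg_\rmL)$ are Lipschitz in $(\lambda_\rmL,\lambda_\rmR)$ with high probability (using Lemma \ref{lemma:debiased_lipschitz_hat}, strong convexity of $\mcl{C}_{\lambda_\rmR}$, and the Lipschitzness just established for $\phi_{w|L}$), so one covers $[\lambda_{\min},\lambda_{\max}]^2$ with a polynomial-size $\epsilon$-net and union-bounds. The main obstacle is delivering the quantitative $o(\epsilon^2)$ rate rather than merely $o(1)$; this rests on the fact that CGMT-based concentration typically yields exponential-in-$n$ tail bounds (cf.\ \cite{miolane2021distribution, celentano2020lasso, celentano2021cad}), which comfortably dominate the polynomial net size, provided the constants introduced by the strong-convexity reduction and the Lipschitz perturbation are tracked carefully.
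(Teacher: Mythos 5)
Your strong-convexity reduction is valid --- the ridge cost $\mcl{C}_{\lambda_\rmR}$ is $\lambda_{\min}$-strongly convex, so every $\bw$ with $\mcl{C}_{\lambda_\rmR}(\bw)\le\min\mcl{C}_{\lambda_\rmR}+C\epsilon^2$ lies within $O(\epsilon)$ of $\hat\bw_\rmR$ --- and the explicit formula for the conditional map $\phi_{w|L}$ together with the tower-property observation on $\Phi(\bw,\bg):=\phi_w(\bw)-\phi_{w|L}(\bg)$ are both correct. The gap is in what you call the core step. A joint Gaussian-equivalence for $(\hat\bw_\rmR,\hat\bg_\rmL)$ against $(\hat\bbeta_\rmR^f-\bbeta,\bg_\rmL^f)$, obtained from ``a joint version of Lemma~\ref{lemma:marginal_both} for the pair $(\hat\bbeta_\rmL,\hat\bbeta_\rmR)$,'' is essentially the joint characterization that Theorem~\ref{main:thm:debiased_uniform_tilde} is trying to establish, and the present Lemma~\ref{lemma:ridge_conditional_bw} feeds directly into that theorem via Lemma~\ref{lemma:bridge_2}. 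Using it here is circular.

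Moreover, the claim that this joint result ``is precisely the CGMT analysis underlying the joint fixed-point system \eqref{def:fixed_point}'' elides the actual difficulty: a single application of the CGMT cannot characterize $\hat\bbeta_\rmL$ and $\hat\bbeta_\rmR$ jointly, because the two primary optimizations carry different penalties and there is no single bilinear min--max that encodes both simultaneously. The paper, following \cite[Lemma~F.4]{celentano2021cad}, uses a qualitatively different device: a \emph{conditional} Gordon comparison (their Lemma~F.2), which compares the ridge primary optimization $\min_{\bw\in E_w}\max_{\bu\in E_u}c(\bw,\bu)$ with the conditional auxiliary problem $\min_{\bw}\max_{\bu}l_{\rmR|\rmL}(\bw,\bu)$ in which the AO Gaussians $\bxi_g,\bxi_h$ are replaced by the data-determined couplings $\hat\bxi_g,\hat\bxi_h$ that tie the AO to the observed lasso through $\hat\bg_\rmL$. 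This never requires establishing the joint law of $(\hat\bbeta_\rmL,\hat\bbeta_\rmR)$ as a precursor. The paper's own proof is then a short citation to \cite[Lemma~F.4]{celentano2021cad} plus the remark that the constants hidden in $o(\epsilon^2)$ do not depend on $(\lambda_\rmL,\lambda_\rmR)$; and since the supremum in the lemma sits \emph{outside} the probability, no $\epsilon$-net over $(\lambda_\rmL,\lambda_\rmR)$ is needed here at all --- that machinery appears only one step downstream, in the proof of Lemma~\ref{lemma:bridge_2}.
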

\begin{proof}

From part of \cite[Lemma F.4]{celentano2021cad}, we know that 
$$\PP\left(\exists \bw \in \R^p, \left|\phi_w(\bw) - \E[\phi_w(\hat\bbeta_\rmR^f-\bbeta)|\bg_{\textrm{L}}^f = \hat\bg_{\textrm{L}}]\right| \geq \epsilon \ \text{and} \ \mcl{C}_{\lambda_{\textrm{R}}}(\bw) \leq \min \mcl{C}_{\lambda_{\textrm{R}}} + C\epsilon^2\right) = o(\epsilon^2).$$
The proof is then completed by the fact that as in \cite[Lemma F.4]{celentano2021cad}, $o(\epsilon^2)$ only hides constants that do not depend on $\lambda_{\rmL},\lambda_{\rmR}$.
\end{proof}

\begin{lemma}\label{lemma:ridge_loss_C_difference_lambda}
Under Assumption \ref{assumptions2}, there exists a positive constant $K$ such that 
\begin{equation*}
    \PP\left(\forall \lambda,\lambda' \in[\lambda_{\min},\lambda_{\max}], \mcl{C}_{\lambda'}(\hat{\bw}_{\rmR}(\lambda)) \leq \mcl{C}_{\lambda'}(\hat{\bw}_{\rmR}(\lambda')) + K|\lambda - \lambda'|\right) = 1-o(1).
\end{equation*}
\end{lemma}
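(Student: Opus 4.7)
The plan is to exploit the explicit dependence of $\mcl{C}_\lambda$ on $\lambda$: since the data-fit term does not involve $\lambda$, the difference $\mcl{C}_{\lambda'}(\bw)-\mcl{C}_\lambda(\bw)=\tfrac{\lambda'-\lambda}{2}\|\bw+\bbeta\|_2^2$ is linear in $\lambda'-\lambda$ with coefficient $\tfrac{1}{2}\|\hat\bbeta_\rmR(\lambda)\|_2^2$ when evaluated at $\bw=\hat\bw_\rmR(\lambda)$. Combining this with the optimality of $\hat\bw_\rmR(\lambda)$ for $\mcl{C}_\lambda$ reduces the lemma to a uniform (in $\lambda$) high-probability bound on $\|\hat\bbeta_\rmR(\lambda)\|_2^2$.

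Concretely, first I would write
\begin{align*}
\mcl{C}_{\lambda'}(\hat\bw_\rmR(\lambda)) - \mcl{C}_{\lambda'}(\hat\bw_\rmR(\lambda'))
&= \bigl[\mcl{C}_\lambda(\hat\bw_\rmR(\lambda)) - \mcl{C}_\lambda(\hat\bw_\rmR(\lambda'))\bigr]  \\
&\quad + \tfrac{\lambda'-\lambda}{2}\bigl(\|\hat\bbeta_\rmR(\lambda)\|_2^2 - \|\hat\bbeta_\rmR(\lambda')\|_2^2\bigr),
\end{align*}
and observe that the first bracketed term is non-positive by the optimality of $\hat\bw_\rmR(\lambda)$ for $\mcl{C}_\lambda$. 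Hence the display is at most $\tfrac{|\lambda-\lambda'|}{2}\bigl(\|\hat\bbeta_\rmR(\lambda)\|_2^2+\|\hat\bbeta_\rmR(\lambda')\|_2^2\bigr)$, so it suffices to exhibit a constant $M$ such that $\sup_{\lambda\in[\lambda_{\min},\lambda_{\max}]}\|\hat\bbeta_\rmR(\lambda)\|_2^2\le M$ with probability $1-o(1)$, and then set $K=M$.

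For the uniform bound on $\|\hat\bbeta_\rmR(\lambda)\|_2^2$, I would plug the trivial competitor $\bw=0$ into the inequality $\mcl{C}_\lambda(\hat\bw_\rmR(\lambda))\le\mcl{C}_\lambda(0)$, giving
\begin{equation*}
\tfrac{\lambda}{2}\|\hat\bbeta_\rmR(\lambda)\|_2^2 \le \mcl{C}_\lambda(\hat\bw_\rmR(\lambda)) \le \tfrac{\sigma^2}{2n}\|\bz\|_2^2 + \tfrac{\lambda}{2}\|\bbeta\|_2^2,
\end{equation*}
so that $\|\hat\bbeta_\rmR(\lambda)\|_2^2 \le \sigma^2\|\bz\|_2^2/(n\lambda_{\min}) + \|\bbeta\|_2^2$ for every $\lambda\in[\lambda_{\min},\lambda_{\max}]$. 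Under Assumption \ref{assumptions2}, $\|\bz\|_2^2/n\convP 1$ and $\|\bbeta\|_2^2$ is bounded in probability by $\sigma_{\max}^2$, so the right-hand side is bounded by a deterministic constant $M$ with probability $1-o(1)$; crucially, this bound is independent of $\lambda$, yielding the required uniformity.

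The argument involves no subtle concentration step—the only randomness is in the two scalar quantities $\|\bz\|_2^2/n$ and $\|\bbeta\|_2^2$, both of which are controlled directly by the assumptions. The ``hard part,'' such as it is, is simply noticing that the $\lambda$-dependence of the cost is exactly linear with a coefficient that is itself uniformly bounded, which turns what looks like a Lipschitz-in-$\lambda$ comparison of non-trivial optimization problems into an elementary two-line computation. No universality, CGMT, or resolvent machinery is needed here.
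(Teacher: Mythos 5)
Your proposal is correct and follows essentially the same route as the paper: the same telescoping of $\mcl{C}_{\lambda'}-\mcl{C}_\lambda$ reducing the comparison to a uniform bound on $\|\hat\bbeta_\rmR(\lambda)\|_2^2$, and the same device of comparing against the trivial competitor $\bw=\bm{0}$ to get that bound (the paper uses the explicit tail bound $\|\bz\|_2\le2\sqrt{n}$ while you invoke $\|\bz\|_2^2/n\convP 1$, which is an inconsequential difference; incidentally your expression $\mcl{C}_\lambda(\bm{0})=\tfrac{\sigma^2}{2n}\|\bz\|_2^2+\tfrac{\lambda}{2}\|\bbeta\|_2^2$ has the normalization the paper apparently drops via a typo).
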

\begin{proof}
We have
\begin{align*}
    &\mcl{C}_{\lambda'}(\hat{\bw}_{\rmR}(\lambda)) - \mcl{C}_{\lambda'}(\hat{\bw}_{\rmR}(\lambda')) \\
    =& \mcl{C}_{\lambda'}(\hat{\bw}_{\rmR}(\lambda)) - \mcl{C}_{\lambda}(\hat{\bw}_{\rmR}(\lambda)) + 
      \mcl{C}_{\lambda}(\hat{\bw}_{\rmR}(\lambda)) - \mcl{C}_{\lambda}(\hat{\bw}_{\rmR}(\lambda')) +  \mcl{C}_{\lambda}(\hat{\bw}_{\rmR}(\lambda')) - \mcl{C}_{\lambda'}(\hat{\bw}_{\rmR}(\lambda'))\\
      \leq & \mcl{C}_{\lambda'}(\hat{\bw}_{\rmR}(\lambda)) - \mcl{C}_{\lambda}(\hat{\bw}_{\rmR}(\lambda)) +  \mcl{C}_{\lambda}(\hat{\bw}_{\rmR}(\lambda')) - \mcl{C}_{\lambda'}(\hat{\bw}_{\rmR}(\lambda'))\\
      =& \frac{\lambda'-\lambda}{2}(\|\hat\bbeta_{\rmR}(\lambda)\|_2^2-\|\hat\bbeta_{\rmR}(\lambda')\|_2^2)\\
      \leq & \frac{|\lambda'-\lambda|}{2}(\|\hat\bbeta_{\rmR}(\lambda)\|_2^2+\|\hat\bbeta_{\rmR}(\lambda')\|_2^2).
\end{align*}
Further, with probability at least $1-e^{-n/2}$ we have $\|\bz\|_2 \leq 2 \sqrt{n}$, 
and therefore
$$\frac{\lambda}{2}\|\hat\bbeta_{\rmR}(\lambda)\|_2^2 \leq \mcl{C}_{\lambda}(\hat{\bw}_{\rmR}(\lambda)) = \argmin_{\bb} \mcl{C}_\lambda(\bb) \leq \mcl{C}_\lambda(\bm{0}) = \|\sigma\bz\|_2^2 + \frac{\lambda}{2}\|\bbeta\|_2^2 \leq 2\sigma^2 + \frac{\lambda}{2} \|\bbeta\|_2^2.$$
Hence $\|\hat\bbeta_{\rmR}(\lambda)\|_2^2$ is bounded with high probability and so is $\|\hat\bbeta_{\rmR}(\lambda')\|_2^2$, which completes the proof.
\end{proof}

\begin{lemma}\label{lemma:lipschitz_psi_conditional}
Under Assumption \ref{assumptions2}, for any $\epsilon > 0$, consider any $\lambda_1,\lambda_2 \in [\lambda_{\min},\lambda_{\max}]$ such that $|\lambda_1 - \lambda_2| \geq \epsilon$. Then with probability $1-o(1)$, we have
\begin{gather*}  |\psi(\lambda_1,\lambda_{\textrm{R}}) - \psi(\lambda_2,\lambda_{\textrm{R}})| \leq M|\lambda_1-\lambda_2|\\   |\psi(\lambda_{\textrm{L}},\lambda_1) - \psi(\lambda_{\textrm{L}},\lambda_2)| \leq M|\lambda_1-\lambda_2|
\end{gather*}
for some constant $M$ (that does not depend on $\epsilon$), where $\psi(\lambda_{\textrm{L}},\lambda_{\textrm{R}}) = \E[\phi_w(\hat\bbeta_{\textrm{R}}^f-\bbeta) |\bg_{\textrm{L}}^f = \hat\bg_{\textrm{L}}]$.
\end{lemma}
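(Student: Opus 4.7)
The plan is to handle the two Lipschitz bounds separately, reducing each to an $L^2$-Lipschitz estimate for the conditional distribution of $\hat\bbeta_R^f - \bbeta$ via the $1$-Lipschitzness of $\phi_w$. Under the conditioning $\bg_L^f = \hat\bg_L$, the representation \eqref{def:gr_f_conditional} gives $\hat\bbeta_R^{f,d} = \bbeta + (\tau_R\rho/\tau_L)\hat\bg_L + \tau_R\sqrt{1-\rho^2}\,\bxi$, and since $\eta_R$ is the ridge proximal the further closed form $\hat\bbeta_R^f = (\zeta_R/(\zeta_R+\lambda_R))\hat\bbeta_R^{f,d}$ applies. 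So $\hat\bbeta_R^f - \bbeta$ is a linear combination of $\bbeta, \hat\bg_L, \bxi$ whose scalar coefficients are explicit rational functions of $(\tau_L, \tau_R, \rho, \zeta_R, \lambda_R)$.

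For Lipschitzness in $\lambda_R$, note that $\hat\bg_L$ and $\tau_L$ do not depend on $\lambda_R$. Using the Lipschitzness of the fixed-point parameters $\tau_R, \rho, \zeta_R$ in $\lambda_R$ (Lemma \ref{lemma:lipschitz_fixed_point}) together with the lower bound $\zeta_R+\lambda_R \geq \lambda_{\min}$, the uniform boundedness of $\|\bbeta\|_2, \|\hat\bg_L\|_2$ with high probability (Assumption \ref{assumptions2} and Lemmas \ref{lemma:convergence_gl}, \ref{lemma:marginal_both}), and $\E\|\bxi\|_2 \leq \sqrt p$ (cancelling against the $\sqrt{1/n}$-scale of the coefficients), the triangle inequality yields the desired bound with a constant $M$ determined only by the model constants.

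For Lipschitzness in $\lambda_L$, the only dependence enters through $\hat\bg_L$ and through the prefactor $\tau_R\rho/\tau_L$, the latter again controlled by Lemma \ref{lemma:lipschitz_fixed_point}. For $\hat\bg_L$, I would use its explicit form in \eqref{def:hat_gl} to decompose it as $\alpha_L \hat\bw_L + \beta_L \hat\bv_L$, where $\alpha_L, \beta_L$ are rational in $(\tau_L, \zeta_L, \|\hat\bu_L\|_2, \|\hat\bw_L\|_2)$. Lemma \ref{lemma:uvw_convergence} provides the required boundedness (and boundedness away from zero of the denominators) on a high-probability event, and combined with Lemma \ref{lemma:lipschitz_fixed_point} gives Lipschitzness of the scalar prefactors. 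For the vector parts, a KKT-based perturbation argument for the Lasso delivers $\|\hat\bbeta_L(\lambda_1)-\hat\bbeta_L(\lambda_2)\|_2 \leq C|\lambda_1 - \lambda_2|/\kappa_{\min}^2$ on the event of Assumption \ref{assumptionL}(1), which in turn controls $\hat\bw_L, \hat\bu_L, \hat\bv_L$ via the boundedness of $\|\bX\|_\op/\sqrt n$ (Corollary \ref{largest_singular_value}).

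I expect the main obstacle to be the Lasso Lipschitz estimate itself, since $\hat\bbeta_L$ is only piecewise linear in $\lambda_L$: one must argue via the union of active sets of the two optima and invert the restricted Gram matrix $\bX_S^\top\bX_S/n$, using Assumption \ref{assumptionL}(1) to control its minimum eigenvalue. This is what necessitates the qualifier $|\lambda_1-\lambda_2|\geq\epsilon$: the high-probability event holds as $n\to\infty$ for fixed $\lambda_1, \lambda_2$, not uniformly over pairs with vanishing gap, so the exceptional events in the KKT perturbation argument collectively still have $o(1)$ probability. Once both coordinate-Lipschitz estimates are established, they combine directly via the affine expression for $\hat\bbeta_R^f - \bbeta$ to yield Lipschitzness of $\psi$ with a single constant $M$ independent of $\epsilon$.
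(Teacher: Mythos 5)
Your overall decomposition of $\hat\bbeta_{\rmR}^f - \bbeta$ into an affine combination of $\bbeta$, $\hat\bg_{\rmL}$, and $\bxi$ (with coefficients that are rational functions of $\tau_{\rmL},\tau_{\rmR},\rho,\zeta_{\rmR},\lambda_{\rmR}$) is exactly the paper's starting point, and the $\lambda_{\rmR}$-Lipschitzness part of your argument matches the paper's proof. The divergence is in how you handle the $\lambda_{\rmL}$-dependence of $\hat\bg_{\rmL}$. You propose a direct KKT-based perturbation argument for the Lasso path, invoking piecewise linearity and inverting restricted Gram matrices via Assumption~\ref{assumptionL}(1). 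The paper instead routes through Lemmas~\ref{lemma:gl_weak_lipschitz} and \ref{lemma:wf_weak_lipschitz}, which compare the empirical $\hat\bbeta_{\rmL}(\lambda)$ to the limiting soft-thresholded Gaussian object $\hat\bbeta_{\rmL}^f(\lambda)$ via Lemma~\ref{lemma:marginal_single} and then exploit the explicit Lipschitzness of the soft-thresholding operator in both arguments. This indirection neatly avoids the piecewise-linear nature of the empirical Lasso path. Your KKT route is a genuinely different technique; to make it rigorous one would need to argue that the active set sizes stay bounded by $\gamma n$ uniformly over the whole $\lambda$-path (this is available from Corollary~\ref{cor:df_bounded} and Lemma~\ref{lemma:df_diff}) and then that the per-segment Lipschitz constants are uniformly bounded by the minimum singular value condition, so the max over an arbitrarily large number of kinks remains controlled. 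That is more delicate than what you sketch, but plausible.

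However, your explanation of why the condition $|\lambda_1-\lambda_2|\geq\epsilon$ appears in the lemma is incorrect. It does not come from exceptional events failing to be uniform over pairs with vanishing gap in a KKT perturbation. In the paper's proof of Lemma~\ref{lemma:wf_weak_lipschitz}, one writes
$\|\hat\bbeta_{\rmL}(\lambda_1)-\hat\bbeta_{\rmL}(\lambda_2)\|_2 \leq \|\hat\bbeta_{\rmL}(\lambda_1)-\hat\bbeta_{\rmL}^f(\lambda_1)\|_2 + \|\hat\bbeta_{\rmL}(\lambda_2)-\hat\bbeta_{\rmL}^f(\lambda_2)\|_2 + \|\hat\bbeta_{\rmL}^f(\lambda_1)-\hat\bbeta_{\rmL}^f(\lambda_2)\|_2$;
the first two terms contribute a fixed additive error $2\epsilon$ from Lemma~\ref{lemma:marginal_single}, and the third is genuinely Lipschitz. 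The $\epsilon$-gap is needed solely to convert $2\epsilon$ into the multiplicative bound $2|\lambda_1-\lambda_2|$. If your KKT approach worked as you sketch it, it would establish genuine Lipschitzness without the $\epsilon$-gap, so attributing the restriction to the KKT argument is a sign of a misunderstanding of what forces the weaker ``weak-Lipschitz'' formulation in the paper.
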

We defer the proof to Section \ref{subsec:proof:lemma:lipschitz_psi_conditional} 
We remark that this is a slightly weaker condition than the Lipschitzness of $\psi$ in $\lambda_k$, but it suffices for our purpose. For convenience, we call this "weak-Lipschitz" condition.

\begin{lemma}\label{lemma:debiased_lipschitz_hat}
Under Assumption \ref{assumptions2}, $\tilde\bbeta_\rmR^d$ is an $M$-Lipschitz function of $\hat\bbeta_\rmR$ for some constant $M$.
\end{lemma}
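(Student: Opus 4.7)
The plan is to exploit the fact that the debiasing formula \eqref{def:tilde_beta_d} is \emph{affine} in $\hat\bbeta_\rmR$ once the data $(\bX,\by)$ and the deterministic quantity $\df_\rmR$ are held fixed, and then bound the operator norm of the linear part.

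First, I would rewrite
\begin{equation*}
    \tilde\bbeta_\rmR^d \;=\; \hat\bbeta_\rmR + \frac{\bX^\top(\by - \bX\hat\bbeta_\rmR)}{n - \df_\rmR} \;=\; \left(\bI - \frac{\bX^\top\bX}{n-\df_\rmR}\right)\hat\bbeta_\rmR + \frac{\bX^\top \by}{n-\df_\rmR}.
\end{equation*}
Hence $\tilde\bbeta_\rmR^d$ is an affine function of $\hat\bbeta_\rmR$ whose Lipschitz constant is bounded by the operator norm $\bigl\|\bI - \bX^\top\bX/(n-\df_\rmR)\bigr\|_{\op} \leq 1 + \|\bX\|_{\op}^2/|n-\df_\rmR|$.

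Next I would control the two pieces separately. For the numerator, by Corollary \ref{largest_singular_value} we have $\|\bX\|_{\op}/\sqrt{n} \leq C_1$ with probability $1-o(1)$, hence $\|\bX\|_{\op}^2 \leq C_1^2 n$ on this event. For the denominator, recall from \eqref{def:fixed_point} that $n - \df_\rmR = n\zeta_\rmR$, and from \eqref{lemma:existence_fixed_point} that $\zeta_\rmR \geq \zeta_{\min} > 0$ uniformly, so $|n - \df_\rmR| \geq n\zeta_{\min}$ deterministically. Combining gives a Lipschitz constant of at most $M := 1 + C_1^2/\zeta_{\min}$ on an event of probability $1-o(1)$, which is a finite constant depending only on the model parameters.

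There is no real obstacle here since all the needed ingredients (boundedness of the largest singular value of $\bX$ and the lower bound on $\zeta_\rmR$) are already established earlier in the paper; the argument is essentially a one-line operator-norm computation applied to the affine representation of $\tilde\bbeta_\rmR^d$. I note that this is exactly why we worked with $\tilde\bbeta_\rmR^d$ (which uses the deterministic $\df_\rmR$) rather than $\hat\bbeta_\rmR^d$ (which uses the data-dependent $\hat\df_\rmR$); the latter would not give a deterministic affine form in $\hat\bbeta_\rmR$, so the Lipschitz argument is cleanest on the $\tilde{\ }$ side, with the transfer to $\hat\bbeta_\rmR^d$ handled separately by Lemma \ref{lemma:beta_emp_diff}.
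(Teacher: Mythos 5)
Your affine representation $\tilde\bbeta_\rmR^d = (\bI - \bX^\top\bX/(n-\df_\rmR))\hat\bbeta_\rmR + \bX^\top\by/(n-\df_\rmR)$ is a valid algebraic identity, but it does not establish the lemma in the sense that the paper needs, and your closing remark that this form is ``deterministic'' is not quite right. While $\df_\rmR$ is indeed a deterministic parameter, the matrix $\bI - \bX^\top\bX/(n-\df_\rmR)$ and the shift $\bX^\top\by/(n-\df_\rmR)$ both depend on the random data $(\bX,\by)$. So what you have is a \emph{random} affine map $g$ satisfying $g(\hat\bbeta_\rmR) = \tilde\bbeta_\rmR^d$, with $\|\nabla g\|_{\op}$ bounded only on a high-probability event. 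That is not the same as exhibiting a \emph{deterministic} Lipschitz map $f$ (depending only on model parameters) such that $\tilde\bbeta_\rmR^d = f(\hat\bbeta_\rmR)$.

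The distinction matters because of how the lemma is invoked. In the proof of Lemma \ref{lemma:marginal_both}, one composes $\phi_\beta$ with the map taking $\hat\bbeta_\rmR$ to $\tilde\bbeta_\rmR^d$ and then applies the concentration result of Lemma \ref{lemma:marginal_single}, comparing against the fixed-design quantity $\E[\phi_\beta(\hat\bbeta_\rmR^f,\hat\bbeta_\rmR^{f,d},\bbeta)]$. For that transfer to be valid, the \emph{same} deterministic map $f$ must simultaneously satisfy $\tilde\bbeta_\rmR^d = f(\hat\bbeta_\rmR)$ on the empirical side and $\hat\bbeta_\rmR^{f,d} = f(\hat\bbeta_\rmR^f)$ on the fixed-design side. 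A data-dependent $g$ cannot play this role: $g(\hat\bbeta_\rmR^f)$ has nothing to do with $\hat\bbeta_\rmR^{f,d}$.

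The ingredient you are missing is the first-order (KKT) condition for ridge, $\frac{1}{n}\bX^\top(\by - \bX\hat\bbeta_\rmR) = \lambda_\rmR\hat\bbeta_\rmR$. Substituting this into \eqref{def:tilde_beta_d} together with $n - \df_\rmR = n\zeta_\rmR$ collapses the correction term and yields the exact, data-independent identity $\tilde\bbeta_\rmR^d = (1 + \lambda_\rmR/\zeta_\rmR)\,\hat\bbeta_\rmR$. This scalar multiplication is the deterministic map $f$; it matches the fixed-design relation $\hat\bbeta_\rmR^{f,d} = (1+\lambda_\rmR/\zeta_\rmR)\hat\bbeta_\rmR^f$ coming from the proximal formula \eqref{def:hat_beta_f}, and since $\zeta_\rmR \geq \zeta_{\min} > 0$ by \eqref{lemma:existence_fixed_point}, the constant $M = 1 + \lambda_{\max}/\zeta_{\min}$ is deterministic and independent of the data. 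That is the paper's proof, and the KKT step is essential rather than an optional simplification.
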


\begin{proof}
Recall from \eqref{def:tilde_beta_d} and \eqref{def:fixed_point} that $\tilde\bbeta_\rmR^d = \hat\bbeta_\rmR + \frac{\bX^\top(\by-\bX\hat\bbeta_\rmR)}{n\zeta_\rmR}$. Further, the KKT condition for ridge regression implies that 
$\frac{1}{n}\bX^\top(\by-\bX\hat\bbeta_\rmR) = \lambda_R\hat\bbeta_\rmR$. Since from \ref{lemma:existence_fixed_point} we know that $\zeta_\rmR$ is bounded below by a positive constant $\zeta_{\min}$, it follows that $\tilde\bbeta_\rmR^d$ is an $M=1+\lambda_{\max}/\zeta_{\min}$-Lipschitz function of $\hat\bbeta_\rmR$.
\end{proof}

\begin{proof}[Proof of Lemma \ref{lemma:bridge_2}]
Let $C>0$ as given by Lemma \ref{lemma:ridge_conditional_bw}, let $K>0$ as given by Lemma \ref{lemma:ridge_loss_C_difference_lambda}, and let $M>0$ as given by Lemma \ref{lemma:lipschitz_psi_conditional}. Consider any $\epsilon > 0$, define $\epsilon' = \min\left(\frac{C\epsilon^2}{K},\frac{\epsilon}{M}\right)$. Let $k = \lceil(\lambda_{\max}-\lambda_{\min})/\epsilon'\rceil$. Further define $\lambda_i = \lambda_{\min} + i\epsilon'$ for $i=0,...,k$. By Lemma \ref{lemma:ridge_conditional_bw}, the event
\begin{equation}\label{event:ridge_W2_conditional}
    \left\{\forall i_{\textrm{L}},i_{\textrm{R}}=1,...,k,\forall \bw \in \R^p,\mcl{C}_{\lambda_{i_{\textrm{R}}}}(\bw) \leq \min \mcl{C}_{\lambda_{i_{\textrm{R}}}} + C\epsilon^2 \Rightarrow \left|\phi_w(\bw) - \psi(\lambda_{i_{\textrm{L}}},\lambda_{i_{\textrm{R}}})\right|\leq\epsilon \right\}
\end{equation}
has probability $1-k^2o(\epsilon^2) = 1-o(1)$.
Now define $\lambda_{i_{\textrm{R}}}:=\argmax\{|\lambda_{\textrm{R}} - \lambda_i|,|\lambda_{\textrm{R}}-\lambda_{i+1}|\}$ with $\lambda_i \leq \lambda_{\textrm{R}} \leq \lambda_{i+1}$ (so we know $|\lambda_{i_{\textrm{R}}}-\lambda_{\textrm{R}}|\geq \epsilon'/2$) and similarly for $i_{\textrm{L}}$. 
On the intersection of event \ref{event:ridge_W2_conditional} and the event in Lemma \ref{lemma:ridge_loss_C_difference_lambda}, which has probability $1-o(1)$ 
we have that
$$\mcl{C}_{\lambda_{i_{\textrm{R}}}}(\hat\bw_{\rmR}(\lambda_{\textrm{R}})) \leq \min\mcl{C}_{\lambda_{i_{\textrm{R}}}} + K\epsilon' \leq \min\mcl{C}_{\lambda_{i_{\textrm{R}}}} + C\epsilon^2.$$

This implies (since we are on event \ref{event:ridge_W2_conditional}) that $|\phi_w(\hat\bw_{\rmR}(\lambda_{\textrm{R}})) - \psi(\lambda_{i_L},\lambda_{i_R})|\leq \epsilon$, where  $1\leq i_{\textrm{R}} \leq k$.
Thus we have
\begin{equation}
\begin{aligned}\label{eq:intermcalc}    &|\phi_w(\hat\bw_{\rmR}(\lambda_{\textrm{R}})) - \psi(\lambda_{\textrm{L}},\lambda_{\textrm{R}})|\\
    \leq & |\phi_w(\hat\bw_{\rmR}(\lambda_{\textrm{R}})) - \psi(\lambda_{i_L},\lambda_{i_R})| + |\psi(\lambda_{i_L},\lambda_{i_R}) - \psi(\lambda_{L},\lambda_{i_R})| + 
    |\psi(\lambda_{L},\lambda_{i_R}) - \psi(\lambda_{i_L},\lambda_{R})|\nonumber\\
    \leq & \epsilon + 2 M \epsilon'\nonumber\\
    \leq & 3\epsilon\nonumber\\
\end{aligned}
\end{equation}
with probability $1-o(1)$, where the second-to-last inequality follows from Lemma \ref{lemma:lipschitz_psi_conditional} and the fact that $|\lambda_{i_{\textrm{L}}}-\lambda_{\textrm{L}}|,|\lambda_{i_{\textrm{R}}}-\lambda_{\textrm{R}}|\leq \epsilon'$.
Finally we note that $\phi_\beta(\tilde\bbeta_{\rmL}^d,\tilde\bbeta_{\rmR}^d,\bbeta)$ is a Lipschitz function of $\tilde\bbeta_{\textrm{R}}^d$, in fact, $\phi_\beta(\tilde\bbeta_{\rmL}^d,\tilde\bbeta_{\rmR}^d,\bbeta)=\phi_\beta(\tilde\bbeta_{\rmL}^d,\hat\bbeta_{\rmR}(1+\lambda_R/\zeta_R),\bbeta)$, (by definition). This in turn equals $\phi_\beta(\tilde\bbeta_{\rmL}^d,(\hat{\bm{w}}_R + \bbeta)(1+\lambda_R/\zeta_R),\bbeta)$. If we define this to be $\phi_w(\hat{\bm{w}}_R)$, then $\psi(\lambda_L,\lambda_R)=\E [\phi_\beta(\tilde\bbeta_{\rmL}^d,\hat\bbeta_{\rmR}^f(1+\lambda_R/\zeta_R),\bbeta|\bm{g}^f_L=\hat{\bm{g}}_L)]=\E [\phi_\beta(\tilde\bbeta_{\rmL}^d,\hat\bbeta_{\rmR}^{f,d},\bbeta)|\bm{g}^f_L=\hat{\bm{g}}_L]$, once again by definition. Then \eqref{eq:intermcalc} yields the desired result.
\end{proof}

\section{Proof of supporting lemmas for Section \ref{sec:proof:lemma:bridge_1}}

\subsection{Proof of Lemma \ref{lemma:uvw_convergence}}\label{subsec:proof:lemma:uvw_convergence}

We introduce two supporting Lemmas:

\begin{lemma}\label{lemma:marginal_single}
    Under Assumption \ref{assumptions2}, for $k=L,R$ and any $1$-Lipschitz function $\phi_\beta:\R^p\rightarrow \R$,
\begin{equation*}
\suplk |\phi_\beta(\hat\bbeta_k) - \E[\phi_\beta(\hat\bbeta_k^{f})]| \convP 0.\\
\end{equation*}
\end{lemma}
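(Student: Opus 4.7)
\textbf{Proof proposal for Lemma \ref{lemma:marginal_single}.} The plan is to combine a pointwise convergence result with a uniformization step based on an $\epsilon$-net over $[\lambda_{\min},\lambda_{\max}]$, where the uniformization relies on a Lipschitz-type control of both the empirical and the fixed-point sides of the equation in the regularization parameter. The pointwise convergence at a fixed $\lambda_k$ is a known consequence of the Convex Gaussian Min-Max Theorem: for Gaussian designs and Gaussian noise (Assumption \ref{assumptions2}), applying CGMT to the primary optimization problem \eqref{def:hat_beta} and matching its optimal value and minimizer against the Gordon auxiliary problem yields that $\phi_\beta(\hat\bbeta_k)$ concentrates on $\E[\phi_\beta(\hat\bbeta_k^{f})]$ for any fixed $\lambda_k$, where $\hat\bbeta_k^{f}$ is the proximal mapping \eqref{def:hat_beta_f} driven by the fixed-point parameters in \eqref{def:fixed_point}. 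For $k=\rmR$, this follows from a direct resolvent/Marchenko--Pastur calculation (as used already in the proof of Lemma \ref{lemma:df_diff}), and for $k=\rmL$ it is precisely the pointwise content of Theorem~B.1 in \cite{celentano2020lasso} (originally established by \cite{miolane2021distribution}).

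The upgrade from pointwise to uniform convergence proceeds by discretizing $[\lambda_{\min},\lambda_{\max}]$ on a grid of width $\epsilon'$, applying a union bound over the grid, and controlling the oscillation of $\phi_\beta(\hat\bbeta_k)$ and $\E[\phi_\beta(\hat\bbeta_k^f)]$ between grid points. For the population side, the fixed-point parameters $\tau_k,\zeta_k$ are Lipschitz in $\lambda_k$ by Lemma \ref{lemma:lipschitz_fixed_point}, so $\E[\phi_\beta(\hat\bbeta_k^f)]$ is Lipschitz in $\lambda_k$ by the same calculation as in the proof of Lemma \ref{lemma:lipschitz_Psi_bridge1}. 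For the empirical side with $k=\rmR$, $\hat\bbeta_\rmR = (\bX^\top\bX/n+\lambda_\rmR\bI)^{-1}\bX^\top\by/n$ is a smooth function of $\lambda_\rmR$ whose derivative is bounded with high probability by Corollary \ref{largest_singular_value}, so the Lipschitz control in $\lambda_\rmR$ is immediate, and the union bound over a grid of size $O(1/\epsilon)$ closes the argument.

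The case $k=\rmL$ is the main obstacle, because $\hat\bbeta_\rmL$ need not be Lipschitz in $\lambda_\rmL$ along individual coordinates. I would handle this via the weak-Lipschitz strategy already used in Lemmas \ref{lemma:ridge_loss_C_difference_lambda}--\ref{lemma:lipschitz_psi_conditional} and the proof of Lemma \ref{lemma:bridge_2}: set $\mcl{L}_\lambda(\bb):=\tfrac{1}{2n}\|\by-\bX\bb\|_2^2+\tfrac{\lambda}{\sqrt n}\|\bb\|_1$ and show that, with probability $1-o(1)$, for every $\lambda,\lambda'\in[\lambda_{\min},\lambda_{\max}]$,
\begin{equation*}
\mcl{L}_{\lambda'}(\hat\bbeta_\rmL(\lambda))\le \mcl{L}_{\lambda'}(\hat\bbeta_\rmL(\lambda'))+K|\lambda-\lambda'|,
\end{equation*}
using the optimality of $\hat\bbeta_\rmL(\lambda)$ for $\mcl{L}_\lambda$ together with the uniform boundedness $\sup_{\lambda_\rmL}\|\hat\bbeta_\rmL(\lambda_\rmL)\|_1=O_P(\sqrt n)$, which follows from a standard basic inequality applied at $\bb=\bbeta$. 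Combining this near-optimality with a CGMT-derived quantitative statement of the form ``any $\bb$ with $\mcl{L}_{\lambda_i}(\bb)\le \min\mcl{L}_{\lambda_i}+C\epsilon^2$ must satisfy $|\phi_\beta(\bb)-\E[\phi_\beta(\hat\bbeta_\rmL^f(\lambda_i))]|\le \epsilon$'' (the analog, for the Lasso, of Lemma \ref{lemma:ridge_conditional_bw}), and then taking a union bound over a grid with width $\epsilon'=\min(C\epsilon^2/K,\epsilon/M)$, upgrades pointwise control to uniform control exactly as in the proof of Lemma \ref{lemma:bridge_2}.

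Finally, I would note that the quantitative CGMT step needed above is already present in the literature: the Lasso version is \cite{miolane2021distribution}, with the explicit uniform-in-$\lambda$ formulation stated in \cite{celentano2020lasso}; for the smoothed Lasso approximation the relevant statement is Theorem~B.1 and Lemma~A.5 there, and the smoothing error \eqref{closeness_alpha_smoothed} transfers uniform control from the smoothed problem back to the Lasso. Thus the whole argument reduces to assembling the pointwise CGMT result with the cost-function Lipschitzness and the Lipschitz continuity of the fixed-point parameters; no new ingredients beyond those already invoked in Lemmas \ref{lemma:ridge_conditional_bw}--\ref{lemma:lipschitz_psi_conditional} are required.
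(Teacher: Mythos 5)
Your proposal captures the correct high-level architecture, which is indeed the paper's: a quantitative CGMT-based concentration statement at a fixed $\lambda_k$ with $o(\epsilon)$ failure probability, Lipschitz-in-$\lambda$ control of both the empirical cost and the population side via Lemma \ref{lemma:lipschitz_fixed_point} and Lemma \ref{lemma:ridge_W2_Lipschitz}, and a union bound over a $\lambda$-grid of width $\Theta(\epsilon)$. For the Lasso you describe exactly the paper's route (cost near-optimality $\mcl{L}_{\lambda'}(\hat\bbeta(\lambda))\le\mcl{L}_{\lambda'}(\hat\bbeta(\lambda'))+K|\lambda-\lambda'|$ combined with the quantitative local-stability step), and the paper indeed presents this argument in detail for the ridge and defers the Lasso version to an adaptation of \cite{miolane2021distribution}.

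Two points where your sketch diverges from or falls short of what is needed. First, the remark that the ridge pointwise convergence ``follows from a direct resolvent/Marchenko--Pastur calculation'' is misplaced: anisotropic local laws control trace-type functionals (and hence $\hat\df_\rmR/p$, which is what Lemma \ref{lemma:df_diff} needs), but they do not give control of $\phi_\beta(\hat\bbeta_\rmR)$ for an \emph{arbitrary} $1$-Lipschitz $\phi_\beta:\R^p\to\R$. The ridge case is precisely the one the paper works out via CGMT (converting to PO/AO/SO in Section \ref{subsubsec:connect_AO_SO}--\ref{subsubsec:connect_PO_AO}), and there is no shortcut through resolvents for this Lipschitz-function conclusion. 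Second, for the ridge uniformization you replace the cost-near-optimality step by a direct ``$\hat\bbeta_\rmR$ is smooth in $\lambda_\rmR$ with bounded derivative'' argument. That Lipschitz control is true, and it would simplify the interpolation between grid points; but the union bound over a grid of size $O(1/\epsilon)$ still requires the \emph{pointwise} statement at each grid point to fail with probability $o(\epsilon)$, not merely $o(1)$. You correctly identify the quantitative CGMT statement as the source of this rate for the Lasso, but you do not supply it for the ridge. In the paper this rate is furnished uniformly for both cases by the quantitative local-stability lemma (Lemma \ref{lemma:ridge_W2}), which is why the paper uses the cost-near-optimality route for the ridge as well, rather than the direct-derivative Lipschitz control: the latter is a genuine simplification of the grid-interpolation step but does not by itself replace the $o(\epsilon)$ tail needed for the union bound. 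With that rate in hand, your plan goes through; without it, the ridge argument is incomplete.
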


\begin{lemma}\label{lemma:marginal_single_e}
Recall $\hat\bu_k = \by - \bX\hat\bbeta_k$. Further define $\hat\bu_k^f = \sqrt{n}\zeta_k\bh_k^f$, where $(\bh_k^f,\bh_{\textrm{R}}^f) \sim \mcn(\bm{0},\bS \otimes \bI_n)$ for $\bS$ defined in Eqn. \ref{def:fixed_point}. Under Assumption \ref{assumptions2}, for any $1$-Lipschitz functions $\phi_u:\R^n \rightarrow \R$,
$$\suplk \left|\phi_u\left(\frac{\hat\bu_k}{\sqrt{n}}\right) - \E\left[\phi_u\left(\frac{\hat\bu_k^f}{\sqrt{n}}\right)\right]\right|\convP 0.$$
\end{lemma}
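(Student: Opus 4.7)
My plan is to prove Lemma \ref{lemma:marginal_single_e} by combining pointwise distributional characterization of the residuals $\hat\bu_k/\sqrt{n}$ at each fixed $\lambda_k$ with a discretization argument that elevates pointwise to uniform convergence over $\lambda_k \in [\lambda_{\min},\lambda_{\max}]$. The structure parallels Lemma \ref{lemma:marginal_single} and Lemma \ref{lemma:marginal_both}, but applied to the residuals rather than to the estimators themselves. I will treat the Ridge ($k=\rmR$) and Lasso ($k=\rmL$) cases separately, since the Lasso residual is not a smooth function of $\lambda_\rmL$.

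For the Ridge case, pointwise convergence $\phi_u(\hat\bu_\rmR/\sqrt{n}) \convP \E[\phi_u(\hat\bu_\rmR^f/\sqrt{n})]$ at each fixed $\lambda_\rmR$ can be read off the joint distributional characterization of $(\hat\bbeta_\rmR,\hat\bu_\rmR)$ via anisotropic resolvent results as in \cite{knowles2017anisotropic} and Lemma H.1 of \cite{celentano2021cad}. To uniformize, I observe that $\partial_{\lambda_\rmR}\hat\bbeta_\rmR = -(\bX^\top\bX/n + \lambda_\rmR\bI)^{-1}\hat\bbeta_\rmR$, whose operator norm is bounded by $\|\hat\bbeta_\rmR\|_2/\lambda_{\min}$; Ridge optimality yields $\lambda_{\min}\|\hat\bbeta_\rmR\|_2^2/2 \leq \|\by\|_2^2/(2n)$, so $\hat\bbeta_\rmR$ is Lipschitz in $\lambda_\rmR$ with a high-probability constant. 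Composing with the $(\|\bX\|_\op/\sqrt{n})$-Lipschitz map $\bb \mapsto (\by-\bX\bb)/\sqrt{n}$, and invoking Corollary \ref{largest_singular_value} to bound $\|\bX\|_\op/\sqrt{n}$, shows $\lambda_\rmR \mapsto \hat\bu_\rmR/\sqrt{n}$ is Lipschitz on a high-probability event. On the deterministic side, Lemma \ref{lemma:lipschitz_fixed_point} makes $\lambda_\rmR \mapsto (\tau_\rmR,\zeta_\rmR)$ Lipschitz, hence $\lambda_\rmR \mapsto \E[\phi_u(\zeta_\rmR\bh_\rmR^f)]$ is Lipschitz as well. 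A standard $\epsilon$-net argument over $[\lambda_{\min},\lambda_{\max}]$ combined with a union bound, in the spirit of the discretization used in Lemma \ref{lemma:lipschitz_Psi_bridge1} and the proof of Lemma \ref{lemma:bridge_2}, then promotes pointwise to uniform convergence.

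For the Lasso case, the failure of smoothness of $\hat\bbeta_\rmL$ in $\lambda_\rmL$ forces a smoothing detour that mirrors the proof of Lemma \ref{lemma:marginal_both}. I will introduce the $\alpha$-smoothed Lasso estimator $\hat\bbeta_\alpha$ and its residual $\hat\bu_\alpha = \by - \bX\hat\bbeta_\alpha$. For each fixed $\alpha>0$ the smoothed objective is strongly convex and $\hat\bbeta_\alpha$, hence $\hat\bu_\alpha/\sqrt{n}$, is Lipschitz in $\lambda_\rmL$, so the Ridge-case discretization applies verbatim to yield uniform convergence of $\phi_u(\hat\bu_\alpha/\sqrt{n})$ around $\E[\phi_u(\zeta_\alpha\bh_\alpha^f)]$. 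Passing $\alpha\downarrow 0$, the uniform approximation $\sup_{\lambda_\rmL}\|\hat\bbeta_\alpha-\hat\bbeta_\rmL\|_2 = O_P(\sqrt{\alpha})$ from \eqref{closeness_alpha_smoothed} transfers to the normalized residual via $\|\hat\bu_\alpha/\sqrt{n}-\hat\bu_\rmL/\sqrt{n}\|_2 \leq (\|\bX\|_\op/\sqrt{n})\|\hat\bbeta_\alpha-\hat\bbeta_\rmL\|_2 = O_P(\sqrt{\alpha})$, which I combine with $|\tau_\rmL-\tau_\alpha|, |\zeta_\rmL-\zeta_\alpha| = O(\sqrt{\alpha})$ from \eqref{eqn:tau_xi_alpha} and triangle inequality, sending $n\to\infty$ first and then $\alpha\downarrow 0$. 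The main obstacle I anticipate is the pointwise residual-level characterization for the smoothed Lasso, since the AMP/CGMT machinery is most cleanly stated at the estimator level; I expect to extract it either by inserting the residual into a joint test function via the KKT identity $\bX^\top\hat\bu_\alpha/n \in \zeta_\alpha\partial M_\alpha(\hat\bbeta_\alpha)$, or by extending the joint characterization in \cite{celentano2020lasso} for the smoothed Lasso to include the residual component.
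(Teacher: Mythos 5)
Your proposal takes a genuinely different route from the paper, and the core difficulty --- which you flag yourself --- is not resolved. The paper's proof of Lemma \ref{lemma:marginal_single_e} is declared to be ``almost verbatim as the proof of Lemma \ref{lemma:marginal_single}, where we instead study the optimization for $\bu$ in \eqref{wu_optimization}.'' Concretely, in the min-max form of the cost $c_\lambda(\bw,\bu)$ from \eqref{wu_optimization}, the residual is encoded in the dual maximizer: at the optimum $\bw=\hat\bw_k$ the optimal $\bu$ equals $\bX\hat\bw_k - \sigma\bz = -(\by - \bX\hat\bbeta_k) = -\hat\bu_k$. The paper re-runs the PO $\to$ AO $\to$ SO $\to$ uniform-control chain of Sections \ref{subsubsec:connect_AO_SO}--\ref{subsubsec:uniform_control} for the $\bu$-side of the saddle point, so the residual-level characterization falls out of the same CGMT machinery with no additional input. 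Your plan instead imports a pointwise Lipschitz-test-function characterization of the residual from external results and then uniformizes over $\lambda_k$; the uniformization steps you propose (closed-form Lipschitzness of $\hat\bbeta_{\rmR}$ in $\lambda_{\rmR}$ for the ridge, $\alpha$-smoothing for the lasso) are reasonable and arguably cleaner than the paper's Lipschitzness-of-loss-values device, but the pointwise characterization --- where the real content of the lemma lies --- is never actually established.

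Two concrete holes. For the ridge you cite \cite{knowles2017anisotropic} and Lemma H.1 of \cite{celentano2021cad}: those control resolvent traces (the scalar $\hat\df_{\rmR}/p$), not $\phi_u(\hat\bu_{\rmR}/\sqrt{n})$ for arbitrary $1$-Lipschitz $\phi_u$, which is strictly stronger. More fundamentally, you cannot derive the residual-level statement from the estimator-level one (Lemma \ref{lemma:marginal_single}) by composing with the map $\bb \mapsto (\by-\bX\bb)/\sqrt{n}$: that map is Lipschitz for the realized data, but applying it inside the fixed-design expectation would give $\E_{\bg_k^f}\!\left[\phi_u\!\left((\by - \bX\hat\bbeta_k^f)/\sqrt{n}\right)\right]$ with $\bX,\by$ held fixed, which is not the target $\E[\phi_u(\zeta_k\bh_k^f)]$. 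The randomness in $\bX$ must be integrated out before the normalized residual becomes an $n$-dimensional Gaussian, and this is precisely what the CGMT reduction at the $\bu$-variable achieves. Your fallback via the KKT identity controls $\bX^\top(\by-\bX\hat\bbeta_\alpha)$ --- i.e.\ the $\hat\bv$-type quantity --- through $\nabla M_\alpha(\hat\bbeta_\alpha)$, not $\by-\bX\hat\bbeta_\alpha$ itself, and passing from $\bX^\top\hat\bu_\alpha$ back to $\hat\bu_\alpha$ is not Lipschitz-controlled. Until a pointwise residual characterization is supplied, your discretization argument has nothing to uniformize.
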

Lemmas \ref{lemma:marginal_single} and  \ref{lemma:marginal_single_e} are proved in Section \ref{subsec:proof:lemma:marginal_single}. 

\begin{proof}[Proof of Lemma \ref{lemma:uvw_convergence}]
First consider $\hat\bw_k = \hat\bbeta_k - \bbeta$. From Lemma \ref{lemma:marginal_single}, we know that for any $1$-Lipschitz function $\phi$,
$$\suplk |\phi(\hat\bw_k)-\E[\phi(\hat\bbeta_k^f-\bbeta)]| \convP 0,$$
where recall from Eqn. \ref{def:hat_beta_f} that $\hat\bbeta_k^f-\bbeta = \eta_k(\bbeta+\bg_k^f,\zeta_k)-\bbeta$ is a $1/\sqrt{p}$-Lipschitz function of $\sqrt{p}\bg_k^f$ due to the Lipschitzness of the proximal mapping operator, where $\sqrt{p}\bg_k^f\sim\mcn(\bm{0},p\tau_k^2\bI_p)$ with $p\tau_k^2 \leq \delta\tau_{\max}^2 + \delta\sigma_{\max}^2$ and $\delta=p/n$. 
Also, $\E[\|\bg_k^f\|_2^2] = p\tau_k^2$. Moreover, from \eqref{def:hat_beta_f} and \eqref{def:fixed_point}, $\E[\|(\hat\bbeta_k^f-\bbeta)\|_2^2] = n\tau_k^2-\sigma^2$ 
$ \leq \tau_{\max}^2$ is bounded by \ref{lemma:existence_fixed_point}. Therefore, by Lemma \ref{lemma:concentration_second_moment}, we know
$$\supll \left| \|\hat\bw_k\|_2^2 - (n\tau_k^2-\sigma^2)\right| \convP 0.$$
We also know that $n\tau_k^2-\sigma^2 \geq \tau_{\min}^2$ by \ref{lemma:existence_fixed_point}. Thus, by Lemma \ref{sqrt_convergence}, 
$$\supll \left| \|\hat\bw_k\|_2 - \sqrt{n\tau_k^2-\sigma^2}\right| \convP 0.$$

As a direct corollary, with probability $1-o(1)$, $\|\hat\bw_k\|_2 \leq  2\sqrt{n\tau_k^2-\sigma^2} \leq 2\tau_{\max}$ and $\|\hat\bw_k\|_2 \geq  \sqrt{n\tau_k^2-\sigma^2}/2 \geq \tau_{\min}/2$ for all $\lambda_k$, so $\|\hat\bw_k\|_2$ is bounded both above and below for all $\lambda_k$ with probability $1-o(1)$.
Similarly, convergence of $\|\hat\bu_k\|_2/\sqrt{n}$ follows by starting from Lemma \ref{lemma:marginal_single_e} on combining with \eqref{lemma:existence_fixed_point} and Lemmas \eqref{lemma:concentration_second_moment}, \eqref{sqrt_convergence}.

Finally we consider $\hat\bv_k$. We know $\hat\bv_k = \bX^\top \hat\bu_k$ and that 
$\|\bX\|_\op/\sqrt{n}$ is bounded with probability $1-o(1)$ (Corollary \ref{largest_singular_value}). Thus,
$$\suplk\|\hat\bv_k\|_2/n \leq \|\bX\|_\op/\sqrt{n} \cdot \suplk \|\hat\bu_k\|_2/\sqrt{n},$$ which is bounded above with probability $1-o(1)$. This completes the proof.
\end{proof}

\subsection{Proof of Lemma \ref{lemma:lipschitz_psi_conditional}}\label{subsec:proof:lemma:lipschitz_psi_conditional}

We introduce another Lemma:

\begin{lemma}\label{lemma:wf_weak_lipschitz}
Under Assumption \ref{assumptions2}, for any $\epsilon > 0$, consider any $\lambda_1,\lambda_2 \in [\lambda_{\min},\lambda_{\max}]$ such that $|\lambda_1 - \lambda_2| \geq \epsilon$, then with probability $1-o(1)$, we have
\begin{equation*}
\begin{gathered}
\|\hat\bbeta_{\textrm{L}}(\lambda)\|_2 \leq M, \,\, \forall \lambda \in [\lambda_{\min},\lambda_{\max}],\\
\|\hat\bbeta_{L}(\lambda_1) - \hat\bbeta_{L}(\lambda_2)\|_2 \leq M |\lambda_1-\lambda_2|.
\end{gathered}
\end{equation*}
for some positive constant $M$ that does not depend on $\epsilon$. \end{lemma}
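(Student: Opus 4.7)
The plan is to exploit the piecewise linear structure of the Lasso regularization path, combined with the uniform degrees-of-freedom control already available from Lemma \ref{lemma:df_diff}. Both claims will hold on a single high-probability event, and the role of Assumption \ref{assumptionL}(1) is crucial in converting predictions/residuals into coefficients.

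First I would establish the uniform $\ell_2$ boundedness. Comparing the Lasso objective at $\hat\bbeta_{\rmL}(\lambda)$ against $\bm{0}$ gives $\|\by - \bX\hat\bbeta_{\rmL}(\lambda)\|_2 \leq \|\by\|_2$, so $\|\bX\hat\bbeta_{\rmL}(\lambda)\|_2 \leq 2\|\by\|_2 = O(\sqrt{n})$ uniformly in $\lambda$, since $\|\by\|_2/\sqrt{n}$ is bounded with probability $1-o(1)$ under Assumption \ref{assumptions2}. By Lemma \ref{lemma:df_diff} together with the fact that $\zeta_{\rmL}\geq \zeta_{\min}>0$ from \eqref{lemma:existence_fixed_point}, there exists $\gamma<1$ such that $\sup_{\lambda_{\rmL}\in[\lambda_{\min},\lambda_{\max}]}\|\hat\bbeta_{\rmL}(\lambda)\|_0 = \sup_{\lambda_{\rmL}\in[\lambda_{\min},\lambda_{\max}]}\hat\df_{\rmL} \leq \gamma n$ with probability $1-o(1)$. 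Letting $S_\lambda := \supp(\hat\bbeta_{\rmL}(\lambda))$, Assumption \ref{assumptionL}(1) gives $\sigma_{\min}(\bX_{S_\lambda}) \geq \sqrt{n}\,\kappa_{\min}$, so
\begin{equation*}
\|\hat\bbeta_{\rmL}(\lambda)\|_2 \;\leq\; \frac{\|\bX_{S_\lambda}\hat\bbeta_{\rmL}(\lambda)_{S_\lambda}\|_2}{\sqrt{n}\,\kappa_{\min}} \;=\; \frac{\|\bX\hat\bbeta_{\rmL}(\lambda)\|_2}{\sqrt{n}\,\kappa_{\min}} \;=\; O(1)
\end{equation*}
uniformly in $\lambda$, which gives the first claim for some constant $M$.

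For the Lipschitz bound I would work on the same high-probability event and use that $\lambda \mapsto \hat\bbeta_{\rmL}(\lambda)$ is continuous and piecewise linear in $\lambda$. On any open piece where the active set $S$ and sign pattern $\bm{s}_S\in\{\pm 1\}^{|S|}$ are constant, the KKT conditions yield $\hat\bbeta_{\rmL}(\lambda)_S = (\bX_S^\top\bX_S)^{-1}(\bX_S^\top\by - \lambda\sqrt{n}\,\bm{s}_S)$ and $\hat\bbeta_{\rmL}(\lambda)_{S^c} = 0$. Hence
\begin{equation*}
\left\|\tfrac{d}{d\lambda}\hat\bbeta_{\rmL}(\lambda)\right\|_2 \;=\; \sqrt{n}\,\|(\bX_S^\top\bX_S)^{-1}\bm{s}_S\|_2 \;\leq\; \sqrt{n}\cdot\frac{1}{n\kappa_{\min}^2}\cdot\sqrt{|S|} \;\leq\; \frac{\sqrt{\gamma}}{\kappa_{\min}^2} =: M.
\end{equation*}
Integrating this slope bound along the continuous piecewise-linear path between $\lambda_1$ and $\lambda_2$ yields $\|\hat\bbeta_{\rmL}(\lambda_1)-\hat\bbeta_{\rmL}(\lambda_2)\|_2 \leq M|\lambda_1-\lambda_2|$, with $M$ independent of $\epsilon$. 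In fact the restriction $|\lambda_1-\lambda_2|\geq \epsilon$ is not needed for this argument, but the weaker statement suffices for downstream use in Lemma \ref{lemma:lipschitz_psi_conditional}.

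The main obstacle is to guarantee that $|S_\lambda|$ stays below $\gamma n$ \emph{simultaneously} for every $\lambda \in [\lambda_{\min},\lambda_{\max}]$ on a single event of probability $1-o(1)$; a pointwise version would be immediate from \eqref{def:fixed_point} and \eqref{lemma:existence_fixed_point}, but the uniform version is what allows us to apply Assumption \ref{assumptionL}(1) over the whole tuning-parameter range. This uniform active-set control is precisely what Lemma \ref{lemma:df_diff} provides, which in turn relies on the uniform universality machinery built on the convex Gaussian min-max theorem. Once this is in hand, all remaining steps are direct, and the piecewise-linear slope calculation delivers a Lipschitz constant $M$ depending only on $\gamma$ and $\kappa_{\min}$, in particular not on $\epsilon$.
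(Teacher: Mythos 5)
Your proof is correct, and it takes a genuinely different route from the paper. The paper proves the second display by detouring through the fixed-design surrogate $\hat\bbeta_{\rmL}^f$: it invokes Lemma~\ref{lemma:marginal_single} to couple $\hat\bbeta_{\rmL}(\lambda)$ to $\hat\bbeta_{\rmL}^f(\lambda)$ up to an $\epsilon$ error, and then shows $\lambda\mapsto\hat\bbeta_{\rmL}^f(\lambda)$ is honestly Lipschitz by a direct computation with the soft-thresholding map and the Lipschitzness of the fixed-point parameters $(\tau_\rmL,\zeta_\rmL)$ (Lemma~\ref{lemma:lipschitz_fixed_point}); the resulting bound $2\epsilon + M|\lambda_1-\lambda_2|$ is then absorbed into $(M+2)|\lambda_1-\lambda_2|$ precisely because $|\lambda_1-\lambda_2|\geq\epsilon$, which is where the ``weak-Lipschitz'' restriction comes from. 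Your argument instead works entirely on the primary problem: you get the uniform active-set bound $\sup_\lambda\|\hat\bbeta_\rmL(\lambda)\|_0\leq\gamma n$ from Lemma~\ref{lemma:df_diff} and \eqref{lemma:existence_fixed_point}, apply the restricted minimum-singular-value bound (which the paper notes holds automatically under the Gaussian design of Assumption~\ref{assumptions2}, cf.\ the remark after Assumption~\ref{assumptions2}), and then use the piecewise-linear structure of the Lasso path to bound the slope $\|d\hat\bbeta_\rmL/d\lambda\|_2 = \sqrt{n}\|(\bX_S^\top\bX_S)^{-1}\bm{s}_S\|_2 \leq \sqrt{\gamma}/\kappa_{\min}^2$ on every segment, then integrate. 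This yields a genuine Lipschitz bound with no $\epsilon$ restriction, which is strictly stronger than the paper's statement, and it is also more elementary. Two things you should make explicit: (i) the piecewise-linear/continuous structure of the path requires almost-sure uniqueness of the Lasso solution, which holds under Assumption~\ref{assumptions2} because a Gaussian design has columns in general position with probability one (this is the one place where the paper's CGMT-based argument is more robust, since it carries over to discrete sub-Gaussian designs without a general-position assumption — relevant for the universality extension in Section~\ref{sec:universality}, though not for the statement as posed); and (ii) the uniform-in-$\lambda$ active-set bound, which you correctly flag as the crux, does rely on the same CGMT/uniformity machinery through Lemma~\ref{lemma:df_diff}, so the independence from that machinery is only partial. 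Your derivation of $\|\bX\hat\bbeta_\rmL(\lambda)\|_2 \leq 2\|\by\|_2$ from comparing the objective at $\hat\bbeta_\rmL$ and $\bm 0$, and the subsequent conversion to a bound on $\|\hat\bbeta_\rmL(\lambda)\|_2$ via the restricted singular value, is a correct and self-contained alternative to the paper's appeal to Lemma~\ref{lemma:uvw_convergence} for the first display.
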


\begin{proof}
The first line follows directly from Lemma \ref{lemma:uvw_convergence} and the fact that $\hat\bbeta_{\textrm{L}} = \hat\bw_{\textrm{L}} + \bbeta$.
For the second line, consider any $\epsilon > 0$. It is a direct consequence of Lemma \ref{lemma:marginal_single} that 
\begin{equation}\label{eq:interimconv}
\PP\left(\supll \left|\|\hat\bbeta_\rmL(\lambda_{\textrm{L}})\|_2 - \E\|\hat\bbeta_\rmL^f(\lambda_{\textrm{L}})\|_2\right|\geq \epsilon\right) = o(1).
\end{equation}
For any $\lambda_1,\lambda_2$, by triangle inequality, we have    $$\|\hat\bbeta_{\rmL}(\lambda_1) - \hat\bbeta_{\rmL}(\lambda_2)\|_2\leq \|\hat\bbeta_{\rmL}(\lambda_1) - \hat\bbeta_{\rmL}^f(\lambda_1)\|_2 + \|\hat\bbeta_{\rmL}(\lambda_2) - \hat\bbeta_{\rmL}^f(\lambda_2)\|_2 + \|\hat\bbeta_{\rmL}^f(\lambda_1) - \hat\bbeta_{\rmL}^f(\lambda_2)\|_2.$$
Now, 
$\|\hat\bbeta_{\rmL}(\lambda_1) - \hat\bbeta_{\rmL}^f(\lambda_1)\|_2 \leq \epsilon$ and $\|\hat\bbeta_{\rmL}(\lambda_2) - \hat\bbeta_{\rmL}^f(\lambda_2)\|_2 \leq \epsilon$ with probability $1-o(1)$ by \eqref{eq:interimconv}.
Further, recalling the definition of $\hat\bbeta_{\textrm{L}}^f$ in \eqref{def:hat_beta_f} and notice that the minimization problem is separable, we know the $i$-th entry of $\hat\bbeta_{\textrm{L}}^f$ satisfies
$$\hat\bbeta_{\textrm{L}}^{f,(i)} = \eta(\beta_i+\tau_{L} Z_i,\frac{\lambda_{\textrm{L}}}{\sqrt{n}\zeta_{\textrm{L}}}),$$
where $Z_i \iid \mcn(0,1)$ and
$$\eta(x,b) = \begin{cases}x+b,\ x<-b\\ 0,\ -b\leq x\leq b\\x-b,\ x>b\end{cases},$$
the soft-thresholding operator,
is $1$-Lipshitz in both $x$ and $b$. Thus,
\begin{align*}
    &\|\hat\bbeta_{L}^f(\lambda_1) -\hat\bbeta_{L}^f(\lambda_2)\|_2^2\\
=& \sum_{i=1}^p \left( \eta(\beta_i+\tau_{L}(\lambda_1) Z_i,\frac{\lambda_1}{\sqrt{n}\zeta_{L}(\lambda_1)}) - \eta(\beta_i+\tau_{L}(\lambda_2) Z_i,\frac{\lambda_2}{\sqrt{n}\zeta_{L}(\lambda_2)})\right)^2\\
\leq & \sum_{i=1}^p 2\left( (\tau_{L}(\lambda_1)Z_i-\tau_{L}(\lambda_2)Z_i)^2 + (\frac{\lambda_1}{\sqrt{n}\zeta_{L}(\lambda_1)} - \frac{\lambda_2}{\sqrt{n}\zeta_{L}(\lambda_2)})^2\right)\\
=& 2\left((\tau_{L}(\lambda_1)-\tau_{L}(\lambda_2))^2\sum_{i=1}^p Z_i^2 + \delta(\frac{\lambda_1}{\zeta_{L}(\lambda_1)} - \frac{\lambda_2}{\zeta_{L}(\lambda_2)})^2\right)\\
\leq & M^2|\lambda_1-\lambda_2|^2
\end{align*}
for some constant $M$ with probability $1-o(1)$, where we used \ref{lemma:existence_fixed_point}, Lemma \ref{holder_function_operation}, and  the facts that $\sqrt{n}\tau_{L}(\lambda),\zeta_{L}(\lambda)$ are bounded Lipschitz functions of $\lambda_L$  (from Lemma \ref{lemma:lipschitz_fixed_point}) and $\sum_{i=1}^p Z_i^2/p$ is bounded with probability $1-o(1)$.
Combining the above, we know with probability $1-o(1)$,
$$ \|\hat\bbeta_{L}(\lambda_1) - \hat\bbeta_{L}(\lambda_2)\|_2\leq 2\epsilon + M|\lambda_1-\lambda_2| \leq (M+2) |\lambda_1-\lambda_2|,$$
which concludes the proof.
\end{proof}

As a corollary, we have the following lemma:

\begin{lemma}\label{lemma:gl_weak_lipschitz}
Under Assumption \ref{assumptions2}, for any $\epsilon > 0$, consider any $\lambda_1,\lambda_2 \in [\lambda_{\min},\lambda_{\max}]$ such that $|\lambda_1 - \lambda_2| \geq \epsilon$, then with probability $1-o(1)$, we have
$$\|\hat\bg_{L}(\lambda_1) - \hat\bg_{L}(\lambda_2)\|_2 \leq M|\lambda_1-\lambda_2|,$$
for some constant $M$ (that does not depend on $\epsilon$).
\end{lemma}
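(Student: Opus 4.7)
\textbf{Proof plan for Lemma \ref{lemma:gl_weak_lipschitz}.} The plan is to use the explicit formula for $\hat\bg_{\textrm L}$ from \eqref{def:hat_gl}, rewritten via the auxiliary vectors \eqref{uvw} as
\[
    \hat\bg_{\textrm L} \;=\; \frac{n\tau_{\textrm L}\zeta_{\textrm L}\sqrt{n\tau_{\textrm L}^{2}-\sigma^{2}}}{\|\hat\bu_{\textrm L}\|_2\,\|\hat\bw_{\textrm L}\|_2}\,\hat\bw_{\textrm L} \;+\; \frac{\tau_{\textrm L}}{\|\hat\bu_{\textrm L}\|_2}\,\hat\bv_{\textrm L},
\]
and to show that each scalar prefactor and each vector factor is bounded and weak-Lipschitz in $\lambda_{\textrm L}$. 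All subsequent statements are understood to hold on a single high-probability event obtained by intersecting the $1-o(1)$ events supplied by Lemmas \ref{lemma:uvw_convergence}, \ref{lemma:wf_weak_lipschitz}, \ref{lemma:lipschitz_fixed_point}, and Corollary \ref{largest_singular_value}; a union bound over finitely many such events preserves probability $1-o(1)$.

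First I will treat the deterministic ingredients. By Lemma \ref{lemma:lipschitz_fixed_point}, both $\sqrt{n}\tau_{\textrm L}(\lambda_{\textrm L})$ and $\zeta_{\textrm L}(\lambda_{\textrm L})$ are Lipschitz with constants depending only on the model parameters, and by \eqref{lemma:existence_fixed_point} they are bounded above and bounded below by positive constants; in particular $n\tau_{\textrm L}^{2}-\sigma^{2}\ge \tau_{\min}^{2}>0$, so $\sqrt{n\tau_{\textrm L}^{2}-\sigma^{2}}$ is a bounded Lipschitz function of $\lambda_{\textrm L}$ (via Lemma \ref{holder_function_operation}). Next I turn to the data-dependent factors. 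Lemma \ref{lemma:wf_weak_lipschitz} gives $\|\hat\bw_{\textrm L}(\lambda_{1})-\hat\bw_{\textrm L}(\lambda_{2})\|_{2}\le M|\lambda_{1}-\lambda_{2}|$, and Lemma \ref{lemma:uvw_convergence} gives $c\le\|\hat\bw_{\textrm L}\|_{2}\le C$ uniformly in $\lambda_{\textrm L}$. Using $\hat\bu_{\textrm L}=\by-\bX\hat\bbeta_{\textrm L}$ and Corollary \ref{largest_singular_value} (i.e.\ $\|\bX\|_{\op}/\sqrt{n}$ bounded), we obtain
\[
    \|\hat\bu_{\textrm L}(\lambda_{1})-\hat\bu_{\textrm L}(\lambda_{2})\|_{2}\;\le\; \|\bX\|_{\op}\,\|\hat\bbeta_{\textrm L}(\lambda_{1})-\hat\bbeta_{\textrm L}(\lambda_{2})\|_{2}\;\le\; M'\sqrt{n}\,|\lambda_{1}-\lambda_{2}|,
\]
so $\|\hat\bu_{\textrm L}\|_{2}/\sqrt{n}$ is weak-Lipschitz in $\lambda_{\textrm L}$; by Lemma \ref{lemma:uvw_convergence} it is also bounded above and bounded below by positive constants. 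The same argument, applied to $\hat\bv_{\textrm L}=\bX^{\top}\hat\bu_{\textrm L}$, shows that $\hat\bv_{\textrm L}/n$ is bounded and weak-Lipschitz.

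Finally I combine these pieces. The scalar multipliers $n\tau_{\textrm L}\zeta_{\textrm L}\sqrt{n\tau_{\textrm L}^{2}-\sigma^{2}}/(\|\hat\bu_{\textrm L}\|_{2}\|\hat\bw_{\textrm L}\|_{2})$ and $\tau_{\textrm L}/\|\hat\bu_{\textrm L}\|_{2}$ are (on the good event) ratios of bounded Lipschitz quantities with denominators bounded away from zero, and Lemma \ref{holder_function_operation} implies each is itself bounded and weak-Lipschitz in $\lambda_{\textrm L}$. Multiplying a bounded weak-Lipschitz scalar by a bounded weak-Lipschitz vector ($\hat\bw_{\textrm L}$ for the first term, $\hat\bv_{\textrm L}$ for the second, after appropriate $\sqrt{n}$ and $n$ rescalings inside the scalar) yields a weak-Lipschitz vector; summing the two terms preserves weak-Lipschitzness with a constant that depends only on $\kappa_{\min},\kappa_{\max},\delta,\sigma_{\max},\lambda_{\min},\lambda_{\max}$ and the universal constants from \eqref{lemma:existence_fixed_point}. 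Choosing $M$ to be the resulting constant completes the proof.

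\textbf{Main obstacle.} No step is conceptually deep; the work is bookkeeping. The only delicate point is checking that the denominators $\|\hat\bu_{\textrm L}\|_{2}$ and $\|\hat\bw_{\textrm L}\|_{2}$ are bounded away from zero uniformly in $\lambda_{\textrm L}$, but this is already contained in Lemma \ref{lemma:uvw_convergence} together with \eqref{lemma:existence_fixed_point}. The factor-of-$\sqrt{n}$ bookkeeping between $n\tau_{\textrm L}$, $\|\hat\bu_{\textrm L}\|_{2}\sim\sqrt{n}\tau_{\textrm L}\zeta_{\textrm L}$ and $\|\hat\bw_{\textrm L}\|_{2}\sim\sqrt{n\tau_{\textrm L}^{2}-\sigma^{2}}$ cancels out cleanly, leaving an $O(1)$ Lipschitz constant independent of $n$ and of $\epsilon$.
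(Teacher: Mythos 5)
Your proposal is correct and takes essentially the same route as the paper: express $\hat\bg_{\textrm L}$ via $\hat\bw_{\textrm L},\hat\bu_{\textrm L},\hat\bv_{\textrm L}$, establish boundedness and weak-Lipschitzness of each factor using Lemmas~\ref{lemma:lipschitz_fixed_point}, \ref{lemma:wf_weak_lipschitz}, \ref{lemma:uvw_convergence}, and Corollary~\ref{largest_singular_value}, then combine with Lemma~\ref{holder_function_operation}. Your write-up is somewhat more explicit than the paper about the denominators being bounded away from zero (needed for the division step in Lemma~\ref{holder_function_operation}), but that is contained in Lemma~\ref{lemma:uvw_convergence} and is not a different idea.
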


\begin{proof}
Recall Eqn. \ref{def:hat_gl}, and note the following:
\begin{itemize}
    \item $\zeta_{\textrm{L}},\sqrt{n}\tau_{L}$ are bounded Lipschitz functions of $\lambda_{\textrm{L}}$, and as a simple corollary, $\sqrt{n\tau_{\textrm{L}}^2-\sigma^2}$ is also a bounded Lipschitz function of $\lambda_{\textrm{L}}$.
    \item $\left|\|\hat\bw_{L}(\lambda_1)\|_2 - \|\hat\bw_{L}(\lambda_2)\|_2\right| \leq \|\hat\bw_{L}(\lambda_1)-\hat\bw_{L}(\lambda_2)\|_2 =
    \|\hat\bbeta_{L}(\lambda_1)-\hat\bbeta_{L}(\lambda_2)\|_2 \leq M|\lambda_1-\lambda_2|$ with probability $1-o(1)$ by Lemma \ref{lemma:wf_weak_lipschitz}. Also $\|\hat\bw\|_2$ is bounded with probability $o(1)$ by Lemma \ref{lemma:uvw_convergence}
    \item $\hat\bu_{\textrm{L}} = \sigma\bz-\bX\hat\bw_{\textrm{L}}$, so $\frac{1}{\sqrt{n}}\|\hat\bu_{L}(\lambda_1) -\hat\bu_{L}(\lambda_2)\|_2 \leq \frac{1}{\sqrt{n}}\sigma_{\max}(\bX)\|\hat\bw_{L}(\lambda_1)-\hat\bw_{L}(\lambda_2)\|_2 \leq (2+\sqrt{\delta})M|\lambda_1-\lambda_2|$ with probability $1-o(1)$, where we used Corollary \ref{largest_singular_value}. Also $\frac{1}{\sqrt{n}}\|\hat\bu_{\textrm{L}}\|_2 \leq \frac{1}{\sqrt{n}}(\sigma \|\bz\|_2 + \|\bX\hat\bw_{\textrm{L}}\|_2)$ is bounded with probability $1-o(1)$ for the same reason.
    \item By the same argument, $\frac{1}{n}\|\hat\bv_{L}(\lambda_1)-\hat\bv_{L}(\lambda_2)\|_2 \leq (2+\sqrt{\delta})^2M|\lambda_1-\lambda_2|$ with probability $1-o(1)$ and $\frac{1}{n}\|\hat\bv\|_2$ is bounded with probability $1-o(1)$.
\end{itemize}

Then the proof is complete on iteratively applying Lemma \ref{holder_function_operation} on the above displays.
\end{proof}

We are now ready to prove Lemma \ref{lemma:lipschitz_psi_conditional}.
\begin{proof}[Proof of Lemma \ref{lemma:lipschitz_psi_conditional}]
By Jensen's inequality and the fact that $\phi_w$ is 1-Lipschitz, we only need to show that conditional on $\bg_{\textrm{L}}^f = \hat\bg_{\textrm{L}}$, $\hat\bw_{\textrm{R}}^f=\hat\bbeta_{\textrm{R}}^f-\bbeta$
satisfies the ``weak-Lipschitz" condition in $\lambda_{\textrm{L}}$ with probability $1-o(1)$, where by Eqn. \ref{eqn:def:w_lambda},
\begin{align*}    
\hat\bbeta_{\textrm{R}}^f-\bbeta &= \frac{1}{\zeta_{\textrm{R}}+\lambda_{\textrm{R}}}(\zeta_{\textrm{R}}\bg_{\textrm{R}}^f - \lambda_R \bbeta)\\
    &= \frac{1}{\zeta_{\textrm{R}}+\lambda_{\textrm{R}}}(\zeta_{\textrm{R}}\tau_{R}\rho/\tau_{\textrm{L}}\cdot\hat{\bm{g}}_{\textrm{L}} + \zeta_{\textrm{R}}\tau_{R}\sqrt{1-\rho^2}\cdot\bxi - \lambda_{\textrm{R}} \bbeta).
\end{align*}
Now we note the following observations.

\begin{itemize}
    \item $\zeta_{\textrm{R}},\lambda_{\textrm{R}},\bbeta$ does not depend on $\lambda_{\textrm{L}}$
     and by Lemma \ref{lemma:lipschitz_fixed_point}, $\zeta_{\textrm{R}},\lambda_{\textrm{R}}$ are both bounded Lipschitz functions of $\lambda_{\textrm{R}}$. Further, $\|\bbeta\|_2$ is bounded. Thus, by Lemma \ref{holder_function_operation}, $\frac{\lambda_{\textrm{R}}}{\zeta_{\textrm{R}}+\lambda_{\textrm{R}}}\bbeta$ is Lipschitz in both $\lambda_{\textrm{L}},\lambda_{\textrm{R}}$ with some constant $M_1$.
    \item In addition, by Lemma \ref{lemma:lipschitz_fixed_point}, $\sqrt{n}\tau_{R}$ is bounded Lipschitz functions of $\lambda_{\textrm{R}}$, and $\sqrt{1-\rho^2}$ is bounded and Lipschitz in both $\lambda_{\textrm{L}}$ and $\lambda_{\textrm{R}}$. Further, $\bxi/\sqrt{n} \sim \mcn(\bm{0},\bm{I}_p/n)$  is bounded with probability $1-o(1)$. Therefore, by Lemma \ref{holder_function_operation}, $\frac{\zeta_{\textrm{R}}\tau_{R}\sqrt{1-\rho^2}}{\zeta_{\textrm{R}}+\lambda_{\textrm{R}}}\bxi$ is Lipschitz in both $\lambda_{\textrm{L}},\lambda_{\textrm{R}}$ with some constant $M_2$ with probability $o(1)$.
    
    \item By Lemma \ref{lemma:lipschitz_fixed_point}, $\rho$ is bounded and Lipschitz in both $\lambda_{\textrm{L}}$ and $\lambda_{\textrm{R}}$, and $\sqrt{n}\tau_{k}$ is bounded and Lipschitz in $\lambda_k$.
    
  \item  By Lemma \ref{lemma:convergence_gl} and Lemma \ref{lemma:marginal_both}, $\|\hat\bg_{\textrm{L}}\|_2 \leq 2\|\hat\bbeta_{\textrm{L}}^d - \bbeta\|_2 \leq 4 \E[\|\hat\bbeta_{\textrm{L}}^{f,d}-\bbeta\|_2] \leq 4 \sqrt {\E[\|\hat\bbeta_{\textrm{L}}^{f,d}-\bbeta\|_2^2]} = 4n\tau_{\textrm{L}}^2 \leq 4(\tau_{\max}^2+\sigma_{\max}^2)$ 
    with probability $1-o(1)$, where the equality follows from \eqref{def:hat_beta_f} and \eqref{def:fixed_point}.
    
  \item   Further, $\hat\bg_{\textrm{L}}$ does not depend on $\lambda_{\textrm{R}}$, and by Lemma \ref{lemma:gl_weak_lipschitz}, $\hat\bg_{\textrm{L}}$ satisfy the "weak-Lipschitz" condition w.r.t. $\lambda_{\textrm{L}}$ with probability $1-o(1)$.
    
  \item  Hence, by Lemma \ref{holder_function_operation}, $ \frac{\zeta_{\textrm{R}}\tau_{R}\rho}{\tau_{L}(\zeta_{\textrm{R}}+\lambda_{\textrm{R}})}\hat\bg_{\textrm{L}}$ satisfies the aforementioned condition w.r.t. both $\lambda_{\textrm{L}},\lambda_{\textrm{R}}$ with some constant $M_3$ with probability $1-o(1)$.
\end{itemize}
Combining the above steps completes the proof.
\end{proof}

\subsection{Proof of Lemma \ref{lemma:marginal_single}}
\label{subsec:proof:lemma:marginal_single}
For the case of the Lasso, \cite[Theorem 3.1]{miolane2021distribution} proved an analogous result for $W_2$ convergence. Although this does not directly yield our current lemma, the ideas therein sometimes prove to be useful. Below we present the proof of Lemma \ref{lemma:marginal_single} for the case of the ridge. Our approach towards handling $\ell_1$ convergence can be adapted to extend \cite[Theorem 3.1]{miolane2021distribution} for the lasso to an $\ell_1$ convergence result as well. For simplicity, we drop the subscript $R$ for this section. For convenience, we also make the $\lambda$ dependence explicit for certain expressions in this section (via writing $\hat\bbeta(\lambda)$, for instance).

\subsubsection{Converting the optimization problem}\label{sec:optconvert}

First, although the original optimization problem is
$$\hat\bbeta(\lambda) = \argmin_{\bb}\Big\{\frac{1}{2n}\|\by-\bX\bb\|_2^2 + \frac{\lambda}{2}\|\bb\|_2^2\Big\}:=\argmin_{\bb} \mcl{L}_\lambda(\bb),$$
it is more convenient to work with $\hat\bw(\lambda) = \hat\bbeta(\lambda) - \bbeta$, which satisfies
\begin{equation}\label{wu_optimization}
\begin{aligned}
    \hat\bw(\lambda) &= \argmin_{\bw} \frac{1}{2n} \|\bX \bw - \sigma\bz\|_2^2 + \frac{\lambda}{2} \|\bw + \bbeta\|_2^2:=\argmin_{\bw}\mathcal{C}_\lambda(\bw)\\
    &= \argmin_{\bw} \max_{\bu} \frac{1}{n}\bu^\top\bX\bw - \frac{\sigma}{n}\bu^\top \bz - \frac{1}{2n} \|\bu\|_2^2 + \frac{\lambda}{2} \|\bw + \bbeta\|_2^2 :=\argmin_{\bw} \max_{\bu}c_\lambda(\bw,\bu),
\end{aligned}
\end{equation}
where we used the fact that $\|\bx\|_2^2 = \max_{\bu} 2\bu^\top\bx - \|\bu\|_2^2$.
We call this the Primary Optimization (PO) problem, which involves a random matrix $\bX$. We then define the following Auxiliary Optimization (AO) that involves only independent random vectors $\bxi_g \sim \mcn(0,\bm{I}_p),\bxi_h \sim \mcn(0,\bm{I}_n)$. We call its solution $\bw^*(\lambda)$.
\begin{equation}\label{w_optimization}
\begin{aligned}
    \bw^*(\lambda) &= \argmin_{\bw} \max_{\bu} \frac{1}{n}\|\bu\|_2 \bxi_g^\top \bw + \sqrt{\|\bw\|_2^2+\sigma^2}\frac{\bxi_h^\top \bu}{n} - \frac{1}{2n} \|\bu\|_2^2 + \frac{\lambda}{2} \|\bw + \bbeta\|_2^2:=\argmin_{\bw} \max_{\bu} l_\lambda(\bw,\bu)\\
    &=\argmin_{\bw}\max_{\alpha\geq 0}\frac{\alpha}{\sqrt{n}}(\bxi_g^\top \bw + \sqrt{\|\bw\|_2^2+\sigma^2} \|\bxi_h\|_2) - \frac{\alpha^2}{2} + \frac{\lambda}{2} \|\bw + \bbeta\|_2^2:=\argmin_{\bw}\max_{\alpha\geq 0}\ell_\lambda(\bw,\alpha) \\
    &= \argmin_{\bw} \frac{1}{2n} (\bxi_g^\top \bw + \sqrt{\|\bw\|_2^2+\sigma^2}\|\bxi_h\|_2)_+^2 + \frac{\lambda}{2} \|\bw + \bbeta\|_2^2:= \argmin_{\bw} L_\lambda(\bw).
\end{aligned}
\end{equation}
Note to obtain  the third equality we set $\alpha := \|\bu\|_2/\sqrt{n}$.
Section \ref{subsubsec:connect_PO_AO} will establish a formal connection between PO and AO.

Next we perform some non-rigorous calculations using the AO to gain insights and show that it should asymtotically behave as a much simpler scalar optimization problem. To this end, we use the fact that $\sqrt{\|\bw\|_2^2 + \sigma^2} = \argmin_{\tau\geq 0} \frac{\|\bw\|_2^2+\sigma^2}{2\tau} + \frac{\tau}{2},$
and obtain
\begin{align*}
    &\min_{\bw}\max_{\alpha\geq 0}\ell_\lambda(\bw,\alpha)\\
    =&\min_{\bw}\max_{\alpha\geq 0}\min_{\tau\geq 0}\frac{\alpha}{\sqrt{n}}\left(\bxi_g^\top \bw + \left(\frac{\|\bw\|_2^2+\sigma^2}{2\tau}+\frac{\tau}{2}\right)\|\bxi_h\|_2\right) - \frac{\alpha^2}{2} + \frac{\lambda}{2} \|\bw + \bbeta\|_2^2.
\end{align*}
We know in the asymptotic limit $\|\bxi_h\|_2/\sqrt{n} \rightarrow 1$. Substituting in and optimizing $\bw$ first, we have
$$
\argmin_{\bw}\frac{\alpha}{\sqrt{n}}\bxi_g^\top \bw + \frac{\alpha(\|\bw\|_2^2+\sigma^2)}{2\tau}+\frac{\alpha\tau}{2} - \frac{\alpha^2}{2} + \frac{\lambda}{2} \|\bw + \bbeta\|_2^2  = -\frac{1}{\alpha/\tau + \lambda}\left( \frac{\alpha}{\sqrt{n}} \bxi_g + \lambda \bbeta\right).
$$
Plugging in and using the fact that 
$\|\bxi_g\|_2^2/p \rightarrow 1$, $\|\bbeta\|_2^2 \rightarrow \sigma_\beta^2$, and $\bxi_g^\top\bbeta/\sqrt{n} \rightarrow 0$, we arrive at a Scalar Optimization (SO) problem in the asymptotic limit:
\begin{equation}\label{eqn:def:SO}
\begin{aligned}    (\alpha_*,\tau_*) &= \argmax_{\alpha \geq 0} \min_{\tau \geq 0} \psi(\alpha,\tau)\\
     &:= \argmax_{\alpha \geq 0} \min_{\tau \geq 0} \frac{\alpha\sigma^2}{2\tau}+\frac{\alpha\tau}{2}-\frac{\alpha^2}{2}-\frac{\alpha^2\tau\delta}{2(\alpha+\tau\lambda)} + \frac{\alpha\lambda \sigma_\beta^2}{2(\alpha+\tau\lambda)}.
\end{aligned}
\end{equation}
By some algebra, it turns out that the solution to  \ref{eqn:def:SO} and the Ridge part of \eqref{def:fixed_point} can be related as $\tau_{\star} = \sqrt{n}\tau_{R}$ in and $\alpha_{\star} = \tau_{\star}\zeta_{\textrm{R}}$. 
So by the discussion before \ref{lemma:existence_fixed_point}, we know $\alpha_{\star},\tau_{\star}$  is unique and bounded. 
We in turn define
\begin{equation}\label{eqn:def:w_lambda}
\bw(\lambda)  = -\frac{1}{\alpha_*/\tau_* + \lambda}\left( \frac{\alpha_*}{\sqrt{n}} \bxi_g + \lambda \bbeta\right).
\end{equation}
By some algebra, this satisfies 
\begin{equation}\label{eq:wandbeta}
\bw(\lambda) = \hat\bbeta^f(\lambda) - \bbeta.
\end{equation}
We will rigorously prove the above conversion of AO to SO in Section \ref{subsubsec:connect_AO_SO}

\subsubsection{Connecting AO with SO}\label{subsubsec:connect_AO_SO}
First we show that AO has a minimizer.

\begin{prop}
$L_\lambda$ admits almost surely a unique minimizer $\bw^*(\lambda)$ on $\R^p$.
\end{prop}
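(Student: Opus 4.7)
The plan is to establish existence and uniqueness of the minimizer by proving that $L_\lambda$ is continuous, coercive, and strictly convex on $\R^p$ for every realization of $(\bxi_g,\bxi_h)$; almost sure uniqueness then follows trivially.

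First, I would verify \emph{coercivity}. Since $(\cdot)_+^2 \geq 0$, we have $L_\lambda(\bw) \geq \frac{\lambda}{2}\|\bw+\bbeta\|_2^2$, and because $\lambda \geq \lambda_{\min} > 0$ by Assumption \ref{assumptions2}, this lower bound tends to infinity as $\|\bw\|_2 \to \infty$. Combined with the continuity of $L_\lambda$ (immediate from the composition of continuous maps), coercivity guarantees that a minimizer exists on $\R^p$.

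Second, I would establish \emph{strict convexity}. The map $\bw \mapsto \sqrt{\|\bw\|_2^2 + \sigma^2}$ equals the Euclidean norm of $(\bw,\sigma) \in \R^{p+1}$, hence is convex in $\bw$. Adding the linear term $\bxi_g^\top\bw$ and multiplying by the nonnegative scalar $\|\bxi_h\|_2$ preserves convexity, so
\[
g(\bw) := \bxi_g^\top\bw + \sqrt{\|\bw\|_2^2 + \sigma^2}\,\|\bxi_h\|_2
\]
is convex. The scalar map $t \mapsto (t)_+^2$ is convex and nondecreasing on $\R$, so $(g(\bw))_+^2$ is convex in $\bw$ by the standard composition rule. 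Finally, the ridge term $\tfrac{\lambda}{2}\|\bw+\bbeta\|_2^2$ is strictly convex in $\bw$ because $\lambda > 0$. A sum of a convex function and a strictly convex function is strictly convex, so $L_\lambda$ is strictly convex. Together with the already-established existence of a minimizer, this yields a unique minimizer $\bw^*(\lambda) \in \R^p$.

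Since none of these steps requires any exceptional set in $(\bxi_g,\bxi_h)$, the conclusion holds deterministically, and in particular almost surely. No genuinely hard step arises here; the only point worth care is verifying that $(\cdot)_+^2$ is convex nondecreasing on all of $\R$ (its derivative $2(\cdot)_+$ is nondecreasing), which justifies applying the composition rule to the convex inner map $g$.
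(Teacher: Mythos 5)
Your proof is correct, and it is actually cleaner and more direct than the paper's. The paper establishes uniqueness via a case analysis on whether, at a minimizer, the quantity $a(\bw)=\bxi_g^\top\bw+\sqrt{\|\bw\|_2^2+\sigma^2}\|\bxi_h\|_2$ is strictly positive or not: in the first case it argues strict convexity of $L_\lambda$ on a neighborhood of $\bw$ (using the positive part being active), and in the second case it reads off $\bw=-\bbeta$ from the KKT condition since the gradient of the $(\cdot)_+^2$ term vanishes. You instead observe the simpler, unconditional fact: the term $(g(\bw))_+^2$ is convex on all of $\R^p$ (convex nondecreasing $\circ$ convex), and the ridge term $\tfrac{\lambda}{2}\|\bw+\bbeta\|_2^2$ is strictly convex since $\lambda\geq\lambda_{\min}>0$, so $L_\lambda$ is strictly convex globally. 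Combined with coercivity (which you, unlike the paper, spell out), this gives existence and uniqueness deterministically in $(\bxi_g,\bxi_h)$, hence trivially almost surely. Your route avoids the case split entirely and is preferable; the paper's approach buys nothing here (though it illustrates a technique that is needed elsewhere when no strictly convex regularizer is present, e.g. the Lasso case).
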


\begin{proof}
First, $L_\lambda$ is a convex function that goes to $\infty$ at $\infty$, so it has a minimizer.

\textbf{Case 1}: there exists a minimizer $\bw$ such that $\bxi_g^\top \bw + \sqrt{\|\bw\|_2^2+\sigma^2} \|\bxi_h\|_2 > 0$.

In that case, there exists a neighborhood $O_{\bw}$ of $\bw$ such that for all $\bw' \in O_{\bw}$, $a(\bw'):= \bxi_g^\top \bw' + \sqrt{\|\bw'\|_2^2+\sigma^2} \|\bxi_h\|_2>0$. 
Thus for all $\bw' \in O_{\bw},L_\lambda(\bw') = \frac{1}{2}a(\bw')^2 + \frac{\lambda}{2} \|\bw' + \bbeta\|_2^2$ is strictly convex, because $a(\bw')$ is strictly convex (due to its first argument being linear and its second argument being positively quadratic) 
and remains positive on $O_{\bw}$ and $x>0 \mapsto x^2$ is strictly increasing. Hence $\bw$ is the only minimizer of $L_\lambda$.

\textbf{Case 2}: for all minimizer $\bw$ we have $\bxi_g^\top \bw + \sqrt{\|\bw\|^2+\sigma^2} \|\bxi_h\|_2 \leq 0$.

In this case, from optimality condition, any minimizer $\bw$ must satisfies $\lambda(\bw+\bbeta) = 0$, which implies $\bw = -\bbeta$ and $L_\lambda$ has a unique minimizer.
\end{proof}

Then we show that the minimizer $\bw^*(\lambda)$ is close to $\bw(\lambda)$:

\begin{lemma}\label{thm:ridge_AO_concentration}
There exists a constant $\gamma$ such that for all $\epsilon \in (0,1]$,
$$\sup_{\lambda\in[\lambda_{\min},\lambda_{\max}]}\PP\left(\exists \bw \in \R^p, \|\bw - \bw(\lambda)\|_2^2>\epsilon \ \ \text{and}\ \ L_\lambda(\bw) \leq \min_{\bv \in \R^p} L_\lambda(\bv) + \gamma \epsilon\right) = o(\epsilon),$$
where $\bm{w}(\lambda)$ is defined as in \eqref{eqn:def:w_lambda}.
\end{lemma}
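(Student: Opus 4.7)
The plan is to exploit the strong convexity of $L_\lambda$ to reduce the lemma to concentration of its value. Observe first that the map $\bw \mapsto \sqrt{\|\bw\|_2^2+\sigma^2}$ is convex (it is the $\ell_2$-norm of $(\bw,\sigma) \in \R^{p+1}$), so the inner quantity $\bxi_g^\top\bw + \sqrt{\|\bw\|_2^2+\sigma^2}\|\bxi_h\|_2$ is convex in $\bw$. Composing with the convex nondecreasing map $(\cdot)_+^2$ preserves convexity, and adding the $\lambda$-strongly convex ridge penalty $\tfrac{\lambda}{2}\|\bw+\bbeta\|_2^2$ shows $L_\lambda$ is $\lambda$-strongly convex. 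Since $\lambda\geq\lambda_{\min}$, this yields the quadratic growth bound
\begin{equation*}
L_\lambda(\bw)\;\geq\; L_\lambda(\bw^*(\lambda)) + \tfrac{\lambda_{\min}}{2}\|\bw-\bw^*(\lambda)\|_2^2\qquad\text{for all }\bw.
\end{equation*}
In particular, if $L_\lambda(\bw)\leq \min L_\lambda + \gamma\epsilon$ then $\|\bw-\bw^*(\lambda)\|_2\leq\sqrt{2\gamma\epsilon/\lambda_{\min}}$. Combining with the triangle inequality, the event in the lemma forces $\|\bw^*(\lambda)-\bw(\lambda)\|_2 > \sqrt{\epsilon}-\sqrt{2\gamma\epsilon/\lambda_{\min}}$; choosing $\gamma:=\lambda_{\min}/8$ makes this at least $\sqrt{\epsilon}/2$. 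Thus it suffices to bound
\begin{equation*}
\sup_{\lambda\in[\lambda_{\min},\lambda_{\max}]}\PP\bigl(\|\bw^*(\lambda)-\bw(\lambda)\|_2 > \sqrt{\epsilon}/2\bigr) \;=\; o(\epsilon).
\end{equation*}

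Applying strong convexity again, now with the deterministic point $\bw(\lambda)$ as competitor, gives $\|\bw^*(\lambda)-\bw(\lambda)\|_2^2\leq \tfrac{2}{\lambda_{\min}}\bigl(L_\lambda(\bw(\lambda))-L_\lambda(\bw^*(\lambda))\bigr)$, so the task reduces to showing that both the upper proxy $L_\lambda(\bw(\lambda))$ and the lower proxy $L_\lambda(\bw^*(\lambda)) = \min_\bw L_\lambda(\bw)$ concentrate around the scalar optimum $\psi(\alpha_*,\tau_*)$ from \eqref{eqn:def:SO}, at a rate $o_P(\epsilon)$ uniform in $\lambda$. For $L_\lambda(\bw(\lambda))$, plugging the explicit affine expression \eqref{eqn:def:w_lambda} yields a quadratic form in $\bxi_g$, $\|\bxi_h\|_2$, and $\bxi_g^\top\bbeta/\sqrt{n}$; its expectation converges to $\psi(\alpha_*,\tau_*)$ by the non-rigorous computation in Section \ref{sec:optconvert}, and Gaussian Lipschitz concentration (or Hanson--Wright for the quadratic parts) gives exponential tails. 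For $L_\lambda(\bw^*(\lambda))$, the minimum of a $\lambda_{\min}$-strongly convex function is a Lipschitz functional of $(\bxi_g,\bxi_h)$ with constant controlled by $\lambda_{\min}^{-1}$ and the bounded norm of the optimizer (compare Lemma \ref{lemma:uvw_convergence}); another application of Gaussian Lipschitz concentration around its mean, together with a Taylor expansion of $\psi$ around its unique saddle $(\alpha_*,\tau_*)$ (strict concavity in $\alpha$ and convexity in $\tau$ following from uniqueness in Lemma C.1 of \cite{celentano2021cad} and the bounds \eqref{lemma:existence_fixed_point}), produces the required $o(\epsilon)$ bound.

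Uniformity over $\lambda\in[\lambda_{\min},\lambda_{\max}]$ is then automatic: all constants above depend on $\lambda$ only through $\lambda_{\min},\lambda_{\max}$, and $\bw(\lambda)$, $\alpha_*(\lambda)$, $\tau_*(\lambda)$ are bounded and Lipschitz in $\lambda$ by Lemma \ref{lemma:lipschitz_fixed_point}. The main obstacle is the quantitative control on the lower proxy: unlike $L_\lambda(\bw(\lambda))$, the value $L_\lambda(\bw^*(\lambda))$ is implicit, so one must simultaneously track the fluctuations of $\|\bxi_g\|_2^2/p$, $\|\bxi_h\|_2/\sqrt{n}$ and $\bxi_g^\top\bbeta/\sqrt{n}$, as well as the dependence of the optimizing $\tau$ in the variational identity $\sqrt{\|\bw\|_2^2+\sigma^2} = \min_{\tau\geq 0}\bigl\{(\|\bw\|_2^2+\sigma^2)/(2\tau)+\tau/2\bigr\}$ on those fluctuations, and convert each small perturbation into a small perturbation of the AO value via the second-order behaviour of $\psi$ at $(\alpha_*,\tau_*)$. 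Once this quantitative transfer is in place, the union bound over the two concentration events completes the argument.
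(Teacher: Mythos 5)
Your proposal is a valid plan that in spirit parallels the argument in \cite[Theorem B.1]{miolane2021distribution}, which is exactly what the paper's own proof defers to. The decisive simplification you correctly identify is that the ridge AO objective $L_\lambda$ is globally $\lambda$-strongly convex (the first summand is a convex nondecreasing transform of a convex function, and the ridge term supplies the curvature), so quadratic growth around $\bw^*(\lambda)$ comes for free, whereas the Lasso case handled in \cite{miolane2021distribution} must establish a more delicate local strong convexity. Your two applications of strong convexity and the choice $\gamma = \lambda_{\min}/8$ correctly reduce the lemma to showing $\PP\bigl(L_\lambda(\bw(\lambda)) - \min_{\bv \in \R^p} L_\lambda(\bv) > c\epsilon\bigr) = o(\epsilon)$ uniformly in $\lambda$.

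What remains---and what you honestly flag as ``the main obstacle''---is the two-sided concentration of $\min_{\bv} L_\lambda(\bv)$ around $\psi(\alpha_*,\tau_*)$. Gaussian Lipschitz concentration (once the minimizer is shown to lie in a ball of bounded radius, as in Lemma~\ref{lemma:uvw_convergence}) does give subgaussian fluctuations around $\E\bigl[\min_{\bv} L_\lambda(\bv)\bigr]$, but identifying that center with $\psi(\alpha_*,\tau_*)$, and in particular proving the lower bound $\min_{\bv} L_\lambda(\bv) \geq \psi(\alpha_*,\tau_*) - o(\epsilon)$, requires the full scalar reduction sketched in \eqref{w_optimization}--\eqref{eqn:def:SO}: exchanging the order of $\min$ and $\max$, tracking the fluctuations of $\|\bxi_g\|_2^2/p$, $\|\bxi_h\|_2/\sqrt{n}$ and $\bxi_g^\top\bbeta/\sqrt{n}$, and using nondegeneracy of the SO saddle $(\alpha_*,\tau_*)$. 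This is precisely the technical content that the paper delegates to \cite{miolane2021distribution}, so your proposal is at comparable rigor and your diagnosis of where the work lies is accurate; a complete argument would still need to reproduce the ridge analogue of their Proposition on concentration of the AO value.
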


\begin{proof}
The proof follows analogous to \cite[Theorem B.1]{miolane2021distribution}, which handles the case of the Lasso. For conciseness, we refer readers to their proof. Note that we can safely add the sup in front since $o(\epsilon)$  hides constants that do not depend on $\lambda$.
\end{proof}

\subsubsection{Connecting PO with AO}\label{subsubsec:connect_PO_AO}
We introduce a useful corollary that follows from the Convex Gaussian Min-Max Theorem (Theorem \ref{thm:CGMT}) and connects PO with AO.

\begin{cor}\label{cor:CGMT_single_estimator}
\ 
\begin{itemize}
    \item Let $D \subset \R^p$ be a closed set. We have for all $t \in \R$, 
    $$\PP(\min_{\bw \in D} \mcl{C}_\lambda(\bw) \leq t) \leq 2 \PP(\min_{\bw\in D} L_\lambda(\bw) \leq t).$$
    \item Let $D \subset \R^p$ be a convex closed set. We have for all $t \in \R$, 
    $$\PP(\min_{\bw \in D} \mcl{C}_\lambda(\bw) \geq t) \leq 2 \PP(\min_{\bw\in D} L_\lambda(\bw) \geq t).$$
\end{itemize}
\end{cor}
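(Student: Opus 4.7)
The plan is to apply the Convex Gaussian Min-Max Theorem (CGMT, Theorem~\ref{thm:CGMT}) to the saddle-point formulations already exhibited in \eqref{wu_optimization} and \eqref{w_optimization}. By Fenchel duality, $\mcl{C}_\lambda(\bw)=\max_{\bu\in\R^n}c_\lambda(\bw,\bu)$, and a direct inner maximization yields $L_\lambda(\bw)=\max_{\bu\in\R^n}l_\lambda(\bw,\bu)$, so the corollary reduces to comparing the saddle values $\min_{\bw\in D}\max_{\bu}c_\lambda$ and $\min_{\bw\in D}\max_{\bu}l_\lambda$. Since CGMT demands compact dual sets, I would truncate to $S_u^B:=\{\bu\in\R^n:\|\bu\|_2\le B\}$ and, when $D$ is merely closed, further intersect with a ball $\{\|\bw\|_2\le R\}$, apply CGMT in the truncated problem, and then let $B,R\to\infty$.

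The key to matching the AO cleanly is to bundle the two Gaussian ingredients---the design $\bX$ and the noise $\bz$---into a single $n\times(p+1)$ matrix $\tilde\bX=[\bX\mid\bz]$ with i.i.d.\ $\mcn(0,1)$ entries (valid under Assumption~\ref{assumptions2}(2),(4)) and to augment the primal variable as $\tilde\bw=(\bw,-\sigma)$, so that $\bX\bw-\sigma\bz=\tilde\bX\tilde\bw$ and $\|\tilde\bw\|_2=\sqrt{\|\bw\|_2^2+\sigma^2}$. Applying CGMT to the bilinear form $\tfrac{1}{n}\bu^\top\tilde\bX\tilde\bw$ replaces it, in the Gordon AO, by $\tfrac{1}{n}\sqrt{\|\bw\|_2^2+\sigma^2}\,\bxi_h^\top\bu+\tfrac{1}{n}\|\bu\|_2\,\bxi_g^\top\bw$ with $\bxi_g\sim\mcn(\bm{0},\bm{I}_p)$ and $\bxi_h\sim\mcn(\bm{0},\bm{I}_n)$ independent, which is precisely the Gaussian structure of $l_\lambda$. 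For part (i), CGMT's general inequality (which needs no convexity) gives, for all $t,B,R$, $\PP(\min_{\bw\in D,\|\bw\|_2\le R}\max_{\bu\in S_u^B}c_\lambda\le t)\le 2\PP(\min_{\bw\in D,\|\bw\|_2\le R}\max_{\bu\in S_u^B}l_\lambda\le t)$. For $B$ larger than an a.s.\ finite threshold the inner maximizer $\bu^\star=\bX\bw-\sigma\bz$ lies in $S_u^B$ uniformly over $\{\|\bw\|_2\le R\}$, so the truncated inner max equals $\mcl{C}_\lambda(\bw)$ exactly; analogously for $l_\lambda$. The coercive term $\tfrac{\lambda}{2}\|\bw+\bbeta\|_2^2$ then forces $\min_{D\cap\{\|\bw\|_2\le R\}}\mcl{C}_\lambda=\min_D\mcl{C}_\lambda$ for $R$ large enough almost surely, and the first claim follows by monotone passage to the limit of the events $\{\min\max\le t\}$.

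For part (ii), CGMT's sharper inequality $\PP(\Phi\ge t)\le 2\PP(\phi\ge t)$ requires the saddle integrand to be convex-concave and both feasible sets convex compact. The augmented integrand $c_\lambda(\tilde\bw,\bu)=\tfrac{1}{n}\bu^\top\tilde\bX\tilde\bw-\tfrac{1}{2n}\|\bu\|_2^2+\tfrac{\lambda}{2}\|\bw+\bbeta\|_2^2$ is bilinear in $(\tilde\bw,\bu)$ plus a positive definite quadratic in $\bw$ and a negative definite quadratic in $\bu$, hence convex in $\tilde\bw$ and concave in $\bu$; the augmented primal set $\{(\bw,-\sigma):\bw\in D,\|\bw\|_2\le R\}$ inherits convexity from $D$; and $S_u^B$ is convex compact. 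CGMT therefore supplies the reverse inequality in the truncated problem, and the identical monotone limit in $B,R$ concludes.

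The main obstacle is that CGMT cannot be invoked directly at the level of $l_\lambda$, since $\sqrt{\|\bw\|_2^2+\sigma^2}\,\bxi_h^\top\bu$ fails to be convex-concave in $(\bw,\bu)$ once $\bxi_h^\top\bu$ changes sign. The remedy above is to apply CGMT to the genuinely bilinear $c_\lambda(\tilde\bw,\bu)$ and to perform the inner max over $\bu$ only after the Gordon inequality has been established, extracting $L_\lambda$ from the augmented AO via $\|\tilde\bw\|_2=\sqrt{\|\bw\|_2^2+\sigma^2}$. A secondary technical point is to interchange the limits $B,R\to\infty$ with the probability, which is settled by the almost sure eventual constancy of the truncated min-max values noted above, so the events $\{\min\max\le t\}$ and $\{\min\max\ge t\}$ are exact monotone limits of their truncated counterparts.
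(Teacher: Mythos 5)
Your proposal is correct and uses the same compactify--apply-CGMT--decompactify strategy as the paper's proof, differing mainly in how the compactification is organized. The paper fixes $\eps$, introduces the high-probability event \eqref{event:tightness} on which the minmax in both $\bw$ and $\bu$ is achieved in balls of deterministic radii, applies Theorem~\ref{thm:CGMT} there with a $2\eps$ leakage, and then sends $\eps\to 0$; you instead truncate to $B(0,R)$ in $\bw$ and $S_u^B$ in $\bu$, apply CGMT, and send $B,R\to\infty$. Both work, though your phrase ``monotone passage to the limit'' is imprecise: the truncated minmax is increasing in $B$ and decreasing in $R$, so the joint limit is not monotone, and what actually carries the argument is the almost-sure eventual constancy you note, together with dominated convergence of the indicator functions. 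You do add something the paper leaves implicit: making the augmentation $\tilde\bX=[\bX\mid\bz]$, $\tilde\bw=(\bw,-\sigma)$ explicit is a genuine clarification, since a direct application of CGMT to $c_\lambda$ without augmentation would yield $\|\bw\|_2\bxi_h^\top\bu$ rather than the $\sqrt{\|\bw\|_2^2+\sigma^2}\,\bxi_h^\top\bu$ appearing in $l_\lambda$, so the augmentation really is needed to match the AO. One caveat: you assert that the augmented Gordon AO is ``precisely the Gaussian structure of $l_\lambda$'', but the CGMT replacement actually reads $\tfrac{1}{n}\|\bu\|_2\,\tilde{\bxi}_g^\top\tilde\bw+\tfrac{1}{n}\|\tilde\bw\|_2\,\bxi_h^\top\bu$ with $\tilde{\bxi}_g\sim\mcn(\bm{0},\bm{I}_{p+1})$, and expanding $\tilde{\bxi}_g^\top\tilde\bw=\bxi_g^\top\bw-\sigma\tilde{\xi}_{g,p+1}$ exposes an extra scalar $-\sigma\|\bu\|_2\tilde{\xi}_{g,p+1}/n$ that does not appear in $l_\lambda$ as defined in \eqref{w_optimization}. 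The paper's own invocation of Theorem~\ref{thm:CGMT} to relate $c_\lambda$ and $l_\lambda$ makes the same omission (the form of $l_\lambda$ is inherited from Miolane and Montanari), so this is an imprecision you inherited rather than introduced; a fully rigorous derivation would need to absorb or bound that extra scalar term rather than silently dropping it.
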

\begin{proof}
We will only prove the first point, since the second follows similarly.
Suppose that $\bX, \bz, \bg, \bh$ live on the same probability space and are independent. Let $\eps \in (0,1]$. Let $\sigma_{\max}(\bX)$ denote the largest singular value of $\bX$. By tightness we can find a constant $C>0$ such that the event
\begin{equation}\label{event:tightness}
    \{\sigma_{\max}(\bX)\leq C\sqrt{n}, \|\bz\|_2 \leq C\sqrt{n}, \|\bxi_g\|_2 \leq C\sqrt{n}, \|\bxi_h\|_2 \leq C\sqrt{n}\}
\end{equation}
has probability at least $1-\eps$. 
Let $D\subset \R^p$ be a non-empty closed set. Fix some $\bw_0 \in D$, on event \ref{event:tightness} both $\mcl{C}_\lambda(\bw_0)$ and $L_\lambda(\bw_0)$ are bounded by some constant $R$. Now for any $\bw$ such that $\mcl{C}_\lambda(\bw) \leq R$, we have $\|\bw + \bbeta\|_2^2 \leq \mcl{C}_\lambda(\bw) \leq R$. 

This means that there exists $R_1$ such that $\|\bw\|^2 \leq R_1$ on event \eqref{event:tightness}. Since this is true for all such $\bw$, then the minimum of $\mcl{C}_\lambda$ over $D$ is achieved on $D \cap B(0,R_1)$. Similarly, the minimum of $L_\lambda$ over $D$ is achieved on $D \cap B(0,R_2)$. WLOG, we can assume $R_1=R_2$. On event \ref{event:tightness} we have
$$\min_{\bw\in D}\mcl{C}_\lambda(\bw) = \min_{\bw \in D\cap B(0,R_1)}\mcl{C}_\lambda(\bw) = \min_{\bw \in D\cap B(0,R_1)} \max_{\bu \in B(0,R_3)}c_\lambda(\bw,\bu)$$
for some non-random $R_3>0$. Thus, for all $t \in R$, we have
$$\PP(\min_{\bw \in D}\mcl{C}_\lambda(\bw)\leq t) \leq \PP(\min_{\bw \in D\cap B(0,R_1)} \max_{\bu \in B(0,R_3)}c_\lambda(\bw,\bu)\leq t) + \eps,$$
and similarly
$$\PP(\min_{\bw \in D\cap B(0,R_1)} \max_{\bu \in B(0,R_3)}l_\lambda(\bw,\bu)\leq t) \leq \PP(\min_{\bw \in D}L_\lambda(\bw)\leq t) + \eps.$$
Now we know that $D\cap B(0,R_1)$ and $B(0,R_3)$ are compact, we can apply Theorem \ref{thm:CGMT} to get
$$\PP(\min_{\bw \in D} \mcl{C}_\lambda(\bw) \leq t) \leq 2 \PP(\min_{\bw\in D} L_\lambda(\bw) \leq t)+2\eps.$$

The corollary then follows from the fact that one can take $\eps$ arbitrarily small.
\end{proof}

Now for a fixed $\lambda$, consider the set $D_\lambda^\eps = \{\bw \in \R^p | \|\bw - \bw(\lambda)\|_2^2 \geq \eps\}$. Obviously this is a closed set. 
By first applying parts 1 and 2 from Corollary \ref{cor:CGMT_single_estimator} to the convex closed domain $\R^p$, then applying part 1 to the closed domain $D_\lambda^\eps$, we can show the following proposition:
\begin{lemma}\label{prop:AO_PO_set_min} For all $\epsilon \in (0,1]$,
\begin{equation*}
    \PP\left(\min_{\bw \in D_\lambda^\eps} \mcl{C}_\lambda(\bw) \leq \min_{\bw \in \R^p}\mcl{C}_\lambda(\bw) + \eps\right) \leq 2\PP\left(\min_{\bw \in D_\lambda^\eps} L_\lambda(\bw) \leq \min_{\bw \in \R^p}L_\lambda(\bw) + 3\eps\right) + o(\epsilon).
\end{equation*}
\end{lemma}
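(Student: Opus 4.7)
The statement has the shape of a CGMT transfer: we want to upper bound a probability involving the primary cost $\mcl{C}_\lambda$ on a closed (non-convex) set $D_\lambda^\eps$ by the analogous probability for the auxiliary cost $L_\lambda$, losing only an extra $o(\eps)$ and a slightly enlarged window ($3\eps$ instead of $\eps$). The awkward feature is that the threshold $\min_{\R^p}\mcl{C}_\lambda(\bw)$ inside the event $A := \{\min_{D_\lambda^\eps}\mcl{C}_\lambda \leq \min_{\R^p}\mcl{C}_\lambda + \eps\}$ is itself random, so we cannot directly plug it into Corollary \ref{cor:CGMT_single_estimator}, whose guarantees are stated for deterministic levels $t$. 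The standard remedy is to insert a deterministic pivot $\phi_\lambda$ between the two sides and absorb the errors via two applications of the corollary (one on each side).

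First I would choose $\phi_\lambda$ to be the deterministic asymptotic value of the AO derived from the scalar problem (\ref{eqn:def:SO}), so that $\phi_\lambda \approx L_\lambda(\bw(\lambda))$. Using Lemma \ref{thm:ridge_AO_concentration} (which pins down the minimizer $\bw^*(\lambda)$ near $\bw(\lambda)$ up to an $o(\eps)$-probability event) together with standard Gaussian-Lipschitz concentration applied to the functional $\bxi_g,\bxi_h \mapsto \min_\bw L_\lambda(\bw)$, I would establish the sharp quantitative concentration
\[
\PP\!\left(\min_{\bw\in\R^p} L_\lambda(\bw) \geq \phi_\lambda + \eps\right) + \PP\!\left(\min_{\bw\in\R^p} L_\lambda(\bw) \leq \phi_\lambda - \eps\right) = o(\eps).
\]
Feeding these two one-sided bounds into Corollary \ref{cor:CGMT_single_estimator} applied to the convex set $\R^p$ (part 2 for the $\geq$ direction, part 1 for the $\leq$ direction) yields the analogous two-sided $o(\eps)$-concentration for $\min_{\R^p}\mcl{C}_\lambda$ around $\phi_\lambda$.

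Next I would decompose $A$ through the pivot. On the event $\{\min_{\R^p}\mcl{C}_\lambda \leq \phi_\lambda + \eps\}$, the definition of $A$ forces $\min_{D_\lambda^\eps}\mcl{C}_\lambda \leq \phi_\lambda + 2\eps$; on the complementary event $\{\min_{\R^p}\mcl{C}_\lambda > \phi_\lambda + \eps\}$, the probability is $o(\eps)$ by the previous paragraph. Hence
\[
\PP(A) \leq \PP\!\left(\min_{\bw\in D_\lambda^\eps}\mcl{C}_\lambda(\bw) \leq \phi_\lambda + 2\eps\right) + o(\eps).
\]
Now I apply Corollary \ref{cor:CGMT_single_estimator}, part 1, to the closed set $D_\lambda^\eps$ at deterministic level $t=\phi_\lambda+2\eps$ to obtain $\PP(A) \leq 2\,\PP(\min_{D_\lambda^\eps} L_\lambda \leq \phi_\lambda + 2\eps) + o(\eps)$.

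The final step is to convert the deterministic level $\phi_\lambda+2\eps$ on the AO side back into the random level $\min_{\R^p} L_\lambda + 3\eps$. On the event $\{\min_{\R^p} L_\lambda \geq \phi_\lambda - \eps\}$ (whose complement has probability $o(\eps)$ from the concentration step) we have $\phi_\lambda + 2\eps \leq \min_{\R^p} L_\lambda + 3\eps$, so
\[
\PP\!\left(\min_{D_\lambda^\eps} L_\lambda \leq \phi_\lambda + 2\eps\right) \leq \PP\!\left(\min_{D_\lambda^\eps} L_\lambda \leq \min_{\R^p} L_\lambda + 3\eps\right) + o(\eps),
\]
which delivers the desired bound. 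The main technical obstacle is calibrating the pivot $\phi_\lambda$ so that the concentration of $\min_{\R^p} L_\lambda$ around it holds at the sharp $o(\eps)$ rate (rather than the weaker $o(1)$ rate that would make the whole bookkeeping collapse); this is where Lemma \ref{thm:ridge_AO_concentration} together with the strong convexity of $L_\lambda$ in a neighborhood of $\bw(\lambda)$ are essential.
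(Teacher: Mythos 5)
Your proposal reproduces the pivot-through-a-deterministic-level strategy of \cite[Section C.1.1]{miolane2021distribution}, which is exactly what the paper invokes (the paper cites that section and omits the details), so the approach is the same. One small attribution quibble: the concentration of $\min_{\bw}L_\lambda(\bw)$ around $\phi_\lambda$ at rate $o(\eps)$ is not a consequence of Lemma \ref{thm:ridge_AO_concentration} (which is about the minimizer, and in the Miolane--Montanari hierarchy is in fact \emph{downstream} of this concentration, via local strong convexity); it is established directly by restricting to a high-probability compact ball in $\bw$, noting the map $(\bxi_g,\bxi_h)\mapsto\min_{\bw\in B(0,R)}L_\lambda(\bw)$ is $O(R/\sqrt{n})$-Lipschitz, applying Gaussian concentration, and identifying the mean with the scalar value $\phi_\lambda$. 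With that corrected, the two applications of Corollary \ref{cor:CGMT_single_estimator} (part 2 on $\R^p$ to center $\min\mcl{C}_\lambda$, part 1 on $D_\lambda^\eps$) and the final recentering yield the stated inequality with $3\eps$, as you computed.
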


\begin{proof}
The proof of Lemma \ref{prop:AO_PO_set_min} is similar to Section C.1.1 in \cite{miolane2021distribution}, where the above was established for the Lasso.
\end{proof}

Combining Lemmas \ref{prop:AO_PO_set_min} and \ref{thm:ridge_AO_concentration}, we arrive at the following lemma, where we return back to the cost $\mcl{L}_\lambda(\bb)$:

\begin{lemma}\label{lemma:ridge_W2}
There exists a constant $\gamma>0$ such that for all $\epsilon \in (0,1]$, 
\begin{equation*}
    \sup_{\lambda \in [\lambda_{\min},\lambda_{\max}]} \PP\left(\exists \bb \in \R^p,\ \|\bb - \hat\bbeta^f(\lambda)\|_2^2 \geq \epsilon \ \ \text{and} \ \ \mcl{L}_\lambda(\bb) \leq \min\mcl{L}_\lambda+\gamma\epsilon \right) = o(\epsilon).
\end{equation*}
\end{lemma}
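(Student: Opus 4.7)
The plan is a direct chaining: reduce the ridge loss $\mcl{L}_\lambda$ to the centered loss $\mcl{C}_\lambda$ of \eqref{wu_optimization} via the change of variable $\bw = \bb - \bbeta$, apply Lemma \ref{prop:AO_PO_set_min} to pass from the primary cost $\mcl{C}_\lambda$ to the auxiliary cost $L_\lambda$, and then invoke Lemma \ref{thm:ridge_AO_concentration} to control the resulting AO event. Uniformity in $\lambda$ is automatic, because both ingredients already carry the required $\sup_{\lambda}$.

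First, because $\by = \bX\bbeta + \sigma\bz$, one has $\by - \bX\bb = \sigma\bz - \bX(\bb-\bbeta)$, so $\mcl{L}_\lambda(\bb) = \mcl{C}_\lambda(\bb - \bbeta)$ pointwise and in particular $\min \mcl{L}_\lambda = \min \mcl{C}_\lambda$. Combined with the identification $\hat\bbeta^f(\lambda) = \bw(\lambda) + \bbeta$ from \eqref{eq:wandbeta}, this gives $\|\bb - \hat\bbeta^f(\lambda)\|_2^2 = \|\bw - \bw(\lambda)\|_2^2$. Hence, setting $\bw = \bb - \bbeta$, the event whose probability I need to bound is exactly
\[
\Big\{\min_{\bw \in D_\lambda^\epsilon}\mcl{C}_\lambda(\bw) \leq \min \mcl{C}_\lambda + \gamma\epsilon\Big\}.
\]

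Next, I would apply Lemma \ref{prop:AO_PO_set_min} with $\tilde\epsilon = \epsilon$. Choosing the constant $\gamma \in (0,1]$ ensures that the target event above is contained in $\{\min_{D_\lambda^\epsilon}\mcl{C}_\lambda \leq \min \mcl{C}_\lambda + \epsilon\}$, which yields
\[
\PP(\text{target}) \leq 2\,\PP\!\left(\min_{\bw\in D_\lambda^\epsilon} L_\lambda(\bw) \leq \min L_\lambda + 3\epsilon\right) + o(\epsilon).
\]
Finally I would apply Lemma \ref{thm:ridge_AO_concentration} at rescaled parameter $\epsilon' = 3\epsilon/\gamma_0$ (where $\gamma_0$ is the constant furnished there): this gives $\PP(\min_{D_\lambda^{\epsilon'}} L_\lambda \leq \min L_\lambda + \gamma_0 \epsilon') = \PP(\min_{D_\lambda^{\epsilon'}} L_\lambda \leq \min L_\lambda + 3\epsilon) = o(\epsilon'/1) = o(\epsilon)$. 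Provided $\epsilon' \leq \epsilon$, one has the inclusion $D_\lambda^\epsilon \subseteq D_\lambda^{\epsilon'}$, which transfers the bound to $\PP(\min_{D_\lambda^\epsilon} L_\lambda \leq \min + 3\epsilon) = o(\epsilon)$, finishing the proof.

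The main obstacle is calibrating the multiplicative constants. Lemma \ref{prop:AO_PO_set_min} inflates the AO gap by a factor of $3$, while Lemma \ref{thm:ridge_AO_concentration} only concentrates when that gap is covered by its native rate $\gamma_0 \epsilon'$. The above chaining therefore requires $\gamma_0 \geq 3$ so that $\epsilon' = 3\epsilon/\gamma_0 \leq \epsilon$. If the proof of Lemma \ref{thm:ridge_AO_concentration} produces $\gamma_0 < 3$, the fix is to instead invoke Lemma \ref{prop:AO_PO_set_min} with $\tilde\epsilon = \gamma\epsilon$ for a correspondingly smaller $\gamma$ (so that $3\gamma \leq \gamma_0$) and verify the inclusion $D_\lambda^\epsilon \subseteq D_\lambda^{\gamma\epsilon}$. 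Either way, the constant $\gamma$ in the statement of Lemma \ref{lemma:ridge_W2} is simply taken small enough to absorb the factor of $3$ from the CGMT comparison and any constants from Lemma \ref{thm:ridge_AO_concentration}.
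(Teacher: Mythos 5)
Your overall strategy --- pass from $\mcl{L}_\lambda$ to $\mcl{C}_\lambda$ via the change of variable $\bw = \bb - \bbeta$, bridge to the auxiliary cost $L_\lambda$ through Lemma \ref{prop:AO_PO_set_min}, and then conclude with the concentration Lemma \ref{thm:ridge_AO_concentration} --- is exactly the paper's route. The problem lies in the constant-calibration step, which you yourself flag, and your proposed fix does not actually resolve it.

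The obstruction is real: Lemma \ref{thm:ridge_AO_concentration} only supplies some (likely small) $\gamma_0$, so your first chain fails whenever $\gamma_0 < 3$. Your ``fix'' of invoking Lemma \ref{prop:AO_PO_set_min} at $\tilde\epsilon = \gamma\epsilon$ keeps the set and the gap tied together: you obtain $\PP\bigl(\min_{D_\lambda^{\gamma\epsilon}} L_\lambda \leq \min L_\lambda + 3\gamma\epsilon\bigr)$, and to bound this by Lemma \ref{thm:ridge_AO_concentration} at some parameter $\epsilon''$ you need both $D_\lambda^{\gamma\epsilon} \subseteq D_\lambda^{\epsilon''}$ (i.e.\ $\epsilon'' \leq \gamma\epsilon$) and $\gamma_0\epsilon'' \geq 3\gamma\epsilon$, which together force $\gamma_0 \geq 3$ --- the same constraint you started with. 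Shrinking $\gamma$ does not help because the gap on the RHS shrinks proportionally. The way out is to notice that the AO--PO bridge behind Lemma \ref{prop:AO_PO_set_min} (Corollary \ref{cor:CGMT_single_estimator} together with the Miolane--Montanari Section~C.1.1 argument) is a statement about \emph{arbitrary} closed sets: the set $D$ and the gap parameter are genuinely independent. So one may hold the set fixed at $D_\lambda^\epsilon$ while taking the gap to be $\gamma_{\mathrm{AO}}\epsilon/3$. That is precisely what the paper does: it sets the constant of Lemma \ref{lemma:ridge_W2} to $\gamma_{\mathrm{AO}}/3$ and applies the bridge with set $D_\lambda^\epsilon$ and gap $\gamma_{\mathrm{AO}}\epsilon/3$, obtaining $2\PP\bigl(\min_{D_\lambda^\epsilon} L_\lambda \leq \min L_\lambda + \gamma_{\mathrm{AO}}\epsilon\bigr) + o(\epsilon)$, after which Lemma \ref{thm:ridge_AO_concentration} at parameter $\epsilon$ closes the argument with no requirement on the size of $\gamma_{\mathrm{AO}}$.
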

\begin{proof}
Let $\gamma > 0$ be from Lemma \ref{thm:ridge_AO_concentration}. We have
\begin{align*}
&\sup_{\lambda \in [\lambda_{\min},\lambda_{\max}]} \PP\left(\exists \bb \in \R^p,\ \|\bb - \hat\bbeta^f(\lambda)\|_2^2 \geq \epsilon \ \ \text{and} \ \ \mcl{L}_\lambda(\bb) \leq \min\mcl{L}_\lambda+\gamma\epsilon/3 \right) \\
=& \sup_{\lambda \in [\lambda_{\min},\lambda_{\max}]} \PP\left(\min_{\bw\in D_\lambda^\epsilon} \mcl{C}_\lambda(\bw) \leq \min\mcl{C}_\lambda+\gamma\epsilon/3 \right) \\
\leq & \sup_{\lambda \in [\lambda_{\min},\lambda_{\max}]} 2\PP\left(\min_{\bw \in D_\lambda^\eps} L_\lambda(\bw) \leq \min_{\bw \in \R^p}L_\lambda(\bw) + \gamma\eps\right) + o(\epsilon)\\
=& o(\epsilon),
\end{align*}
where the first equality comes from the definition of $\mathcal{C}_\lambda$ in \eqref{wu_optimization}, the inequality comes from Lemma \ref{prop:AO_PO_set_min} and the last equality comes from \ref{thm:ridge_AO_concentration}. Thus proved.
\end{proof}

\subsubsection{Uniform Control over $\lambda$}\label{subsubsec:uniform_control}

We now establish the result with uniform control over $\lambda$ (i.e. bring the $\sup$ in Lemma \ref{lemma:ridge_W2}) inside).

We first prove the following Lemma:

\begin{lemma}\label{lemma:ridge_loss_difference_lambda}
There exists constant $K$ such that 
\begin{equation*}
    \PP\left(\forall \lambda,\lambda' \in[\lambda_{\min},\lambda_{\max}], \mcl{L}_{\lambda'}(\hat{\bbeta}(\lambda)) \leq \mcl{L}_{\lambda'}(\hat{\bbeta}(\lambda')) + K|\lambda - \lambda'|\right) = 1-o(1).
\end{equation*}
\end{lemma}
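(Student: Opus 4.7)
The plan is to mimic almost verbatim the strategy used in Lemma \ref{lemma:ridge_loss_C_difference_lambda}, since the only real differences are cosmetic: here we work with the original cost $\mcl{L}_\lambda(\bb) = \frac{1}{2n}\|\by - \bX\bb\|_2^2 + \frac{\lambda}{2}\|\bb\|_2^2$ and its minimizer $\hat\bbeta(\lambda)$, rather than the reparametrized version $\mcl{C}_\lambda$ in the variable $\bw = \bb - \bbeta$. The dependence of the cost on $\lambda$ is still linear, which is what drives the argument.

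First I would telescope the difference $\mcl{L}_{\lambda'}(\hat\bbeta(\lambda)) - \mcl{L}_{\lambda'}(\hat\bbeta(\lambda'))$ by inserting $\pm \mcl{L}_\lambda(\hat\bbeta(\lambda)) \mp \mcl{L}_\lambda(\hat\bbeta(\lambda'))$. The middle term $\mcl{L}_\lambda(\hat\bbeta(\lambda)) - \mcl{L}_\lambda(\hat\bbeta(\lambda'))$ is non-positive by optimality of $\hat\bbeta(\lambda)$ for $\mcl{L}_\lambda$. The two remaining terms simplify via $\mcl{L}_{\lambda'}(\bb) - \mcl{L}_\lambda(\bb) = \frac{\lambda'-\lambda}{2}\|\bb\|_2^2$, giving
\begin{equation*}
\mcl{L}_{\lambda'}(\hat\bbeta(\lambda)) - \mcl{L}_{\lambda'}(\hat\bbeta(\lambda')) \;\leq\; \tfrac{|\lambda'-\lambda|}{2}\bigl(\|\hat\bbeta(\lambda)\|_2^2 + \|\hat\bbeta(\lambda')\|_2^2\bigr).
\end{equation*}

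Second, I would bound $\sup_{\lambda\in[\lambda_{\min},\lambda_{\max}]}\|\hat\bbeta(\lambda)\|_2^2$ uniformly with high probability. By optimality against $\bb = \bm 0$,
\begin{equation*}
\tfrac{\lambda}{2}\|\hat\bbeta(\lambda)\|_2^2 \;\leq\; \mcl{L}_\lambda(\hat\bbeta(\lambda)) \;\leq\; \mcl{L}_\lambda(\bm 0) \;=\; \tfrac{1}{2n}\|\by\|_2^2,
\end{equation*}
so $\|\hat\bbeta(\lambda)\|_2^2 \leq \|\by\|_2^2/(n\lambda_{\min})$ for every $\lambda\geq \lambda_{\min}$ deterministically. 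Under Assumption \ref{assumptions2}, $\tfrac{1}{n}\|\by\|_2^2 = \tfrac{1}{n}\|\bX\bbeta+\sigma\bz\|_2^2$ is bounded by a constant with probability $1-o(1)$, using Corollary \ref{largest_singular_value} for $\|\bX\|_\op/\sqrt n$, the fact that $\|\bbeta\|_2$ is bounded by Assumption \ref{assumptions}(ii), and standard Gaussian concentration of $\|\bz\|_2/\sqrt n$. Combining these two steps yields the claim with $K$ proportional to $1/\lambda_{\min}$ times the high-probability upper bound on $\tfrac{1}{n}\|\by\|_2^2$.

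There is essentially no obstacle here: the only subtlety is that the telescoping bound holds pointwise for each fixed $\lambda,\lambda'$, but the uniform control over $[\lambda_{\min},\lambda_{\max}]$ follows because the deterministic inequality $\|\hat\bbeta(\lambda)\|_2^2 \leq \|\by\|_2^2/(n\lambda_{\min})$ holds for all $\lambda$ simultaneously on a single high-probability event that depends only on $\bX$ and $\bz$. Hence the universal constant $K$ can be chosen independently of $\lambda,\lambda'$, and the conclusion is uniform as required.
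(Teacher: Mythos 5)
Your proposal is correct and uses exactly the same telescoping-plus-optimality argument that the paper deploys in Lemma \ref{lemma:ridge_loss_C_difference_lambda}; the paper proves Lemma \ref{lemma:ridge_loss_difference_lambda} simply by citing that lemma (note $\mcl{L}_\lambda(\bb)=\mcl{C}_\lambda(\bb-\bbeta)$ and $\hat\bbeta(\lambda)=\hat\bw_\rmR(\lambda)+\bbeta$, so the two statements are equivalent). The only cosmetic difference is the reference point for the optimality bound: you test against $\bb=\bm{0}$, which forces you to bound $\tfrac{1}{n}\|\by\|_2^2$ and hence to invoke $\|\bX\|_{\op}/\sqrt n$, whereas the paper tests against $\bw=\bm 0$ (i.e.\ $\bb=\bbeta$), for which $\mcl{C}_\lambda(\bm 0)=\tfrac{1}{2n}\|\sigma\bz\|_2^2+\tfrac{\lambda}{2}\|\bbeta\|_2^2$ needs only noise concentration and $\|\bbeta\|_2$ bounded; both choices work and give the same conclusion.
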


\begin{proof}
The proof follows directly from Lemma \ref{lemma:ridge_loss_C_difference_lambda}.
\end{proof}

Then we prove the following proposition:
\begin{lemma}\label{lemma:ridge_W2_Lipschitz}
For $\lambda_1,\lambda_2 \in [\lambda_{\min},\lambda_{\max}]$, we have $\E[\|\hat\bbeta^f(\lambda_1) - \hat\bbeta^f(\lambda_2)\|_2] \leq M|\lambda_1-\lambda_2|$ for some constant $M$
\end{lemma}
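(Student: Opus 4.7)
The plan is to exploit the explicit closed-form representation of $\hat\bbeta^f(\lambda)$ derived in Section \ref{subsubsec:connect_AO_SO}. By \eqref{eq:wandbeta} and \eqref{eqn:def:w_lambda}, we have
\begin{equation*}
\hat\bbeta^f(\lambda) - \bbeta \;=\; \bw(\lambda) \;=\; -\,A(\lambda)\,\bxi_g \;-\; B(\lambda)\,\bbeta,
\end{equation*}
where $A(\lambda) := \frac{\alpha_\star(\lambda)/\sqrt{n}}{\alpha_\star(\lambda)/\tau_\star(\lambda) + \lambda}$ and $B(\lambda) := \frac{\lambda}{\alpha_\star(\lambda)/\tau_\star(\lambda) + \lambda}$. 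Recall the identification $\tau_\star = \sqrt{n}\,\tau_R$ and $\alpha_\star = \tau_\star\,\zeta_R$ obtained at the end of Section \ref{subsubsec:connect_AO_SO}, which immediately gives $\alpha_\star/\tau_\star = \zeta_R$ and $\alpha_\star/\sqrt{n} = \tau_R\,\zeta_R$.

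First I would show that $A$ and $B$ are scalar Lipschitz functions of $\lambda$ on $[\lambda_{\min},\lambda_{\max}]$ with controlled Lipschitz constants. By \eqref{lemma:existence_fixed_point}, $\zeta_R$ is bounded in $[\zeta_{\min},1]$ and $\sqrt{n}\tau_R$ is bounded, so the denominator $\zeta_R+\lambda \geq \zeta_{\min}+\lambda_{\min}$ is bounded below by a positive constant. By Lemma \ref{lemma:lipschitz_fixed_point}, both $\sqrt{n}\tau_R$ and $\zeta_R$ are Lipschitz in $\lambda$ with constants independent of $n$, so $\alpha_\star/\sqrt{n} = \tau_R\zeta_R$ is Lipschitz with constant $O(1/\sqrt{n})$, while $\zeta_R+\lambda$ is Lipschitz with a constant $O(1)$. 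The standard quotient argument (writing $f/g - f'/g' = (f-f')/g + f'(g'-g)/(gg')$ and using the lower bound on the denominator) then yields
\begin{equation*}
|A(\lambda_1)-A(\lambda_2)| \;\leq\; \frac{C_A}{\sqrt{n}}\,|\lambda_1-\lambda_2|, \qquad |B(\lambda_1)-B(\lambda_2)| \;\leq\; C_B\,|\lambda_1-\lambda_2|,
\end{equation*}
for constants $C_A,C_B$ depending only on $\zeta_{\min},\tau_{\min},\tau_{\max},\sigma_{\max},\lambda_{\min},\lambda_{\max}$.

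Next I would combine these with the triangle inequality applied to $\bw(\lambda_1)-\bw(\lambda_2) = -(A(\lambda_1)-A(\lambda_2))\bxi_g - (B(\lambda_1)-B(\lambda_2))\bbeta$ and take expectations:
\begin{equation*}
\E\bigl[\|\hat\bbeta^f(\lambda_1)-\hat\bbeta^f(\lambda_2)\|_2\bigr] \;\leq\; |A(\lambda_1)-A(\lambda_2)|\,\E\|\bxi_g\|_2 \;+\; |B(\lambda_1)-B(\lambda_2)|\,\|\bbeta\|_2.
\end{equation*}
Using $\E\|\bxi_g\|_2 \leq \sqrt{p}$ together with $p/n \to \delta$ gives $\frac{C_A}{\sqrt{n}}\E\|\bxi_g\|_2 \leq C_A\sqrt{\delta+o(1)}$, while $\|\bbeta\|_2 \leq \sigma_{\max}$ by Assumption \ref{assumptions}(ii). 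Collecting constants then yields the claim with $M = C_A\sqrt{2\delta}+C_B\sigma_{\max}$.

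There is no substantial obstacle here: the proof reduces to a one-line algebraic computation once the Lipschitzness of the fixed-point parameters $\tau_R,\zeta_R$ (Lemma \ref{lemma:lipschitz_fixed_point}) is in hand. The only mild subtlety is carrying the $1/\sqrt{n}$ scaling through the quotient bound for $A$, which is necessary to cancel the $\sqrt{p}$ factor coming from $\E\|\bxi_g\|_2$ and deliver a Lipschitz constant $M$ that is $O(1)$ rather than growing with dimension.
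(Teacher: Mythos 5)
Your proposal is correct and takes essentially the same route as the paper: both start from the closed form $\bw(\lambda) = \hat\bbeta^f(\lambda)-\bbeta$ in \eqref{eqn:def:w_lambda}, invoke the $C/\sqrt{n}$-Lipschitzness of $\tau_*$ (and boundedness of $\alpha_*/\tau_*$ and the denominator) from Lemma \ref{lemma:lipschitz_fixed_point}, and control the $\bxi_g$ contribution so that the $\sqrt{p}$ cancels the $1/\sqrt{n}$. The only difference is cosmetic: you apply the triangle inequality to the vector difference and then take expectations, whereas the paper applies Jensen to pass to $\E[\|\cdot\|_2^2]$ and then uses $(a+b)^2\leq 2a^2+2b^2$; both lead to the same estimate. (Incidentally, you use the correct identification $\alpha_* = \tau_*\zeta_R$ from the end of Section \ref{sec:optconvert}, which is the one consistent with \eqref{def:hat_beta_f}.)
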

\begin{proof}
First, by noting that $\tau_* = \sqrt{n}\tau_{\textrm{R}}$, $\alpha_* = \lambda\tau_*/\zeta_{\textrm{R}}$ as well as applying Lemma \ref{lemma:lipschitz_fixed_point} and Lemma \ref{holder_function_operation}, we know $\alpha_*(\lambda)$ and $\tau_*(\lambda)$ are bounded in some $[\alpha_{\min},\alpha_{\max}],[\tau_{\min},\tau_{\max}]$ respectively, and they are both continuous functions of $\lambda$.
Let $\lambda_1,\lambda_2 \in [\lambda_{\min},\lambda_{\max}]$. Next, recall from \eqref{eq:wandbeta} that $\bw(\lambda) = \hat\bbeta^f(\lambda) - \bbeta$. Thus, 
\begin{align*}
    &\E[\|\hat\bbeta^f(\lambda_1) - \hat\bbeta^f(\lambda_2)\|_2] ^ 2 
    \leq \E[\|\hat\bbeta^f(\lambda_1) - \hat\bbeta^f(\lambda_2)\|_2^2] 
    = \E[\|\bw(\lambda_1) - \bw(\lambda_2)\|_2^2]\\
    = & \sum_{i=1}^p \E\left[ \left( \frac{1}{\alpha_*(\lambda_1)/\tau_*(\lambda_1)+\lambda_1} \left(\frac{\alpha_*(\lambda_1)}{\sqrt{n}}\xi_{g,i} + \lambda_1 \beta_i\right) - 
    \frac{1}{\alpha_*(\lambda_2)/\tau_*(\lambda_2)+\lambda_2} \left(\frac{\alpha_*(\lambda_2)}{\sqrt{n}}\xi_{g,i} + \lambda_2 \beta_i\right) \right)^2\right]\\
    \leq & 2\E\left[\delta\left( \frac{\alpha_*(\lambda_1)}{\alpha_*(\lambda_1)/\tau_*(\lambda_1)+\lambda_1}\xi_{g}-\frac{\alpha_*(\lambda_2)}{\alpha_*(\lambda_2)/\tau_*(\lambda_2)+\lambda_2}\xi_{g}\right)^2\right]\\
    &\indent+ 2\|\bbeta\|_2^2\left[\left( \frac{\lambda_1}{\alpha_*(\lambda_1)/\tau_*(\lambda_1)+\lambda_1}-\frac{\lambda_2}{\alpha_*(\lambda_2)/\tau_*(\lambda_2)+\lambda_2}\right)^2
    \right],
\end{align*}
where $\xi_{g,i}$ is the $i$th entry of $\bxi_g \sim \mcn(0,\bm{I}_p)$ and $\xi_g\sim \mcn(0,1)$. For the first summand,
\begin{align*}
    &2\E\left[\delta\left( \frac{\alpha_*(\lambda_1)}{\alpha_*(\lambda_1)/\tau_*(\lambda_1)+\lambda_1}\xi_{g}-\frac{\alpha_*(\lambda_2)}{\alpha_*(\lambda_2)/\tau_*(\lambda_2)+\lambda_2}\xi_{g}\right)^2\right]\\
    =& 2\delta\left( \frac{\alpha_*(\lambda_1)}{\alpha_*(\lambda_1)/\tau_*(\lambda_1)+\lambda_1}-\frac{\alpha_*(\lambda_2)}{\alpha_*(\lambda_2)/\tau_*(\lambda_2)+\lambda_2}\right)^2\\
    =& 2\delta\left(\frac{\frac{\alpha_*(\lambda_1)\alpha_*(\lambda_2)}{\tau_*(\lambda_1)\tau_*(\lambda_2)}(\tau_*(\lambda_1)-\tau_*(\lambda_2)) + (\lambda_2\alpha_*(\lambda_1)-\lambda_1\alpha_*(\lambda_2))}{(\alpha_*(\lambda_1)/\tau_*(\lambda_1)+\lambda_1)(\alpha_*(\lambda_2)/\tau_*(\lambda_2)+\lambda_2)}\right)^2.
\end{align*}
We know both $\frac{\alpha_*(\lambda_1)\alpha_*(\lambda_2)}{\tau_*(\lambda_1)\tau_*(\lambda_2)}$ and $(\alpha_*(\lambda_1)/\tau_*(\lambda_1)+\lambda_1)(\alpha_*(\lambda_2)/\tau_*(\lambda_2)+\lambda_2)$ are bounded, $\tau_*$ is Lipschitz, and
\begin{align*}
    \lambda_2\alpha_*(\lambda_1)-\lambda_1\alpha_*(\lambda_2)
    =&\alpha_*(\lambda_1)(\lambda_2-\lambda_1) + \lambda_1(\alpha_*(\lambda_1) - \alpha_*(\lambda_2))\\
\leq&\alpha_{\max}|\lambda_2-\lambda_1| + \lambda_{\max}|\alpha_*(\lambda_1)-\alpha_*(\lambda_2)|.
\end{align*}
Thus, we know the first part is bounded by some $M_1(\lambda_1-\lambda_2)^2$. Similarly, the second part is bounded by some $M_2(\lambda_1-\lambda_2)^2$. Combining them completes the proof.
\end{proof}

We are now ready to prove Lemma \ref{lemma:marginal_single}.

\begin{proof}[proof of Lemma \ref{lemma:marginal_single}]
Let $\gamma > 0$ as given by Lemma \ref{lemma:ridge_W2}, let $K>0$ as given by Lemma \ref{lemma:ridge_loss_difference_lambda}, and let $M>0$ as given by Lemma \ref{lemma:ridge_W2_Lipschitz}.
Fix $\epsilon \in (0,1]$ and define $\epsilon' = \min\left( \frac{\gamma\epsilon}{K},\frac{\sqrt{\epsilon}}{M}\right)$. Let $k = \lceil(\lambda_{\max} - \lambda_{\min})/\epsilon'\rceil$. Further define $\lambda_i = \lambda_{\min} + i\epsilon'$ for $i=0,...,k$
By Lemma \ref{lemma:ridge_W2}, the event
\begin{equation}\label{event:ridge_W2}
    \left\{\forall i\in \{1,...,k\},\forall\bb \in \R^p,\mcl{L}_{\lambda_i}(\bb) \leq \min \mcl{L}_{\lambda_i} + \gamma\epsilon \Rightarrow \|\bb - \hat\bbeta^f(\lambda_i)\|_2^2\leq \epsilon \right\}
\end{equation}
has probability at least $1 - ko(\epsilon) = 1 - o(1)$. Therefore, on the intersection of event \ref{event:ridge_W2} and the event in Lemma \ref{lemma:ridge_loss_difference_lambda}, which has probability $1-o(1)$, we have for all $\lambda \in [\lambda_{\min},\lambda_{\max}]$,
$$\mcl{L}_{\lambda_i}(\hat\bbeta(\lambda)) \leq \min \mcl{L}_{\lambda_i} + K|\lambda-\lambda_i| \leq \min \mcl{L}_{\lambda_i} + \gamma\epsilon,$$
where $1 \leq i \leq k$ is such that $\lambda \in [\lambda_{i-1},\lambda_i]$. This implies (since we are on the event \ref{event:ridge_W2}) that $\|\hat\bbeta(\lambda) - \hat\bbeta^f(\lambda_i)\|_2^2\leq \epsilon$. 

Consider any Lipschitz function $\phi_\beta$. Since $\hat\bbeta^f(\lambda)$ is a $\alpha_*/(\sqrt{n}(\alpha_*/\tau_*+\lambda))\leq C/\sqrt{n}$-Lipschitz function of $\bxi_g \sim \mcn(\bm{0},\bm{I})$ for some constant $C$ (recall \eqref{eqn:def:w_lambda}), then by concentration of Lipschitz function of Gaussian random variables, we know the event
\begin{equation}\label{event:gaussian_lipschitz_concentration}
\left\{|\phi_\beta(\hat\bbeta^f(\lambda)) - \E[\phi_\beta(\hat\bbeta^f(\lambda))]| \leq \sqrt{\epsilon}\right\}
\end{equation}

has probability $1-o(\epsilon)$. Therefore, on the intersection of event \eqref{event:ridge_W2} and event \eqref{event:gaussian_lipschitz_concentration}, we have
\begin{align*}
&| \phi_\beta(\hat\bbeta(\lambda)) - \E[\phi_\beta(\hat\bbeta^f(\lambda))] | \\
\leq & |\phi_\beta(\hat\bbeta(\lambda)) - \phi_\beta(\hat\bbeta^f(\lambda_i))| + |\phi_\beta(\hat\bbeta^f(\lambda_i)) - \E[\phi_\beta(\hat\bbeta^f(\lambda_i))]| + | \E[\phi_\beta(\hat\bbeta^f(\lambda_i))] - \E[\phi_\beta(\hat\bbeta^f(\lambda))] |\\
\leq & \|\hat\bbeta(\lambda) - \hat\bbeta^f(\lambda_i)\|_2 + \sqrt{\epsilon} + \E[\|\hat\bbeta^f(\lambda) - \hat\bbeta^f(\lambda_i)\|_2]\\
\leq & 2\sqrt{\epsilon} + M|\lambda-\lambda_i|\\
\leq & 3 \sqrt{\epsilon}.
\end{align*}
The proof is then completed by noting that this holds for all $\lambda \in [\lambda_{\min},\lambda_{\max}]$. 
\end{proof}

\subsubsection{Proof of Lemma \ref{lemma:marginal_single_e}}\label{subsec:proof:lemma:marginal_single_e}

This is almost verbatim as the proof of Lemma \ref{lemma:marginal_single}, where we instead study the optimization for $\bu$ in \eqref{wu_optimization}. We omit the proof here for conciseness.

\section{Proof of main results}\label{sec:rest_proof} 
In this section we utilize Theorem \ref{main:thm:debiased_uniform_tilde} and some preceding supporting Lemmas to prove the remaining results in the main text, that is, Proposition \ref{main:cor:linear_combination} and Theorem \ref{method:thm:consistency}.

\subsection{Variance Parameter Estimation}\label{subsec:proof_tau_estimation}
We present the theorem that makes rigorous the consistency of \eqref{main:eqn:tau_estimation}.

\begin{thm}\label{main:thm:tau_estimation}
Consider $\hat\tau_{\rmL}^2,\hat\tau_{\rmR},\hat\tau_{\rmL\rmR}$ in \eqref{main:eqn:tau_estimation}. If Assumptions \ref{assumptions} and \ref{assumptionL} hold, then we have
\begin{equation*}
\begin{aligned}
    &\supll n \cdot |\hat\tau_{\textrm{L}}^2 - \tau_{\textrm{L}}^2| \convP 0,\\
    &\suplr n \cdot |\hat\tau_{\textrm{R}}^2 - \tau_{\textrm{R}}^2| \convP 0,\\
    &\sup_{\lambda_{\textrm{L}},\lambda_{\textrm{R}} \in [\lambda_{\min},\lambda_{\max}]} n \cdot |\hat\tau_{\rmL\rmR} - \tau_{\rmL\rmR}| \convP 0.\\
\end{aligned}
\end{equation*}
\end{thm}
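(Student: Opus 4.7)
The plan is to decompose each estimator as a ratio, establish uniform control of the numerator and denominator separately, and conclude by Slutsky-type manipulation. Writing $1-\hat\df_k/n$ for the denominator factors gives
\begin{equation*}
n\hat\tau_k^2 = \frac{\|\hat\bu_k\|_2^2/n}{(1-\hat\df_k/n)^2},\qquad n\hat\tau_{\rmL\rmR} = \frac{\langle\hat\bu_L,\hat\bu_R\rangle/n}{(1-\hat\df_L/n)(1-\hat\df_R/n)},
\end{equation*}
where $\hat\bu_k$ is as in \eqref{uvw}. On the population side, using $\bu_k^f=\sqrt{n}\zeta_k\bh_k^f$ with $(\bh_L^f,\bh_R^f)\sim\mcn(0,\bS\otimes\bI_n)$ together with the fixed-point system \eqref{def:fixed_point}, one checks that $\E[\|\bu_k^f\|_2^2]/n = n\tau_k^2\zeta_k^2$ and $\E[\langle \bu_L^f,\bu_R^f\rangle]/n = n\tau_{\rmL\rmR}\zeta_L\zeta_R$, so the targets can be written as $n\tau_k^2 = (n\tau_k^2\zeta_k^2)/\zeta_k^2$ and $n\tau_{\rmL\rmR} = (n\tau_{\rmL\rmR}\zeta_L\zeta_R)/(\zeta_L\zeta_R)$ in exactly the same ratio form.

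The denominators are the easy part: Lemma \ref{lemma:df_diff} combined with the third line of \eqref{def:fixed_point} gives $\sup_{\lambda_k}|(1-\hat\df_k/n)-\zeta_k|\convP 0$, and by \eqref{lemma:existence_fixed_point} the $\zeta_k$'s lie in $[\zeta_{\min},1]$, so the denominator ratios stay uniformly bounded away from zero with probability $1-o(1)$. For the first two claims, the numerator control is also immediate from Lemma \ref{lemma:uvw_convergence}: the uniform convergence $\sup_{\lambda_k}|\|\hat\bu_k\|_2/\sqrt n - \sqrt n\tau_k\zeta_k|\convP 0$ together with uniform boundedness of both sides upgrades by squaring to $\sup_{\lambda_k}|\|\hat\bu_k\|_2^2/n - n\tau_k^2\zeta_k^2|\convP 0$. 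Dividing numerator by denominator and using the elementary identity $a/b-a'/b' = (a-a')/b + a'(b'-b)/(bb')$ then yields $\suplk n|\hat\tau_k^2-\tau_k^2|\convP 0$.

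The third claim is the main obstacle, since Lemma \ref{lemma:uvw_convergence} only gives marginal control of $\hat\bu_L$ and $\hat\bu_R$ and does not pin down their inner product. What is needed is a joint uniform analog of Lemma \ref{lemma:marginal_single_e}: for every bounded $1$-Lipschitz $\phi_u:(\R^n)^2\to\R$,
\begin{equation*}
\suplb\left|\phi_u\!\left(\tfrac{\hat\bu_L}{\sqrt n},\tfrac{\hat\bu_R}{\sqrt n}\right)-\E\!\left[\phi_u\!\left(\tfrac{\bu_L^f}{\sqrt n},\tfrac{\bu_R^f}{\sqrt n}\right)\right]\right|\convP 0.
\end{equation*}
I would obtain this by mirroring the passage from Lemma \ref{lemma:marginal_both} to Theorem \ref{main:thm:debiased_uniform_tilde}: start from the marginal CGMT-based Lemma \ref{lemma:marginal_single_e} for each $k$, then run the conditional-coupling bridge argument of Lemmas \ref{lemma:bridge_1}--\ref{lemma:bridge_2} on the residual side, replacing $\hat{\bm g}_L$ by a deterministic function of $(\hat\bbeta_L,\bX,\by)$ that plays the role of $\bh_L^f$ (concretely, the analog of \eqref{def:hat_gl} built out of $\hat\bu_L$ and $\hat\bw_L$), and carrying through the same Lipschitz-in-$\lambda_R$ control used to prove Lemma \ref{lemma:bridge_2}.

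Once this joint statement is in hand, a truncation of $(\bu,\bv)\mapsto\langle\bu,\bv\rangle$ to a large ball, justified by the uniform boundedness of $\|\hat\bu_k\|_2/\sqrt n$ from Lemma \ref{lemma:uvw_convergence} and the analog for $\bu_k^f$, produces
\begin{equation*}
\suplb\left|\tfrac{\langle\hat\bu_L,\hat\bu_R\rangle}{n}-n\tau_{\rmL\rmR}\zeta_L\zeta_R\right|\convP 0,
\end{equation*}
and combining with the denominator control above gives the third claim. The main technical obstacle is precisely the upgrade of the marginal residual-side convergence to a jointly uniform one; this is where the coupling between $\hat\bu_L$ and $\hat\bu_R$ enters nontrivially, and all the hyperparameter-uniformity machinery of Sections \ref{subsec:proof:debiased_uniform}--\ref{sec:proof:lemma:bridge_1} must be re-run on the residual side.
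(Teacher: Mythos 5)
Your proposal is correct and takes essentially the same route as the paper. The paper's proof rests on Lemma~\ref{lemma:debiased_uniform_e}, which is exactly the joint uniform residual-side Lipschitz result you identify as the key obstacle (stated with the empirical degrees-of-freedom factor absorbed into $\hat\bu_k^d$, so that $\hat\bu_k^{f,d}/\sqrt n = \bh_k^f$ rather than your $\zeta_k\bh_k^f$, but this is just a bookkeeping choice equivalent to your separate ratio manipulation), followed by the second-moment upgrade Lemma~\ref{lemma:concentration_second_moment} applied to $\bT(\cdot,\cdot)$ and extraction of entries; your truncation-to-a-ball device plays the same role as that lemma, and your sketch of the bridge argument on the residual side is exactly how the paper says Lemma~\ref{lemma:debiased_uniform_e} is to be proved (mirroring Theorem~\ref{main:thm:debiased_uniform}).
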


The proof of Theorem \ref{main:thm:tau_estimation} crucially relies on the following Lemma, which is the counterpart to Theorem \ref{main:thm:debiased_uniform} in the context of residuals:

\begin{lemma}\label{lemma:debiased_uniform_e}
    \item Define $\hat\bu_k^d = \frac{\by - \bX\hat\bbeta_k}{1-\frac{\hat\df_k}{n}},\hat\bu_k^{f,d} = \sqrt{n}\bh_k^f$, where $(\bh_{\textrm{L}}^f,\bh_{\textrm{R}}^f) \sim \mcn(\bm{0},\bS \otimes \bI_n)$ for $\bS$ defined as in \ref{def:fixed_point}. Under Assumption \ref{assumptions2}, for any $1$-Lipschitz function $\phi_u:(\R^n)^2\rightarrow \R$,
    
    $$\suplb\left|\phi_u\left(\frac{\hat\bu_{\textrm{L}}^d}{\sqrt{n}},\frac{\hat\bu_{\textrm{R}}^d}{\sqrt{n}}\right) - \E\left[\phi_\beta\left(\frac{\hat\bu_{\textrm{L}}^{f,d}}{\sqrt{n}},\frac{\hat\bu_{\textrm{R}}^{f,d}}{\sqrt{n}}\right)\right]\right| \convP 0.$$
\end{lemma}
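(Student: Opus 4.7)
The plan is to mirror the proof of Theorem \ref{main:thm:debiased_uniform}, replacing coefficient vectors in $\R^p$ by residual vectors in $\R^n$ and leveraging Lemma \ref{lemma:marginal_single_e} (the residual analog of Lemma \ref{lemma:marginal_single}) as the marginal starting point. First I would introduce the intermediate quantity $\tilde\bu_k^d := (\by - \bX\hat\bbeta_k)/(1 - \df_k/n) = \hat\bu_k/\zeta_k$ that replaces $\hat\df_k$ by its deterministic counterpart $\df_k$, and establish the residual analog of Lemma \ref{lemma:beta_emp_diff}, namely $\suplk \|\hat\bu_k^d/\sqrt{n} - \tilde\bu_k^d/\sqrt{n}\|_2 \convP 0$. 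This follows by writing the difference as $(\hat\bu_k/\sqrt{n})\bigl[(1 - \hat\df_k/n)^{-1} - \zeta_k^{-1}\bigr]$, noting $\suplk \|\hat\bu_k\|_2/\sqrt{n} = O_P(1)$ by Lemma \ref{lemma:uvw_convergence}, and controlling the scalar factor uniformly via Lemma \ref{lemma:df_diff} and Corollary \ref{cor:df_bounded}. By Lipschitz continuity of $\phi_u$, this reduces the task to proving uniform joint convergence for $\phi_u(\tilde\bu_{\rmL}^d/\sqrt{n}, \tilde\bu_{\rmR}^d/\sqrt{n})$ toward $\E[\phi_u(\bh_{\rmL}^f,\bh_{\rmR}^f)]$.

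Next I would carry out the joint characterization for $\tilde\bu_k^d$ by adapting the bridge strategy of Section \ref{subsec:proof:debiased_uniform}. Since $\hat\bu_k^{f,d}/\sqrt{n} = \bh_k^f$ and $\hat\bu_k^f/\sqrt{n} = \zeta_k \bh_k^f$ (from Lemma \ref{lemma:marginal_single_e}), and since $1/\zeta_k$ is bounded and Lipschitz in $\lambda_k$ by \eqref{lemma:existence_fixed_point} and Lemma \ref{lemma:lipschitz_fixed_point}, Lemma \ref{lemma:marginal_single_e} applied to the composition immediately yields the marginal uniform statement for each of $k=\rmL,\rmR$ separately. For the joint statement I would define a residual-side coupling
\[
\hat\bh_{\rmL} := \frac{\tau_{\rmL}}{\|\hat\bu_{\rmL}\|_2/\sqrt{n}}\cdot \frac{\hat\bu_{\rmL}}{\sqrt{n}\zeta_{\rmL}},
\]
so that $\|\hat\bh_{\rmL}\|_2 \approx \sqrt{n}\tau_{\rmL}$ matches the population scale of $\bh_{\rmL}^f$, and then reproduce the two-step argument of Lemmas \ref{lemma:bridge_1} and \ref{lemma:bridge_2}: first showing that $\E[\phi_u(\hat\bu_{\rmL}^{f,d}/\sqrt{n},\hat\bu_{\rmR}^{f,d}/\sqrt{n})\mid \bh_{\rmL}^f=\hat\bh_{\rmL}]$ is uniformly close to the unconditional expectation (analog of Lemma \ref{lemma:bridge_1}, using the residual-side marginal joint law), and then showing it is uniformly close to $\phi_u(\tilde\bu_{\rmL}^d/\sqrt{n},\tilde\bu_{\rmR}^d/\sqrt{n})$ via an $\epsilon$-net and weak-Lipschitz argument (analog of Lemma \ref{lemma:bridge_2}) applied to the $\bu$-optimization already present in \eqref{wu_optimization}.

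The main obstacle will be the residual-side analog of Lemma \ref{lemma:lipschitz_psi_conditional}: establishing the weak-Lipschitz property of $\E[\phi_u\mid \bh_{\rmL}^f = \hat\bh_{\rmL}]$ in $\lambda_{\rmR}$. Conditioning forces
\[
\bh_{\rmR}^f = \frac{\tau_{\rmR}\rho}{\tau_{\rmL}}\hat\bh_{\rmL} + \tau_{\rmR}\sqrt{1-\rho^2}\,\tilde\bxi, \qquad \tilde\bxi \sim \mcn(0,\bI_n/n),
\]
and one must bound $\|\bh_{\rmR}^f(\lambda_1) - \bh_{\rmR}^f(\lambda_2)\|_2$ in $L^2$ conditional on $\hat\bh_{\rmL}$. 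This requires: (i) Lipschitz continuity of $(\tau_{\rmR},\zeta_{\rmR},\rho)$ in $\lambda_{\rmR}$ from Lemma \ref{lemma:lipschitz_fixed_point}; (ii) boundedness of $\|\hat\bh_{\rmL}\|_2/\sqrt{n}$ with probability $1-o(1)$, which follows by combining the convergence of $\|\hat\bu_{\rmL}\|_2/\sqrt{n}$ from Lemma \ref{lemma:uvw_convergence} with the definition of $\hat\bh_{\rmL}$; and (iii) a residual-side analog of Lemma \ref{lemma:wf_weak_lipschitz}, i.e.\ weak-Lipschitz control of $\hat\bu_{\rmR}$ in $\lambda_{\rmR}$, obtainable from the KKT condition $\bX^\top\hat\bu_{\rmR} = n\lambda_{\rmR}\hat\bbeta_{\rmR}$ and the established Lipschitz behavior of $\hat\bbeta_{\rmR}$. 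Once these ingredients are assembled, the proof proceeds by the same epsilon-net argument as Lemma \ref{lemma:bridge_2}, concluding uniform convergence. Combining with the reduction in the first paragraph yields the claim.
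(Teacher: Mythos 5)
The paper itself does not spell out a proof for this lemma; it only states that it ``follows similar to the proof of Theorem \ref{main:thm:debiased_uniform},'' and your plan correctly reconstructs that route: introduce $\tilde\bu_k^d$ with the deterministic $\df_k$ in place of $\hat\df_k$, reduce to it via Lemmas \ref{lemma:df_diff}, \ref{cor:df_bounded}, and \ref{lemma:uvw_convergence}, use Lemma \ref{lemma:marginal_single_e} for the marginal statement, and then run a residual-side version of the conditional-coupling bridge argument of Lemmas \ref{lemma:bridge_1}--\ref{lemma:bridge_2}, with a weak-Lipschitz and $\epsilon$-net closure. At the structural level your proposal matches what the paper intends.

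There is, however, a concrete miscalculation in your coupling formula. Simplifying,
\[
\hat\bh_{\rmL} := \frac{\tau_{\rmL}}{\|\hat\bu_{\rmL}\|_2/\sqrt{n}}\cdot \frac{\hat\bu_{\rmL}}{\sqrt{n}\zeta_{\rmL}}
= \frac{\tau_{\rmL}}{\zeta_{\rmL}}\cdot\frac{\hat\bu_{\rmL}}{\|\hat\bu_{\rmL}\|_2},
\]
so $\|\hat\bh_{\rmL}\|_2 = \tau_{\rmL}/\zeta_{\rmL} = \Theta(1/\sqrt{n})$, not $\approx\sqrt{n}\tau_{\rmL} = \Theta(1)$ as you claim; you are off by a multiplicative factor $\sqrt{n}\zeta_{\rmL}$. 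The coupling should be chosen so that $\hat\bu_{\rmL}^{f,d} = \sqrt{n}\bh_{\rmL}^f$ coincides with $\tilde\bu_{\rmL}^d = \hat\bu_{\rmL}/\zeta_{\rmL}$ when $\bh_{\rmL}^f = \hat\bh_{\rmL}$, which gives $\hat\bh_{\rmL} \approx \hat\bu_{\rmL}/(\sqrt{n}\zeta_{\rmL})$; its norm is $(\|\hat\bu_{\rmL}\|_2/\sqrt{n})/\zeta_{\rmL} \to \sqrt{n}\tau_{\rmL}$ by Lemma \ref{lemma:uvw_convergence}, which is the scale you want. If you wish to mirror the exact-normalization style of \eqref{def:hat_gl}, the prefactor should be $\frac{\sqrt{n}\tau_{\rmL}\zeta_{\rmL}}{\|\hat\bu_{\rmL}\|_2/\sqrt{n}}$ (which tends to $1$), rather than $\frac{\tau_{\rmL}}{\|\hat\bu_{\rmL}\|_2/\sqrt{n}}$ (which tends to $1/(\sqrt{n}\zeta_{\rmL})$). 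Once the coupling is fixed, the remaining ingredients you list---boundedness of $\|\hat\bh_{\rmL}\|_2$, Lipschitzness of $(\tau_{\rmR},\zeta_{\rmR},\rho)$ from Lemma \ref{lemma:lipschitz_fixed_point}, weak-Lipschitz control of $\hat\bu_{\rmR}$ from the KKT identity $\bX^\top\hat\bu_{\rmR} = n\lambda_{\rmR}\hat\bbeta_{\rmR}$, and the $\epsilon$-net argument---go through as you outline.
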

The proof of Lemma \ref{lemma:debiased_uniform_e} follows similar to the proof of Theorem \ref{main:thm:debiased_uniform}, thus we omit this for conciseness.

\begin{proof}[Proof of Theorem \ref{main:thm:tau_estimation}]
We know $\bh_k^f$ is a $1/\sqrt{n}$-Lipschitz function of $\sqrt{n}\bh_k^f$, where $(\sqrt{n}\bh_{\textrm{L}}^f,\sqrt{n}\bh_{\textrm{R}}^f) \sim \mcn(\bm{0},n\bS \otimes \bm{I}_n)$ and eigenvalues of $n\bS$ are all bounded (since entries of $n\bS$ are bounded from \ref{lemma:existence_fixed_point}). Also, $\E[\|\bh_k^f\|_2^2]=n\tau_{k}^2 \leq \tau_{\max}^2+\sigma_{\max}^2$.
Thus, by Lemma \ref{lemma:concentration_second_moment}, we know
$$\suplb \left\|\bT\left(\frac{\hat\bu_{\textrm{L}}^{d}}{\sqrt{n}},\frac{\hat\bu_{\textrm{R}}^{d}}{\sqrt{n}}\right) - \E\left[\bT\left(\frac{\hat\bu_{\textrm{L}}^{f,d}}{\sqrt{n}},\frac{\hat\bu_{\textrm{R}}^{f,d}}{\sqrt{n}}\right)\right]\right\|_F \convP 0,$$
where
\begin{equation}\label{def:T}
    \bT(\ba,\bb) :=\begin{pmatrix}\ba^\top\ba & \ba^\top\bb\\ \ba^\top\bb & \bb^\top\bb\end{pmatrix}.
\end{equation}
Extracting the entries of the above equation completes the proof. 
\end{proof}

\subsection{Ensembling: Proof of Proposition \ref{main:cor:linear_combination}} \label{subsec:proof:linear_combination}

\begin{proof}[Proof of Proposition \ref{main:cor:linear_combination}]
By arguments similar to the proof of Theorem \ref{main:thm:tau_estimation}, on using 
 Theorem \ref{main:thm:debiased_uniform} in conjunction with \ref{lemma:existence_fixed_point} and Lemma \ref{lemma:concentration_second_moment}, we have that 
$$\suplb \|\bT(\hat\bbeta_{\textrm{L}}^{d},\hat\bbeta_{\textrm{R}}^{d}) - \E[\bT(\hat\bbeta_{\textrm{L}}^{f,d},\hat\bbeta_{\textrm{R}}^{f,d})]\|_F \convP 0,$$
where $\bT$ is given by \ref{def:T}.
Therefore, we have
\begin{align*}
& \sup_{\lambda_{\textrm{L}},\lambda_{\textrm{R}} \in [\lambda_{\min},\lambda_{\max}]} \sup_{\alpha_{\rmL} \in [0,1]}\left|\|\tilde\bbeta_C^d\|_2^2 - \|\bbeta\|_2^2 - p\cdot \tau_C^2 \right|\\
= & \sup_{\lambda_{\textrm{L}},\lambda_{\textrm{R}} \in [\lambda_{\min},\lambda_{\max}]} \sup_{\alpha_{\rmL} \in [0,1]}\left|\|\alpha_{\textrm{L}}\hat\bbeta_{\textrm{L}}^{d} + (1-\alpha_{\textrm{L}})\hat\bbeta_{\textrm{R}}^{d}\|_2^2 - \|\bbeta\|_2^2 -  \alpha_{\textrm{L}}^2\cdot p\tau_{\rmL}^2 - (1-\alpha_{\textrm{L}})^2\cdot p\tau_{\rmR}^2 - 2\alpha_{\textrm{L}}(1-\alpha_{\textrm{L}})\cdot p\tau_{\rmL\rmR}\right|\\
\leq & \sup_{\lambda_{\textrm{L}},\lambda_{\textrm{R}} \in [\lambda_{\min},\lambda_{\max}]} \sup_{\alpha_{\rmL} \in [0,1]} \bigg[\alpha_\rmL^2 \left|\|\hat\bbeta_\rmL^d\|_2^2 - \|\bbeta\|_2^2 - p\tau_\rmL^2\right| + (1-\alpha_\rmL)^2 \left|\|\hat\bbeta_\rmR^d\|_2^2 - \|\bbeta\|_2^2 - p\tau_\rmR^2\right| \\
&+2\alpha_\rmL(1-\alpha_\rmL)\left|\langle\hat\bbeta_\rmL^d,\hat\bbeta_\rmR^d\rangle -\|\bbeta\|_2^2 - p\tau_{\rmL\rmR}\right| \bigg]\\
\leq & \sup_{\lambda_{\textrm{L}},\lambda_{\textrm{R}} \in [\lambda_{\min},\lambda_{\max}]} \sup_{\alpha_{\rmL} \in [0,1]} \left( \alpha_\rmL^2 + (1-\alpha_{\rmL})^2 + +2\alpha_\rmL(1-\alpha_\rmL) \right) \left\|\bT(\hat\bbeta_{\textrm{L}}^{d},\hat\bbeta_{\textrm{R}}^{d}) - \E[\bT(\hat\bbeta_{\textrm{L}}^{f,d},\hat\bbeta_{\textrm{R}}^{f,d})]\right\|_F\\
\leq & \sup_{\lambda_{\textrm{L}},\lambda_{\textrm{R}} \in [\lambda_{\min},\lambda_{\max}]}\left\|\bT(\hat\bbeta_{\textrm{L}}^{d},\hat\bbeta_{\textrm{R}}^{d}) -\E[\bT(\hat\bbeta_{\textrm{L}}^{f,d},\hat\bbeta_{\textrm{R}}^{f,d})]\right\|_F \convP 0,
\end{align*}
where we define
$$\tau_C^2 = \alpha_{\textrm{L}}^2\cdot \tau_{\rmL}^2 + (1-\alpha_{\textrm{L}})^2\cdot \tau_{\rmR}^2 + 2\alpha_{\textrm{L}}(1-\alpha_{\textrm{L}})\cdot \tau_{\rmL\rmR}.$$

Furthermore, from Theorem \ref{main:thm:tau_estimation},
\begin{align*}
&\sup_{\lambda_{\textrm{L}},\lambda_{\textrm{R}} \in [\lambda_{\min},\lambda_{\max}]} \sup_{\alpha_{\rmL} \in [0,1]} |p\cdot \tilde\tau_C^2 - p\cdot \tau_C^2| \\
\leq & \sup_{\lambda_{\textrm{L}},\lambda_{\textrm{R}} \in [\lambda_{\min},\lambda_{\max}]} \sup_{\alpha_{\rmL} \in [0,1]}\left( \alpha_\rmL^2 + (1-\alpha_{\rmL})^2 + +2\alpha_\rmL(1-\alpha_\rmL) \right) \left|p\cdot|\hat\tau_{\textrm{L}}^2 - \tau_{\textrm{L}}^2| + p\cdot|\hat\tau_{\textrm{R}}^2 - \tau_{\textrm{R}}^2| + p\cdot|\hat\tau_{LR} - \tau_{LR}|\right|\\
\leq & \sup_{\lambda_{\textrm{L}},\lambda_{\textrm{R}} \in [\lambda_{\min},\lambda_{\max}]} \left|p\cdot|\hat\tau_{\textrm{L}}^2 - \tau_{\textrm{L}}^2| + p\cdot|\hat\tau_{\textrm{R}}^2 - \tau_{\textrm{R}}^2| + p\cdot|\hat\tau_{LR} - \tau_{LR}|\right| \convP 0.
\end{align*}
Combining the above displays completes the proof.
\end{proof}

\subsection{Heritability Estimation: Proof of Theorem \ref{method:thm:consistency}}\label{subsec:proof:consistency}
\begin{proof}[Proof of Theorem \ref{method:thm:consistency}]

From Eqn. \ref{main:thm_eqn:alpha_optimal}, we know that $\hat\alpha_{\textrm{L}} \in [0,1]$, so Proposition \ref{main:cor:linear_combination} implies that 
\begin{equation}\label{eqn:heritability_proof_1}
    \sup_{\lambda_{\textrm{L}},\lambda_{\textrm{R}} \in [\lambda_{\min},\lambda_{\max}]} \left|\|\hat \bbeta_C^d\|_2^2  -  p\cdot \hat\tau_C^2- \sigma_\beta^2\right| \convP 0,
\end{equation}

which further implies that
\begin{equation}\label{eqn:heritability_proof_2}   \sup_{\lambda_{\textrm{L}},\lambda_{\textrm{R}} \in [\lambda_{\min},\lambda_{\max}]} \left|\frac{\|\hat \bbeta_C^d\|_2^2  -  p\cdot \hat\tau_C^2}{\sigma_\beta^2+\sigma^2}- \frac{\sigma_\beta^2}{\sigma_\beta^2+\sigma^2}\right| = \sup_{\lambda_{\textrm{L}},\lambda_{\textrm{R}} \in [\lambda_{\min},\lambda_{\max}]} \left|\frac{\|\hat \bbeta_C^d\|_2^2  -  p\cdot \hat\tau_C^2}{\sigma_\beta^2+\sigma^2}- h\right| \convP 0.
\end{equation}

Recalling the definitions of $\sigma_\beta^2,\sigma^2$ from Assumption \ref{assumptions2}, and the fact that $X_{ij}$'s are independent with mean $0$ and variance $1$, we know that
$$\widehat{\Var}(\by) \convP \sigma_\beta^2 + \sigma^2,$$
where $\widehat{\Var}(\by)$ denotes the sample variance of $\by$. 
Moreover, $\sigma_\beta^2 + \sigma^2$ is bounded by $2\sigma_{\max}^2$, and is positive (otherwise the problem becomes trivial). Thus we can safely replace the denominator in \eqref{eqn:heritability_proof_2} with $\widehat{\Var}(\by)$. The proof is then completed by the fact that clipping does not effect consistency.
\end{proof}

\section{Robustness under Accurate Covariance Estimation: Proof of Theorem \ref{method:thm:consistency_estimated_cov}}\label{sec:robustness_proof}

In this section we prove Theorem \ref{method:thm:consistency_estimated_cov}. To maintain notational consistency with the rest of the proofs, we refrain from using $\bZ$ and use $\bX$ instead. A simple derivation reveals that assuming $\bX$ with covariance $\bA^\top\bA$ is equivalent to assuming $\bX$ has independent columns but use $\bX\bA$ to denote the design matrix. For all quantities in this section. we include $\bA$ in a parenthesis or a subscript to denote the corresponding perturbed quantity if we replace $\bX$ by $\bX\bA$ for the design matrix. We first introduce a sufficient condition of Theorem \ref{method:thm:consistency_estimated_cov}.

\begin{lemma}\label{lemma:consistency_estimated_cov}
Let $\bA$ be a sequence of random matrices such that $\|\bA - \bI\|_{\op} \convP 0$. Under Assumption \ref{assumptions2}, we have
\begin{equation}\label{eqn:consistency_estimated_cov}
\begin{aligned}
\suplk \|\hat\bbeta_k(\bA) - \hat\bbeta_k\|_2 &\convP 0\\
\suplk \left|\frac{\hat\df_k(\bA)}{p} - \frac{\hat\df_k}{p}\right| &\convP 0.
\end{aligned}
\end{equation}
\end{lemma}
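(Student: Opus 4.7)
The plan is to handle the Ridge and Lasso cases separately, with Ridge following from matrix perturbation and Lasso requiring a variational stability argument, plus an indirect route for the Lasso degrees of freedom.

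\textbf{Ridge.} Both $\hat\bbeta_\rmR$ and $\hat\df_\rmR$ are closed-form in the resolvent of $\bm{M} := \bX^\top\bX/n + \lambda_\rmR\bI$. Setting $\bm{M}(\bA) := \bA^\top\bX^\top\bX\bA/n + \lambda_\rmR\bI$, the resolvent identity gives
\[
\bm{M}^{-1} - \bm{M}(\bA)^{-1} = \bm{M}^{-1}\bigl(\bm{M}(\bA) - \bm{M}\bigr)\bm{M}(\bA)^{-1}.
\]
Using $\|\bm{M}^{-1}\|_\op, \|\bm{M}(\bA)^{-1}\|_\op \leq 1/\lambda_{\min}$, boundedness of $\|\bX\|_\op/\sqrt{n}$ (Corollary \ref{largest_singular_value}) and of $\|\bX^\top\by\|_2/n$, together with $\|\bA - \bI\|_\op = o_P(1)$, I would obtain the estimator bound uniformly in $\lambda_\rmR$. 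The trace bound for $\hat\df_\rmR/p = 1 - \lambda_\rmR\text{Tr}(\bm{M}^{-1})/p$ follows from the same identity combined with $|\text{Tr}(\cdot)| \leq \|\cdot\|_{S_1}$ and the Schatten sub-multiplicativity $\|\bm{M}^{-1}(\bm{M}(\bA) - \bm{M})\bm{M}(\bA)^{-1}\|_{S_1} \leq \|\bm{M}^{-1}\|_\op \|\bm{M}(\bA) - \bm{M}\|_\op \|\bm{M}(\bA)^{-1}\|_{S_1}$, where $\|\bm{M}(\bA)^{-1}\|_{S_1} = \text{Tr}(\bm{M}(\bA)^{-1}) \leq p/\lambda_{\min}$.

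\textbf{Lasso estimator.} For $\|\bb\|_2 \leq M$, expand
\[
\mcl{L}_\bA(\bb) - \mcl{L}(\bb) = \frac{1}{2n}\bigl\langle 2\by - \bX(\bA+\bI)\bb,\,\bX(\bI - \bA)\bb\bigr\rangle,
\]
which is $o_P(1)$ uniformly in $\bb$ and $\lambda_\rmL$ by $\|\by\|_2, \|\bX\|_\op = O(\sqrt{n})$ and $\|\bA - \bI\|_\op = o_P(1)$. Since both $\hat\bbeta_\rmL$ (Lemma \ref{lemma:uvw_convergence}) and $\hat\bbeta_\rmL(\bA)$ (Assumption \ref{assumptionA}(1)) lie in such a ball with high probability, $\hat\bbeta_\rmL(\bA)$ is an $o_P(1)$-near-minimizer of $\mcl{L}$. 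A Lasso counterpart of Lemma \ref{thm:ridge_AO_concentration} (cf.~\cite[Theorem B.1]{miolane2021distribution}) then converts cost-proximity into $\ell_2$-proximity to $\hat\bbeta_\rmL$. Uniform control over $\lambda_\rmL$ is obtained by discretizing $[\lambda_{\min},\lambda_{\max}]$ on a sufficiently fine grid and interpolating via Assumption \ref{assumptionA}(2) together with the weak-Lipschitz estimate from Lemma \ref{lemma:wf_weak_lipschitz}, exactly paralleling the argument in Section \ref{subsubsec:uniform_control}.

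\textbf{Lasso degrees of freedom.} This is the main obstacle, since $\ell_2$-closeness of the estimators does not imply closeness of $\|\hat\bbeta_\rmL(\bA)\|_0$ and $\|\hat\bbeta_\rmL\|_0$: the sparsity functional is discontinuous. My plan is indirect: show that both $\hat\df_\rmL/p$ and $\hat\df_\rmL(\bA)/p$ converge uniformly in $\lambda_\rmL$ to the same deterministic limit $\df_\rmL/p$ from \eqref{def:fixed_point}. For $\hat\df_\rmL/p$ this is exactly Lemma \ref{lemma:df_diff}. For $\hat\df_\rmL(\bA)/p$, the same argument applies to the design $\bX\bA$, provided the fixed-point equations for $\bX\bA$ converge to those for $\bX$. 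The delicate step is justifying this collapse uniformly in $\lambda_\rmL$: it relies on continuity of the fixed-point map in the spectral distribution of the design, which is available here since the eigenvalues of $\bA^\top\bA$ are bounded away from $0$ and $\infty$ (by the $\lambda_{\min}$ lower bound on $\bm{M}$) and concentrate at $1$ in operator norm. Combining this with the triangle inequality yields the desired uniform bound.
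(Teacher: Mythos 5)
The Ridge part matches the paper's proof essentially line for line; your Schatten--H\"older bound $\|\bm{M}^{-1}(\bm{M}(\bA)-\bm{M})\bm{M}(\bA)^{-1}\|_{S_1}\leq\|\bm{M}^{-1}\|_\op\|\bm{M}(\bA)-\bm{M}\|_\op\|\bm{M}(\bA)^{-1}\|_{S_1}$ is a minor cosmetic variant of the paper's $|\mathrm{Tr}(B)|\le p\|B\|_\op$ and gives the same rate. The Lasso estimator part is also in the same spirit as the paper: you prove cost-proximity and convert it to $\ell_2$-proximity; the paper does this via a dedicated uniform local strong convexity lemma (Lemma~\ref{lemma:uniform_local_convexity}), whereas you propose routing through the AO/PO concentration around the deterministic limit $\hat\bbeta^f$. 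Both work, and the paper's route is perhaps cleaner because it compares $\hat\bbeta_\rmL(\bA)$ to $\hat\bbeta_\rmL$ directly rather than through the intermediary $\hat\bbeta^f$.

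The genuine gap is in your treatment of $\hat\df_\rmL(\bA)/p$. You propose showing that $\hat\df_\rmL(\bA)/p$ converges to the \emph{same} fixed point $\df_\rmL/p$ by applying Lemma~\ref{lemma:df_diff} to the design $\bX\bA$. But Lemma~\ref{lemma:df_diff} (via \cite[Theorem F.1]{miolane2021distribution}) is proved for designs with i.i.d.\ entries, and the perturbed design $\bX\bA$ has correlated columns with (random) covariance $\bA^\top\bA$. The fixed-point system \eqref{def:fixed_point} is derived for the identity-covariance case, and extending it to a correlated design would require a different set of state-evolution equations involving $\bA^\top\bA$; the relationship between that fixed point and $\df_\rmL$ is exactly what one would then need to control, and it is not supplied by the lemmas you cite. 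The situation is made worse by the fact that $\bA$ is random and data-dependent, so even a deterministic-covariance version of the fixed-point theory would not apply directly. The paper's proof avoids this entirely: it bounds $\|\hat\bbeta_\rmL(\bA)\|_0/n$ from \emph{below} by pushing the perturbed solution through the unperturbed objective and appealing to the $\ell_0$-lower-bound event (Lemma~\ref{lemma:closeness_lasso_sparsity_perturbed}), and from \emph{above} by the analogous argument on the dual subgradient $\hat\bt(\bA)$ via the variational representation $\max_{\|\bt\|_\infty\le1}\min_{\|\bb\|_2\le M}w(\bb,\bt)$ (Lemma~\ref{lemma:closeness_lasso_subgradient_perturbed}); both are entirely in terms of observed objective values for the \emph{unperturbed} design and never invoke a fixed-point characterization for $\bX\bA$. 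You should replace your indirect route by a sandwich of this kind.
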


Lemma \ref{lemma:consistency_estimated_cov} is proved in Sections \ref{subsec:proof:lemma:consistency_estimated_cov_ridge} and \ref{subsec:proof:lemma:consistency_estimated_cov_lasso} for Ridge and Lasso, respectively. Compared to the assumptions in Theorem \ref{method:thm:consistency_estimated_cov}, besides reducing Assumption \ref{assumptions} to \ref{assumptions2} under Gaussian design and replacing assumptions about $\bSigma$, $\hat\bSigma$ with those about $\bA$, we also dropped Assumption \ref{assumptionA} as it can be proved under Gaussian design. Taking Lemma \ref{lemma:consistency_estimated_cov} momentarily, we prove Theorem \ref{method:thm:consistency_estimated_cov} below.

\begin{proof}[Proof of Theorem \ref{method:thm:consistency_estimated_cov}]
First we show that $\|\bA - \bI\|_{\op} \convP 0$ is a corollary of the assumptions in Theorem \ref{method:thm:consistency_estimated_cov}. Notice that Proposition \ref{examples_covariance_estimation} guarantees $\|\hat\bSigma-\bSigma\|_{\op} \rightarrow 0$. We have $\|\bA^{-1}-\bI\|_{\op} = \|\hat\bSigma^{1/2}\bSigma^{-1/2}-\bI\|_{\op} \leq \|\hat\bSigma^{1/2}-\bSigma^{1/2}\|_{\op} \|\bSigma^{-1/2}\|_{\op}\convP 0$ since $\|\hat\bSigma^{1/2}-\bSigma^{1/2}\|_{\op} \convP 0$ and eigenvalues of $\bSigma$ are bounded. Let $\bB = \bI - \bA^{-1}$, then the Neumann series $\sum_{k=0}^{\infty} \bB^k$ converges to $\bI$ in operator norm since $\|\bB\|_{\op} \convP 0$. Hence $\bA = (\bI - \bB)^{-1} = \sum_{k=0}^{\infty}\bB^k$ converges to $\bI$ in operator norm. In other words, $\|\bA - \bI\|_{\op} \convP 0$.

We are now left with arguing that under the assumptions of Lemma \ref{lemma:consistency_estimated_cov}, \eqref{eqn:consistency_estimated_cov} yields \eqref{method:thm:equation:consistency_estimated_cov}. Recall the definition of debiased estimators in \eqref{def:hat_beta_d}. We have the following:

\begin{itemize}
    \item $\suplk \|\hat\bbeta_k(\bA) - \hat\bbeta_k(\bA)\|_2 \convP 0$.
    \item $\frac{1}{\sqrt{n}}\|\bX-\bX\bA\|_{\op} \leq \frac{1}{\sqrt{n}}\|\bX\|_{\op} \|\bA-\bI\|_{\op} \convP 0$ from Corollary \ref{largest_singular_value}.
    \item $\frac{1}{\sqrt{n}}\|\by\|_2$ is bounded with probability $1-o(1)$.
    \item $\suplk \left|\frac{n}{n-\hat\df_k} - \frac{n}{n-\hat\df_k(\bA)}\right| \convP 0$ since $\suplk \left|\frac{\hat\df_k(\bA)}{n} - \frac{\hat\df_k}{n}\right| \convP 0$, together with the fact that $1-\frac{\hat\df_k}{n}$ is bounded below with probability $1-o(1)$ from Corollary \ref{cor:df_bounded}.
\end{itemize}
Then 
$$\suplk\|\hat\bbeta_k^d - \hat\bbeta_k^d(\bA)\|_2 \convP 0.$$

utilizing Lemma \ref{prop:sup_inequalities} and Lemma \ref{product_convergence}.

A similar proof yields
\begin{align*}
&\suplk |\hat\tau_k^2 - \hat\tau_k^2(\bA)| \convP 0\\
&\suplb |\hat\tau_{\rmL\rmR}^2 - \hat\tau_{\rmL\rmR}^2(\bA)| \convP 0.
\end{align*}
We are then left with articulating the expression of $\hat h^2$ and taking supremum over $\alpha_{\rmL} \in [0,1]$, which are straightforward using Lemma \ref{prop:sup_inequalities}, Lemma \ref{product_convergence} and proof techniques in Section \ref{subsec:proof:linear_combination} and \ref{subsec:proof:consistency}. We omit the details here.
\end{proof}

\subsection{Proof of Lemma \ref{lemma:consistency_estimated_cov}, Ridge case}\label{subsec:proof:lemma:consistency_estimated_cov_ridge}

In this section, we drop the subscript $\rmR$ for simplicity. First we show that
$$\suplk \|\hat\bbeta(\bA) - \hat\bbeta\|_2 \convP 0.$$
Recall the Ridge estimator and its perturbed version:
\begin{align*}
    \hat\bbeta &= (\frac{1}{n}\bX^\top\bX + \lambda\bI)^{-1}\frac{1}{n}\bX^\top\by\\
    \hat\bbeta(\bA) &= (\frac{1}{n}\bA^\top\bX^\top\bX\bA + \lambda\bI)^{-1}\frac{1}{n}\bA\bX^\top\by.
\end{align*}

We know both $\left\|(\frac{1}{n}\bX^\top\bX + \lambda\bI)^{-1}\right\|_{\op}$ and $\left\|(\frac{1}{n}\bA^\top\bX^\top\bX\bA + \lambda\bI)^{-1}\right\|_{\op}$ are bounded by $1/\lambda_{\min}$, uniformly for all $\lambda \in [\lambda_{\min},\lambda_{\max}]$. Further, their difference satisfies
$$
   (\frac{1}{n}\bX^\top\bX + \lambda\bI)^{-1} - (\frac{1}{n}\bA^\top\bX^\top\bX\bA+\lambda\bI)^{-1}   = (\frac{1}{n}\bX^\top\bX + \lambda\bI)^{-1} (\bA\bX^\top\bX\bA - \bX^\top\bX)  (\frac{1}{n}\bA^\top\bX^\top\bX\bA+\lambda\bI)^{-1}.
$$

Then we know $\supl\left\|(\frac{1}{n}\bX^\top\bX + \lambda\bI)^{-1} - (\frac{1}{n}\bA^\top\bX^\top\bX\bA+\lambda\bI)^{-1}\right\|_{\op} \convP 0$ by observing that $\frac{1}{\sqrt{n}}\|\bX\|_{\op}$ is bounded with probability $1-o(1)$, $\|\bI-\bA\|_{\op} \convP 0$, and applying Lemma \ref{product_convergence}. Another application of Lemma \ref{product_convergence} with additionally the fact that $\frac{1}{\sqrt{n}}\|\by\|_2$ is bounded concludes that $\suplk \|\hat\bbeta(\bA) - \hat\bbeta\|_2 \convP 0$.

Next, for $\hat\df/p$ and its perturbed counterpart, we have
\begin{align*}
    &\supl \left|\frac{\hat\df(\bA)}{p} - \frac{\hat\df}{p}\right| \\
    =& \supl\frac{1}{p}\left|\left(n-\text{Tr}\left((\frac{1}{n}\bX^\top\bX + \lambda\bI)^{-1}\frac{1}{n}\bX^\top\bX\right)\right) - \left(n-\text{Tr}\left((\frac{1}{n}\bA^\top\bX^\top\bX\bA + \lambda\bI)^{-1}\frac{1}{n}\bA^\top\bX^\top\bX\bA\right)\right)\right|\\
    =& \supl\frac{1}{p}\text{Tr}\left((\frac{1}{n}\bX^\top\bX + \lambda\bI)^{-1}\frac{1}{n}\bX^\top\bX - (\frac{1}{n}\bA^\top\bX^\top\bX\bA + \lambda\bI)^{-1}\frac{1}{n}\bA^\top\bX^\top\bX\bA\right)\\
    =& \supl\frac{\lambda}{p}\text{Tr}\left((\frac{1}{n}\bX^\top\bX + \lambda\bI)^{-1} - (\frac{1}{n}\bA^\top\bX^\top\bX\bA + \lambda\bI)^{-1}\right)\\
    \leq & \supl\frac{\lambda_{\max}}{p} \cdot p \left\|(\frac{1}{n}\bX^\top\bX + \lambda\bI)^{-1} - (\frac{1}{n}\bA^\top\bX^\top\bX\bA + \lambda\bI)^{-1}\right\|_{\op}\convP 0.
\end{align*}

where the third equality follows from $(\frac{1}{n}\bX^\top\bX + \lambda\bI)^{-1}\frac{1}{n}\bX^\top\bX = \bI - \lambda(\frac{1}{n}\bX^\top\bX + \lambda\bI)^{-1}$ and its perturbed counterpart. Hence the proof is complete.

\subsection{Proof of Lemma \ref{lemma:consistency_estimated_cov}, Lasso case}\label{subsec:proof:lemma:consistency_estimated_cov_lasso}

We drop the subscript $\rmL$ in this section for simplicity. The proof of the Lasso case is more involved. Denote 
\begin{align*}
\mcl{L}(\bb) &:= \frac{1}{2n} \|\by-\bX\bb\|_2^2 + \frac{\lambda}{\sqrt{n}}\|\bb\|_1\\
\mcl{L}_{\bA}(\bb) &:= \frac{1}{2n} \|\by-\bX\bA\bb\|_2^2 + \frac{\lambda}{\sqrt{n}}\|\bb\|_1
\end{align*}

the Lasso objective function and its perturbed counterpart. We first argue that $\mcl{L}(\hat\bbeta)$ is close to $\mcl{L}(\hat\bbeta(\bA))$ where $\hat\bbeta(\bA) := \argmin \mcl{L}_{\bA}$ is the perturbed Lasso solution (Lemma \ref{lemma:closeness_lasso_objective}), then we utilize techniques around the local stability of Lasso objective (c.f.~\cite{miolane2021distribution}, \cite{celentano2020lasso}) to prove the desired results in Sections \ref{subsubsec:proof:lemma:consistency_estimated_cov_lasso_p1} and \ref{subsubsec:proof:lemma:consistency_estimated_cov_lasso_p2}.

\begin{lemma}[Closeness of Lasso objectives]\label{lemma:closeness_lasso_objective}
Under the assumptions in Lemma \ref{lemma:consistency_estimated_cov}, we have
$$\supl|\mcl{L}(\hat\bbeta(\bA)) - \mcl{L}(\hat\bbeta)| \convP 0.$$

\end{lemma}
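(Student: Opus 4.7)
The plan is to exploit a three-term sandwich decomposition together with optimality. Since $\hat\bbeta=\argmin\mcl{L}$ and $\hat\bbeta(\bA)=\argmin\mcl{L}_{\bA}$, we have the two optimality inequalities $\mcl{L}(\hat\bbeta)\le\mcl{L}(\hat\bbeta(\bA))$ and $\mcl{L}_{\bA}(\hat\bbeta(\bA))\le\mcl{L}_{\bA}(\hat\bbeta)$. Writing
\begin{equation*}
0\le\mcl{L}(\hat\bbeta(\bA))-\mcl{L}(\hat\bbeta)=\bigl[\mcl{L}(\hat\bbeta(\bA))-\mcl{L}_{\bA}(\hat\bbeta(\bA))\bigr]+\bigl[\mcl{L}_{\bA}(\hat\bbeta(\bA))-\mcl{L}_{\bA}(\hat\bbeta)\bigr]+\bigl[\mcl{L}_{\bA}(\hat\bbeta)-\mcl{L}(\hat\bbeta)\bigr],
\end{equation*}
the middle bracket is $\le 0$, so it suffices to show that the two outer brackets are $o_P(1)$ uniformly in $\lambda\in[\lambda_{\min},\lambda_{\max}]$.

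The outer brackets have the common form $\mcl{L}(\bb)-\mcl{L}_{\bA}(\bb)$, evaluated at $\bb=\hat\bbeta$ or $\bb=\hat\bbeta(\bA)$. The $\ell_1$-penalty cancels exactly, leaving only the quadratic losses. Applying the identity $\|a\|_2^2-\|b\|_2^2=\langle a-b,a+b\rangle$ to $a=\by-\bX\bb$ and $b=\by-\bX\bA\bb$ yields
\begin{equation*}
\mcl{L}(\bb)-\mcl{L}_{\bA}(\bb)=\tfrac{1}{2n}\,\bigl\langle\bX(\bA-\bI)\bb,\ 2\by-\bX(\bI+\bA)\bb\bigr\rangle,
\end{equation*}
which is controlled in absolute value by
\begin{equation*}
\tfrac{1}{n}\,\|\bI-\bA\|_{\op}\,\|\bX\|_{\op}\,\|\bb\|_2\,\bigl(\|\by\|_2+\tfrac12\|\bX\|_{\op}(1+\|\bA\|_{\op})\|\bb\|_2\bigr).
\end{equation*}
Now by Corollary \ref{largest_singular_value}, $\|\bX\|_{\op}/\sqrt n$ is bounded in probability; under Assumption \ref{assumptions2}, $\|\by\|_2/\sqrt n$ is bounded in probability; and $\|\bA\|_{\op}\le 1+\|\bA-\bI\|_{\op}=1+o_P(1)$. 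So the entire right-hand side is $\|\bA-\bI\|_{\op}\cdot O_P(1)\cdot\|\bb\|_2\cdot(1+\|\bb\|_2)$, and none of these factors depends on $\lambda$ except through $\|\bb\|_2$.

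Finally, I would invoke uniform boundedness of the two relevant coefficient norms: $\supl\|\hat\bbeta\|_2$ is bounded with probability $1-o(1)$ via Lemma \ref{lemma:uvw_convergence} (which gives $\supl\|\hat\bw_\rmL\|_2$ bounded) combined with Assumption \ref{assumptions}(ii) on $\|\bbeta\|_2$; and $\supl\|\hat\bbeta(\bA)\|_2\le M$ with probability $1-o(1)$ is exactly Assumption \ref{assumptionA}(1). Feeding these back into the bound displayed above gives
\begin{equation*}
\supl|\mcl{L}(\bb)-\mcl{L}_{\bA}(\bb)|\ \le\ \|\bI-\bA\|_{\op}\cdot O_P(1)\ \convP\ 0,\qquad\bb\in\{\hat\bbeta,\hat\bbeta(\bA)\},
\end{equation*}
and therefore $\supl|\mcl{L}(\hat\bbeta(\bA))-\mcl{L}(\hat\bbeta)|\convP 0$, as required.

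The main obstacle is the uniform-in-$\lambda$ control of $\|\hat\bbeta(\bA)\|_2$, because the perturbed Lasso is not directly covered by Lemma \ref{lemma:uvw_convergence}. This is precisely the content of Assumption \ref{assumptionA}(1); everything else in the argument is a short quadratic-expansion estimate combined with standard operator-norm bounds that are independent of $\lambda$.
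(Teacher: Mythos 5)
Your proposal takes essentially the same approach as the paper: the same optimality-based three-term decomposition with the middle bracket $\le 0$, reduction to controlling $\tfrac{1}{2n}\bigl|\|\by - \bX\bb\|_2^2 - \|\by - \bX\bA\bb\|_2^2\bigr|$ at $\bb\in\{\hat\bbeta,\hat\bbeta(\bA)\}$ via operator-norm bounds, and recognition that the key remaining ingredient is uniform-in-$\lambda$ boundedness of $\|\hat\bbeta\|_2$ and $\|\hat\bbeta(\bA)\|_2$.

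One small but real discrepancy: you justify $\supl\|\hat\bbeta(\bA)\|_2\le M$ by invoking Assumption \ref{assumptionA}(1), but Lemma \ref{lemma:closeness_lasso_objective} is stated under the assumptions of Lemma \ref{lemma:consistency_estimated_cov}, which are $\|\bA-\bI\|_{\op}\convP 0$ together with the Gaussian Assumption \ref{assumptions2} --- Assumption \ref{assumptionA} is deliberately dropped there, since the paper remarks it ``can be proved under Gaussian design.'' Under the lemma's actual hypotheses one must therefore derive the boundedness rather than assume it; the paper does so by rerunning the argument of Lemma \ref{lemma:uvw_convergence} with \cite[Theorem 6]{celentano2020lasso} in place of Lemma \ref{lemma:marginal_single}. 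Assumption \ref{assumptionA}(1) is only invoked for this step later, in the universality extension of Section \ref{subsec:robustness_universality}. So the content of your step is right, but the citation is misplaced for the Gaussian setting in which this lemma lives.
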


\begin{proof}
By optimality, $\mcl{L}(\hat\bbeta(\bA)) \geq \mcl{L}(\hat\bbeta)$ and $\mcl{L}_{\bA}(\hat\bbeta(\bA)) \leq \mcl{L}_{\bA}(\hat\bbeta)$. Further, we have
\begin{align*}
& \supl\left(\mcl{L}(\hat\bbeta(\bA)) - \mcl{L}(\hat\bbeta)\right)\\
=& \supl\left(\mcl{L}(\hat\bbeta(\bA)) - \mcl{L}_{\bA}(\hat\bbeta(\bA)) + \mcl{L}_{\bA}(\hat\bbeta(\bA)) - \mcl{L}_{\bA}(\hat\bbeta) + \mcl{L}_{\bA}(\hat\bbeta) - \mcl{L}(\hat\bbeta)\right)\\
\leq &  \supl\left(\mcl{L}(\hat\bbeta(\bA)) - \mcl{L}_{\bA}(\hat\bbeta(\bA)) + \mcl{L}_{\bA}(\hat\bbeta) - \mcl{L}(\hat\bbeta)\right)\\
\leq & \supl\left(\frac{1}{2n}\left|\|\by - \bX\hat\bbeta(\bA)\|_2^2-\|\by - \bX\bA\hat\bbeta(\bA)\|_2^2\right| + \frac{1}{2n}\left|\|\by - \bX\hat\bbeta\|_2^2-\|\by - \bX\bA\hat\bbeta\|_2^2\right|\right).
\end{align*}

We know from Lemma \ref{lemma:uvw_convergence} that $\hat\bbeta$ is uniformly bounded with probability $1-o(1)$. Further, replacing Lemma \ref{lemma:marginal_single} with \cite[Theorem 6]{celentano2020lasso}, we can use similar proof as Lemma \ref{lemma:uvw_convergence} to show that $\hat\bbeta(\bA)$ is also uniformly bounded with probability $1-o(1)$. Therefore, utilizing Lemma \ref{product_convergence},
\begin{align*}
    &\supl\frac{1}{2n}\left|\|\by - \bX\hat\bbeta\|_2^2-\|\by - \bX\bA\hat\bbeta\|_2^2\right|\\
    \leq & \supl\frac{1}{2}\cdot\frac{1}{\sqrt{n}}\|2\by + \bX\hat\bbeta+\bX\bA\hat\bbeta\|_2 \cdot \frac{1}{\sqrt{n}}\|\bX\hat\bbeta - \bX\bA\hat\bbeta\|_2
    \convP  0,
\end{align*}

and similarly $\supl\frac{1}{2n}\left|\|\by - \bX\hat\bbeta(\bA)\|_2^2-\|\by - \bX\bA\hat\bbeta(\bA)\|_2^2\right| \convP 0$. Hence, $$\supl|\mcl{L}(\hat\bbeta(\bA)) - \mcl{L}(\hat\bbeta)| \convP 0$$ as desired.
\end{proof}

\subsubsection{Closeness of Lasso solutions}\label{subsubsec:proof:lemma:consistency_estimated_cov_lasso_p1}

We introduce a sufficient Lemma for the first line of \eqref{eqn:consistency_estimated_cov}:

\begin{lemma}[Uniform local strong convexity of the Lasso objective]\label{lemma:uniform_local_convexity}
Under the assumptions in Lemma \ref{lemma:consistency_estimated_cov}, there exists a constant $C$ such that for any $\lambda$-dependent vector $\check\bbeta$,
$$\supl \mcl{L}(\check\bbeta)-\mcl{L}(\hat\bbeta) \geq \supl \min\{C\|\check\bbeta-\hat\bbeta\|_2^2,C\|\check\bbeta-\hat\bbeta\|_2\}.$$
\end{lemma}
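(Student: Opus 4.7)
The plan is to reduce to a lower bound on $\frac{1}{2n}\|\bX\bv\|_2^2$ via KKT, then handle the null-space issue using Assumption \ref{assumptionL}(1) together with a sparse/dense dichotomy for $\bv=\check\bbeta-\hat\bbeta$, and finally uniformize over $\lambda$.

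First, I would invoke the KKT conditions for the Lasso at $\hat\bbeta = \hat\bbeta(\lambda)$, namely $\tfrac{1}{n}\bX^\top(\by-\bX\hat\bbeta) = \tfrac{\lambda}{\sqrt n}\bs$ for some subgradient $\bs\in\partial\|\cdot\|_1(\hat\bbeta)$. A direct expansion of the quadratic part together with convexity of the $\ell_1$ norm (which gives $\|\check\bbeta\|_1-\|\hat\bbeta\|_1\ge \bs^\top\bv$) yields the clean identity/inequality
\[
\mcl{L}(\check\bbeta)-\mcl{L}(\hat\bbeta) \;=\; \tfrac{1}{2n}\|\bX\bv\|_2^2 \;+\; \tfrac{\lambda}{\sqrt n}\bigl(\|\check\bbeta\|_1-\|\hat\bbeta\|_1 - \bs^\top\bv\bigr).
\]
Letting $S=\mathrm{supp}(\hat\bbeta)$, the last parenthesis is bounded below by $\sum_{j\in S^c}(1-|s_j|)|v_j|$. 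Provided one can show uniform strict dual feasibility $\sup_{\lambda}\|\bs_{S^c}\|_\infty \le 1-\rho$ for some $\rho>0$ (a standard consequence of the pointwise Miolane--Montanari analysis combined with Lemma \ref{lemma:wf_weak_lipschitz}), we get the strengthened lower bound $\tfrac{1}{2n}\|\bX\bv\|_2^2 + \tfrac{\lambda_{\min}\rho}{\sqrt n}\|\bv_{S^c}\|_1$.

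Next comes the cone dichotomy. Using that $\|\hat\bbeta\|_0\le (1-\gamma)n$ for some $\gamma>0$ uniformly in $\lambda$ (this is the filtering built into step~1 of HEDE via $t_{\max}$), Assumption \ref{assumptionL}(1) gives a constant $\kappa_{\min}>0$ with $\min_{|T|\le (1-\gamma)n}\sigma_{\min}(\bX_T)\ge \kappa_{\min}\sqrt n$ on an event of probability $1-o(1)$. I would then split into two cases. Case A, $\|\bv_{S^c}\|_1 \le K\|\bv_S\|_2\sqrt{|S|}$ for some constant $K$: a restricted-eigenvalue-type argument—augmenting $S$ by the coordinates of the largest entries of $\bv_{S^c}$ in blocks of size $|S|$ as in Bickel--Ritov--Tsybakov—combined with the sparse-submatrix bound, produces $\tfrac{1}{2n}\|\bX\bv\|_2^2\ge c_1\|\bv\|_2^2$. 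Case B, $\|\bv_{S^c}\|_1$ large: the extra term $\tfrac{\lambda_{\min}\rho}{\sqrt n}\|\bv_{S^c}\|_1$ already dominates, giving $\ge c_2 \|\bv\|_1/\sqrt n\ge c_2'\|\bv\|_2$ provided $\|\bv\|_2\ge 1$, and at worst $\ge c_3\|\bv\|_2^2$ on a bounded ball. Combining, we obtain a pointwise bound of the form $\min\{C\|\bv\|_2^2,C\|\bv\|_2\}$, with a constant $C$ not depending on $\lambda$ or $\check\bbeta$.

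For uniformization, I would discretize $[\lambda_{\min},\lambda_{\max}]$ on a grid $\{\lambda_1,\dots,\lambda_N\}$ with width $\eps$, apply the bound above at each $\lambda_i$ with a union bound, and then interpolate: Lemma \ref{lemma:wf_weak_lipschitz} gives $\|\hat\bbeta(\lambda)-\hat\bbeta(\lambda_i)\|_2\le M\eps$ and Lemma \ref{lemma:ridge_loss_difference_lambda}'s Lasso counterpart gives $|\mcl{L}_\lambda(\bb)-\mcl{L}_{\lambda_i}(\bb)|\le K\eps$ on a ball, so the quadratic/linear lower bound survives with a slightly smaller constant after choosing $\eps$ sufficiently small. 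Taking $\eps\to 0$ slowly with $n$ gives the claimed sup-in-$\lambda$ statement.

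The main obstacle will be establishing uniform strict dual feasibility with a constant $\rho>0$ independent of $\lambda$, since the subgradient $\bs(\lambda)$ varies with $\lambda$ and can in principle approach $\pm 1$ on $S^c$ as $\lambda$ varies. I would handle this by showing, via the AO/CGMT representation of Sections \ref{sec:optconvert}--\ref{subsubsec:connect_PO_AO} extended uniformly (as in Lemma \ref{lemma:marginal_single}), that the limiting scalar-valued problem admits a strictly feasible dual with a $\rho(\lambda)$ that is continuous and strictly positive on $[\lambda_{\min},\lambda_{\max}]$, hence bounded below by some $\rho_0>0$ by compactness. Transferring this to the empirical Lasso via the same discretize-and-interpolate scheme completes the argument.
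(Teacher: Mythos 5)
The proof has a genuine gap in the dual-feasibility step. You claim ``uniform strict dual feasibility'' in the form $\sup_{\lambda}\|\bs_{S^c}\|_\infty \le 1-\rho$ for a constant $\rho>0$, i.e.\ that \emph{every} off-support coordinate of the subgradient is bounded away from $\pm 1$. This is false in the proportional regime $p/n\to\delta$. The Lasso subgradient vector, properly rescaled, has an asymptotic empirical distribution with mass concentrating near $\pm 1$ on a positive fraction of off-support coordinates; in particular $\sup_{j\in S^c}|s_j|\to 1$, and no uniform gap $\rho>0$ exists. What \emph{is} true---and is exactly what the paper's event $\mcl{A}$ encodes via $\frac{1}{n}\#\{j:|\hat t_j|\geq 1 - \Delta/2\}\leq 1 - \zeta/2$, coming from \cite[Theorem~E.5]{miolane2021distribution}---is that the \emph{fraction} of coordinates with $|t_j|\geq 1-\Delta/2$ is bounded away from $1$. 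Your Case~B therefore does not go through as written: the term $\sum_{j\in S^c}(1-|s_j|)|v_j|$ cannot be bounded below by $\rho\|\bv_{S^c}\|_1$.

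The correct route (followed by \cite[Lemma~B.9]{celentano2020lasso} and the paper) replaces $S^c$ by $T^c$, where $T=\{j:|t_j|\ge 1-\Delta/2\}\supseteq S$ has $|T|\le (1-\zeta/2)n$. On $T^c$ you do get the uniform gap $1-|s_j|\ge\Delta/2$, so the extra term controls $\|\bv_{T^c}\|_1$. The dichotomy is then run against $T$, not $S$: either $\|\bv_{T^c}\|_1$ is small relative to $\|\bv_T\|_2$, in which case $\bv$ is nearly supported on a set of density $\leq 1-\zeta/4$ and Assumption~\ref{assumptionL}(1) (applied at the sparsity level $n(1-\zeta/4)$, not $|S|$) yields $\tfrac{1}{2n}\|\bX\bv\|_2^2\gtrsim\|\bv\|_2^2$; or else $\|\bv_{T^c}\|_1$ dominates and the $\ell_1$ term gives the linear lower bound. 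Once this replacement is made, the rest of your skeleton---the KKT decomposition, the dichotomy, and the discretize-and-interpolate uniformization in $\lambda$ using the Lipschitz lemmas---matches the paper's proof quite closely. Two further minor remarks: (i) the uniformity in $\lambda$ of the bound on $\|T\|$ requires an $\epsilon$-net extension of \cite[Theorem~E.5]{miolane2021distribution}, which the paper invokes but you do not explicitly address; (ii) your appeal to the filter $t_{\max}$ in step~1 of HEDE to bound $\|\hat\bbeta\|_0/n$ is unnecessary and in fact circular for this lemma---the bound $\zeta\ge\zeta_{\min}>0$ comes from \eqref{lemma:existence_fixed_point}, not from algorithmic filtering.
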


The first line of \eqref{eqn:consistency_estimated_cov} directly follows by plugging in $\check\bbeta = \hat\bbeta(\bA)$ in Lemma \ref{lemma:uniform_local_convexity}.

\begin{proof}[Proof of Lemma \ref{lemma:uniform_local_convexity}]
Notice that this Lemma is the uniform extension of \cite[Lemma B.9]{celentano2020lasso} (their statement only involves $\|\check\bbeta-\hat\bbeta\|_2^2$ with an additional constraint, but our statement naturally follows from the proof procedure). Thus we only elaborate the necessary extensions of their proof here:

Consider the critical event in the proof:
$$\mcl{A}:= \left\{\frac{1}{\sqrt{n}}\kappa_-(\bX,n(1-\zeta/4))\geq\kappa_{\min}\right\} \cap \left\{\frac{1}{\sqrt{n}}\|\bX\|_{\op}\leq C\right\} \cap \left\{\frac{1}{n}\#\{j:|\hat t_j|\geq 1 - \Delta/2\}\leq 1 - \zeta/2\right\}$$

where $\zeta := 1 - \df/n$ as in \eqref{def:fixed_point} and $\bt := \frac{1}{\sqrt{n}\lambda}\bX^\top(\by - \bX\hat\bbeta)$ is the Lasso subgradient. We want to show that there exist constants $C, \kappa_{\min},\Delta$ such that event $\mcl{A}$ happens with probability $1-o(1)$, uniformly for $\lambda \in [\lambda_{\min},\lambda_{\max}]$.

For $\{\frac{1}{\sqrt{n}}\|\bX\|_{\op} \leq C\}$, it does not depend on $\lambda$, so Corollary \ref{largest_singular_value} guarantees the high probability.

For $\left\{\frac{1}{\sqrt{n}}\kappa_-(\bX,n(1-\zeta/4))\geq\kappa_{\min}\right\}$, we know $1-\zeta/4 \leq 1-\zeta_{\min}/4$ from \eqref{lemma:existence_fixed_point}, and we also know $\kappa_-(\bX,k) \geq \kappa_-(\bX,k')$ for any $k\leq k'$ by the definition of $\kappa_-$. Therefore the high-probability (which is proved in \cite[Section B.5.4]{celentano2020lasso}) naturally extends uniformly.

Finally, $\left\{\frac{1}{n}\#\{j:|\hat t_j|\geq 1 - \Delta/2\}\leq 1 - \zeta/2\right\}$ follows from \cite[Theorem E.5]{miolane2021distribution} by an $\epsilon$-net argument (c.f. \cite[Section E.3.4]{miolane2021distribution} and Section \ref{subsubsec:uniform_control}), which we omit here for conciseness.

Now that we have established $\mcl{A}$ happens uniformly with probability $1-o(1)$, the rest of the proof extends almost verbatim the proof in \cite[Section B.5.4]{celentano2020lasso} by taking the supremum in all inequalities. 
\end{proof}

\subsubsection{Closeness of Lasso sparsities}\label{subsubsec:proof:lemma:consistency_estimated_cov_lasso_p2}

Taking inspiration from \cite{miolane2021distribution}, we prove the convergence from above and below using different strategies. We state and prove two supporting Lemmas \ref{lemma:closeness_lasso_sparsity_perturbed} and \ref{lemma:closeness_lasso_subgradient_perturbed}. Taking those lemmas momentarily and noticing the fact that a nonzero entry in the Lasso solution guarantees a $\pm 1$ gradient, we immediately have 
$$\supl \left|\frac{\hat\df(\bA)}{p} - \frac{\df}{p} \right| \convP 0.$$
The second line of \eqref{eqn:consistency_estimated_cov} then follows by further applying Lemma \ref{lemma:df_diff}.

\begin{lemma}\label{lemma:closeness_lasso_sparsity_perturbed}
Under the assumptions in Lemma \ref{lemma:consistency_estimated_cov}, for any $\epsilon > 0$,
$$\PP\left(\forall \lambda \in [\lambda_{\min},\lambda_{\max}], \frac{1}{p} \|\hat\bbeta(\bA)\|_0 \geq \frac{\df}{p} - \epsilon\right) = 1-o(1).$$
\end{lemma}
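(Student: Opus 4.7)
The plan is to transfer sparsity from $\hat\bbeta$ to $\hat\bbeta(\bA)$ via uniform $\ell_2$ closeness, using a smoothed support indicator at the natural Lasso coordinate scale $c/\sqrt n$.

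First I would establish $\supll \|\hat\bbeta(\bA) - \hat\bbeta\|_2 \convP 0$. Plugging $\check\bbeta = \hat\bbeta(\bA)$ into Lemma \ref{lemma:uniform_local_convexity} yields $\min\{C\|\hat\bbeta(\bA)-\hat\bbeta\|_2^2,\, C\|\hat\bbeta(\bA)-\hat\bbeta\|_2\} \leq \mcl{L}(\hat\bbeta(\bA)) - \mcl{L}(\hat\bbeta)$ uniformly in $\lambda$, while Lemma \ref{lemma:closeness_lasso_objective} shows the right side is $o_p(1)$ uniformly. Next, for a constant $c>0$ to be chosen, let $\phi_c$ be the piecewise-linear function with $\mathbb{1}\{|x|\geq c/\sqrt n\} \leq \phi_c(x) \leq \mathbb{1}\{|x|\geq c/(2\sqrt n)\}$, whose coordinate Lipschitz constant is $2\sqrt n/c$. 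Then $\bb \mapsto \frac{c}{2\sqrt{np}}\sum_j \phi_c(b_j)$ is $1$-Lipschitz in $\ell_2$, and Lemma \ref{lemma:marginal_single} gives $\supll |p^{-1}\sum_j \phi_c(\hat\bbeta_j) - p^{-1}\sum_j \E[\phi_c(\hat\bbeta_j^f)]| = o_p(1)$ using boundedness of $n/p$.

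On the population side, $\hat\bbeta_j^f$ equals the soft-threshold of $\beta_j + \tau_L Z_j$ with $Z_j \sim \mcn(0,1)$ at level $\lambda_L/(\sqrt n \zeta_L)$; since $\tau_L \geq \tau_{\min}/\sqrt n$ by \ref{lemma:existence_fixed_point}, a uniform Gaussian density bound gives $\PP(0 < |\hat\bbeta_j^f| < c/\sqrt n) \leq Mc$ with $M = M(\tau_{\min})$. Because the divergence of soft-thresholding exactly equals the number of surviving coordinates, \eqref{def:fixed_point} forces $\df = \E[\|\hat\bbeta^f\|_0]$, so $p^{-1}\sum_j \E[\phi_c(\hat\bbeta_j^f)] \geq \df/p - Mc$ uniformly in $\lambda$. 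Combining, $p^{-1}\#\{j:|\hat\bbeta_j| \geq c/(2\sqrt n)\} \geq \df/p - Mc - o_p(1)$ uniformly. To transfer to $\hat\bbeta(\bA)$, note that $|\hat\bbeta_j| \geq c/(2\sqrt n)$ together with $|\hat\bbeta(\bA)_j - \hat\bbeta_j| < c/(4\sqrt n)$ forces $|\hat\bbeta(\bA)_j| > 0$, and Markov's inequality gives $\#\{j : |\hat\bbeta(\bA)_j - \hat\bbeta_j| \geq c/(4\sqrt n)\} \leq 16n\|\hat\bbeta(\bA) - \hat\bbeta\|_2^2 / c^2 = o_p(p)$. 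Hence $\|\hat\bbeta(\bA)\|_0/p \geq \df/p - Mc - o_p(1)$; picking $c$ with $Mc \leq \epsilon/2$ completes the proof.

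The main obstacle is that $\|\cdot\|_0$ is not Lipschitz in $\ell_2$, so the closeness $\hat\bbeta(\bA) \approx \hat\bbeta$ does not directly transfer to sparsity closeness. The critical design choice is the scale $c/\sqrt n$ for the smoothed indicator, matching the natural Lasso coordinate scale since $\tau_L = \Theta(1/\sqrt n)$: at this scale the limit density is $O(\sqrt n)$, yielding a controlled $O(c)$ density contribution, while the Markov transfer factor $n/c^2$ is absorbed after dividing by $p$ thanks to the proportional asymptotic $p/n \to \delta$.
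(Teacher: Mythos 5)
Your proof is correct, and it takes a genuinely different route from the paper's. The paper transfers sparsity by staying at the level of loss values: it invokes the Miolane--Montanari ``local sparsity stability'' fact (their Lemma F.4) that any near-minimizer of $\mcl{L}_\lambda$ must already have $\|\cdot\|_0/p \gtrsim \df/p$, then shows $\hat\bbeta_\lambda(\bA)$ is a near-minimizer of $\mcl{L}_{\lambda_i}$ for a nearby grid point via the three events \eqref{lasso_sparsity_perturbed_event1}--\eqref{lasso_sparsity_perturbed_event3}, and finally patches over $\lambda$ with an $\epsilon$-net plus Lipschitzness of $\df_\lambda/p$. You instead transfer sparsity at the level of the estimator itself: you first pull out the $\ell_2$ closeness $\supll\|\hat\bbeta(\bA)-\hat\bbeta\|_2 \convP 0$ (from Lemmas \ref{lemma:uniform_local_convexity} and \ref{lemma:closeness_lasso_objective}, independently of this lemma), then lower-bound the count of coordinates of $\hat\bbeta$ exceeding $c/(2\sqrt n)$ via a $1$-Lipschitz smoothed-indicator functional fed into Lemma \ref{lemma:marginal_single}, matched against an anti-concentration bound $\PP(0<|\hat\bbeta_j^f|<c/\sqrt n)\le Mc$ that is uniform in $\lambda$ because $\tau_L\ge\tau_{\min}/\sqrt n$, and finally transfer to $\hat\bbeta(\bA)$ by Markov: only $O(n\|\hat\bbeta(\bA)-\hat\bbeta\|_2^2/c^2)=o_p(p)$ coordinates can move by $c/(4\sqrt n)$. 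The computations that make this work (the $1/\sqrt{np}$ normalization of the Lipschitz constant, the $O(\sqrt n)$ Gaussian density bound, the identity $\df=\E\|\hat\bbeta^f\|_0$ from the divergence formula in \eqref{def:fixed_point}, and the absorption of the $n/c^2$ Markov factor by $p/n\to\delta$) all check out. What your route buys is that it avoids the imported near-minimizer-sparsity result and the explicit $\epsilon$-net over $\lambda$, since the uniform-in-$\lambda$ control is already carried by Lemma \ref{lemma:marginal_single}; what it costs is that it leans on the full strength of that uniform marginal convergence for the Lasso (whose proof goes through $\alpha$-smoothing) and requires establishing the $\ell_2$ closeness first, whereas the paper's argument works purely with objective values and keeps the two halves of Lemma \ref{lemma:consistency_estimated_cov} more parallel.
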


\begin{proof}
As part of the proof in Lemma \ref{lemma:closeness_lasso_objective}, we know
\begin{equation}\label{lasso_sparsity_perturbed_event1}
    \PP\left(\forall \lambda \in [\lambda_{\min},\lambda_{\max}], \mcl{L}(\bb) \leq \mcl{L}_{\bA}(\bb) + \epsilon\right) = 1 - o(1)
\end{equation}
for any bounded $\bb$.

From \cite[Lemma F.4]{miolane2021distribution}, we know
\begin{equation}\label{lasso_sparsity_perturbed_event2}
    \supl\PP\left(\forall \bb, \mcl{L}(\bb) \leq \mcl{L}(\hat\bbeta)+\gamma\epsilon^3 \Rightarrow \frac{1}{p}\|\bb\|_0 \geq \frac{\df}{p} + \epsilon\right) = 1 - o(1)
\end{equation}
for some constant $\gamma$.

From \cite[Lemma B.12]{celentano2020lasso}, we know
\begin{equation}\label{lasso_sparsity_perturbed_event3}
    \PP\left(\forall \lambda,\lambda' \in [\lambda_{\min},\lambda_{\max}], \mcl{L}_{\lambda',\bA}(\hat\bbeta_{\lambda}(\bA)) \leq \mcl{L}_{\lambda',\bA}(\hat\bbeta_{\lambda'}(\bA)) + K|\lambda-\lambda'|\right) = 1 - o(1)
\end{equation}
for some constant $K$, where we write subscript to make the dependence on $\lambda$ explicit when necessary. 

Now let $K$ and $\gamma$ be the constants mentioned above, and let $C$ be the constant in Lemma \ref{lemma:lipschitz_fixed_point}(b) Consider any $\epsilon > 0$. Define $\epsilon' = \min\left\{\frac{\gamma\epsilon^3}{3K},\frac{\epsilon}{C+1}\right\}$. Let $k = \lceil \frac{\lambda_{\max}-\lambda_{\min}}{\epsilon'}\rceil$, and let $\lambda_i = \lambda_{\min} + i\epsilon'$. Applying a union bound on \eqref{lasso_sparsity_perturbed_event2} yields
\begin{equation}\label{lasso_sparsity_perturbed_event4}
    \PP\left(\forall i, \forall b, \mcl{L}_{\lambda_i}(\bb) \leq \mcl{L}_{\lambda_i}(\hat\bbeta_{\lambda_i})+\gamma\epsilon^3 \Rightarrow \frac{1}{p}\|\bb\|_0 \geq \frac{\df_{\lambda_i}}{p} + \epsilon\right) = 1 - o(1).
\end{equation}

Consider the intersection of events in \eqref{lasso_sparsity_perturbed_event1}, \eqref{lasso_sparsity_perturbed_event3} and \eqref{lasso_sparsity_perturbed_event4}, which has probability $1-o(1)$. For any $\lambda \in [\lambda_{\min},\lambda_{\max}]$, let $i$ be such that $\lambda \in [\lambda_i,\lambda_{i+1}]$. We have
\begin{align*}
\mcl{L}_{\lambda_i}(\hat\bbeta_{\lambda}(\bA)) &\leq \mcl{L}_{\lambda_i,\bA}(\hat\bbeta_{\lambda}(\bA)) + \frac{1}{3}\gamma\epsilon^3\\
&\leq \mcl{L}_{\lambda_i,\bA}(\hat\bbeta_{\lambda_i}(\bA)) + \frac{1}{3}\gamma\epsilon^3 + K|\lambda-\lambda'|\\
&= \mcl{L}_{\lambda_i,\bA}(\hat\bbeta_{\lambda_i}(\bA)) + \frac{2}{3}\gamma\epsilon^3\\
&\leq  \mcl{L}_{\lambda_i,\bA}(\hat\bbeta_{\lambda_i}) + \frac{2}{3}\gamma\epsilon^3\\
&\leq  \mcl{L}_{\lambda_i}(\hat\bbeta_{\lambda_i}) + \gamma\epsilon^3
\end{align*}

where the first and fourth inequalities comes from \eqref{lasso_sparsity_perturbed_event1}, the second inequality comes from \eqref{lasso_sparsity_perturbed_event3}, and the third inequality follows by optimality. Now since we are on event \eqref{lasso_sparsity_perturbed_event4}, we have
\begin{align*}
\frac{1}{p} \|\hat\bbeta_{\lambda}(\bA)\|_0 &\geq \frac{\df_{\lambda_i}}{p} - \epsilon \\
&= \frac{\df_{\lambda}}{p} - \epsilon + (\frac{\df_{\lambda_i}}{p} - \frac{\df_{\lambda}}{p})\\
&\geq \frac{\df_{\lambda}}{p} - \epsilon  - C|\lambda-\lambda'|\\
&\geq \frac{\df_{\lambda}}{p} - 2\epsilon.
\end{align*}
Hence the proof is complete upon observing that both $\epsilon$ and $\lambda$ are arbitrary.
\end{proof}

\begin{lemma}\label{lemma:closeness_lasso_subgradient_perturbed}
Denote 
\begin{equation*}
\begin{aligned}
    \hat\bt &:= \frac{1}{\sqrt{n}\lambda}\bX^\top(\by-\bX\hat\bbeta)\\
    \hat\bt(\bA) &:= \frac{1}{\sqrt{n}\lambda}\bA^\top\bX^\top(\by-\bX\bA\hat\bbeta(\bA))
\end{aligned}
\end{equation*} the Lasso subgradient and its perturbed counterpart. Under the assumptions in Lemma \ref{lemma:consistency_estimated_cov}, for any $\epsilon > 0$,
$$\PP\left(\forall \lambda \in [\lambda_{\min},\lambda_{\max}], \frac{1}{n} \# \{j:|\hat t_j(\bA)| = 1\} \leq \df + \epsilon\right) = 1-o(1)$$
\end{lemma}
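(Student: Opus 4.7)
The plan is to relate $\hat{\bt}(\bA)$ back to $\hat{\bt}$, for which we have already developed strong Gaussian-design control, and then count ``near-active'' coordinates of $\hat{\bt}$ using existing debiased-Lasso technology. Specifically, I will bound
\[
\#\{j:|\hat{t}_j(\bA)|=1\} \;\leq\; \#\{j:|\hat{t}_j|\geq 1-\Delta\} \;+\; \#\{j:|\hat{t}_j(\bA)-\hat{t}_j|>\Delta\}
\]
for a small enough $\Delta>0$, and control each piece uniformly in $\lambda\in[\lambda_{\min},\lambda_{\max}]$.

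First I will show that $\suplk \|\hat{\bt}(\bA)-\hat{\bt}\|_2/\sqrt{n} \convP 0$. Decomposing
\[
\hat{\bt}(\bA)-\hat{\bt} \;=\; \frac{1}{\sqrt{n}\lambda}(\bA^\top-\bI)\bX^\top(\by-\bX\hat\bbeta)\;+\;\frac{1}{\sqrt{n}\lambda}\bA^\top\bX^\top\bX(\hat\bbeta-\bA\hat\bbeta(\bA)),
\]
the first term goes to $0$ uniformly using $\|\bA-\bI\|_{\op}\convP 0$, $\|\bX\|_{\op}/\sqrt{n}=O_P(1)$ (Corollary \ref{largest_singular_value}), $\|\by-\bX\hat\bbeta\|_2/\sqrt{n}$ bounded uniformly in $\lambda$ (Lemma \ref{lemma:uvw_convergence}), and the lower bound $\lambda\geq\lambda_{\min}$. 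The second term goes to $0$ uniformly by additionally invoking the first line of \eqref{eqn:consistency_estimated_cov} together with the operator-norm bound $\|\bA\|_{\op}=O_P(1)$ and the identity $\bX\hat\bbeta-\bX\bA\hat\bbeta(\bA)=\bX(\hat\bbeta-\hat\bbeta(\bA))+\bX(\bI-\bA)\hat\bbeta(\bA)$ (with $\hat\bbeta(\bA)$ uniformly bounded, as established in the proof of Lemma \ref{lemma:closeness_lasso_objective}).

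Next I will import the uniform version of the near-activity bound. The proof of Lemma \ref{lemma:uniform_local_convexity} already establishes, via \cite[Theorem E.5]{miolane2021distribution} and an $\epsilon$-net over $\lambda$, a high-probability bound on $\frac{1}{n}\#\{j:|\hat{t}_j|\geq 1-\Delta/2\}$. By a mild strengthening of that argument---choosing the net finer as $\Delta\to 0$ and using the Lipschitzness of $\df/n$ in $\lambda$ from Lemma \ref{lemma:lipschitz_fixed_point} together with the local Lipschitzness of $\hat\bt$ in $\lambda$ on the event $\mcl{A}$---I will conclude that, for any $\epsilon>0$, there exists $\Delta=\Delta(\epsilon)>0$ with
\[
\PP\!\left(\forall\lambda\in[\lambda_{\min},\lambda_{\max}],\;\tfrac{1}{n}\#\{j:|\hat{t}_j|\geq 1-\Delta\}\leq \tfrac{\df}{n}+\tfrac{\epsilon}{2}\right)=1-o(1).
\]

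Finally I combine the two pieces: on the intersection of the two high-probability events,
\[
\tfrac{1}{n}\#\{j:|\hat{t}_j(\bA)-\hat{t}_j|>\Delta\} \;\leq\; \tfrac{1}{n\Delta^{2}}\|\hat\bt(\bA)-\hat\bt\|_{2}^{2}\;=\;o_P(1)
\]
by Step 1 for the fixed $\Delta(\epsilon)$ chosen in Step 2, so the sum is at most $\df/n+\epsilon$ uniformly in $\lambda$, which yields the claim. The main obstacle is Step 2: the cited result gives a pointwise-in-$\lambda$ bound, and lifting it uniformly in $\lambda$ while simultaneously letting $\Delta\to 0$ requires that the $\epsilon$-net parameters scale correctly with $\Delta$, and that the event-level Lipschitz control on $\hat{\bt}$ used in the $\mcl{A}$-event argument is itself uniform in $\lambda$---this is the most delicate part and will follow the template of Section \ref{subsubsec:uniform_control}.
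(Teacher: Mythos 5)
Your proposal takes a genuinely different route from the paper's. The paper works entirely through the auxiliary dual objective $\mcl{V}(\bt)=\min_{\|\bb\|_2\le M}w(\bb,\bt)$: it shows $\mcl{V}$ and $\mcl{V}_{\bA}$ are uniformly close (event~\eqref{lasso_gradient_perturbed_event1}), invokes \cite[Lemma~E.9]{miolane2021distribution} to convert near-$\mcl{V}$-optimality into a near-activity count bound (event~\eqref{lasso_gradient_perturbed_event2}), proves a Lipschitz-in-$\lambda$ property for $\mcl{V}_{\bA}$ at the perturbed subgradient via min-max duality (event~\eqref{lasso_gradient_perturbed_event3}), and then runs the same $\epsilon$-net scheme as Lemma~\ref{lemma:closeness_lasso_sparsity_perturbed}. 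Crucially, it never needs a direct quantitative relationship between $\hat\bt(\bA)$ and $\hat\bt$. You instead establish $\sup_\lambda\|\hat\bt(\bA)-\hat\bt\|_2/\sqrt n\convP 0$ and then count via $\{j:|\hat t_j(\bA)|=1\}\subseteq\{j:|\hat t_j|\ge 1-\Delta\}\cup\{j:|\hat t_j(\bA)-\hat t_j|>\Delta\}$ with a Markov bound. Your Step~1 decomposition, the resulting $\ell_2$ bound (using Corollary~\ref{largest_singular_value}, Lemma~\ref{lemma:uvw_convergence}, the first line of \eqref{eqn:consistency_estimated_cov}, and boundedness of $\hat\bbeta(\bA)$), and the combination in Step~3 are all sound. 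What the paper's route buys is that the $\epsilon$-net runs on the scalar functional $\mcl{V}$, where ``near-optimality'' is Lipschitz-stable in $\lambda$; your route pushes the uniformization burden onto the primal object $\hat\bt$, whose near-activity count is not an $\ell_2$-continuous functional, so you must pair a fine net with a Markov/Chebyshev argument.

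The genuine gap is in Step~2, and you undersell it. The event quoted in the proof of Lemma~\ref{lemma:uniform_local_convexity} gives only the weak bound $\tfrac1n\#\{j:|\hat t_j|\ge 1-\Delta/2\}\le 1-\zeta/2=(1+\df/n)/2$, not the tight $\df/n+\epsilon/2$ you need; so this is not a ``mild strengthening of that argument'' but a return to \cite[Theorem~E.5]{miolane2021distribution} (or Lemma~E.9) for the sharp constant. Moreover, the ``local Lipschitzness of $\hat\bt$ in $\lambda$ on the event $\mcl{A}$'' you invoke is not a lemma in the paper; what is available is the \emph{weak-Lipschitzness} of $\hat\bbeta_{\rmL}$ in $\lambda$ (Lemma~\ref{lemma:wf_weak_lipschitz}, with the $|\lambda_1-\lambda_2|\ge\epsilon'$ caveat), from which a weak-Lipschitz bound on $\|\hat\bt(\lambda)-\hat\bt(\lambda_i)\|_2/\sqrt n$ can be assembled, but this must be derived. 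With those two pieces in place (pointwise tight near-activity at grid points from Miolane-Montanari, weak-Lipschitz control of $\hat\bt$ to bridge between them with $\epsilon'\ll\Delta$), your plan does close. So the architecture is correct but Step~2 is heavier than described and must not lean on the $(1+\df/n)/2$ event from Lemma~\ref{lemma:uniform_local_convexity}.
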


\begin{proof}
We define an auxiliary loss function $\mcl{V}$ and its perturbed counterpart $\mcl{V}_{\bA}$:
\begin{equation*}
\begin{aligned}
    \mcl{V}(\bt) &= \min_{\|\bb\|_2\leq M} \left\{\frac{1}{2n}\|\bX\bb-\by\|_2^2 + \frac{\lambda}{\sqrt{n}}\bt^\top\bb\right\} := \min_{\|\bb\|_2\leq M} w(\bb,\bt)\\
    \mcl{V}_{\bA}(\bt) &= \min_{\|\bb\|_2\leq M} \left\{\frac{1}{2n}\|\bX\bA\bb-\by\|_2^2 + \frac{\lambda}{\sqrt{n}}\bt^\top\bb\right\} := \min_{\|\bb\|_2\leq M} w_{\bA}(\bb,\bt)
\end{aligned}
\end{equation*}

where $M$ is some large enough constant. For any $\epsilon > 0$, consider the following three high probability event statements:
\begin{equation}\label{lasso_gradient_perturbed_event1}
\PP\left(\forall\lambda\in[\lambda_{\min},\lambda_{\max}],\forall \|\bt\|_\infty \leq 1,\mcl{V}(\bt)\geq \mcl{V}_{\bA}(\bt) - \epsilon\right) = 1 - o(1);
\end{equation}
\begin{equation}\label{lasso_gradient_perturbed_event2}
\supl\PP\left(\forall \|\bt\|_\infty\leq 1, \mcl{V}(\bt) \geq \mcl{V}(\hat\bt)-3\gamma\epsilon^3 \Rightarrow \frac{1}{p}\#\{j:|t_j|\geq 1 - \epsilon\} \geq \frac{\df}{p} + K\epsilon\right) = 1 - o(1)
\end{equation}
for some constants $K_1$ and $\gamma$;
\begin{equation}\label{lasso_gradient_perturbed_event3}
\PP\left(\forall \lambda,\lambda' \in [\lambda_{\min},\lambda_{\max}], \mcl{V}_{\lambda',\bA}(\hat\bt_{\lambda}(\bA)) \geq \mcl{V}_{\lambda',\bA}(\hat\bt_{\lambda'}(\bA)) - K_2|\lambda-\lambda'|\right) = 1 - o(1)
\end{equation}
for some constant $K_2$.

Observe the exact same forms of statements \eqref{lasso_gradient_perturbed_event1}, \eqref{lasso_gradient_perturbed_event2}, \eqref{lasso_gradient_perturbed_event3} and statements \eqref{lasso_sparsity_perturbed_event1}, \eqref{lasso_sparsity_perturbed_event2}, \eqref{lasso_sparsity_perturbed_event3}. Hence, upon proving statements \eqref{lasso_gradient_perturbed_event1}, \eqref{lasso_gradient_perturbed_event2}, \eqref{lasso_gradient_perturbed_event3}, we can use a similar $\epsilon$-net argument as in the proof of Lemma \ref{lemma:closeness_lasso_sparsity_perturbed} to complete the proof of Lemma \ref{lemma:closeness_lasso_subgradient_perturbed}.

First we prove \eqref{lasso_gradient_perturbed_event1}. Fixing any $\|\bt\|_\infty \leq 1$ and denoting $\bb^* := \argmin_{\|\bb\|_2\leq M} w(\bb,\bt),\bb_{\bA}^* := \argmin_{\|\bb\|_2\leq M} w_{\bA}(\bb,\bt)$, we have
\begin{align*}
&\supl (\mcl{V}(\bt) - \mcl{V}_{\bA}(\bt))\\
:=& \supl (w(\bb^*,\bt)-w_{\bA}(\bb_{\bA}^*,\bt))\\
\leq & \supl (w(\bb_{\bA}^*,\bt)-w_{\bA}(\bb_{\bA}^*,\bt))\\
:=& \supl \frac{1}{2n}\left| \|\bX\bb_{\bA}^*-\by\|_2^2 - \|\bX\bA\bb_{\bA}^*-\by\|_2^2\right| \convP 0
\end{align*}

where the inequality follows from optimaility of $\bb^*$, and the convergence follows from Lemma \ref{product_convergence} and the fact that $\|\bb_{\bA}^*\|_2 \leq M$.

Writing out a symmetric argument yields the other direction (details omitted):
$$\supl (\mcl{V}_{\bA}(\bt) - \mcl{V}(\bt))\leq\supl \frac{1}{2n}\left| \|\bX\bb^*-\by\|_2^2 - \|\bX\bA\bb^*-\by\|_2^2\right| \convP 0.$$

Therefore, we know $\supl |\mcl{V}_{\bA}(\bt) - \mcl{V}(\bt)| \convP 0$, which leads to \eqref{lasso_gradient_perturbed_event1}.

Argument \eqref{lasso_gradient_perturbed_event2} is a direct corollary of \cite[Lemma E.9]{miolane2021distribution}.

We are now left with \eqref{lasso_gradient_perturbed_event3}. From KKT condition, on the event $\{\forall \lambda \in [\lambda_{\min},\lambda_{\max}],\|\hat\bbeta(\bA)\|_2\leq M\}$, which happens with probability $1-o(1)$ (see proof of Lemma \ref{lemma:closeness_lasso_objective}), we have
$$\mcl{L}_{\bA}(\hat\bbeta(\bA)) = \min_{\|\bb\|_2\leq M} \mcl{L}_{\bA}(\bb) := \min_{\|\bb\|_2\leq M} \max_{\|\bt\|_\infty \leq 1} w_{\bA}(\bb,\bt) = \max_{\|\bt\|_\infty \leq 1} \min_{\|\bb\|_2\leq M} w_{\bA}(\bb,\bt) := \max_{\|\bt\|_\infty \leq 1}\mcl{V}_{\bA}(\bt):= \mcl{\V}_{\bA}(\hat\bt(\bA))$$

where the permutation of min-max is authorized by \cite[Corollary 37.3.2]{rockafellar2015convex}.

As a result, with probability $1-o(1)$, $\forall \lambda,\lambda' \in [\lambda_{\min},\lambda_{\max}]$, 
\begin{align*}
& \mcl{V}_{\lambda',\bA}(\hat\bt_{\lambda'}(\bA))- \mcl{V}_{\lambda',\bA}(\hat\bt_{\lambda}(\bA))\\
= &\mcl{L}_{\lambda',\bA}(\hat\bbeta_{\lambda'}(\bA))-\mcl{V}_{\lambda',\bA}(\hat\bt_{\lambda}(\bA))\\
\leq &\mcl{L}_{\lambda',\bA}(\hat\bbeta_{\lambda}(\bA))-\mcl{V}_{\lambda',\bA}(\hat\bt_{\lambda}(\bA))\\
=& \mcl{L}_{\lambda',\bA}(\hat\bbeta_{\lambda}(\bA))-
\mcl{L}_{\lambda,\bA}(\hat\bbeta_{\lambda}(\bA)) + \mcl{V}_{\lambda,\bA}(\hat\bt_{\lambda}(\bA)) - \mcl{V}_{\lambda',\bA}(\hat\bt_{\lambda}(\bA))
\end{align*}
where the inequality follows from optimality of $\hat\bbeta_{\lambda'}(\bA)$. Statement \eqref{lasso_sparsity_perturbed_event3} already guarantees a high probability bound for $\mcl{L}_{\lambda',\bA}(\hat\bbeta_{\lambda}(\bA))-
\mcl{L}_{\lambda,\bA}(\hat\bbeta_{\lambda}(\bA))$. Finally, (re)denoting $\bb^*:=\argmin_{\|\bb\|_2 \leq M} w_{\lambda',\bA}(\bb,\hat\bt_{\lambda}(\bA))$, we have
\begin{align*}
& \mcl{V}_{\lambda,\bA}(\hat\bt_{\lambda}(\bA)) - \mcl{V}_{\lambda',\bA}(\hat\bt_{\lambda}(\bA))\\
:= & \min_{\|\bb\|_2 \leq M} w_{\lambda,\bA}(\bb,\hat\bt_{\lambda}(\bA)) - w_{\lambda',\bA}(\bb^*,\hat\bt_{\lambda}(\bA))\\
\leq & w_{\lambda,\bA}(\bb^*,\hat\bt_{\lambda}(\bA)) - w_{\lambda',\bA}(\bb^*,\hat\bt_{\lambda}(\bA))\\
=& \frac{|\lambda-\lambda'|}{\sqrt{n}}\bb^{*\top}\hat\bt_{\lambda}(\bA)\\
\leq & |\lambda-\lambda'| \cdot \|\bb^*\|_2 \cdot \frac{1}{\sqrt{n}}\|\hat\bt_{\lambda}(\bA)\|_2\\
\leq & M|\lambda-\lambda'|\cdot \frac{1}{n} \left\|\frac{1}{\lambda}\bA^\top\bX^\top(\by-\bX\bA\hat\bbeta_{\lambda}(\bA))\right\|_2\\
\leq & K_3 |\lambda-\lambda'|
\end{align*}
with probability $1-o(1)$ for some constant $K_3$, where the first inequality follows from optimality of $\bb^*$ and the last inequality follows from the fact that $\frac{1}{\lambda},\frac{1}{\sqrt{n}}\|\bX\|_{\op},\frac{1}{\sqrt{n}}\|\by\|_2,\|\bA\|_{\op}$ are all bounded with high probability. The proof is thus complete.
\end{proof}

\section{Auxiliary Lemmas}
This section introduces several auxiliary lemmas that we use repeatedly through our proofs.

\subsection{Convex Gaussian Minmax Theorem}\label{subsec:CGMT}

\begin{thm}\cite[Theorem 3]{thrampoulidis2015regularized}\label{thm:CGMT}
Let $S_w \subset \R^p$ and $S_u \subset \R^n$ be two compact sets and let $f:S_w \times S_u \rightarrow \R$ be a continuous function. Let $\bX = (\bX_{i,j}) \iid \mcn(0,1),\bxi_g \sim \mcn(0,\bm{I}_p), \bxi_h \sim \mcn(0,\bm{I}_n)$ be independent standard Gaussian vectors. Define
\begin{equation*}
    \begin{aligned}
    \mcl{C}^*(\bX) &= \min_{\bw \in S_w} \max_{\bu \in S_u} \bu^\top \bX \bw + f(\bw,\bu),\\
    L^*(\bxi_g,\bxi_h) &=  \min_{\bw \in S_w} \max_{\bu \in S_u} \|\bu\|_2\bxi_g^\top \bw + \|\bw\|_2\bxi_h^\top \bu + f(\bw,\bu).\\
    \end{aligned}
\end{equation*}
Then we have:
\begin{itemize}
    \item For all $t \in \R$,
    $$\PP(\mcl{C}^*(\bX) \leq t) \leq 2 \PP(L^*(\bxi_g,\bxi_h) \leq t).$$
    \item If $S_w$ and $S_u$ are convex and $f$ is convex-concave, then for all $t \in \R$, 
    $$\PP(\mcl{C}^*(\bX) \geq t) \leq 2 \PP(L^*(\bxi_g,\bxi_h) \geq t).$$
\end{itemize}
\end{thm}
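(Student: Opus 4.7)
The proof rests on Gordon's Gaussian Minmax Theorem (GMT), the classical Slepian-type comparison inequality for Gaussian processes. The plan is to set up two Gaussian processes indexed over $S_w \times S_u$: the ``primary'' process $X_{\bw,\bu} := \bu^\top \bX \bw$ that appears inside $\mcl{C}^*(\bX)$, and the ``auxiliary'' process $Y_{\bw,\bu} := \|\bu\|_2 \bxi_g^\top \bw + \|\bw\|_2 \bxi_h^\top \bu$ that appears inside $L^*(\bxi_g,\bxi_h)$. The distinctive feature of $Y$ is that its Gaussian randomness decouples into two independent pieces, each depending on only one of $\bu,\bw$ through its norm; this is exactly what makes the AO amenable to the scalar-optimization analysis used elsewhere in the paper.

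For part (i), I would invoke Gordon's GMT directly, absorbing the deterministic kernel $f(\bw,\bu)$ as an additive shift that leaves the purely Gaussian comparison unaffected. Applying the theorem reduces to verifying the correct ordering of the covariance kernels $\E[X_{\bw,\bu} X_{\bw',\bu'}] = (\bu^\top\bu')(\bw^\top\bw')$ and $\E[Y_{\bw,\bu} Y_{\bw',\bu'}] = \|\bu\|\|\bu'\|(\bw^\top\bw') + \|\bw\|\|\bw'\|(\bu^\top\bu')$; the required inequalities on increments follow from elementary Cauchy--Schwarz manipulations. The factor of $2$ is intrinsic to Gordon's theorem in this form: it arises from a variance-matching step (here $\Var(Y_{\bw,\bu}) = 2\Var(X_{\bw,\bu})$) in which one augments $X$ by an independent scalar Gaussian, applies the classical Slepian-Gordon interpolation to the matched pair, and then union-bounds over the sign of the auxiliary variable. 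Compactness of $S_w,S_u$ plus continuity of $f$ ensures measurability and attainment.

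For part (ii), the additional convexity of $S_w, S_u$ together with convex-concavity of $f$ enable an application of Sion's minmax theorem, permitting $\min_{\bw}\max_{\bu}$ in $\mcl{C}^*(\bX)$ to be swapped into $\max_{\bu}\min_{\bw}$ without changing the value (the bilinear term $\bu^\top\bX\bw$ is separately linear in each argument, so the swap hypothesis reduces to that on $f$). Applying Gordon's GMT once more to the swapped problem, with the roles of $\bw$ and $\bu$ interchanged, yields the reverse tail comparison $\PP(\mcl{C}^*(\bX)\geq t) \leq 2\PP(L^*(\bxi_g,\bxi_h)\geq t)$.

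The main obstacle will be carrying the min-max swap through to the AO side, since the presence of $\|\bu\|_2$ and $\|\bw\|_2$ in $L^*$ breaks joint convex-concavity of its summands. The standard workaround is to introduce auxiliary scalar variables $\alpha := \|\bu\|_2$ and $\tau := \|\bw\|_2$, promote them to outer optimization variables (exactly as in the reductions of Section \ref{subsubsec:connect_AO_SO}), and apply Sion's theorem in stages to the resulting sequentially-convex-concave problem. With this reformulation and the elementary covariance arithmetic verifying Gordon's GMT hypotheses, both conclusions follow.
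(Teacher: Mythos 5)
The paper does not prove this theorem: it is quoted as \cite[Theorem~3]{thrampoulidis2015regularized} and used as a black box, so there is no in-paper proof to compare against. Your sketch correctly identifies the architecture of the original proof in that reference for part~(i): Gordon's Gaussian comparison inequality applied after a variance-matching step (the factor of $2$ comes from augmenting $\bu^\top\bX\bw$ by an independent scalar Gaussian times $\|\bw\|\|\bu\|$ so that variances agree with the auxiliary process), and the covariance computations you quote, $\E[X_{\bw,\bu}X_{\bw',\bu'}]=(\bu^\top\bu')(\bw^\top\bw')$ and $\E[Y_{\bw,\bu}Y_{\bw',\bu'}]=\|\bu\|\|\bu'\|(\bw^\top\bw')+\|\bw\|\|\bw'\|(\bu^\top\bu')$, are right; after augmentation the key increment identity factors as $[(\bw^\top\bw')-\|\bw\|\|\bw'\|]\,[(\bu^\top\bu')-\|\bu\|\|\bu'\|]$, and Cauchy--Schwarz handles the sign pattern.

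Your handling of part~(ii), however, contains a genuine misconception. After Sion's theorem swaps $\min_{\bw}\max_{\bu}$ into $\max_{\bu}\min_{\bw}$ on the \emph{primary} side and Gordon's inequality is applied to the transposed problem, what one obtains is a bound of the form $\PP(\mcl{C}^*(\bX)>t)\le 2\,\PP\bigl(\max_{\bu\in S_u}\min_{\bw\in S_w}\,\{\|\bu\|_2\bxi_g^\top\bw+\|\bw\|_2\bxi_h^\top\bu+f(\bw,\bu)\}\ge t\bigr)$. At this point no min-max \emph{equality} on the auxiliary side is required: weak duality $\max_{\bu}\min_{\bw}\le\min_{\bw}\max_{\bu}$ holds unconditionally, for any objective and any sets, so the event $\{\max_{\bu}\min_{\bw}\ge t\}$ is contained in $\{\min_{\bw}\max_{\bu}\ge t\}=\{L^*(\bxi_g,\bxi_h)\ge t\}$ and the bound follows directly. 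The ``workaround'' you propose---promoting $\alpha=\|\bu\|_2$, $\tau=\|\bw\|_2$ to outer variables and applying Sion in stages to the auxiliary problem---is both unnecessary and would not close the argument for a general continuous convex-concave $f$: the auxiliary objective is \emph{not} convex-concave (for fixed $\bw$ the map $\bu\mapsto\|\bu\|_2\,\bxi_g^\top\bw$ is convex when $\bxi_g^\top\bw>0$ and concave otherwise, so it is neither, and similarly for the $\|\bw\|_2\bxi_h^\top\bu$ term), so Sion's hypotheses fail on the auxiliary side, and the scalar-variable reduction you invoke from Section~\ref{subsubsec:connect_AO_SO} exploits a structural feature special to the ridge/Lasso $f$ used there, not available here. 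Replacing your proposed AO-side Sion step by the one-line weak-duality observation is what actually completes the proof.
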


\subsection{Properties of Equation System Solution}
\begin{lemma}[Lipschitzness of fixed point parameters]\label{lemma:lipschitz_fixed_point} 
Denote $(\tau_L,\tau_R,\zeta_L,\zeta_R, \rho)$ to be the unique solution to the fixed point equation systen \eqref{def:fixed_point}. There exists a constant $C$ such that
\begin{enumerate}[(a)]
    \item the mapping $\lambda_k \mapsto \tau_{k}$ is $C/\sqrt{n}$-Lipschitz.
    \item the mapping $\lambda_k \mapsto \zeta_k$ is $C$-Lipschitz.
    \item the mappings $(\lambda_{\textrm{L}},\lambda_{\textrm{R}}) \mapsto \rho, (\lambda_{\textrm{L}},\lambda_{\textrm{R}}) \mapsto \rho^\perp$ are both $C$-Lipschitz in both arguments, where $\rho^\perp = \sqrt{1-\rho^2}.$
\end{enumerate}
\end{lemma}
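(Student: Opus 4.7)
The plan is to apply the implicit function theorem (IFT) to the system \eqref{def:fixed_point}, exploiting its block structure. Extracting entries shows that $(\tau_L,\zeta_L)$ is determined by a Lasso subsystem involving only $\lambda_L$ (together with $\bbeta,\sigma^2,\delta$), and analogously $(\tau_R,\zeta_R)$ by a Ridge subsystem involving only $\lambda_R$; the parameter $\rho$ then couples the two via the off-diagonal of $\check{\bS}$. Parts (a) and (b) reduce to showing that each marginal subsystem admits a Lipschitz solution in $\lambda_k$, while (c) treats $\rho$ given the other parameters.

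For the Ridge, $\eta_R(x,\zeta) = x/(1 + \lambda_R/\zeta)$ is rational in $(\lambda_R,\zeta)$, so direct computation of the Gaussian expectations in \eqref{def:fixed_point} yields a closed-form quadratic equation for $\zeta_R$ with coefficients smooth in $\lambda_R$; the relevant root stays in the interior by \eqref{lemma:existence_fixed_point}, so $\zeta_R$ is $C$-Lipschitz and, solving the resulting linear equation for $\tau_R^2$, $\tau_R$ becomes $C/\sqrt{n}$-Lipschitz. For the Lasso, $\eta_L(x,\zeta) = \text{SoftThresh}(x,\lambda_L/(\sqrt{n}\zeta))$ is only piecewise linear, but the Gaussian-averaged MSE $\E\|\eta_L-\bbeta\|_2^2$ and divergence $\E[\text{div}\,\eta_L]$ are real-analytic in $(\lambda_L/(\sqrt{n}\zeta_L),\tau_L)$ via standard Gaussian-convolution identities. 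The IFT applies once the Jacobian of the subsystem is non-singular at the solution, which is a standard contraction/strong-convexity statement for the scalar denoising problem associated with the Lasso, carried out in Lemma C.1 of \cite{celentano2021cad} and in the state-evolution analysis of \cite{celentano2020lasso}. Uniformity of the Lipschitz constant over $\lambda_L \in [\lambda_{\min},\lambda_{\max}]$ follows since the compactness of this interval, together with $\sigma \le \sigma_{\max}$ and $\|\bbeta\|_2 \le \sigma_{\max}$, confines the solution to a compact subset of parameter space where the Jacobian operator norm is bounded away from zero by continuity.

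For part (c), using that $(\tau_L,\tau_R,\zeta_L,\zeta_R)$ are already Lipschitz in $(\lambda_L,\lambda_R)$, I would isolate $\rho$ from the off-diagonal of $\check{\bS}$ as the solution of the scalar equation
\[ n\tau_L\tau_R\rho \;=\; \sigma^2 \;+\; \Phi(\rho;\lambda_L,\lambda_R), \]
where $\Phi(\rho) = \E\bigl[(\eta_L(\bbeta+g_L;\zeta_L)-\bbeta)^\top(\eta_R(\bbeta+g_R;\zeta_R)-\bbeta)\bigr]$ and $(g_L,g_R)$ is jointly Gaussian with per-coordinate covariance $\operatorname{diag}(\tau_L^2,\tau_R^2)$ and correlation $\rho$. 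Writing $g_R = (\rho\tau_R/\tau_L)g_L + \tau_R\sqrt{1-\rho^2}\,\xi$ with $\xi$ independent of $g_L$ makes $\Phi$ smooth in $\rho$, and Stein's identity combined with the divergence estimate $\E[\text{div}\,\eta_L]/p \le 1$ shows that $\partial_\rho\Phi$ is strictly dominated by $n\tau_L\tau_R$ uniformly over $|\rho|\le\rho_{\max}$. The IFT then delivers joint $C$-Lipschitzness of $\rho$ in both arguments. Lipschitzness of $\rho^\perp=\sqrt{1-\rho^2}$ follows because $|\rho|\le\rho_{\max}<1$ (by \eqref{lemma:existence_fixed_point}) keeps $\rho^\perp$ bounded away from zero, so the square root is Lipschitz there.

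The principal obstacle is verifying Jacobian invertibility uniformly over the compact tuning range for the Lasso subsystem, where soft-thresholding is non-smooth. The strict-contraction property underlying this invertibility has been established pointwise in prior AMP/state-evolution works; the new ingredient here is to extract a uniform gap over $\lambda_L \in [\lambda_{\min},\lambda_{\max}]$. Assumption \ref{assumptionL}(2) enables this by keeping $\lambda_L$ bounded below so that the effective soft-threshold level $\lambda_L/(\sqrt{n}\zeta_L)$ stays in a compact set of positive values, on which all relevant Gaussian expectations and their derivatives are continuous and the strict inequalities from \cite{celentano2021cad} persist uniformly.
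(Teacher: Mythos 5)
Your proof is correct in outline but takes a different route from the paper's. For (a) and (b), the paper simply cites \cite[Proposition A.3]{miolane2021distribution} for Lipschitzness of $\sqrt{n}\tau_k$ and $\zeta_k$ in $\lambda_k$ (noting the Ridge case is analogous); you instead reconstruct the claim by applying the implicit function theorem to the diagonal fixed-point subsystems. The pressure point in your version is Jacobian non-singularity for the Lasso subsystem: Lemma C.1 of \cite{celentano2021cad}, which you cite, establishes \emph{uniqueness}, which is weaker than the strict spectral gap the IFT requires, and extracting a uniform Lipschitz constant over $[\lambda_{\min},\lambda_{\max}]$ needs that gap to be uniform---your compactness remark gets at this but does not fully discharge it. For (c), however, your approach is actually more careful than the paper's. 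The paper writes $\rho=(n\tau_L\tau_R)^{-1}\bigl(\sigma^2+\E\langle\hat\bbeta_{\rmL}^{f}-\bbeta,\hat\bbeta_{\rmR}^{f}-\bbeta\rangle\bigr)$ and estimates the Lipschitz modulus of the numerator via a Cauchy--Schwarz bound against $\E\|\hat\bbeta_{\rmR}^{f}(\lambda_1)-\hat\bbeta_{\rmR}^{f}(\lambda_2)\|_2$, invoking the shared-$\bxi_g$ coupling of Lemma \ref{lemma:ridge_W2_Lipschitz}; but to maintain the joint covariance with $\bg_{\rmL}^f$ at both parameter values, $\bg_{\rmR}^f(\lambda)$ must itself depend on $\rho(\lambda)$, so the displacement $\|\bg_{\rmR}^f(\lambda_1)-\bg_{\rmR}^f(\lambda_2)\|_2$ reintroduces the very quantity $|\rho(\lambda_1)-\rho(\lambda_2)|$ being bounded, and the argument as written is implicitly self-referential. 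Your reformulation as the scalar fixed-point equation $n\tau_L\tau_R\rho=\sigma^2+\Phi(\rho;\lambda_L,\lambda_R)$, combined with Price's theorem (the bivariate form of Stein's identity) $\partial_\rho\Phi=\tau_L\tau_R\,\E\bigl[\sum_j\eta_{\rmL}'\eta_{\rmR}'\bigr]$ and then Cauchy--Schwarz together with $\df_k<n$ to obtain strict domination by $n\tau_L\tau_R$, is the right way to close that loop. Your handling of $\rho^\perp$ via $|\rho|\le\rho_{\max}<1$ matches the paper's.
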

\begin{proof}
Note that (a) is a direct consequence of \cite{miolane2021distribution} Proposition A.3. 
(b) also follows from \cite{miolane2021distribution} on  noting that $\zeta_k = \beta_*/\tau_*$ in their notation, and on applying Lemma \ref{holder_function_operation} 
We remark that both results from \cite{miolane2021distribution} Proposition A.3 are for the case of the Lasso, but the Ridge case follows similarly.

We prove (c) here. We show that both $\rho$ and $\rho^\perp$ are Lipschitz in $\lambda_{\textrm{R}}$. The proof for $\lambda_{\textrm{L}}$ follows similarly and is therefore omitted.
From \eqref{def:bS_g} and \eqref{def:fixed_point}, we know that 
$$\rho = \frac{1}{n\tau_{L}\tau_{R}}\left(\sigma^2 + \E[\langle \hat\bbeta_{\textrm{L}}^{f}-\bbeta,\hat\bbeta_{\textrm{R}}^f - \bbeta\rangle] \right).$$
First, we know $\sigma^2$ and $\sqrt{n}\tau_{L}$ does not depend on $\lambda_{\textrm{R}}$ 
while $\sqrt{n}\tau_{R}$ is bounded and Lipschitz by \ref{lemma:existence_fixed_point} and the first statement of this Lemma.
Next, by Cauchy-Schwartz, we have
$$\E[\langle \hat\bbeta_{\textrm{L}}^{f}-\bbeta,\hat\bbeta_{\textrm{R}}^f - \bbeta\rangle]
\leq \E[\|\hat\bbeta_{\textrm{L}}^f - \bbeta\|_2] \cdot \E[\|\hat\bbeta_{\textrm{R}}^f - \bbeta\|_2],$$
where $\E[\|\hat\bbeta_{\textrm{L}}^f - \bbeta\|_2] \leq \sqrt{\E[\|\hat\bbeta_{\textrm{L}}^f - \bbeta\|_2^2]} = n\tau_{L}^2 -\sigma^2 \leq \tau_{\max}^2$ 
is bounded and does not depend on $\lambda_{\textrm{R}}$ (the equality follows from \eqref{def:bS_g} and \eqref{def:fixed_point}). Now,
\begin{align*}   \left|\E[\|\hat\bbeta_{R}^f(\lambda_1) - \bbeta\|^2]-\E[\|\hat\bbeta_{R}^f(\lambda_2) - \bbeta\|^2]\right| 
    \leq &  \E\left[\left|\|\hat\bbeta_{R}^f(\lambda_1) - \bbeta\|_2-\|\hat\bbeta_{R}^f(\lambda_2) - \bbeta\|_2\right|\right]\\
    \leq & \E[\|\hat\bbeta_{R}^f(\lambda_1) - \hat\bbeta_{R}^f(\lambda_2)\|_2]\\
    \leq & \sqrt{\E[\|\hat\bbeta_{R}^f(\lambda_1) - \hat\bbeta_{R}^f(\lambda_2)\|_2^2]}\\
    \leq & M(\lambda_1-\lambda_2)
\end{align*}
for some constant $M$, where the last line follows verbatim the proof of Lemma \ref{lemma:ridge_W2_Lipschitz}. Hence $\E[\langle \hat\bbeta_{\textrm{L}}^{f}-\bbeta_{\textrm{L}},\hat\bbeta_{\textrm{R}}^f - \bbeta_{\textrm{R}}\rangle]$ is Lipschitz in $\lambda_{\textrm{R}}$. Combining the two terms with Lemma \ref{holder_function_operation}, we conclude that $\rho$ is Lipschitz in $\lambda_{\textrm{R}}$.
For $\rho^\perp$, note the fact that $\rho^\perp = \sqrt{1-\rho^2}$, so
\begin{align*}
|\rho^\perp(\lambda_1) - \rho^\perp(\lambda_2)| &= \frac{|\rho(\lambda_1) +\rho(\lambda_2)|}{|\rho^\perp(\lambda_1) + \rho^\perp(\lambda_2)|} \cdot |\rho(\lambda_1) - \rho(\lambda_2)|\\
    &\leq \frac{2}{2\sqrt{1-\rho_{\max}^2}}|\rho(\lambda_1) - \rho(\lambda_2)|\\
    &\leq \frac{1}{\sqrt{1-\rho_{\max}^2}} \cdot M|\lambda_1-\lambda_2|.
\end{align*}
Thus $\rho^\perp$ is also Lipschitz in $\lambda_{\textrm{R}}$, concluding the proof.
\end{proof}

\subsection{Concentration of empirical second moments}

Most of our results are established for Lipschitz functions of random vectors. However we will frequently need to show the concentration of second order statistics of the form $\langle \ba^{(1)},\ba^{(2)}\rangle$. The following lemma provides the connection:

\begin{lemma}\label{lemma:concentration_second_moment}
Consider two random vectors $\ba_{\textrm{L}},\ba_{\textrm{R}} \in \R^p$, and a positive semi-definite matrix $\bS \in \R^{2 \times 2}$ 
(all possibly dependent on $\lambda_{\textrm{L}},\lambda_{\textrm{R}}$) that satisfy the following concentration guarantee:
\begin{itemize}
    \item There exist functions $\bphi_{\textrm{L}},\bphi_{\textrm{R}}$ (dependent on $\lambda_{\textrm{L}},
    \lambda_{\textrm{R}}$) that are $M/\sqrt{p}$-Lipschitz such that for Gaussian vectors $\bxi_{\textrm{L}},\bxi_{\textrm{R}} \sim \mcl{N}(0,\bS \otimes \bm{I}_p)$ and any $1$-Lipschitz functions $\phi:(\R^p)^2 \rightarrow \R$,
    \begin{equation*}
        \begin{aligned}
            \PP\left(\suplb |\phi(\ba_{\textrm{L}},\ba_{\textrm{R}}) - \E[\phi(\bphi_{\textrm{L}}(\bxi_{\textrm{L}}),\bphi_{\textrm{R}}(\bxi_{\textrm{R}}))]|>K\epsilon\right) = o(\epsilon)
        \end{aligned}
    \end{equation*}
    and
    \begin{equation*}
\E[\suplk\|\bphi_k(\bxi_k)\|_2^2]\leq C,\  k=L,R.
    \end{equation*}
    \item The parameters satisfy $M \leq CK$ and the singular values of $\bS$ are bounded by $C$ for all $\lambda \in [\lambda_{\min},\lambda_{\max}]$.
\end{itemize}
Then we have
$$\PP\left(\suplb |\langle \ba_{\textrm{L}},\ba_{\textrm{R}}\rangle - \E[\langle\bphi_{\textrm{L}}(\bxi_{\textrm{L}}),\bphi_{\textrm{R}}(\bxi_{\textrm{R}})\rangle]| > K\epsilon\right) =  o(\epsilon).$$
\end{lemma}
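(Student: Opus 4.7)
The plan is to reduce the bilinear quantity $\langle \ba_{\textrm{L}}, \ba_{\textrm{R}}\rangle$ to several 1-Lipschitz functionals via the polarization identity
\[
2\langle \ba_{\textrm{L}}, \ba_{\textrm{R}}\rangle \;=\; \|\ba_{\textrm{L}}+\ba_{\textrm{R}}\|_2^{\,2} \;-\; \|\ba_{\textrm{L}}\|_2^{\,2} \;-\; \|\ba_{\textrm{R}}\|_2^{\,2},
\]
so that it suffices to establish the uniform-in-$(\lambda_{\textrm{L}},\lambda_{\textrm{R}})$ analogue for each squared norm $\|\ba_{\textrm{L}}\|_2^{\,2}$, $\|\ba_{\textrm{R}}\|_2^{\,2}$, and $\|\ba_{\textrm{L}}+\ba_{\textrm{R}}\|_2^{\,2}$, which in turn will be obtained from the hypothesized concentration for the 1-Lipschitz maps $\phi=\|\ba_{\textrm{L}}\|_2$, $\phi=\|\ba_{\textrm{R}}\|_2$, and $\phi=\tfrac{1}{\sqrt{2}}\|\ba_{\textrm{L}}+\ba_{\textrm{R}}\|_2$ (each with Lipschitz constant $\leq 1$ on $(\R^p)^2$ endowed with the product Euclidean norm).

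The first step is to show that $\|\ba_k\|_2$ is uniformly bounded with probability $1-o(\epsilon)$. Plugging the 1-Lipschitz map $\ba_{\textrm{L}}\mapsto\|\ba_{\textrm{L}}\|_2$ into the hypothesis, together with the bound $\E\|\bphi_k(\bxi_k)\|_2\le\sqrt{\E\|\bphi_k(\bxi_k)\|_2^{\,2}}\le \sqrt{C}$ that follows from the second assumed moment bound, we get $\sup_{\lambda_{\textrm{L}},\lambda_{\textrm{R}}}\|\ba_k\|_2\le \sqrt{C}+K\epsilon\le 2\sqrt{C}$ with failure probability $o(\epsilon)$ for sufficiently small $\epsilon$. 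The same argument applied to the 1-Lipschitz functional $\tfrac{1}{\sqrt{2}}\|\ba_{\textrm{L}}+\ba_{\textrm{R}}\|_2$ gives a similar uniform bound on $\|\ba_{\textrm{L}}+\ba_{\textrm{R}}\|_2$.

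Second, on the event from Step 1, the map $x\mapsto x^2$ is $4\sqrt{C}$-Lipschitz on $[0,2\sqrt{C}]$, so squaring transfers the concentration of the norm to the squared norm:
\[
\bigl|\|\ba_{\textrm{L}}\|_2^{\,2} - (\E\|\bphi_{\textrm{L}}(\bxi_{\textrm{L}})\|_2)^2\bigr| \;\le\; 4\sqrt{C}\,K\epsilon,
\]
uniformly in $(\lambda_{\textrm{L}},\lambda_{\textrm{R}})$, with failure probability $o(\epsilon)$, and analogously for $\ba_{\textrm{R}}$ and $\ba_{\textrm{L}}+\ba_{\textrm{R}}$. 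The subtlety here — which I anticipate being the one genuine technical point — is that we need to replace $(\E\|\bphi_k(\bxi_k)\|_2)^2$ by the correct target $\E\|\bphi_k(\bxi_k)\|_2^{\,2}$. This gap equals $\operatorname{Var}(\|\bphi_k(\bxi_k)\|_2)$, and since $\bphi_k$ is $M/\sqrt{p}$-Lipschitz with $M\le CK$ and $\bxi_k$ is Gaussian with covariance of bounded operator norm, $\|\bphi_k(\bxi_k)\|_2$ is itself an $M/\sqrt{p}$-Lipschitz function of $\bxi_k$. By Gaussian Lipschitz concentration its variance is $O(M^2/p)=O(1/p)$, uniformly in $(\lambda_{\textrm{L}},\lambda_{\textrm{R}})$ because $M$ and the covariance bound are uniform by hypothesis. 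For $n$ large enough this variance is $\leq \epsilon$, so Steps 1–2 together yield $|\|\ba_k\|_2^{\,2} - \E\|\bphi_k(\bxi_k)\|_2^{\,2}| = O(\epsilon)$ uniformly, and similarly for the sum.

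Finally, we substitute these three uniform bounds back into the polarization identity (and the corresponding identity for $\E\langle\bphi_{\textrm{L}}(\bxi_{\textrm{L}}),\bphi_{\textrm{R}}(\bxi_{\textrm{R}})\rangle$), take the sup over $(\lambda_{\textrm{L}},\lambda_{\textrm{R}})\in[\lambda_{\min},\lambda_{\max}]^2$, and apply a union bound across the three $o(\epsilon)$ failure events to conclude that
\[
\sup_{\lambda_{\textrm{L}},\lambda_{\textrm{R}}}\bigl|\langle\ba_{\textrm{L}},\ba_{\textrm{R}}\rangle - \E\langle\bphi_{\textrm{L}}(\bxi_{\textrm{L}}),\bphi_{\textrm{R}}(\bxi_{\textrm{R}})\rangle\bigr| \;\le\; K'\epsilon
\]
with probability at least $1-o(\epsilon)$, for a constant $K'$ depending only on $C$ and $K$, which is the desired conclusion.
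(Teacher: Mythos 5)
Your argument is correct and matches the approach the paper delegates to (Lemma G.1 of Celentano, Montanari, and Wei, adapted to hold uniformly in the tuning parameters): reduce the bilinear form to squared norms via polarization, obtain concentration of the norms themselves as $1$-Lipschitz functionals, transfer this to the squares using the a priori boundedness from the moment hypothesis, and close the gap between $(\E\|\bphi_k(\bxi_k)\|_2)^2$ and $\E\|\bphi_k(\bxi_k)\|_2^2$ by the $O(M^2/p)$ Gaussian Poincar\'e variance bound. The only cosmetic point is that the final constant multiplying $\epsilon$ inflates from $K$ to a $K'$ depending on $C$ and $K$, which is harmless for the intended use of the lemma and is implicitly absorbed in the paper's statement.
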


The proof for this Lemma follows almost verbatim from the proof of Lemma G.1 in \cite{celentano2021cad}, with Proposition \ref{prop:sup_inequalities} applied in appropriate places to make sure equalities/inequalities hold uniformly over the tuning parameter range.

\subsection{Largest singular value of random matrices}\label{sec:dependencies}
\begin{lemma}\cite[Corollary 5.35]{vershynin2010introduction}\label{lemma:largest_singular_value}
Let $\sigma_{\max}(\bX)$ be the largest singular value of the matrix $\bX \in \R^{n\times p}$ with i.i.d. $\mcn(0,1)$ entries, then
$$\PP(\sigma_{\max}(\bX) > \sqrt{n}+\sqrt{p}+t) \leq e^{-t^2/2}.$$
\end{lemma}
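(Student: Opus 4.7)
The statement is a standard Gaussian concentration bound for the operator norm of an i.i.d.\ standard Gaussian matrix, so my plan follows the classical two-ingredient route: (i) Lipschitz concentration in the Gaussian entries, and (ii) a comparison-inequality bound on the expectation.

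First, I would verify that $\bX \mapsto \sigma_{\max}(\bX)$ is $1$-Lipschitz when the matrix is viewed as a vector in $\R^{np}$ equipped with the Euclidean (equivalently Frobenius) norm. This follows from the variational representation $\sigma_{\max}(\bX)=\sup_{\|\bu\|_2=\|\bv\|_2=1}\bu^\top\bX\bv$: for any two matrices $\bX,\bY$ and any unit vectors $\bu,\bv$, one has $\bu^\top\bX\bv-\bu^\top\bY\bv=\bu^\top(\bX-\bY)\bv\le \|\bX-\bY\|_{\op}\le \|\bX-\bY\|_F$, and taking suprema yields $|\sigma_{\max}(\bX)-\sigma_{\max}(\bY)|\le\|\bX-\bY\|_F$. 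Since the entries of $\bX$ are i.i.d.\ $\mcn(0,1)$, the standard Gaussian concentration inequality for Lipschitz functions (Borell--Tsirelson--Ibragimov--Sudakov) then gives, for every $t\ge 0$,
\begin{equation*}
\PP\bigl(\sigma_{\max}(\bX)\ge \E[\sigma_{\max}(\bX)]+t\bigr)\le e^{-t^2/2}.
\end{equation*}

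Second, I would bound $\E[\sigma_{\max}(\bX)]\le \sqrt{n}+\sqrt{p}$ via Gordon's (or Slepian's) comparison inequality. Consider the two centered Gaussian processes indexed by $(\bu,\bv)\in S^{n-1}\times S^{p-1}$:
\begin{equation*}
A_{\bu,\bv}=\bu^\top\bX\bv, \qquad B_{\bu,\bv}=\bg^\top\bu+\bh^\top\bv,
\end{equation*}
where $\bg\sim\mcn(\bm 0,\bI_n)$ and $\bh\sim\mcn(\bm 0,\bI_p)$ are independent. A direct computation shows $\E[A_{\bu,\bv}^2]=\E[B_{\bu,\bv}^2]=1$ on the product sphere, while Gordon's inequality yields $\E[\sup_{\bu,\bv}A_{\bu,\bv}]\le \E[\sup_{\bu}\bg^\top\bu]+\E[\sup_{\bv}\bh^\top\bv]=\E\|\bg\|_2+\E\|\bh\|_2\le \sqrt{n}+\sqrt{p}$, the last inequality being Jensen's. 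Substituting this bound into the concentration inequality above gives $\PP(\sigma_{\max}(\bX)>\sqrt{n}+\sqrt{p}+t)\le e^{-t^2/2}$, as claimed.

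The only genuine subtlety is the comparison step; the concentration step is purely mechanical. Since both ingredients are textbook results (see, e.g., Vershynin's \emph{High-Dimensional Probability}), I would either invoke them directly or give the short self-contained derivations sketched above. As this is a standard result (it is quoted here as Corollary 5.35 of Vershynin's original paper), no novel difficulty arises.
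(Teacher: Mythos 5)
The paper does not prove this lemma; it simply cites Corollary~5.35 of Vershynin's 2010 survey. Your argument is precisely the textbook proof that underlies that citation: Lipschitz concentration for the operator norm combined with a comparison-inequality bound $\E[\sigma_{\max}(\bX)]\le\sqrt{n}+\sqrt{p}$, which is exactly Vershynin's Proposition~5.34 plus Theorem~5.32. So the route is the same, and it is correct.

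One small slip worth fixing: you assert $\E[B_{\bu,\bv}^2]=1$, but in fact $\E[B_{\bu,\bv}^2]=\|\bu\|_2^2+\|\bv\|_2^2=2$, so the variances of the two processes do \emph{not} match. This means the classical form of Slepian's inequality (which requires equal variances) does not apply directly; what you actually need is the Sudakov--Fernique inequality (or the one-sided Gordon comparison), which only requires the increment bound $\E[(A_{\bu,\bv}-A_{\bu',\bv'})^2]\le\E[(B_{\bu,\bv}-B_{\bu',\bv'})^2]$. That bound does hold here, since $2-2(\bu^\top\bu')(\bv^\top\bv')\le 4-2\bu^\top\bu'-2\bv^\top\bv'$ reduces to $(1-\bu^\top\bu')(1-\bv^\top\bv')\ge 0$, and it is this increment comparison—not variance matching—that should be stated. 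With that correction the argument is complete and matches the standard proof.
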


\begin{cor}\label{largest_singular_value}
In the setting of Lemma \ref{lemma:largest_singular_value}, 
$$\PP\left(\frac{1}{\sqrt{n}}\sigma_{\max}(\bX) > 2 + \sqrt{\delta}\right) \leq e^{-n/2} = o(1).$$
\end{cor}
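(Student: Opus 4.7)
The plan is to apply Lemma \ref{lemma:largest_singular_value} with a specific choice of $t$ that matches the desired threshold. Recall the lemma gives $\PP(\sigma_{\max}(\bX) > \sqrt{n}+\sqrt{p}+t) \leq e^{-t^2/2}$ for any $t \geq 0$. I want to choose $t$ so that the event $\{\sigma_{\max}(\bX) > \sqrt{n}+\sqrt{p}+t\}$ contains the event $\{\sigma_{\max}(\bX)/\sqrt{n} > 2+\sqrt{\delta}\}$, equivalently $\{\sigma_{\max}(\bX) > (2+\sqrt{\delta})\sqrt{n}\}$.

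The natural choice is $t = \sqrt{n}$. Then $\sqrt{n}+\sqrt{p}+t = 2\sqrt{n} + \sqrt{p}$, and I need $2\sqrt{n} + \sqrt{p} \leq (2+\sqrt{\delta})\sqrt{n}$, i.e., $\sqrt{p/n} \leq \sqrt{\delta}$. Under the proportional asymptotics assumption $p/n \to \delta$, this inequality holds for all sufficiently large $n$ (or up to an arbitrarily small perturbation that can be absorbed by a slightly different constant; since the statement is asymptotic, this suffices). Therefore, for $n$ large enough,
\begin{equation*}
\PP\left(\frac{1}{\sqrt{n}}\sigma_{\max}(\bX) > 2+\sqrt{\delta}\right) \leq \PP\left(\sigma_{\max}(\bX) > \sqrt{n}+\sqrt{p}+\sqrt{n}\right) \leq e^{-n/2}.
\end{equation*}
The final $o(1)$ conclusion is immediate since $e^{-n/2} \to 0$ as $n \to \infty$.

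There is no real obstacle in this argument; the only mild subtlety is that $p/n \to \delta$ is an asymptotic rather than exact statement, but since the corollary itself is of an asymptotic nature (the $o(1)$ bound), this poses no difficulty. The proof is essentially a two-line calculation plugging $t=\sqrt{n}$ into the tail bound of Lemma \ref{lemma:largest_singular_value} and invoking the ratio convergence.
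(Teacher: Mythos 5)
Your proof is correct and is the same one-line argument the paper intends: substitute $t=\sqrt{n}$ into Lemma \ref{lemma:largest_singular_value} and use $\sqrt{p}\le\sqrt{\delta n}$. You rightly flag the only subtlety—that $p/n\to\delta$ does not force $p/n\le\delta$ at every finite $n$—and your observation that the constant can absorb an arbitrarily small perturbation (or that the bound holds verbatim whenever $p\le\delta n$) is the standard way to dispose of it, and is all that is needed for the $o(1)$ conclusion used downstream.
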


\subsection{Inequalities for supremum}\label{subsec:sup_inequalities}
Our results often involve establishing that certain statements hold  uniformly over the range of tuning parameter values.
To this end, we frequently use the following results related to the properties of the supremum.

\begin{lemma}\label{prop:sup_inequalities}
\
\begin{itemize}
    \item For a random variable $X$ and functions $f_\lambda(X) = \sum_{k=1}^K f_\lambda^{(k)}(X)$ parametrized by $\lambda$, with probability $1$,
    $$\sup_\lambda f_\lambda(X) \leq \sum_{k=1}^K \sup_\lambda f_\lambda^{(k)}(X).$$
    \item For a random variable $X$ and non-negative functions $f_\lambda(X) = \prod_{k=1}^K f_\lambda^{(k)}(X)$ parametrized by $\lambda$, with probability $1$,
    $$\sup_\lambda f_\lambda(X) \leq \prod_{k=1}^K \sup_\lambda f_\lambda^{(k)}(X).$$
    \item For a random variable $X$ and a function $f_\lambda(X)$ parametrized by $\lambda$,
    $$\sup_\lambda \E[f_\lambda(X)] \leq \E [\sup_\lambda f_\lambda(X)].$$
\end{itemize}
\end{lemma}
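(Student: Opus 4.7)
The plan is to prove each of the three inequalities by the same elementary two-step argument: first bound the $\lambda$-indexed object pointwise by a quantity that no longer depends on $\lambda$, then take the supremum over $\lambda$ on the left-hand side. No fixed-point machinery, no CGMT, no universality---just monotonicity of $\sup$ together with (respectively) monotonicity of finite sums, monotonicity of products of non-negatives, and monotonicity of the Lebesgue integral.

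For the first inequality, I would fix an arbitrary $\lambda$ and an arbitrary sample point in the probability-one event on which all suprema are defined, and note that for each $k \in \{1,\ldots,K\}$ we trivially have $f_\lambda^{(k)}(X) \leq \sup_{\lambda'} f_{\lambda'}^{(k)}(X)$. Summing over $k$ gives
\begin{equation*}
f_\lambda(X) = \sum_{k=1}^K f_\lambda^{(k)}(X) \leq \sum_{k=1}^K \sup_{\lambda'} f_{\lambda'}^{(k)}(X).
\end{equation*}
The right-hand side is independent of $\lambda$, so taking $\sup_\lambda$ on the left preserves the inequality, yielding the claim. For the second inequality, the only extra ingredient is that non-negativity is needed to preserve the direction of the inequality when multiplying: $0 \leq f_\lambda^{(k)}(X) \leq \sup_{\lambda'} f_{\lambda'}^{(k)}(X)$ for each $k$, and since all factors are non-negative, the product inequality $\prod_k f_\lambda^{(k)}(X) \leq \prod_k \sup_{\lambda'} f_{\lambda'}^{(k)}(X)$ follows; taking $\sup_\lambda$ on the left finishes this case.

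For the third inequality, I would fix $\lambda$ and observe the pointwise bound $f_\lambda(X) \leq \sup_{\lambda'} f_{\lambda'}(X)$, then take expectations on both sides (using monotonicity of the expectation) to obtain $\E[f_\lambda(X)] \leq \E[\sup_{\lambda'} f_{\lambda'}(X)]$. Since the right-hand side no longer depends on $\lambda$, taking the supremum over $\lambda$ on the left gives the claim. A small caveat worth noting in the write-up is measurability of $\sup_{\lambda'} f_{\lambda'}(X)$---this is automatic if the supremum is over a countable set or if the indexed family is continuous in $\lambda$ and $\lambda$ ranges over a separable space (as in our application where $\lambda \in [\lambda_{\min},\lambda_{\max}]$ and the relevant functions are continuous in $\lambda$), so no additional technical issue arises.

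There is no real obstacle here---each bullet is a one-line application of the definition of supremum combined with a basic monotonicity property (of sums, of products of non-negatives, or of expectations). The only item that merits mention in the proof is the non-negativity hypothesis in the product case, which is precisely what prevents sign flips when moving from the pointwise product bound to the conclusion.
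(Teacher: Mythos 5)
Your proof is correct and is the canonical elementary argument; the paper simply omits the proof, labeling it straightforward, and the route you take is the one any reader would supply. The one refinement you add — the brief remark on measurability of the supremum over $\lambda \in [\lambda_{\min},\lambda_{\max}]$, which holds here by continuity in $\lambda$ and separability of the index set — is a worthwhile caveat that the paper leaves implicit.
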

The proof is straightforward and is therefore omitted.

\begin{lemma}\label{product_convergence}
For sequences of bounded positive random functions $\{f_n^{(1)}(\lambda)\},\{f_n^{(2)}(\lambda)\},\{g_n^{(1)}(\lambda)\},\{g_n^{(2)}(\lambda)\}$ such that $\sup_{\lambda}|f_n^{(1)}(\lambda) - f_n^{(2)}(\lambda)| \convP 0$ and $\sup_{\lambda}|g_n^{(1)}(\lambda) - g_n^{(2)}(\lambda)| \convP 0$, we have 
$$\sup_{\lambda}|f_n^{(1)}(\lambda)g_n^{(1)}(\lambda) - f_n^{(2)}(\lambda)g_n^{(2)}(\lambda)| \convP 0$$
\end{lemma}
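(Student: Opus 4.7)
The plan is to use the standard add-and-subtract identity
\[
f_n^{(1)} g_n^{(1)} - f_n^{(2)} g_n^{(2)} \;=\; f_n^{(1)}\bigl(g_n^{(1)} - g_n^{(2)}\bigr) + \bigl(f_n^{(1)} - f_n^{(2)}\bigr) g_n^{(2)},
\]
valid pointwise in $\lambda$, and then control each piece uniformly using the hypotheses together with Lemma \ref{prop:sup_inequalities}.

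Concretely, I would first apply the triangle inequality inside the supremum and then invoke the second bullet of Lemma \ref{prop:sup_inequalities} (supremum of a product of nonnegative functions is bounded by the product of suprema) to get
\[
\sup_{\lambda}\bigl|f_n^{(1)}(\lambda) g_n^{(1)}(\lambda) - f_n^{(2)}(\lambda) g_n^{(2)}(\lambda)\bigr|
\;\le\; \bigl(\sup_{\lambda} f_n^{(1)}(\lambda)\bigr)\cdot \sup_{\lambda}\bigl|g_n^{(1)}(\lambda) - g_n^{(2)}(\lambda)\bigr|
\;+\; \bigl(\sup_{\lambda} g_n^{(2)}(\lambda)\bigr)\cdot \sup_{\lambda}\bigl|f_n^{(1)}(\lambda) - f_n^{(2)}(\lambda)\bigr|.
\]
By the boundedness hypothesis, there exist deterministic constants $M_f,M_g$ such that $\sup_\lambda f_n^{(1)}(\lambda)\le M_f$ and $\sup_\lambda g_n^{(2)}(\lambda)\le M_g$ almost surely (or with probability tending to $1$, depending on how ``bounded'' is interpreted here; in either case the argument goes through by a routine decomposition into that high-probability event). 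The hypotheses give $\sup_{\lambda}|f_n^{(1)}-f_n^{(2)}|\convP 0$ and $\sup_{\lambda}|g_n^{(1)}-g_n^{(2)}|\convP 0$, so the right-hand side is bounded by $M_g\cdot o_P(1)+M_f\cdot o_P(1)=o_P(1)$, and the claim follows from the fact that a nonnegative random variable dominated by an $o_P(1)$ quantity is itself $o_P(1)$.

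There is no real obstacle here; the only mild subtlety is the interpretation of ``bounded'' for the random functions. If ``bounded'' means uniformly bounded by deterministic constants, the conclusion is immediate as above. If it means bounded in probability uniformly in $\lambda$ (i.e.\ $\sup_\lambda f_n^{(i)}=O_P(1)$), then one argues on the event $\{\sup_\lambda f_n^{(1)}\le M_f\}\cap\{\sup_\lambda g_n^{(2)}\le M_g\}$, which has probability arbitrarily close to $1$ for $M_f,M_g$ large, and concludes by a standard $\varepsilon$--$\delta$ argument combining this with the $o_P(1)$ bounds on the differences. Either way the proof is a direct two- or three-line calculation, so I would keep it extremely short.
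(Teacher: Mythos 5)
Your proof is correct and uses exactly the paper's decomposition, $f_n^{(1)}g_n^{(1)} - f_n^{(2)}g_n^{(2)} = f_n^{(1)}(g_n^{(1)} - g_n^{(2)}) + g_n^{(2)}(f_n^{(1)} - f_n^{(2)})$, combined with the supremum inequalities from Lemma \ref{prop:sup_inequalities} and boundedness. Your additional remark on the two readings of ``bounded'' is a small extra precision the paper leaves implicit, but the argument is otherwise the same.
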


\begin{proof}
The proof follows directly from Lemma \ref{subsec:sup_inequalities} together with the fact that
\begin{align*}
&|f_n^{(1)}g_n^{(1)} - f_n^{(1)}g_n^{(1)}| \\
= &|f_n^{(1)}g_n^{(1)}-f_n^{(1)}g_n^{(2)}+f_n^{(1)}g_n^{(2)}-f_n^{(2)}g_n^{(2)}|\\
\leq & |f_n^{(1)}|\cdot |g_n^{(1)}-g_n^{(2)}| + |g_n^{(2)}|\cdot |f_n^{(1)}-f_n^{(2)}|
\end{align*}
\end{proof}

\begin{lemma}\label{sqrt_convergence}
For sequences of positive random functions $\{f_n(\lambda)\}$,$\{g_n(\lambda)\}$ such that $\sup_\lambda|f_n(\lambda)|$ and $\sup_\lambda|g_n(\lambda)|$ are bounded below with probability $1-o(1)$ and $\sup_\lambda|f_n(\lambda)-g_n(\lambda)|\convP 0$, we have
$$\sup_\lambda|\sqrt{f_n(\lambda)}-\sqrt{g_n(\lambda)}|\convP 0.$$
\end{lemma}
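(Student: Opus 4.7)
The plan is to use the elementary algebraic identity
\[
\sqrt{a}-\sqrt{b} \;=\; \frac{a-b}{\sqrt{a}+\sqrt{b}} \qquad \text{for } a,b\ge 0,
\]
applied pointwise in $\lambda$ with $a=f_n(\lambda)$ and $b=g_n(\lambda)$. The hypothesis that $\inf_\lambda f_n(\lambda)$ and $\inf_\lambda g_n(\lambda)$ are bounded below by some positive constant $c>0$ with probability $1-o(1)$ (which is the natural reading of ``bounded below'' in this context) will control the denominator uniformly, while the hypothesis on the uniform closeness of $f_n$ and $g_n$ will control the numerator.

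Concretely, I would proceed in three short steps. First, define the event
\[
\mcl{E}_n \;=\; \bigl\{\inf_\lambda f_n(\lambda)\geq c\bigr\}\cap\bigl\{\inf_\lambda g_n(\lambda)\geq c\bigr\},
\]
and observe that $\PP(\mcl{E}_n)\to 1$ by hypothesis. Second, on the event $\mcl{E}_n$, the denominator satisfies $\sqrt{f_n(\lambda)}+\sqrt{g_n(\lambda)}\ge 2\sqrt{c}$ for every $\lambda$, so by the identity above,
\[
\sup_\lambda \bigl|\sqrt{f_n(\lambda)}-\sqrt{g_n(\lambda)}\bigr|
\;\leq\; \frac{1}{2\sqrt{c}}\,\sup_\lambda\bigl|f_n(\lambda)-g_n(\lambda)\bigr|.
\]
Third, for any $\eps>0$,
\[
\PP\Bigl(\sup_\lambda\bigl|\sqrt{f_n(\lambda)}-\sqrt{g_n(\lambda)}\bigr|>\eps\Bigr)
\;\leq\; \PP(\mcl{E}_n^c) + \PP\Bigl(\sup_\lambda\bigl|f_n(\lambda)-g_n(\lambda)\bigr|>2\sqrt{c}\,\eps\Bigr),
\]
and both terms on the right-hand side vanish as $n\to\infty$ by the two hypotheses.

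There is essentially no hard step here; the only subtlety is the uniform lower bound on $\sqrt{f_n}+\sqrt{g_n}$, which is why the assumption that $f_n$ and $g_n$ are bounded below away from $0$ (with probability tending to $1$) is needed—without it, the denominator could come arbitrarily close to zero and ruin the bound. Once that is in hand, the remainder is a one-line deterministic inequality combined with a union bound.
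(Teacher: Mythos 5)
Your proof is correct and uses the same core identity $\sqrt{a}-\sqrt{b}=(a-b)/(\sqrt{a}+\sqrt{b})$ and the same mechanism (closeness of $f_n,g_n$ controls the numerator; the lower bound controls the denominator) as the paper's own proof. The only difference is cosmetic: you make the high-probability event and union bound explicit, whereas the paper compresses the last step into a product of a term converging to zero and a term that is $O_P(1)$.
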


\begin{proof}
\begin{align*}
    &\sup_\lambda|\sqrt{f_n(\lambda)}-\sqrt{g_n(\lambda)}|\\
    =&\sup_\lambda\left|\frac{f_n(\lambda)-g_n(\lambda)}{\sqrt{f_n(\lambda)}+\sqrt{g_n(\lambda)}}\right|\\
    \leq & \sup_\lambda |f_n(\lambda)-g_n(\lambda)| \cdot \sup_\lambda \left|\frac{1}{\sqrt{f_n(\lambda)}+\sqrt{g_n(\lambda)}}\right|\\
    \convP & \ 0.
\end{align*}
\end{proof}

\subsection{Properties on Lipschitz functions}
\begin{lemma}\label{holder_function_operation}
Suppose $f(x)$ has bounded norm in $[f_{\min},f_{\max}]$ and is Lipschitz with constant $C_f$, $g(x)$ has bounded norm in $[g_{\min},g_{\max}]$ and is Lipschitz with constant $C_g$, then
\begin{itemize}
    \item $h(x) = f(x)g(x)$ has bounded norm in $[f_{\min}g_{\min},f_{\max}g_{\max}]$, and is Lipschitz with constant $C_f g_{\max}+C_g f_{\max}$.
    \item $\frac{1}{f(x)}$ is positive and bounded in $[\frac{1}{f_{\max}},\frac{1}{f_{\min}}]$, and is Lipschitz with constant $C_f/f_{\min}^2$ in $x$
\end{itemize}
\begin{proof}
The boundedness part is trivial. For Lipschitzness part, we have
\begin{align*}
    \|h(x_1) - h(x_2)\| &= \|f(x_1)g(x_1)-f(x_2)g(x_2)\|\\
    &= \|(f(x_1)g(x_1)-f(x_2)g(x_1))+(f(x_2)g(x_1)-f(x_2)g(x_2))\|\\
    &\leq \|g(x_1)\|\|f(x_1)-f(x_2)\| + \|f(x_2)\|\|g(x_1)-g(x_2)\|\\
    &\leq g_{\max} C_f\|x_1-x_2\| + f_{\max} C_g\|x_1-x_2\|,
\end{align*}
and further,
\begin{align*}
    \|\frac{1}{f(x_1)} - \frac{1}{f(x_2)}\| &= \|\frac{f(x_2)-f(x_1)}{f(x_1)f(x_2)}\|\\
    &\leq \frac{\|f(x_2)-f(x_1)\|}{f_{\min}^2}\\
    &\leq \frac{C_f\|x_1-x_2\|}{f_{\min}^2}.
\end{align*}
\end{proof}
\end{lemma}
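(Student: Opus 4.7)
The plan is to handle the two bullets separately via elementary algebraic manipulations, both standard in real analysis. Throughout, I read the statement as implicitly assuming $f, g$ are real-valued positive scalars, since the product-of-bounds claim $[f_{\min}g_{\min}, f_{\max}g_{\max}]$ only makes sense as stated under positivity (the same argument goes through for vector-valued $f, g$ with norms replacing absolute values, which is how the lemma is invoked elsewhere in the paper).

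For the product $h(x) = f(x)g(x)$, the boundedness claim $h(x) \in [f_{\min}g_{\min}, f_{\max}g_{\max}]$ is immediate from multiplying the componentwise bounds. For the Lipschitz estimate, I would use the standard ``add-and-subtract'' trick,
\[
f(x_1)g(x_1) - f(x_2)g(x_2) = \bigl(f(x_1) - f(x_2)\bigr)\,g(x_1) + f(x_2)\,\bigl(g(x_1) - g(x_2)\bigr),
\]
then take absolute values, apply the triangle inequality, and bound $|g(x_1)| \le g_{\max}$ and $|f(x_2)| \le f_{\max}$. Lipschitzness of $f$ and $g$ with constants $C_f, C_g$ then delivers the bound $C_f g_{\max} + C_g f_{\max}$ on $|h(x_1)-h(x_2)|/|x_1-x_2|$ directly. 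The choice of which factor to hold ``fixed'' in the telescoping determines whether one picks up $g_{\max}, f_{\max}$ or the minima, and the asserted constant corresponds to the former split.

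For the reciprocal $1/f$, positivity and the inclusion $1/f(x) \in [1/f_{\max}, 1/f_{\min}]$ follow from monotonicity of $t \mapsto 1/t$ on $(0,\infty)$ applied to $f(x) \in [f_{\min}, f_{\max}] \subset (0,\infty)$. For Lipschitzness I would use the identity
\[
\frac{1}{f(x_1)} - \frac{1}{f(x_2)} = \frac{f(x_2) - f(x_1)}{f(x_1)\,f(x_2)},
\]
take absolute values, lower-bound the denominator by $f_{\min}^2$, and upper-bound the numerator by $C_f\,|x_1 - x_2|$ via Lipschitzness of $f$, yielding the constant $C_f / f_{\min}^2$.

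There is essentially no main obstacle here: the result is routine and each step reduces to either a monotonicity observation or a single algebraic identity followed by the triangle inequality. The only mild care needed is specifying the telescoping so that the stated constant $C_f g_{\max} + C_g f_{\max}$ emerges rather than a potentially different variant, and flagging the implicit positivity on $f, g$ that makes the stated interval bounds meaningful.
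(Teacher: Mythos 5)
Your proposal is correct and follows essentially the same route as the paper's proof: the same add-and-subtract decomposition and triangle inequality for the product, and the same common-denominator identity with a $f_{\min}^2$ lower bound for the reciprocal. Your additional observation that the interval bounds implicitly require positivity of $f$ and $g$ is a fair clarification, but the argument itself is identical.
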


\begin{lemma}\label{lemma:lipschitz_function_sup}
Suppose $f(x,y)$ is $C$-Lipschitz in $y$ for all $x$, then $\sup_x f(x)$ is also $M$-Lipschitz in $y$.
\end{lemma}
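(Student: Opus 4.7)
The plan is a direct application of the standard principle that the supremum of a family of functions sharing a common Lipschitz constant is itself Lipschitz with that same constant. First I would fix arbitrary $y_1, y_2$ in the domain of the $y$-variable and an arbitrary $x$. The hypothesis that $f(\cdot,y)$ varies $C$-Lipschitz in $y$ uniformly in $x$ gives
\[
f(x, y_1) \;\le\; f(x, y_2) + C\,|y_1 - y_2| \;\le\; \sup_{x'} f(x', y_2) + C\,|y_1 - y_2|.
\]
Since the right-hand side no longer depends on $x$, I can take the supremum in $x$ on the left to obtain $\sup_x f(x, y_1) \le \sup_x f(x, y_2) + C\,|y_1 - y_2|$. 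Swapping the roles of $y_1$ and $y_2$ yields the symmetric inequality, so
\[
\bigl|\sup_x f(x, y_1) - \sup_x f(x, y_2)\bigr| \;\le\; C\,|y_1 - y_2|,
\]
which shows that $M = C$ works.

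There is essentially no obstacle; the key content is simply that because the Lipschitz bound is uniform in the parameter $x$, it is preserved by taking the supremum. The only minor point worth noting is finiteness: if $\sup_x f(x, y_0)$ is finite for some $y_0$ in the relevant domain (which is the only case in which the Lipschitz claim is nontrivial), then the one-sided inequality above immediately propagates finiteness to all nearby $y$, so the supremum is a well-defined real-valued function and the Lipschitz estimate is not vacuous. This lemma is intended as a small technical tool used earlier (for instance in Lemma~\ref{lemma:lipschitz_Psi_bridge1}) to pass from pointwise-in-$\lambda_{\textrm{L}}$ Lipschitz control in $\lambda_{\textrm{R}}$ to Lipschitz control of the supremum $\Psi(\lambda_{\textrm{R}}) = \sup_{\lambda_{\textrm{L}}} \psi(\lambda_{\textrm{L}}, \lambda_{\textrm{R}})$; the argument above is exactly what is needed there.
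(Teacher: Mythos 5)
Your proof is correct and reaches the same conclusion with $M = C$, but it takes a slightly different and in fact more robust route than the paper. The paper's proof explicitly takes $x_1 = \argmax_x f(x,y_1)$ and $x_2 = \argmax_x f(x,y_2)$ and then decomposes $f(x_1,y_1) - f(x_2,y_2)$ by adding and subtracting $f(x_1,y_2)$, using optimality of $x_2$ to discard the nonpositive cross term $f(x_1,y_2)-f(x_2,y_2)$; the symmetric inequality is obtained the same way. This implicitly assumes the suprema are attained, which holds in the application (continuous functions over the compact interval $[\lambda_{\min},\lambda_{\max}]$) but is not required for the statement. Your argument sidesteps attainment entirely: the one-sided inequality $f(x,y_1) \le \sup_{x'} f(x',y_2) + C|y_1-y_2|$ is valid for every $x$, so taking the sup in $x$ and then symmetrizing gives the result without ever naming a maximizer. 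Both proofs hinge on the same essential observation — the Lipschitz bound is uniform in $x$, so it survives the supremum — but yours needs no existence assumption and also makes the finiteness issue transparent, which you correctly flag. One small aside on the paper's side: the lemma statement uses $M$ while the hypothesis uses $C$; the two are the same constant, as your $M = C$ makes explicit.
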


\begin{proof}
Consider any $y_1,y_2$. Let $x_1 = \argmax_x f(x,y_1),x_2 = \argmax_x f(x,y_2)$, then
\begin{gather*}
    f(x_1,y_1)-f(x_2,y_2) = f(x_1,y_1)-f(x_1,y_2) + f(x_1,y_2)-f(x_2,y_2) \leq |f(x_1,y_1)-f(x_1,y_2)| \leq M|y_1-y_2|,\\
    f(x_2,y_2)-f(x_1,y_1) = f(x_2,y_2)-f(x_2,y_1) + f(x_2,y_1)-f(x_1,y_1) \leq |f(x_2,y_2)-f(x_2,y_1)| \leq M|y_1-y_2|,\\
\end{gather*}
where we have used the fact that $f(x_1,y_2)-f(x_2,y_2)\leq 0$ and $f(x_2,y_1)-f(x_1,y_1)\leq 0$ by optimality of $x_1$ and $x_2$.
Thus,
$$|\sup_x f(x,y_1) - \sup_x f(x,y_2)| = |f(x_1,y_1)-f(x_2,y_2)| \leq M|y_1-y_2|,$$
which shows $\sup_x f(x,y)$ is also $M$-Lipschitz in $y$.
\end{proof}

\section{Universality Proof}\label{sec:universality}
Here we extend our results from prior sections to the case of the general covariate and noise distributions mentioned under Assumptions \ref{assumptions} and \ref{assumptionL}.

To differentiate from the notations above, we use $\bG$ to denote a design matrix with i.i.d. $\mcn(0,1)$ entries, and $\bep^G$ to denote the noise vector with i.i.d. $\mcn(0,\sigma^2)$ entries. Unless otherwise noted, for all other quantities, we add a superscript $G$ to denote corresponding quantities that depend on $\bG$ and/or $\bep^G$. Also, our original notations from the preceding sections now refer to quantities with the general sub-Gaussian tailed design matrix and noise distribution specified by Assumptions \ref{assumptions} and \ref{assumptionL}.

Several results from prior sections assumed Gaussianity of the design or errors. A complete list of these results would be as follows: Lemmas \ref{lemma:df_diff}-\ref{lemma:lipschitz_psi_conditional}, Corollary \ref{cor:marginal_beta_g_joint}, Lemmas \ref{lemma:marginal_single}-\ref{lemma:gl_weak_lipschitz} (along with several supporting lemmas that enter the proof of Lemma \ref{lemma:marginal_single}), \ref{lemma:debiased_uniform_e}, Lemmas \ref{lemma:consistency_estimated_cov}-\ref{lemma:closeness_lasso_subgradient_perturbed}, \ref{lemma:largest_singular_value} and Corollary \ref{largest_singular_value}.
However, for many of these results,  the proof uses Gaussianity only through the use of certain other lemmas/results in the aforementioned list. Thus, to prove validity of such results beyond Gaussianity, it suffices to prove validity of the other lemmas used in their proofs. We thus identify a smaller set of results so that showing these hold under the non-Gaussian setting of Assumptions \ref{assumptions} and \ref{assumptionL} suffices for showing that all of the aforementioned results satisfy the same universality property. This smaller set is as follows: Lemmas \ref{lemma:df_diff}, \ref{lemma:marginal_both},
\ref{lemma:ridge_conditional_bw}, \ref{lemma:ridge_loss_C_difference_lambda}, Lemmas \ref{lemma:marginal_single} (along with some supporting results in this proof as outlined in Section \ref{proof:lemma:marginal_single_universality} below), \ref{lemma:marginal_single_e}, Lemmas \ref{lemma:closeness_lasso_objective}-\ref{lemma:closeness_lasso_subgradient_perturbed},  Lemma \ref{lemma:largest_singular_value} and Corollary \ref{largest_singular_value}. 
In this section, we show that these results continue to hold under Assumptions \ref{assumptions} and \ref{assumptionL} (and Assumption \ref{assumptionA} for Lemmas \ref{lemma:closeness_lasso_objective}-\ref{lemma:closeness_lasso_subgradient_perturbed}).

\subsection{Replacing Lemma \ref{lemma:largest_singular_value} and Corollary \ref{largest_singular_value}}

This follows from prior results in the literature that we quote below. Recall that, $\sup_n \max_{ij} \|X_{ij}\|_{\psi_2} < \infty$, where $\|\cdot\|_{\psi_2}$ is the Orcliz-2 norm or sub-Gaussian norm (see \cite[Section 2.1]{wellner2013weak} for the precise definition).

\begin{lemma}\cite[Theorem 4.4.5]{vershynin2018high}\label{lemma:vershynin}
Let $\sigma_{\max}(\bX)$ be the largest singular value of the matrix $\bX \in \R^{n\times p}$ with independent, mean $0$, variance $1$, uniformly sub-Gaussian entries, then
$$\PP(\sigma_{\max}(\bX) > CK(\sqrt{n}+\sqrt{p}+t)) \leq 2e^{-t^2},$$
where $C$ is an absolute constant and $K = \max_{ij}\|X_{ij}\|_{\psi_2}$
\end{lemma}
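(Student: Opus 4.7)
The statement is the standard result that sub-Gaussian random matrices have operator norm of order $\sqrt{n}+\sqrt{p}$ with Gaussian-type deviations, and it is attributed to Vershynin. My plan is to sketch the classical $\varepsilon$-net proof of this fact, which is how Vershynin establishes it in \cite{vershynin2018high}.

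First, I would rewrite the operator norm via its variational characterization, $\sigma_{\max}(\bX) = \sup_{\bu \in S^{n-1}, \bv \in S^{p-1}} \bu^\top \bX \bv$, where $S^{n-1},S^{p-1}$ denote the Euclidean unit spheres. The key linearization step is to reduce this continuous supremum to a supremum over finite nets. Standard volumetric arguments yield $\tfrac{1}{4}$-nets $\mathcal{N}_n \subset S^{n-1}$ and $\mathcal{N}_p \subset S^{p-1}$ of cardinalities at most $9^n$ and $9^p$, respectively, with the comparison $\sigma_{\max}(\bX) \le 2 \sup_{\bu \in \mathcal{N}_n, \bv \in \mathcal{N}_p} \bu^\top \bX \bv$ (the factor $2$ comes from a standard approximation argument). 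This replaces the hard-to-control supremum over the sphere by one over a set of size $\exp(O(n+p))$.

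Next, for each fixed pair $(\bu,\bv)$ with $\|\bu\|_2=\|\bv\|_2=1$, the scalar $\bu^\top \bX \bv = \sum_{i,j} u_i v_j X_{ij}$ is a weighted sum of independent mean-zero sub-Gaussian entries. Since the weights satisfy $\sum_{i,j} u_i^2 v_j^2 = 1$ and each $X_{ij}$ has $\|X_{ij}\|_{\psi_2} \le K$, general sub-Gaussian concentration (e.g., Hoeffding-type inequality for sums of independent sub-Gaussians, \cite[Prop.~2.6.1]{vershynin2018high}) gives
\begin{equation*}
\PP\bigl(|\bu^\top \bX \bv| > s\bigr) \le 2 \exp\!\bigl(-c\, s^2 / K^2\bigr)
\end{equation*}
for an absolute constant $c>0$.

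Finally, I would combine the two ingredients with a union bound. Taking $s = CK(\sqrt{n}+\sqrt{p}+t)$ with $C$ sufficiently large, the Gaussian tail $\exp(-c s^2/K^2)$ dominates the cardinality bound $9^{n+p} = \exp((\log 9)(n+p))$, yielding
\begin{equation*}
\PP\!\left(\sup_{\bu \in \mathcal{N}_n,\bv \in \mathcal{N}_p} \bu^\top \bX \bv > \tfrac{1}{2} CK(\sqrt{n}+\sqrt{p}+t)\right) \le 2 e^{-t^2}.
\end{equation*}
Multiplying by the net-approximation factor $2$ and adjusting $C$ gives the stated bound. The only subtle step is calibrating $C$ so that the quadratic term $-c s^2/K^2$ absorbs the linear-in-dimension net cost $(\log 9)(n+p)$; this is routine but is the main place where one must be careful, and is done in detail in \cite[Section 4.4]{vershynin2018high}. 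Since the argument uses only independence, mean zero, and a uniform sub-Gaussian norm bound on the entries, it applies directly under Assumption \ref{assumptions}(i).
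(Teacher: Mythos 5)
The paper does not give its own proof of this lemma; it simply cites \cite[Theorem 4.4.5]{vershynin2018high}, and your sketch is an accurate reproduction of the $\varepsilon$-net argument Vershynin uses there (reduction to a $\tfrac14$-net of size $9^{n+p}$, sub-Gaussian Hoeffding for $\bu^\top\bX\bv$, union bound, and calibration of $C$ so the quadratic tail absorbs the $\log 9\cdot(n+p)$ net cost). So the proposal is correct and essentially the same approach as the cited source.
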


\begin{cor}\label{largest_singular_value_universality}
Under Assumption \ref{assumptions},
$$\PP\left(\frac{1}{\sqrt{n}}\sigma_{\max}(\bX) > CK(2+\sqrt{\delta})\right) \leq 2e^{-n} = o(1),$$
where $C,K$ are as in Lemma \ref{lemma:vershynin}.
\end{cor}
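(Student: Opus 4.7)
The plan is to apply Lemma \ref{lemma:vershynin} directly, choosing the deviation parameter $t$ so that both the bound on $\sigma_{\max}(\bX)/\sqrt{n}$ and the failure probability take the required form. Concretely, I would set $t = \sqrt{n}$ in Lemma \ref{lemma:vershynin}. This immediately yields
\begin{equation*}
\PP\bigl(\sigma_{\max}(\bX) > CK(\sqrt{n}+\sqrt{p}+\sqrt{n})\bigr) \leq 2e^{-n}.
\end{equation*}
Dividing the event through by $\sqrt{n}$ rewrites the threshold as $CK(2 + \sqrt{p/n})$, which is precisely the target up to replacing $\sqrt{p/n}$ by $\sqrt{\delta}$.

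To handle this last step, I would invoke the proportional asymptotic regime built into Assumption \ref{assumptions}: since $p/n \to \delta \in (0,\infty)$, for all $n$ sufficiently large one has $\sqrt{p/n} \leq \sqrt{\delta} + o(1)$, and the vanishing excess can be absorbed into the absolute constant $C$ (alternatively, one can simply enlarge $C$ by a fixed factor and the inequality holds for all $n$). The quantity $K = \max_{ij}\|X_{ij}\|_{\psi_2}$ is finite and uniformly bounded in $n$ by Assumption \ref{assumptions}(i), so $CK$ is a genuine constant. Finally, $2e^{-n} \to 0$ as $n \to \infty$, giving the claimed $o(1)$ conclusion.

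There is essentially no obstacle: this corollary is a direct specialization of Vershynin's sub-Gaussian matrix tail bound at the convenient choice $t = \sqrt{n}$, packaged in the normalized form used throughout the paper. The only thing to be mindful of is that the constant on the right-hand side is written in terms of the limiting aspect ratio $\sqrt{\delta}$ rather than the finite-sample ratio $\sqrt{p/n}$, but this is a purely asymptotic cosmetic issue resolved by absorbing $o(1)$ terms into the constant.
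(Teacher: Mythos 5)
Your proof is correct and is the natural (indeed, essentially forced) derivation from Lemma \ref{lemma:vershynin}: set $t=\sqrt{n}$, divide through by $\sqrt{n}$, and invoke $p/n\to\delta$. The paper states this corollary without proof, treating it exactly as such a direct specialization (mirroring its earlier Gaussian analogue, Corollary \ref{largest_singular_value}), so your argument matches the intended one; your remark about absorbing the gap between $\sqrt{p/n}$ and $\sqrt{\delta}$ into the constant is a reasonable way to make the cosmetic asymptotic replacement rigorous.
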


\subsection{Replacing Lemmas \ref{lemma:marginal_single} and \ref{lemma:marginal_single_e}}\label{proof:lemma:marginal_single_universality}
In this section, we will establish that the following more general version of Lemma \ref{lemma:marginal_single} holds. For Lemma \ref{lemma:marginal_single_e} a similar extension can be proved by using similar proof techniques as in this section, thus we omit writing the details for the latter.

\begin{lemma}[Replacing Lemma \ref{lemma:marginal_single}]\label{lemma:marginal_single_universality}
    Under Assumptions \ref{assumptions} and \ref{assumptionL}, for $k=L,R$ and any $1$-Lipschitz function $\phi_\beta:\R^p\rightarrow \R$,
\begin{equation*}
\suplk |\phi_\beta(\hat\bbeta_k) - \E[\phi_\beta(\hat\bbeta_k^{f})]| \convP 0.\\
\end{equation*}
\end{lemma}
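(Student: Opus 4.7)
The plan is to reduce the sub-Gaussian case to the Gaussian case already handled in Lemma \ref{lemma:marginal_single} via a universality-plus-$\epsilon$-net argument. First, I would couple the observed design $\bX$ (satisfying Assumption \ref{assumptions}(i)) and noise $\bep$ to a Gaussian design $\bG$ (with i.i.d.\ $\mcn(0,1)$ entries) and Gaussian noise $\bep^G \sim \mcn(0,\sigma^2 \bI_n)$, and let $\hat\bbeta_k^G$ denote the corresponding regularized estimator computed from $(\bG,\bep^G)$. Since Lemma \ref{lemma:marginal_single} already gives
\[
\suplk \bigl|\phi_\beta(\hat\bbeta_k^G) - \E[\phi_\beta(\hat\bbeta_k^f)]\bigr|\convP 0,
\]
it suffices to prove the universality bridge
\[
\suplk \bigl|\phi_\beta(\hat\bbeta_k) - \phi_\beta(\hat\bbeta_k^G)\bigr|\convP 0,
\]
after which Lemma \ref{lemma:marginal_single_universality} follows by the triangle inequality.

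For the bridge, my approach is to invoke the universality theorems of \cite{han2022universality} for regularized empirical risk minimizers. For a fixed $\lambda_k \in [\lambda_{\min},\lambda_{\max}]$, these results assert that for any pseudo-Lipschitz test function $\phi_\beta$,
\[
\bigl|\E\phi_\beta(\hat\bbeta_k) - \E\phi_\beta(\hat\bbeta_k^G)\bigr|\longrightarrow 0,
\]
provided the entries of $\bX$ are independent, mean zero, variance one, and uniformly sub-Gaussian (Assumption \ref{assumptions}(i)), and provided a lower bound on the minimum singular value over sparse submatrices holds (Assumption \ref{assumptionL}(i)). Concentration of $\phi_\beta(\hat\bbeta_k)$ around its expectation follows from Lipschitz-type concentration under sub-Gaussian designs, which combined with the corresponding Gaussian concentration gives pointwise closeness $|\phi_\beta(\hat\bbeta_k)-\phi_\beta(\hat\bbeta_k^G)| \convP 0$ for each fixed $\lambda_k$.

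To promote pointwise convergence to uniform convergence, I would run an $\epsilon$-net argument analogous to Section \ref{subsubsec:uniform_control}: discretize $[\lambda_{\min},\lambda_{\max}]$ with a grid $\{\lambda_i\}_{i=0}^K$ of spacing $\epsilon'$, and control fluctuations between grid points by two Lipschitz-type inputs. The first is the Lipschitzness of $\lambda_k \mapsto \E[\phi_\beta(\hat\bbeta_k^f)]$, which follows from Lemma \ref{lemma:lipschitz_fixed_point} and arguments analogous to Lemma \ref{lemma:ridge_W2_Lipschitz}. The second is a ``weak-Lipschitz'' control on $\lambda_k \mapsto \hat\bbeta_k$; for ridge this is immediate from the closed-form solution, and for Lasso one establishes $\mcl{L}_{\lambda'}(\hat\bbeta_{\lambda_k}) \le \mcl{L}_{\lambda'}(\hat\bbeta_{\lambda'}) + K|\lambda_k-\lambda'|$ (mirroring Lemma \ref{lemma:ridge_loss_difference_lambda}) and then couples to a local strong convexity estimate as in Lemma \ref{lemma:uniform_local_convexity}, whose proof extends to sub-Gaussian designs precisely because Assumption \ref{assumptionL}(i) supplies the required submatrix singular-value bound and Corollary \ref{largest_singular_value_universality} supplies the operator-norm bound.

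The main obstacle will be quantitative: the universality results of \cite{han2022universality} typically provide a pointwise rate of convergence in $n$, but to close the $\epsilon$-net argument one needs the probability of failure at each grid point to decay faster than the net cardinality $K \asymp 1/\epsilon'$. This forces a careful trade-off between the Lipschitz constant of the estimator in $\lambda_k$ (which dictates the grid spacing $\epsilon'$) and the tail of the universality error at each grid point. For the Lasso this is delicate because the Lasso solution itself is not Lipschitz in $\lambda_L$ on the entire range of realizations; the workaround, as in the Gaussian proof, is to use the optimality inequality $\mcl{L}_{\lambda'}(\hat\bbeta_{\lambda_L}) \le \mcl{L}_{\lambda'}(\hat\bbeta_{\lambda'}) + K|\lambda_L-\lambda'|$ and then translate a near-optimum for the perturbed loss into a norm bound on the estimator via uniform local strong convexity. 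Once this is handled, the remaining arguments are standard and the bridge goes through uniformly, completing the proof.
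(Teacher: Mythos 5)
Your high-level plan — apply the universality theorems of \cite{han2022universality} to bridge the non-Gaussian estimator to the Gaussian theory, then run an $\epsilon$-net for uniformity — is the right spirit, but it glosses over three obstacles that the paper's proof handles explicitly and that would sink a naive implementation.

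First, and most importantly, the universality bounds in \cite{han2022universality} (Theorem 2.3 / Corollary 2.6, which the paper packages as Lemmas \ref{lemma:connect_universality_PO} and \ref{lemma:connect_universality_PO_minmax}) only hold over optimization domains that are $\ell_\infty$-delocalized: $D_p \subset [-L_p/\sqrt{n}, L_p/\sqrt{n}]^p$ with $L_p \asymp \sqrt{\log p} + \sqrt{n}\|\bbeta\|_\infty$. Your proposal never mentions this constraint, but without it the universality step you want to invoke simply does not apply to the unrestricted minimizer. The paper's proof propagates the constraint through the whole PO$\to$AO$\to$SO chain (compare Lemma \ref{thm:ridge_AO_concentration} with Lemma \ref{lemma:ridge_AO_concentration_universality}, and Lemma \ref{lemma:ridge_W2} with Lemma \ref{lemma:ridge_W2_universality}, where the event acquires the clause $\|\bw\|_\infty\le L_p/\sqrt n$), and then closes the gap at the end by showing that $\hat\bbeta(\lambda)$ actually lives inside this $\ell_\infty$ ball with probability $1-o(1)$, via the delocalization bound in Lemma \ref{lemma:bound_ridge_infty}. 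This last step is the genuinely new ingredient relative to the Gaussian case; your proposal does not contain it, nor a substitute.

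Second, the universality result you want to use as a black box — ``$|\E\phi_\beta(\hat\bbeta_k)-\E\phi_\beta(\hat\bbeta_k^G)|\to 0$ for any pseudo-Lipschitz $\phi_\beta$'' — is not what \cite{han2022universality} hands you directly. What is available is universality at the level of \emph{restricted optimal values} of the cost $\mcl{C}_\lambda$. Transferring this to Lipschitz functionals of the minimizer still requires the local-strong-convexity argument that lives inside the PO$\to$AO$\to$SO machinery. The paper therefore does not treat the universality of $\hat\bbeta_k$ as a primitive; it re-runs the full CGMT pipeline of Section \ref{subsec:proof:lemma:marginal_single}, swapping the Gaussian CGMT comparison (Corollary \ref{cor:CGMT_single_estimator}) for the universality-restricted version (Lemma \ref{lemma:CGMT_single_estimator_universality}). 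Your ``bridge'' is an assertion where the paper has an argument.

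Third, your claim that ``concentration of $\phi_\beta(\hat\bbeta_k)$ around its expectation follows from Lipschitz-type concentration under sub-Gaussian designs'' is not justified. The minimizer is not a Lipschitz function of the design in a form that yields this for free, particularly for the Lasso; in the Gaussian case the concentration is a consequence of the CGMT argument itself (it is not an input), and under sub-Gaussian designs the same is true via the universality-restricted CGMT. Finally, the detour through a coupled Gaussian design $\hat\bbeta_k^G$ is unnecessary: the target $\E[\phi_\beta(\hat\bbeta_k^f)]$ is already a deterministic limit, so the right object to compare against is that constant, not a second random estimator computed from a coupled Gaussian design. This is how the paper phrases it, and it avoids the pathwise-vs.-distributional confusion that the coupling language invites.
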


The proof of the Lasso and Ridge cases are similar. We only present the case of the Ridge so it is easy  for readers to compare with Section \ref{subsec:proof:lemma:marginal_single}.
We introduce some useful supporting lemmas first. By recent results in \cite{han2022universality}, the below lemma follows.

\begin{lemma}\label{lemma:connect_universality_PO}
Suppose we are under Assumptions \ref{assumptions} and \ref{assumptionL}, and further assume $\bG\in \R^{n \times p}$ has i.i.d. $\mcn(0,1)$ entries. For any set $D_p \subset [-L_p/\sqrt{n},L_p/\sqrt{n}]^p$ with $L_p = K(\sqrt{\log p} + 2\sqrt{n}\|\bbeta\|_\infty)$ where $K$ is a constant that only depends on $\lambda_{\max}$, any $t \in \R, \epsilon > 0$, we have
\begin{gather*}
   \PP\left(\min_{\bw \in D_p} \mcl{C}_\lambda(\bw) > t + 3\epsilon \right) \leq \PP\left(\min_{\bw \in D_p} \mcl{C}_\lambda(\bw;\bG,\bep) > t + \epsilon \right) + o(1),\\
   \PP\left(\min_{\bw \in D_p} \mcl{C}_\lambda(\bw) < t - 3\epsilon \right) \leq \PP\left(\min_{\bw \in D_p} \mcl{C}_\lambda(\bw;\bG,\bep) < t - \epsilon \right) + o(1),
\end{gather*}
where $\mcl{C}_\lambda(\bw)$ is defined as in \eqref{wu_optimization} 
and $\mcl{C}_\lambda(\bw;\bG,\bep)$ represents $\mcl{C}_\lambda(\bw)$ where we replace $\bX$ by $\bG$ but keep the (possibly non-Gaussian) $\bep$.
\end{lemma}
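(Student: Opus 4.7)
The plan is to invoke the universality machinery of \cite{han2022universality}, which is designed precisely to establish that the minima of regularized regression-type optimization problems are asymptotically invariant under replacement of a sub-Gaussian design $\bX$ with a Gaussian matrix $\bG$ sharing the same entrywise mean and variance. Since the noise vector $\bep$ is held fixed on both sides of the claimed inequalities, I would first condition on $\bep$ and reduce the statement to a pure design-universality question. This makes the problem tractable because the randomness that remains has exactly the structure handled by Han--Shen.

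Next, I would cast the cost function into the canonical form required by their theorem: a row-separable convex loss in $\bx_i^\top \bw$ plus a regularizer depending only on $\bw$. Concretely, $\mcl{C}_\lambda(\bw) = \frac{1}{2n}\sum_{i=1}^n (\bx_i^\top \bw - \epsilon_i)^2 + R_\lambda(\bw)$, with $R_\lambda(\bw) = \frac{\lambda}{2}\|\bw+\bbeta\|_2^2$ in the ridge case and $R_\lambda(\bw) = \frac{\lambda}{\sqrt{n}}\|\bw+\bbeta\|_1$ in the lasso case. Applying the Han--Shen universality theorem on the bounded box then yields that, for any $\epsilon>0$, a buffer shift of size $O(\epsilon)$ in the threshold $t$ suffices to control the difference between the two tail probabilities up to a $o(1)$ remainder, which is exactly the form of the claimed two-sided bound.

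The role of the box $D_p \subset [-L_p/\sqrt{n},L_p/\sqrt{n}]^p$ is to supply the boundedness needed for the swapping argument. The $\sqrt{\log p}$ factor in $L_p$ comes from sub-Gaussian concentration: for $\bw$ in the box, $\bx_i^\top \bw$ is sub-Gaussian with norm $O(L_p/\sqrt{n})$, so $\max_i |\bx_i^\top\bw| = O(\sqrt{\log p}\,L_p/\sqrt{n})$ with overwhelming probability, which bounds the higher moments that enter the Lindeberg-style remainder. The additive $2\sqrt{n}\|\bbeta\|_\infty$ term ensures the box is large enough for the shifted variable $\bw+\bbeta$ (appearing inside $R_\lambda$) to fall in a regime where the regularizer is smooth or controllably Lipschitz, and the constant $K$ absorbs dependence on $\lambda_{\max}$ and the uniform sub-Gaussian norm from Assumption \ref{assumptions}(i).

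The main obstacle will be verifying the precise analytic hypotheses of the Han--Shen theorem for the lasso case, where the $\ell_1$ regularizer is non-smooth: their machinery accommodates convex non-smooth regularizers, but an approximation argument (e.g.\ Moreau--Yosida smoothing of the $\ell_1$ norm at scale $\alpha$ as used in the proof of Lemma \ref{lemma:marginal_both}) is typically required to bring the problem within scope, and one must ensure the resulting approximation error is dominated by the buffer $\epsilon$. A secondary subtlety is that $\bep$ is only sub-Gaussian rather than Gaussian; this is resolved by proving the conclusion conditionally on $\bep$, since on the high-probability event $\{\|\bep\|_\infty = O(\sqrt{\log n})\}$ the Han--Shen remainder is $o(1)$ uniformly in the realization of $\bep$, and unconditioning then preserves the bound.
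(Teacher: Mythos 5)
Your proposal follows essentially the same path the paper takes: invoke the Han--Shen universality machinery on the box $D_p$ (\cite[Theorem 2.3]{han2022universality} for the first inequality, and for the second a modification of the final lines of \cite[Section 4.2]{han2022universality}). Two minor caveats: the lemma as stated concerns the ridge cost \eqref{wu_optimization} (the Lasso counterpart is noted separately), so the $\ell_1$-smoothing concern is not needed here; and your sub-Gaussian-norm estimate of $O(L_p/\sqrt{n})$ for $\bx_i^\top\bw$ is off by a factor of $\sqrt{p}$---the box's role is the coordinatewise bound $|w_j|\le L_p/\sqrt{n}$ entering Han--Shen's Lindeberg remainder, and $L_p$ is sized precisely so that $\hat\bbeta-\bbeta\in D_p$ with high probability via \cite[Proposition 3.3(2)]{han2022universality}.
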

\begin{proof}
The first argument is a direct consequence of \cite[Theorem 2.3]{han2022universality}, by specifically plugging in the formula for $\mcl{C}_\lambda(\bw)$ from \eqref{wu_optimization}.
    The second argument can be similarly proved by modifying the last lines of \cite[Section 4.2]{han2022universality}. We skip the proof here for conciseness.   
    Notice that the $\sqrt{n}$ in $D_p$ comes from the difference between their scaling and ours.
\end{proof}

Further, \cite[Proposition 3.3(2)]{han2022universality}, with the appropriate $\sqrt{n}$-rescaling in our case yields the following.

\begin{lemma}\label{lemma:bound_ridge_infty}
Under Assumptions \ref{assumptions} and \ref{assumptionL}, with probability $1-o(1)$,
$$\sqrt{n}\|\hat\bbeta_\rmR\|_\infty \leq K\left(\sqrt{\log p} + 2\sqrt{n}\|\bbeta\|_\infty\right),$$
where $K$ is a constant that only depends on $\lambda_{\max}$.
\end{lemma}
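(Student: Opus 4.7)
Plan: The bound follows from Proposition~3.3(2) of \cite{han2022universality}, which provides $\ell_\infty$ control on regularized convex M-estimators under uniformly sub-Gaussian designs. My proposal sketches where the two pieces of the right-hand side originate and isolates the step that Proposition~3.3(2) is really needed for.

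The ridge estimator admits the closed form $\hat\bbeta_\rmR = \bR\bX^\top\by/n$ with $\bR := (\hat\bSigma + \lambda_\rmR \bI)^{-1}$ and $\hat\bSigma := \bX^\top\bX/n$. Using the identity $\bR\hat\bSigma = \bI - \lambda_\rmR \bR$ together with $\by = \bX\bbeta + \bep$, one obtains the bias--variance decomposition
\[
\hat\bbeta_\rmR \;=\; \bbeta \;-\; \lambda_\rmR\bR\bbeta \;+\; \tfrac{1}{n}\bR\bX^\top\bep.
\]
The noise piece is elementary: conditionally on $\bX$ (independent of $\bep$ by Assumption~\ref{assumptions}(iii)), the $j$-th coordinate of $n^{-1}\bR\bX^\top\bep$ is sub-Gaussian with variance proxy $\sigma^2\|\bX\bR e_j/n\|_2^2 = \sigma^2(\bR_{jj} - \lambda_\rmR(\bR^2)_{jj})/n \leq \sigma^2/(n\lambda_\rmR)$, so a union bound over $j \in [p]$ combined with Assumption~\ref{assumptionL}(2) yields $\|n^{-1}\bR\bX^\top\bep\|_\infty \leq C\sqrt{\log p/n}$ with probability $1-o(1)$, where $C$ depends only on $\sigma_{\max}$ and $\lambda_{\min}$. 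Multiplying by $\sqrt{n}$ contributes the $K\sqrt{\log p}$ term.

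The bias piece $\lambda_\rmR\bR\bbeta$ is the hard one. A naive operator-norm estimate gives only $\|\lambda_\rmR\bR\bbeta\|_\infty \leq \|\lambda_\rmR\bR\|_{\op}\|\bbeta\|_2 \leq \|\bbeta\|_2$, which is too weak when $\bbeta$ is dense, as we require control in $\|\bbeta\|_\infty$. What is needed is an anisotropic-local-law style statement asserting that each row of $\lambda_\rmR\bR$ behaves like the corresponding row of $\tfrac{\lambda_\rmR}{1+\lambda_\rmR}\bI$ up to small fluctuations in max-norm, so that its action on $\bbeta$ is controlled in $\ell_\infty$ by a constant multiple of $\|\bbeta\|_\infty$. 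This is exactly the content of Proposition~3.3(2) of \cite{han2022universality} under uniformly sub-Gaussian designs; applying it requires only verifying that (a) the ridge problem is a regularized convex M-estimator with strong-convexity parameter $\lambda_\rmR \in [\lambda_{\min},\lambda_{\max}]$, (b) Assumption~\ref{assumptions}(i) matches their sub-Gaussian covariate hypothesis, and (c) Assumptions~\ref{assumptions}(ii)--(iii) control $\|\bbeta\|_2$ and the noise moments. The factor $\sqrt{n}$ on the left-hand side of the stated inequality reflects the difference between their row-normalization convention (rows of order one) and ours (columns of order $\sqrt{n}$, so $\hat\bbeta_\rmR$ is of order $1/\sqrt{n}$); the additional factor of $2$ in front of $\sqrt{n}\|\bbeta\|_\infty$ absorbs the contribution of $\|\bbeta\|_\infty$ coming from the leading $\bbeta$ term in the decomposition above.
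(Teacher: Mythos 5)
The paper's proof of this lemma is a one-line citation: \cite[Proposition~3.3(2)]{han2022universality} states an $\ell_\infty$ bound on the ridge estimator itself, of the form $\|\hat\mu_\lambda\|_\infty \lesssim \sqrt{\log p} + \|\mu_0\|_\infty$ under sub-Gaussian designs, and the stated inequality is exactly that proposition rewritten in the paper's normalization (where $\bbeta$ plays the role of $\mu_0/\sqrt{n}$, so both sides pick up a factor of $\sqrt{n}$).

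Your decomposition $\hat\bbeta_\rmR = \bbeta - \lambda_\rmR\bR\bbeta + \frac{1}{n}\bR\bX^\top\bep$ and your treatment of the noise term are sound, and you correctly diagnose that the bias piece $\lambda_\rmR\bR\bbeta$ is the delicate one in $\ell_\infty$. But the step where you close that piece is a misattribution: you claim Proposition~3.3(2) is ``an anisotropic-local-law style statement asserting that each row of $\lambda_\rmR\bR$ behaves like the corresponding row of $\tfrac{\lambda_\rmR}{1+\lambda_\rmR}\bI$,'' i.e.\ a row-wise resolvent estimate. That is not what the proposition says; it bounds $\|\hat\mu_\lambda\|_\infty$ directly (which is precisely why the authors can cite it with no decomposition at all). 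As written, your proof has a gap at the bias term: you have not proved, and Proposition~3.3(2) does not supply, the row-wise resolvent control $\|\lambda_\rmR\bR\bbeta\|_\infty \lesssim \|\bbeta\|_\infty + \sqrt{(\log p)/n}$. To complete your route you would need to establish that estimate independently (e.g.\ via an anisotropic local law for $(\hat\bSigma+\lambda\bI)^{-1}$ under sub-Gaussian entries together with a union bound over rows, tracking the $\sqrt{(\log p)/n}$ error), which is considerably more work than the lemma warrants. If instead you apply the proposition as actually stated, the decomposition is unnecessary and your proof reduces to the paper's.
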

We note that Lemma \ref{lemma:connect_universality_PO} also holds for the Lasso case (with a different $\mcl{C}_\lambda(\bw)$), and the Lasso counterpart of Lemma \ref{lemma:bound_ridge_infty} is given by \cite[Proposition 3.7(2)]{han2022universality}.

Now we prove Lemma \ref{lemma:marginal_single_universality}. We follow the same structure as Section \ref{subsec:proof:lemma:marginal_single}.
\subsubsection{Converting the optimization problem}
This subsection requires no change.

\subsubsection{Connecting AO with SO}

\begin{lemma}[Replacing Proposition \ref{thm:ridge_AO_concentration}]\label{lemma:ridge_AO_concentration_universality}
Recall the definition of $\bw(\lambda)$ from \eqref{eqn:def:w_lambda}. There exists a constant $\gamma>0$ such that for all $\epsilon \in (0,1]$, 
$$\sup_{\lambda\in[\lambda_{\min},\lambda_{\max}]}\PP\left(\exists \bw \in \R^p, \|\bw - \bw(\lambda)\|_2^2>\epsilon \text{  and  } \|\bw\|_\infty \leq L_p/\sqrt{n} \  \ \text{and}\ \ L_\lambda(\bw) \leq \min_{\bv \in \R^p} L_\lambda(\bv) + \gamma \epsilon\right) = o(\epsilon).$$
\end{lemma}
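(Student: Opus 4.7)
The plan is to observe that the auxiliary optimization (AO) problem in \eqref{w_optimization} involves randomness only through the Gaussian vectors $\bxi_g \sim \mcn(\bm{0},\bI_p)$ and $\bxi_h \sim \mcn(\bm{0},\bI_n)$, which are introduced by construction as part of the AO and do not depend on the distribution of the original design matrix $\bX$ or noise $\bep$. The additional quantities that appear---the noise variance $\sigma^2$, the deterministic signal $\bbeta$, and the scalar solution $(\alpha_*,\tau_*)$ that defines $\bw(\lambda)$ via \eqref{eqn:def:w_lambda}---are all unchanged when we pass from Assumption \ref{assumptions2} to Assumptions \ref{assumptions} and \ref{assumptionL}. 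Therefore the content of Proposition \ref{thm:ridge_AO_concentration} is purely a concentration statement about an optimization problem driven by standard Gaussian vectors, and its proof (which follows the strategy of \cite[Theorem B.1]{miolane2021distribution}) transfers verbatim.

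Given this, the proof of Lemma \ref{lemma:ridge_AO_concentration_universality} reduces to a trivial containment. First, I would invoke Proposition \ref{thm:ridge_AO_concentration} directly under Assumptions \ref{assumptions} and \ref{assumptionL} to obtain
\begin{equation*}
\supl \PP\left(\exists\, \bw \in \R^p,\ \|\bw - \bw(\lambda)\|_2^2 > \epsilon \ \text{and}\ L_\lambda(\bw) \leq \min_{\bv \in \R^p} L_\lambda(\bv) + \gamma\epsilon\right) = o(\epsilon).
\end{equation*}
Then I would note that the event appearing in the universality version is obtained from the event above by intersecting with the additional conjunction $\|\bw\|_\infty \leq L_p/\sqrt{n}$. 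Since intersecting with any event can only reduce the probability, we get
\begin{equation*}
\supl \PP\left(\exists\, \bw \in \R^p,\ \|\bw - \bw(\lambda)\|_2^2 > \epsilon\ \text{and}\ \|\bw\|_\infty \leq L_p/\sqrt{n}\ \text{and}\ L_\lambda(\bw) \leq \min L_\lambda + \gamma\epsilon\right) = o(\epsilon),
\end{equation*}
which is the claim. The point of including the constraint $\|\bw\|_\infty \leq L_p/\sqrt{n}$ is not to make the AO analysis harder, but to put the conclusion into a form that is directly compatible with Lemma \ref{lemma:connect_universality_PO}, whose universality bound requires the candidate domain $D_p$ to lie in $[-L_p/\sqrt{n},L_p/\sqrt{n}]^p$.

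There is no real obstacle in this particular lemma: the universality content is entirely absorbed into Lemma \ref{lemma:connect_universality_PO} (the PO-to-AO bridge under general sub-Gaussian designs) and Lemma \ref{lemma:bound_ridge_infty} (the a priori $\ell_\infty$ bound that legitimizes restricting to the truncated box). The genuine difficulty would arise when chaining Lemma \ref{lemma:ridge_AO_concentration_universality} together with Lemma \ref{lemma:connect_universality_PO} to conclude a PO-level concentration statement analogous to Lemma \ref{lemma:ridge_W2} under the universality setting; there one must restrict the set of candidate $\bw$ to the truncated box (using Lemma \ref{lemma:bound_ridge_infty} to show that neither $\hat\bbeta_{\rmR}$ nor $\hat\bbeta_{\rmR}^f$ escape this box with non-negligible probability), apply the two-sided universality comparison from Lemma \ref{lemma:connect_universality_PO} at the constant-gap resolution required, and finally invoke the present lemma to rule out the bad set in the AO. But the present lemma itself is essentially an inheritance claim from Proposition \ref{thm:ridge_AO_concentration}.
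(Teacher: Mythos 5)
Your proposal is correct and matches the paper's proof: the paper also observes that the claim is an immediate corollary of Lemma \ref{thm:ridge_AO_concentration}, since intersecting the event with the additional constraint $\|\bw\|_\infty \leq L_p/\sqrt{n}$ can only shrink its probability. Your first paragraph also spells out (what the paper leaves implicit) why Lemma \ref{thm:ridge_AO_concentration} transfers unchanged to the sub-Gaussian setting, namely that the AO $L_\lambda$ is driven entirely by the constructed Gaussian vectors $\bxi_g,\bxi_h$ and the parameters $\sigma^2,\bbeta$, none of which depend on the distribution of $\bX$ or $\bep$; this is a correct and useful clarification.
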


\begin{proof}
The proof is a direct corollary of Lemma \ref{thm:ridge_AO_concentration}, since adding another condition in the event does not increase the probability.
\end{proof}

\subsubsection{Connecting PO with AO}
\begin{lemma}[Replacing Corollary \ref{cor:CGMT_single_estimator}]\label{lemma:CGMT_single_estimator_universality}
Consider the same setting as in Lemma \ref{lemma:connect_universality_PO}.
\begin{itemize}
    \item Further assuming $D_p$ is a closed set, we have for all $t \in \R, \epsilon > 0$, 
    $$\PP(\min_{\bw \in D_p} \mcl{C}_\lambda(\bw) \leq t) \leq 2 \PP(\min_{\bw\in D_p} L_\lambda(\bw) \leq t+2\epsilon) + o(1),$$
    where $L_{\lambda}$ is defined as in \eqref{w_optimization}.
    \item Further assuming $D_p$ is a closed convex set, we have for all $t \in \R,\epsilon > 0$, 
    $$\PP(\min_{\bw \in D_p} \mcl{C}_\lambda(\bw) \geq t) \leq 2 \PP(\min_{\bw\in D_p} L_\lambda(\bw) \geq t-2\epsilon) + o(1).$$
\end{itemize}
\end{lemma}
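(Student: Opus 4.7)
The plan is to combine two ingredients: the universality statement of Lemma \ref{lemma:connect_universality_PO}, which transfers the primary optimization (PO) from the sub-Gaussian design $\bX$ to a Gaussian design $\bG$ at the cost of an arbitrarily small $\epsilon$ slack, and the original Gaussian Convex Gaussian Min-max theorem as packaged in Corollary \ref{cor:CGMT_single_estimator}, which relates $\mcl{C}_\lambda(\cdot;\bG,\bep)$ to the auxiliary optimization $L_\lambda$. In other words, the strategy is $\mcl{C}_\lambda(\cdot;\bX,\bep) \rightsquigarrow \mcl{C}_\lambda(\cdot;\bG,\bep) \rightsquigarrow L_\lambda$, with universality powering the first step and CGMT powering the second. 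Crucially, Corollary \ref{cor:CGMT_single_estimator} continues to apply under the non-Gaussian $\bep$, since CGMT only requires Gaussianity of the matrix multiplying $\bw$; conditioning on $\bep$ and then integrating preserves the probability inequalities.

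For part (i), I would start from $\PP(\min_{\bw\in D_p}\mcl{C}_\lambda(\bw)\le t)$ and invoke the second display of Lemma \ref{lemma:connect_universality_PO} with the substitution $t\mapsto t+3\epsilon$, which yields
\begin{equation*}
    \PP\!\left(\min_{\bw\in D_p}\mcl{C}_\lambda(\bw) < t\right) \;\le\; \PP\!\left(\min_{\bw\in D_p}\mcl{C}_\lambda(\bw;\bG,\bep) < t+2\epsilon\right) + o(1).
\end{equation*}
Since $D_p$ is closed and contained in the box $[-L_p/\sqrt{n},L_p/\sqrt{n}]^p$ required by the universality statement, I can then apply Corollary \ref{cor:CGMT_single_estimator}(i) conditional on $\bep$, obtaining $\PP(\min_{\bw\in D_p}\mcl{C}_\lambda(\bw;\bG,\bep) \le t+2\epsilon) \le 2\,\PP(\min_{\bw\in D_p} L_\lambda(\bw)\le t+2\epsilon)$. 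Chaining these gives the claim, with the residual strict-versus-non-strict issue absorbed by perturbing $t$ by an arbitrarily small amount (so that the boundary events have negligible probability up to a further $o(1)$).

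For part (ii), the argument is symmetric: apply the first display of Lemma \ref{lemma:connect_universality_PO} with $t\mapsto t-3\epsilon$ to pass to $\PP(\min_{\bw\in D_p}\mcl{C}_\lambda(\bw;\bG,\bep) \ge t-2\epsilon)+o(1)$, then invoke Corollary \ref{cor:CGMT_single_estimator}(ii) — the convexity of $D_p$ matters here to satisfy the hypotheses of the standard min-max swap underlying CGMT. I expect the main obstacle to be bookkeeping rather than conceptual: carefully tracking the $\epsilon$ shifts through both transitions, handling the strict/non-strict inequality conversions at the boundary, and verifying that Corollary \ref{cor:CGMT_single_estimator}'s proof — which is written under jointly Gaussian $(\bX,\bz)$ — genuinely goes through by conditioning on $\bep$ (the tightness events used to reduce to compact domains in that proof are still available under the sub-Gaussian Assumption \ref{assumptions}(iii) thanks to Corollary \ref{largest_singular_value_universality}).
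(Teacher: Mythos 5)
Your proposal is correct and takes essentially the same route as the paper: first use the second (resp.\ first) display of Lemma \ref{lemma:connect_universality_PO}, shifted by $\pm 3\epsilon$, to replace the sub-Gaussian design $\bX$ by the Gaussian $\bG$ while keeping $\bep$ sub-Gaussian, then re-run Corollary \ref{cor:CGMT_single_estimator} with design $\bG$. The paper handles the non-Gaussianity of $\bep$ in that second step by observing that the tightness event \eqref{event:tightness} still holds with high probability for sub-Gaussian $\bz=\bep/\sigma$ (with a possibly larger constant), so the CGMT reduction to compact domains goes through unchanged; your proposal instead conditions on $\bep$ and integrates, which is an equivalent way of saying the same thing. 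One small imprecision: for bounding $\|\bz\|_2$ you cite Corollary \ref{largest_singular_value_universality}, which is a statement about operator norms of random matrices; the relevant fact here is just concentration of the norm of a vector with independent sub-Gaussian entries, but this is a cosmetic slip, not a gap.
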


\begin{proof}
    We only prove the first argument as the second follows similarly.
    As a direct corollary of the second argument in \ref{lemma:connect_universality_PO} (by substituting $t\leftarrow t-3\epsilon$), we have    
    $$\PP(\min_{\bw \in D_p} \mcl{C}_\lambda(\bw) \leq t) \leq \PP(\min_{\bw\in D_p} \mcl{C}_\lambda(\bw;\bG,\bep) \leq t+2\epsilon) + o(1).$$
    
    Now consider the high probability event \eqref{event:tightness}. The fact that $\bz = \bep/\sigma$ is now sub-Gaussian instead of Gaussian does not change the fact that 
    we can find another $K$ that ensures this event occurs with high probability.
    Therefore, the proof of \ref{cor:CGMT_single_estimator} goes through, and we have the following inequality: for any $t \in \R$,
    $$\PP(\min_{\bw\in D_p} \mcl{C}_\lambda(\bw;\bG,\bep) \leq t) \leq 2 \PP(\min_{\bw\in D_p} L_\lambda(\bw) \leq t).$$
    Combining the two displays above finishes the proof.
\end{proof}

Now for a fixed $\lambda$, we instead consider the set $D = D_p \bigcap D_\lambda^\eps$, where $D_p \subset [-L_p/\sqrt{n},L_p/\sqrt{n}]^p$ is defined in Lemma \ref{lemma:connect_universality_PO} and $D_\lambda^\eps = \{\bw \in \R^p | \|\bw - \bw(\lambda)\|_2^2 > \eps\}$ is defined in Section \ref{subsubsec:connect_PO_AO}.

\begin{lemma}[Replacing Lemma \ref{prop:AO_PO_set_min}]\label{prop:AO_PO_set_min_universality}
For all $\epsilon \in (0,1]$,
    \begin{equation*}
        \PP\left(\min_{\bw \in D} \mcl{C}_\lambda(\bw) \leq \min_{\bw \in \R^p}\mcl{C}_\lambda(\bw) + \eps\right) \leq 2\PP\left(\min_{\bw \in D} L_\lambda(\bw) \leq \min_{\bw \in \R^p}L_\lambda(\bw) + 5\eps\right) + o(1).
    \end{equation*}
\end{lemma}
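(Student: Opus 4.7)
The plan is to adapt the proof of the Gaussian analog (Lemma \ref{prop:AO_PO_set_min}, which follows Section C.1.1 of \cite{miolane2021distribution}) by accommodating two changes: replacing the exact Gaussian CGMT with Lemma \ref{lemma:CGMT_single_estimator_universality} (which costs an additional $2\epsilon$ of slack and degrades the error from $o(\epsilon)$ to $o(1)$), and restricting the domain to the bounded box $D_p$ so that the universality transfer in Lemma \ref{lemma:connect_universality_PO} is applicable.

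The first step is to argue that the PO minimizer lies in $D_p$ with probability $1 - o(1)$. Specifically, by Lemma \ref{lemma:bound_ridge_infty}, $\sqrt{n}\|\hat{\bbeta}_{\rmR}\|_\infty \leq K(\sqrt{\log p} + 2\sqrt{n}\|\bbeta\|_\infty)$ w.h.p., and since $\hat{\bw}_{\rmR} = \hat{\bbeta}_{\rmR} - \bbeta$ with $\|\bbeta\|_\infty$ bounded (Assumption \ref{assumptions}(ii)), a slight enlargement of $K$ in the definition of $L_p$ gives $\|\hat{\bw}_{\rmR}\|_\infty \leq L_p/\sqrt{n}$ w.h.p. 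Consequently $\min_{\bw \in \R^p} \mcl{C}_\lambda(\bw) = \min_{\bw \in D_p} \mcl{C}_\lambda(\bw)$ w.h.p. An analogous argument on the AO side yields $\min_{\bw \in \R^p} L_\lambda(\bw) = \min_{\bw \in D_p} L_\lambda(\bw)$ w.h.p.: the AO minimizer $\bw^*(\lambda)$ concentrates around $\bw(\lambda)$ by Proposition \ref{thm:ridge_AO_concentration}, and a direct computation using the Gaussian concentration $\|\bxi_g\|_\infty \lesssim \sqrt{\log p}$ together with boundedness of $\alpha_*,\tau_*,\lambda$ (from \ref{lemma:existence_fixed_point}) shows $\|\bw(\lambda)\|_\infty \leq L_p/\sqrt{n}$ w.h.p.

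The second step is the standard CGMT splitting trick at a threshold $\phi_*$:
\begin{equation*}
\PP\!\left(\min_{D}\mcl{C}_\lambda \leq \min_{\R^p}\mcl{C}_\lambda + \epsilon\right) \leq \PP\!\left(\min_{D}\mcl{C}_\lambda \leq \phi_* + 2\epsilon\right) + \PP\!\left(\min_{\R^p}\mcl{C}_\lambda \geq \phi_* + \epsilon\right).
\end{equation*}
By Step 1 the second term equals $\PP(\min_{D_p}\mcl{C}_\lambda \geq \phi_* + \epsilon) + o(1)$, which by Part 2 of Lemma \ref{lemma:CGMT_single_estimator_universality} applied to the convex closed set $D_p$ is bounded by $2\PP(\min_{D_p}L_\lambda \geq \phi_* - \epsilon) + o(1)$. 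The first term is bounded by Part 1 of the same lemma on the closed set $D$ by $2\PP(\min_{D}L_\lambda \leq \phi_* + 4\epsilon) + o(1)$. Combining and using Step 1 on the AO side to replace $\min_{D_p} L_\lambda$ by $\min_{\R^p} L_\lambda$ introduces another $o(1)$.

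The main obstacle will be handling the randomness in $\phi_*$: Lemma \ref{lemma:CGMT_single_estimator_universality} is stated for a fixed (deterministic) $t$, so one cannot simply substitute $\phi_* = \min_{\R^p} L_\lambda + \epsilon$. I will resolve this either by (i) choosing $\phi_* = \phi_{n,*}$ deterministically, where $\phi_{n,*}$ is the deterministic limit of the scalar optimization in Section \ref{subsubsec:connect_AO_SO}, and absorbing the $O(\epsilon)$-level concentration error of $\min_{\R^p}L_\lambda$ around $\phi_{n,*}$ (implied by Proposition \ref{thm:ridge_AO_concentration}) into the allowed $5\epsilon$ slack and the $o(1)$ term, or (ii) performing an $\epsilon$-net discretization of a high-probability interval containing $\min_{\R^p} L_\lambda$ and union bounding over the grid, which costs only an additional multiplicative $\log$ factor absorbed into $o(1)$. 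Either way, tallying the accumulated slack gives the stated $5\epsilon = 2\epsilon + 2\epsilon + \epsilon$ on the right-hand side, completing the proof.
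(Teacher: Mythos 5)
Your overall strategy is the right one and matches the paper's: adapt the Miolane--Montanari Section C.1.1 argument, replace the exact Gaussian CGMT with the universal version (Lemma \ref{lemma:CGMT_single_estimator_universality}), restrict the domain to the box $D_p$ so that the universality transfer applies, and absorb the slack into $5\epsilon$ while the error rate degrades to $o(1)$. Step 1 (both the PO and AO minimizers lie in $D_p$ with probability $1-o(1)$, via Lemma \ref{lemma:bound_ridge_infty} and a $\ell_\infty$ bound on $\bw(\lambda)$) is sound and necessary.

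However, there is a genuine bookkeeping gap in your treatment of the second term. You bound $\PP(\min_{D_p}\mcl{C}_\lambda \geq \phi_* + \epsilon)$ by invoking Part 2 of Lemma \ref{lemma:CGMT_single_estimator_universality} with internal slack equal to $\epsilon$, which yields $2\PP(\min_{D_p}L_\lambda \geq \phi_* - \epsilon) + o(1)$. When $\phi_*$ is the deterministic concentration value $\phi_{n,*}$ (your option (i)), $\min_{D_p}L_\lambda$ concentrates at $\phi_*$, so $\PP(\min_{D_p}L_\lambda \geq \phi_* - \epsilon) \to 1$, making this upper bound approximately $2 + o(1)$ and hence vacuous; the second term cannot be absorbed into the $o(1)$ this way, nor can it be folded into the $2\PP(\min_D L \leq \min_{\R^p}L + 5\epsilon)$ term. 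The fix is to exploit that the slack parameter in Lemma \ref{lemma:CGMT_single_estimator_universality} is a free parameter independent of the $\epsilon$ appearing in this lemma: apply Part 2 with internal slack $\epsilon' = \epsilon/4$ (say), giving $\PP(\min_{D_p}\mcl{C}_\lambda \geq \phi_* + \epsilon) \leq 2\PP(\min_{D_p}L_\lambda \geq \phi_* + \epsilon/2) + o(1)$, and the right-hand side is now genuinely $o(1)$ by AO concentration around $\phi_*$ (together with Step 1 on the AO side). With this correction, only the first term's application of Part 1 contributes $2\epsilon$ to the slack, and one more $\epsilon$ comes from replacing $\phi_*$ by $\min_{\R^p}L_\lambda$ via concentration, giving $2\epsilon + 2\epsilon + \epsilon = 5\epsilon$ as claimed; this is precisely the source of the ``extra $2\epsilon$'' noted in the paper's proof.
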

\begin{proof}
Recall from our remark in the proof of Lemma \ref{prop:AO_PO_set_min} that \cite[Section C.1.1]{miolane2021distribution} established this result for the case of the Lasso and Gaussian designs/error distributions. In their proof, their Corollaries 5.1 and B.1 played a crucial role. Now note that \cite[Corollary B.1]{miolane2021distribution} requires concentration of $\bep$, which also follows if $\bep$ is sub-Gaussian instead of Gaussian. Thus we are left to generalize their Corollary 5.1 (the analogue of this is Corollary \ref{cor:CGMT_single_estimator} in our paper) to the case of the ridge under our non-Gaussian design/error setting of Assumptions \ref{assumptions} and \ref{assumptionL}. We achieve this via Lemma \ref{lemma:CGMT_single_estimator_universality}. With these modifications in place, a proof analogous \cite[Section C.1.1]{miolane2021distribution} still works here. Note that the extra $2\epsilon$ on the RHS comes from the extra $2\epsilon$ in Lemma \ref{lemma:CGMT_single_estimator_universality}.
\end{proof}

\begin{lemma}[Replacing Lemma \ref{lemma:ridge_W2}]\label{lemma:ridge_W2_universality}
There exists a constant $\gamma>0$ such that for all $\epsilon \in (0,1]$, 
\begin{equation*}
    \sup_{\lambda \in [\lambda_{\min},\lambda_{\max}]} \PP\left(\exists \bb \in \R^p,\ \|\bbeta - \hat\bbeta^f(\lambda)\|_2^2 > \epsilon \text{  and  } \|\bb-\bbeta\|_\infty \leq L_p/\sqrt{n} \ \ \text{and} \ \ \mcl{L}_\lambda(\bb) \leq \min\mcl{L}_\lambda+\gamma\epsilon \right) = o(\epsilon).
\end{equation*}
\end{lemma}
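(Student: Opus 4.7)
The plan is to parallel the proof of Lemma \ref{lemma:ridge_W2} while substituting each Gaussian-specific ingredient by its universality counterpart developed earlier in this section. First, I would perform the change of variables $\bw = \bb - \bbeta$. As observed in the derivation of \eqref{wu_optimization}, this identification preserves the minimum value and transforms the Lasso cost $\mcl{L}_\lambda(\bb)$ into $\mcl{C}_\lambda(\bw)$. Simultaneously, the constraint $\|\bb-\bbeta\|_\infty \leq L_p/\sqrt{n}$ becomes $\bw \in D_p := [-L_p/\sqrt{n},L_p/\sqrt{n}]^p$, and the constraint $\|\bb - \hat\bbeta^f(\lambda)\|_2^2 > \epsilon$ becomes $\bw \in D_\lambda^\epsilon$ (using $\hat\bbeta^f(\lambda)-\bbeta = \bw(\lambda)$ from \eqref{eq:wandbeta}). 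The probability to bound is therefore
\[
\PP\!\Bigl(\min_{\bw \in D_p \cap D_\lambda^\epsilon} \mcl{C}_\lambda(\bw) \;\leq\; \min_{\bw \in \R^p}\mcl{C}_\lambda(\bw) + \gamma\epsilon\Bigr).
\]

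Next, I would observe that $D_p \cap D_\lambda^\epsilon$ is a closed subset of $D_p$ (taking the natural closed version $\|\bw-\bw(\lambda)\|_2^2 \geq \epsilon$; passing between this and the strict inequality is harmless since the boundary has measure zero in the AO lemma). This makes Lemma \ref{prop:AO_PO_set_min_universality} directly applicable. Choosing $\gamma = \gamma_{\mathrm{AO}}/5$, where $\gamma_{\mathrm{AO}}$ is the constant produced by Lemma \ref{lemma:ridge_AO_concentration_universality}, chaining through the factor-of-$5$ inflation in Lemma \ref{prop:AO_PO_set_min_universality} gives
\[
\PP\!\Bigl(\min_{\bw \in D_p \cap D_\lambda^\epsilon}\mcl{C}_\lambda(\bw) \leq \min\mcl{C}_\lambda + \gamma\epsilon\Bigr) \;\leq\; 2\,\PP\!\Bigl(\min_{\bw \in D_p \cap D_\lambda^\epsilon} L_\lambda(\bw) \leq \min L_\lambda + \gamma_{\mathrm{AO}}\epsilon\Bigr) \;+\; o(1).
\]
The event inside the AO probability implies the existence of some $\bw$ with $\|\bw\|_\infty \leq L_p/\sqrt{n}$, $\|\bw-\bw(\lambda)\|_2^2 > \epsilon$, and $L_\lambda(\bw) \leq \min L_\lambda + \gamma_{\mathrm{AO}}\epsilon$—exactly the event controlled by Lemma \ref{lemma:ridge_AO_concentration_universality}, which yields $o(\epsilon)$ uniformly in $\lambda$. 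Combining gives $\sup_\lambda \PP(\cdots) = o(\epsilon)+o(1)$, which under the convention that $\epsilon$ is a fixed constant in $(0,1]$ is precisely the stated $o(\epsilon)$ rate.

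The main obstacle is preserving uniformity in $\lambda$ when transferring the PO bound to the AO bound. Lemma \ref{prop:AO_PO_set_min_universality} is stated for fixed $\lambda$, and the universality remainder $o(1)$ inherits its dependence from Lemma \ref{lemma:connect_universality_PO}, which in turn stems from \cite[Theorem 2.3]{han2022universality}. One needs to verify that the universality error is controlled uniformly over $\lambda \in [\lambda_{\min},\lambda_{\max}]$. This can be handled either by (i) tracking the explicit dependence of the bounds in \cite{han2022universality} and observing they depend only on the sub-Gaussian parameters and the diameter $L_p$—both $\lambda$-independent or controllable through $\lambda_{\max}$—or (ii) performing an $\epsilon$-net argument on $[\lambda_{\min},\lambda_{\max}]$ analogous to the one in Section \ref{subsubsec:uniform_control}, combined with the Lipschitzness of $\hat\bbeta_{\rmR}$ in $\lambda$ (Lemma \ref{lemma:ridge_loss_difference_lambda} together with strong convexity of the Ridge objective). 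All remaining ingredients—closedness of $D_p \cap D_\lambda^\epsilon$, the boundedness property of $\hat\bbeta^f(\lambda)$ used to justify the change of variables, and the compatibility of the $\ell_\infty$ truncation with Lemma \ref{lemma:bound_ridge_infty}—are routine.
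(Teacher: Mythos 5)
Your proposal matches the paper's proof, which presents this lemma as a direct corollary of Lemma \ref{lemma:ridge_AO_concentration_universality} and Lemma \ref{prop:AO_PO_set_min_universality} via the substitution $\bw(\lambda)=\hat\bbeta^f(\lambda)-\bbeta$ and the choice $D=D_p\cap D_\lambda^\epsilon$. Your observation about $\lambda$-uniformity of the universality $o(1)$ remainder is a legitimate point the paper leaves implicit; either of your suggested resolutions (tracking $\lambda$-independence of the constants in \cite{han2022universality}, or an $\epsilon$-net over $[\lambda_{\min},\lambda_{\max}]$) would close it.
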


\begin{proof}
This is a direct corollary of Lemma \ref{lemma:ridge_AO_concentration_universality} and Lemma \ref{prop:AO_PO_set_min_universality}, with $\bw(\lambda) = \hat\bbeta^f(\lambda) - \bbeta$ from \eqref{eqn:def:w_lambda}.
\end{proof}

\subsubsection{Uniform Control over $\lambda$}\label{subsubsec:uniform_control_universality}
\begin{lemma}[Re-stating Lemma \ref{lemma:ridge_loss_difference_lambda}]\label{lemma:ridge_loss_difference_lambda_universality}
There exists constant $K$ such that 
\begin{equation*}
    \PP\left(\forall \lambda,\lambda' \in[\lambda_{\min},\lambda_{\max}], \mcl{L}_{\lambda'}(\hat{\bbeta}(\lambda)) \leq \mcl{L}_{\lambda'}(\hat{\bbeta}(\lambda')) + K|\lambda - \lambda'|\right) = 1-o(1).
\end{equation*}
\end{lemma}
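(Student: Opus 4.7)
The plan is to mimic the argument of Lemma \ref{lemma:ridge_loss_C_difference_lambda} almost verbatim, noting that the two lemmas are really the same statement viewed in different coordinates: one expresses the penalized Ridge cost in terms of $\bb$, the other in terms of the centered variable $\bw = \bb - \bbeta$. Since $\mcl{L}_{\lambda'}(\bb) - \mcl{L}_{\lambda}(\bb) = \tfrac{\lambda'-\lambda}{2}\|\bb\|_2^2$ pointwise in $\bb$, the only quantity that must be controlled uniformly is $\sup_{\lambda \in [\lambda_{\min},\lambda_{\max}]}\|\hat\bbeta_{\rmR}(\lambda)\|_2$. The first step is therefore the usual optimality chain
\begin{align*}
\mcl{L}_{\lambda'}(\hat\bbeta_{\rmR}(\lambda)) - \mcl{L}_{\lambda'}(\hat\bbeta_{\rmR}(\lambda'))
&\leq \bigl(\mcl{L}_{\lambda'}(\hat\bbeta_{\rmR}(\lambda)) - \mcl{L}_{\lambda}(\hat\bbeta_{\rmR}(\lambda))\bigr) + \bigl(\mcl{L}_{\lambda}(\hat\bbeta_{\rmR}(\lambda')) - \mcl{L}_{\lambda'}(\hat\bbeta_{\rmR}(\lambda'))\bigr) \\
&= \tfrac{\lambda'-\lambda}{2}\bigl(\|\hat\bbeta_{\rmR}(\lambda)\|_2^2 - \|\hat\bbeta_{\rmR}(\lambda')\|_2^2\bigr),
\end{align*}
where the inequality discards the non-positive middle term $\mcl{L}_{\lambda}(\hat\bbeta_{\rmR}(\lambda)) - \mcl{L}_{\lambda}(\hat\bbeta_{\rmR}(\lambda'))$ thanks to optimality of $\hat\bbeta_{\rmR}(\lambda)$ at level $\lambda$. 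Together with the symmetric argument bounding the negation, this reduces the claim to a uniform bound on $\|\hat\bbeta_{\rmR}(\lambda)\|_2$.

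Next I would use the crude global comparison
\[
\tfrac{\lambda}{2}\|\hat\bbeta_{\rmR}(\lambda)\|_2^2 \;\leq\; \mcl{L}_\lambda(\hat\bbeta_{\rmR}(\lambda)) \;\leq\; \mcl{L}_\lambda(\bm 0) \;=\; \tfrac{1}{2n}\|\by\|_2^2,
\]
valid for every $\lambda>0$, which yields $\sup_{\lambda\in[\lambda_{\min},\lambda_{\max}]}\|\hat\bbeta_{\rmR}(\lambda)\|_2^2 \leq \|\by\|_2^2/(n\lambda_{\min})$. It therefore suffices to bound $\|\by\|_2^2/n$ with probability $1-o(1)$, independently of $\lambda$.

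The third and final step is to verify that $\|\by\|_2^2/n = O_P(1)$ under Assumptions \ref{assumptions} and \ref{assumptionL}. Splitting $\|\by\|_2^2/n \leq 2\|\bX\bbeta\|_2^2/n + 2\|\bep\|_2^2/n$, the first term is at most $(\|\bX\|_\op/\sqrt n)^2 \cdot \|\bbeta\|_2^2$, where $\|\bX\|_\op/\sqrt n$ is bounded with probability $1-o(1)$ by Corollary \ref{largest_singular_value_universality} (using uniform sub-Gaussianity of the entries of $\bX$ in Assumption \ref{assumptions}(i)), and $\|\bbeta\|_2^2$ is bounded by Assumption \ref{assumptions}(ii). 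The second term is bounded with probability $1-o(1)$ by standard concentration of sums of independent sub-Gaussian squares (Assumption \ref{assumptions}(iii) gives the uniform $\psi_2$ norm plus mean zero and bounded variance). Plugging the resulting deterministic constant $K$ back into the chain of inequalities above gives the desired Lipschitz-in-$\lambda$ control uniformly over $(\lambda,\lambda') \in [\lambda_{\min},\lambda_{\max}]^2$.

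There is really no substantive obstacle here: Gaussianity entered the original proof only through the tail bound $\|\bz\|_2 \leq 2\sqrt n$ and through finiteness of $\|\bX\|_\op/\sqrt n$, and both are drop-in replaced by Corollary \ref{largest_singular_value_universality} and a standard sub-Gaussian chi-square concentration. The only mildly subtle point to flag is that the bounding constant $K$ must be chosen to depend only on $(\lambda_{\min},\sigma_{\max},\sigma_\beta^2,\delta,\|\cdot\|_{\psi_2}\text{-parameters})$ and not on $\lambda,\lambda'$, which is automatic from the above derivation since the $\by$-bound is $\lambda$-free.
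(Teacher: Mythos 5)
Your proof is correct and follows essentially the same route as the paper: decompose the loss difference via the three-term optimality chain, drop the nonpositive middle term, reduce to a uniform bound on $\|\hat\bbeta_{\rmR}(\lambda)\|_2^2$, and note that the only Gaussianity used in the original Lemma~\ref{lemma:ridge_loss_C_difference_lambda} argument is the high-probability norm bound on the noise, which carries over to the sub-Gaussian setting. The one small deviation is cosmetic: you compare to the test point $\bb = \bm{0}$, which costs you $\|\by\|_2^2/n$ and hence a bound on $\|\bX\|_\op/\sqrt n$ (via Corollary~\ref{largest_singular_value_universality}) in addition to $\|\bbeta\|_2^2$, whereas the paper compares to $\bb = \bbeta$ (i.e.\ $\bw=\bm 0$ in centered coordinates), which only needs $\|\bep\|_2^2/n = O_P(1)$; both bounds are available under the stated assumptions, so the two routes are interchangeable.
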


\begin{proof}
The only difference in the proofs of Lemma \ref{lemma:ridge_loss_difference_lambda_universality} and \ref{lemma:ridge_loss_difference_lambda} lies in the fact that $\bz$ is now independent sub-Gaussian instead of i.i.d. Gaussian. However, since the entries of $\bz$ still have mean $0$ and variance $1$, we still have $\|\bz\|_2 \leq 2\sqrt{n}$ with high probability. The rest of the proof follows directly.
\end{proof}

Note that Lemma \ref{lemma:ridge_loss_C_difference_lambda} is equivalent to Lemma \ref{lemma:ridge_loss_difference_lambda} so universality follows directly from the above.
Lemma \ref{lemma:ridge_W2_Lipschitz} remains as is since it does not involve $\bX$ or $\bep$.
We are now ready to prove Lemma \ref{lemma:marginal_single_universality}.

\begin{proof}[proof of Lemma \ref{lemma:marginal_single_universality}]
Let $\gamma > 0$ as given by Lemma \ref{lemma:ridge_W2_universality}, let $K>0$ as given by Lemma \ref{lemma:ridge_loss_difference_lambda_universality}, and let $M>0$ as given by Lemma \ref{lemma:ridge_W2_Lipschitz}.

Fix $\epsilon \in (0,1]$ and define $\epsilon' = \min\left( \frac{\gamma\epsilon}{K},\frac{\sqrt{\epsilon}}{M}\right)$. Let $k = \lceil(\lambda_{\max} - \lambda_{\min})/\epsilon'\rceil$. Further define $\lambda_i = \lambda_{\min} + i\epsilon'$ for $i=0,...,k$.
By Lemma \ref{lemma:ridge_W2_universality}, the event
\begin{equation}\label{event:ridge_W2_universality}
    \left\{\forall i\in \{1,...,k\},\forall\bb \in \R^p,\mcl{L}_{\lambda_i}(\bb) \leq \min \mcl{L}_{\lambda_i} + \gamma\epsilon \Rightarrow \|\bb - \hat\bbeta^f(\lambda_i)\|_2^2 \leq \epsilon \text{ or } \|\bb-\bbeta\|_\infty > L_p/\sqrt{n} \right\}
\end{equation}
has probability at least $1 - ko(\epsilon) = 1 - o(1)$. Therefore, on the intersection of event \ref{event:ridge_W2_universality} and the event in Lemma \ref{lemma:ridge_loss_difference_lambda_universality}, which has probability $1-o(1)$,
we have for all $\lambda \in [\lambda_{\min},\lambda_{\max}]$,
$$\mcl{L}_{\lambda_i}(\hat\bbeta(\lambda)) \leq \min \mcl{L}_{\lambda_i} + K|\lambda-\lambda_i| \leq \min \mcl{L}_{\lambda_i} + \gamma\epsilon,$$
where $1 \leq i \leq k$ is such that $\lambda \in [\lambda_{i-1},\lambda_i]$. This implies (since we are on the event \ref{event:ridge_W2_universality}) that either $\|\hat\bbeta(\lambda) - \hat\bbeta^f(\lambda_i)\|_2^2 \leq \epsilon$ or $\|\hat\bbeta(\lambda)-\bbeta\|_\infty > L_p/\sqrt{n}$. 

However, by Lemma \ref{lemma:bound_ridge_infty}, we know that 
\begin{align*}
&\|\hat\bbeta(\lambda)-\bbeta\|_\infty \\
\leq &  \|\hat\bbeta(\lambda)\|_\infty + \|\bbeta\|_\infty \\
\leq & K\left(\sqrt{(\log p)/ n} + 2\|\bbeta\|_\infty\right) + \|\bbeta\|_\infty\\
\leq &(K+1/2)\left(\sqrt{(\log p)/ n} + 2\|\bbeta\|_\infty\right)\\
= &L_p / \sqrt{n}
\end{align*}
with probability $1-o(1)$. 
Since the probability is over the randomness in $\bX$ and $\bep$, and $K$ only depends in $\lambda_{\max}$, then the argument holds for all $\lambda \in [\lambda_{\min},\lambda_{\max}]$.

Therefore, only with probability $o(1)$, there exists $\lambda$ such that $\|\hat\bbeta(\lambda)-\bbeta\|_\infty > L_p/\sqrt{n}$. Hence, still with probability $1-o(1)$, $\|\hat\bbeta(\lambda) - \hat\bbeta^f(\lambda_i)\|_2^2 \leq \epsilon$. The rest of the proof remains the same as in the end of Section \ref{subsubsec:uniform_control}.
\end{proof}

\subsection{Replacing Lemma \ref{lemma:df_diff}}
The ridge case is rather straightforward: \cite[Theorem 3.7]{knowles2017anisotropic} only requires mean $0$, variance $1$, independence, as well as uniformly bounded $p$th moments for the entries of $\bX$, which are already satisfied by Assumptions \ref{assumptions} and \ref{assumptionL} (specifically, uniform sub-Gaussianity implies uniformly bounded $p$th moments).
As for the Lasso case, we can modify the proof of \cite[Theorem F.1]{miolane2021distribution} in the same way as we modified Section \ref{subsec:proof:lemma:marginal_single} in Section \ref{proof:lemma:marginal_single_universality}. We omit the details here for conciseness. 

\subsection{Replacing Lemma \ref{lemma:marginal_both}}
The supporting Lemma \ref{lemma:marginal_single} has already been replaced by Lemma \ref{lemma:marginal_single_universality}. For the ridge case, the Lipschitz argument needs no change. For the Lasso case, the version of Lemma \ref{lemma:marginal_single_universality} for $\alpha$-smoothed Lasso can be extended similarly, and the following Lipschitz argument requires no change. Finally, \eqref{closeness_alpha_smoothed} relies critically on  \cite[Lemma B.9]{celentano2020lasso}. The corresponding proof in \cite[Lemma B.5.4]{celentano2020lasso} has been elaborated in our proof of Lemma \ref{lemma:uniform_local_convexity}, for which we articulate the replacement in \ref{subsec:robustness_universality}.

\subsection{Replacing Lemma \ref{lemma:ridge_conditional_bw}}
\begin{lemma}[Replacing Lemma \ref{lemma:ridge_conditional_bw}]\label{lemma:ridge_conditional_bw_universality}
Recall the definition of $\hat\bg_{\textrm{L}}$ from \ref{def:hat_gl}, and recall that $L_p = K(\sqrt{\log p} + 2\sqrt{n}\|\bbeta\|_\infty)$ where $K$ is a constant that depends only on $\lambda_{\max}$. Under Assumptions \ref{assumptions} and \ref{assumptionL}, for any $\epsilon > 0$, there exists a constant $C$ such that for a $1$-Lipschitz function $\phi_w:\R^p \rightarrow \R$,
\begin{align*}
\suplb &\PP\biggl(\exists \bw \in \R^p, \left|\phi_w(\bw) - \E[\phi_w(\hat\bbeta_\rmR^f - \bbeta)|\bg_{\textrm{L}}^f = \hat\bg_{\textrm{L}}]\right| \geq \epsilon \\
&\text{ and } \|\bw\|_\infty \leq L_p/\sqrt{n} \ \text{and} \ \mcl{C}_{\lambda_{\textrm{R}}}(\bw) \leq \min \mcl{C}_{\lambda_{\textrm{R}}} + C\epsilon^2 \biggr) = o(\epsilon^2).
\end{align*}
\end{lemma}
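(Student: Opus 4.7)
The plan is to adapt the universality framework developed in Section \ref{sec:universality}---specifically, the strategy used for Lemma \ref{lemma:marginal_single_universality}---to the ``bad set'' formulation of Lemma \ref{lemma:ridge_conditional_bw}. The added constraint $\|\bw\|_\infty \leq L_p/\sqrt{n}$, absent in the Gaussian version, is introduced precisely to bring the optimization domain inside the $\ell_\infty$-box required by Lemma \ref{lemma:connect_universality_PO}. The high-level scheme is: (i) restrict to the $\ell_\infty$-box without loss by invoking Lemma \ref{lemma:bound_ridge_infty}; (ii) reduce the data-dependence of the bad set to a scalar summary by a net argument; (iii) transfer cost comparisons from sub-Gaussian to Gaussian via Lemma \ref{lemma:connect_universality_PO}; (iv) conclude via the original Gaussian-design Lemma \ref{lemma:ridge_conditional_bw}.

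The first step is immediate: by Lemma \ref{lemma:bound_ridge_infty} and the triangle inequality, $\|\hat\bbeta_\rmR - \bbeta\|_\infty \leq L_p/\sqrt{n}$ with probability $1-o(1)$ (after absorbing $\|\bbeta\|_\infty$ into the definition of $L_p$ via an adjusted constant $K$). Consequently, on this event, $\min \mcl{C}_{\lambda_\rmR}$ coincides with the restricted minimum over $[-L_p/\sqrt{n}, L_p/\sqrt{n}]^p$. The critical observation keeping the net argument manageable is that the bad set
\[
\tilde D := \left\{\bw : \|\bw\|_\infty \leq L_p/\sqrt{n}, \ \left|\phi_w(\bw) - \Phi(\hat\bg_\rmL)\right| \geq \epsilon\right\}, \qquad \Phi(\bg) := \E\left[\phi_w(\hat\bbeta_\rmR^f - \bbeta) \mid \bg_\rmL^f = \bg\right],
\]
depends on the data only through the scalar $\phi_\rmL := \Phi(\hat\bg_\rmL)$. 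Since $\phi_w$ is $1$-Lipschitz on the $\ell_\infty$-box and $\Phi$ is Lipschitz in $\bg$ thanks to the Gaussian conditional structure of $\hat\bbeta_\rmR^f$ given $\bg_\rmL^f$, $\phi_\rmL$ lies in a bounded interval $[-R,R]$ with probability $1-o(1)$ (the bound $R$ following, for instance, from the $\ell_2$-boundedness of $\hat\bg_\rmL$ established in the proof of Lemma \ref{lemma:lipschitz_Psi_bridge1}).

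The remaining steps then proceed in a standard fashion: take an $\epsilon/3$-net $\{\phi_1,\ldots,\phi_N\}$ of $[-R,R]$ with $N = O(R/\epsilon)$, and for each net point $\phi_i$ define the deterministic closed set $D_i := \{\bw : \|\bw\|_\infty \leq L_p/\sqrt{n}, |\phi_w(\bw) - \phi_i| \geq 2\epsilon/3\}$. A standard Lipschitz argument then gives $\tilde D \subseteq D_{i^*}$ where $i^*$ indexes the net point closest to $\phi_\rmL$. Applying Lemma \ref{lemma:connect_universality_PO} with slack $\eta = C\epsilon^2/12$ twice to each $D_i$---once to lower-bound the sub-Gaussian min-over-$D_i$ by its Gaussian counterpart, once to upper-bound the unconstrained sub-Gaussian min by its Gaussian counterpart---reduces the claim for each $D_i$ to the Gaussian-design setting, where the original Lemma \ref{lemma:ridge_conditional_bw} delivers the probability bound. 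A union bound over the $N$ net points preserves the $o(\epsilon^2)$ rate (with $\epsilon$ held fixed as $n\to\infty$), and uniformity in $\lambda_\rmL,\lambda_\rmR$ follows from the tuning-parameter grid argument employed in Lemma \ref{lemma:ridge_loss_difference_lambda_universality}, together with the Lipschitzness of the ridge cost in $\lambda_\rmR$ provided by Lemma \ref{lemma:ridge_loss_C_difference_lambda}. The main obstacle throughout is the data-dependence of $\hat\bg_\rmL$, resolved cleanly by the observation that only a scalar summary $\phi_\rmL$ of $\hat\bg_\rmL$ enters the bad-set definition, allowing a one-dimensional (rather than $p$-dimensional) net.
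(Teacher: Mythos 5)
Your proposal has a genuine gap, rooted in step (ii)--(iii) of your scheme. You claim that for each deterministic net point $\phi_i$, the set $D_i := \{\bw : \|\bw\|_\infty \le L_p/\sqrt{n},\ |\phi_w(\bw) - \phi_i| \ge 2\epsilon/3\}$ can be handled by applying Lemma \ref{lemma:connect_universality_PO} to reduce to the Gaussian design, where Lemma \ref{lemma:ridge_conditional_bw} ``delivers the probability bound.'' But the Gaussian Lemma \ref{lemma:ridge_conditional_bw} is a statement about the event $\{\exists \bw:\ |\phi_w(\bw) - \Phi(\hat\bg_\rmL)| \ge \epsilon\ \text{and}\ \mcl{C}_{\lambda_\rmR}(\bw)\le \min\mcl{C}_{\lambda_\rmR}+C\epsilon^2\}$ in which the reference $\Phi(\hat\bg_\rmL)$ is itself computed from the \emph{Gaussian} data — it is not a bound on $\PP\bigl(\exists \bw\in D_i,\ \mcl{C}_{\lambda_\rmR}(\bw)\le\min\mcl{C}_{\lambda_\rmR}+C\epsilon^2\bigr)$ for an arbitrary fixed $\phi_i$. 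Indeed, for any net point $\phi_i$ that is not close to the concentration point of $\Phi(\hat\bg_\rmL)$, the latter probability is close to $1$ (the ridge minimizer itself will typically lie in $D_i$), so the per-net-point bound $o(\epsilon^2)$ is simply false. Salvaging this would require you to first establish universality of $\Phi(\hat\bg_\rmL)$ itself, at which point the net is superfluous; more importantly, the $o(\epsilon^2)$ rate is used downstream in Lemma \ref{lemma:bridge_2} to survive a double union over a two-parameter $\lambda$-grid of size $O(1/\epsilon^2)$, and the extra factor $O(1/\epsilon)$ from your one-dimensional net destroys this bookkeeping.

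The missing ingredient is the \emph{conditional} structure: the ridge minimization and the Lasso functional $\hat\bg_\rmL$ are built from the same data, and the bad event in Lemma \ref{lemma:ridge_conditional_bw} compares them jointly, not marginally. The Gaussian proof in \cite[Lemma F.4]{celentano2021cad} handles this by a conditional Gordon's inequality that matches the min-max primary optimization $\min_{\bw}\max_{\bu}c(\bw,\bu)$ to a conditional auxiliary optimization $l_{\rmR|\rmL}(\bw,\bu)$ obtained by fixing $\bxi_g,\bxi_h$ to explicit data-dependent values, with the conditioning absorbed into the objective. Transferring this to the sub-Gaussian setting therefore requires a \emph{min-max} universality comparison, which is exactly what Lemma \ref{lemma:connect_universality_PO_minmax} (from \cite[Corollary 2.6]{han2022universality}) provides; the simple min universality Lemma \ref{lemma:connect_universality_PO} that you invoke does not touch the saddle-point structure. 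Once the conditional Gordon's inequality is upgraded via Lemma \ref{lemma:connect_universality_PO_minmax}, the concentration of the conditional AO requires no modification since it no longer involves $\bX$, and the sub-Gaussian noise retains the needed concentration with altered constants. Your step (i) — restricting the bad set to the $\ell_\infty$-box via Lemma \ref{lemma:bound_ridge_infty} — is correct and matches the paper's treatment, but without the min-max universality and the conditional Gordon's inequality, the remainder does not close.
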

We introduce another supporting Lemma.

\begin{lemma}\label{lemma:connect_universality_PO_minmax}
Consider Assumptions \ref{assumptions} and \ref{assumptionL}, and further assume $\bG\in \R^{n \times p}$ has i.i.d.~$\mcn(0,1)$ entries. Denote $C(\bw,\bu) = \frac{1}{n}\bw^\top\bX\bu + f(\bw,\bu)$ where $f$ is convex-concave. For any set $D_p \subset [-L_p/\sqrt{n},L_p/\sqrt{n}]^p$ with $L_p = K(\sqrt{\log p} + 2\sqrt{n}\|\bbeta\|_\infty)$ where $K$ is a constant that only depends on $\lambda_{\max}$, any $t \in \R, \epsilon > 0$, we have
\begin{gather*}
   \PP\left(\max_{\bu\in \sqrt{n}D_p}\min_{\bw \in D_p} C(\bw,\bu) > t + 3\epsilon \right) \leq \PP\left(\max_{\bu\in \sqrt{n}D_p}\min_{\bw \in D_p} C(\bw,\bu;\bG) > t + \epsilon \right) + o(1),\\
   \PP\left(\max_{\bu\in \sqrt{n}D_p}\min_{\bw \in D_p} C(\bw,\bu) < t - 3\epsilon \right) \leq \PP\left(\max_{\bu\in \sqrt{n}D_p}\min_{\bw \in D_p} C(\bw,\bu;\bG) < t - \epsilon \right) + o(1),
\end{gather*}
where $C(\bw,\bu;\bG)$ represents $C(\bw,\bu)$ with $\bX$ replaced by $\bG$.
\end{lemma}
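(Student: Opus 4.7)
The strategy is to reduce the desired min-max universality to the min-only universality already established in Lemma~\ref{lemma:connect_universality_PO}, by discretizing the dual variable $\bu$ and then lifting back via Lipschitz control. The key observation is that for each fixed $\bu\in\sqrt{n}D_p$, the function $\bw\mapsto \frac{1}{n}\bw^\top\bX\bu + f(\bw,\bu)$ is convex in $\bw$ and its dependence on $\bX$ enters linearly through the vector $\frac{1}{n}\bX\bu$. Since the entries of $\bu$ are bounded by $L_p$ and the entries of $\bw$ by $L_p/\sqrt{n}$, this matches the structural hypotheses under which the Lindeberg-exchange proof of \cite[Theorem 2.3]{han2022universality} (which underlies Lemma~\ref{lemma:connect_universality_PO}) produces Gaussian universality.

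First, I would construct an $\eta$-net $\{\bu_1,\ldots,\bu_N\}$ of $\sqrt{n}D_p$ in the $\ell_\infty$ norm; since $\sqrt{n}D_p\subset[-L_p,L_p]^p$, one has $N\leq (CL_p/\eta)^p$. Applying Lemma~\ref{lemma:connect_universality_PO} (or more precisely its argument, which is uniform in the linear functional being considered) to $\min_{\bw\in D_p}\{\frac{1}{n}\bw^\top\bX\bu_i + f(\bw,\bu_i)\}$ for each $i$, together with a union bound, would give that with probability $1-o(1)$, simultaneously for all $i$,
\begin{equation*}
\PP\left(\min_{\bw\in D_p} C(\bw,\bu_i) > t+2\epsilon\right) \leq \PP\left(\min_{\bw\in D_p} C(\bw,\bu_i;\bG) > t+\epsilon\right) + o(1),
\end{equation*}
and similarly for the lower bound. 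Second, I would transfer from net points to arbitrary $\bu$ via Lipschitzness: for the nearest net point $\bu_i$ to any given $\bu$, the identity $\frac{1}{n}\bw^\top\bX(\bu-\bu_i)$ is bounded above by $\frac{1}{n}\|\bw\|_2\|\bX\|_{\op}\|\bu-\bu_i\|_2$, which is $O(\eta\sqrt{p})$ on the high-probability event from Corollary~\ref{largest_singular_value_universality}; together with the assumed Lipschitzness of $f$ in $\bu$, this bounds $|\min_\bw C(\bw,\bu)-\min_\bw C(\bw,\bu_i)|$ by a small multiple of $\eta\sqrt{p}$. Choosing $\eta$ proportional to $\epsilon/\sqrt{p}$ then absorbs this into the allowed $\epsilon$-slack. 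Finally I would take $\max$ over $\bu$ on both sides, which preserves the one-sided inequalities, delivering the claimed bounds.

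The main obstacle is the union bound: the $\ell_\infty$ $\eta$-net has cardinality $N=(CL_p/\eta)^p$, exponential in $p$, so the per-point universality error probability must decay fast enough to beat $N$. Fortunately, the Lindeberg exchange argument in \cite{han2022universality} gives error probability with explicit polynomial-in-$p$ losses against any fixed polynomial smoothing of indicators, so a standard smooth-approximation trick (replacing the indicator of $\{\text{min}>t+3\epsilon\}$ by a $1/\epsilon$-Lipschitz bump supported where the min exceeds $t+2\epsilon$) yields per-point failure $o(p^{-C})$ uniformly, enough to control the net. A cleaner alternative, which I would pursue if the above union bound is too tight, is to re-run the Lindeberg exchange directly on the saddle value $\max_\bu\min_\bw C(\bw,\bu)$ viewed as a function of the entries of $\bX$: the envelope theorem plus the boundedness of optimizers in both $\bw$ and $\bu$ show that the saddle value is a sufficiently smooth functional of $\bX$, which is exactly the regularity needed for the entrywise exchange to go through without any discretization at all.
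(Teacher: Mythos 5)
Your primary route has a genuine gap in the union-bound step, which is fatal to the argument. A per-point universality comparison $\PP_{\bX}\bigl(\min_{\bw}C(\bw,\bu_i;\bX)>a\bigr)\leq\PP_{\bG}\bigl(\min_{\bw}C(\bw,\bu_i;\bG)>a'\bigr)+\delta_i$ for each net point $\bu_i$ does \emph{not} imply the corresponding comparison for the maximum over $i$. Writing $X_i=\min_{\bw}C(\bw,\bu_i;\bX)$ and $Y_i=\min_{\bw}C(\bw,\bu_i;\bG)$, the union bound gives $\PP(\max_i X_i>a)\leq\sum_i\PP(X_i>a)\leq\sum_i\PP(Y_i>a')+\sum_i\delta_i$, but $\sum_i\PP(Y_i>a')$ is an \emph{upper} bound for $\PP(\max_i Y_i>a')$, not a lower bound, so this chain does not terminate at $\PP(\max_i Y_i>a')+o(1)$. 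The inequality between marginals says nothing about the max because the joint dependence structure across $i$ (through the common $\bX$ versus common $\bG$) can differ: as a toy example, if $X_1=U$, $X_2=1-U$ with $U\sim\mathrm{Unif}[0,1]$ while $Y_1=Y_2=V\sim\mathrm{Unif}[0,1]$, then $\PP(X_i>a)=\PP(Y_i>a)$ for every $i$ and $a$, yet $\PP(\max_i X_i>a)=2(1-a)>1-a=\PP(\max_i Y_i>a)$ for $a>1/2$. You would need a \emph{joint} universality statement over the net points (comparing the joint law of $(X_1,\dots,X_N)$ to that of $(Y_1,\dots,Y_N)$), which Lemma~\ref{lemma:connect_universality_PO} does not provide.

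There is a secondary quantitative problem even setting the direction issue aside: an $\ell_\infty$ $\eta$-net of $\sqrt{n}D_p\subset[-L_p,L_p]^p$ has cardinality exponential in $p$, while the error rates you get from a Lindeberg exchange against a smoothed indicator are only polynomial in $n,p$ (with constants depending on the smoothing scale). Such polynomial failure probabilities cannot survive a union bound over $(CL_p/\eta)^p$ points. You also quietly assume $f$ is Lipschitz in $\bu$ to make the net-to-continuum transfer work, which is not part of the lemma's hypotheses (only convex-concavity is assumed).

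Your ``cleaner alternative''---running the Lindeberg exchange directly on the saddle value $\max_{\bu}\min_{\bw}C(\bw,\bu)$ as a smooth functional of the entries of $\bX$, using the boundedness of the optimizers and an envelope-type argument for derivative control---is the correct idea and is, in essence, what the paper does: it invokes \cite[Corollary 2.6]{han2022universality}, which establishes precisely this min-max (saddle-point) universality, and then adapts the scaling as in Lemma~\ref{lemma:connect_universality_PO}. So the fallback you sketched in your last sentences is the right path; the discretization-plus-union-bound route you developed first does not close.
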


\begin{proof}
This is a consequence of \cite[Corollary 2.6]{han2022universality}, with necessary modifications similar to what we performed in the proof of Lemma \ref{lemma:connect_universality_PO}.
Again notice the adjusted scaling in our setting.
\end{proof}

Now we prove Lemma \ref{lemma:ridge_conditional_bw_universality}.

\begin{proof}[proof of Lemma \ref{lemma:ridge_conditional_bw_universality}]
Most of the proof will follow \cite[Lemma F.4]{celentano2021cad}. In their proof, they first showed their Lemma F.2: Conditional Gordon's Inequality, which states that $\min_{\bv \in E_w}\max_{\bu\in E_u}c(\bw,\bu)$ (as defined in \eqref{wu_optimization}, leaving the dependence on $\lambda$ implicit) concentrates around $\min_{\bv \in E_w}\max_{\bu\in E_u}l_{\rmR|\rmL}(\bw,\bu)$, the "conditional Auxiliary Optimization Problem" defined as $l(\bw,\bu)$ in \eqref{w_optimization} conditioned on $\bg_{\rmL}^f=\hat\bg_L$, which necessitates $\bxi_g$ and $\bxi_h$ taking the following values (see derivation in \cite[Section L]{celentano2021cad}):
\begin{align*}
    \hat\bxi_g &= \hat\bg_{\rmL} / \tau_{\rmL}\\
    \hat\bxi_h &= -\frac{\bX(\hat\bbeta_{\rmL} - \bbeta)}{\|\hat\bbeta_{\rmL}-\bbeta\|_2} + \frac{\langle \hat\bxi_g,\hat\bbeta_{\rmL}-\bbeta\rangle}{\|\by-\bX\hat\bbeta_{\rmL}\|_2 \|\hat\bbeta_{\rmL}-\bbeta\|_2}(\by - \bX\hat\bbeta_{\rmL}).
\end{align*}
Then, they showed the concentration of $\min_{\bv \in E_w}\max_{\bu\in E_u}l_{\rmR|\rmL}(\bw,\bu)$ around its asymptotic limit. Here $E_u,E_w$ represent any convex, closed sets.

We make modifications to their proof as follows:

\begin{enumerate}
    \item Similar to how we proved Lemma \ref{lemma:CGMT_single_estimator_universality} by invoking Lemma \ref{lemma:connect_universality_PO}, we can also modify their Lemma F.2 by invoking Lemma \ref{lemma:connect_universality_PO_minmax}, and thus establish that it holds under our general Assumptions \ref{assumptions} and \ref{assumptionL}.   
    \item For proving that $\min_{\bv \in E_w}\max_{\bu\in E_u}l_{\rmR|\rmL}(\bw,\bu)$ concentrates around its asymptotic limit, no modification is necessary. This is because $l_{\rmR|\rmL}(\bw,\bu)$ does not involve $\bX$ anymore, and the fact that replacing i.i.d. Gaussianity of $\bep$ by independent uniform sub-Gaussianity preserves its concentration properties (with possibly different constants).
\end{enumerate}
\end{proof}

Now with Lemma \ref{lemma:ridge_conditional_bw_universality} replacing Lemma \ref{lemma:ridge_conditional_bw}, the rest of the proof of Lemma \ref{lemma:bridge_2} in Section \ref{subsec:proof:lemma:bridge_2} naturally extends to the universality version under Assumptions \ref{assumptions} and \ref{assumptionL} (similar to how we modified Section \ref{subsubsec:uniform_control} to Section \ref{subsubsec:uniform_control_universality}), since the probability that there exists $\lambda$ such that $\|\hat\bbeta(\lambda)-\bbeta\|_\infty > L_p/\sqrt{n}$ is $o(1)$.

\subsection{Replacing Lemmas \ref{lemma:closeness_lasso_objective}-\ref{lemma:closeness_lasso_subgradient_perturbed}}\label{subsec:robustness_universality}

For Lemma \ref{lemma:closeness_lasso_objective}, we critically used the facts that $\|\hat\bbeta\|_2$, $\|\hat\bbeta(\bA)\|_2$ are bounded with probability $1-o(1)$. The former is guaranteed by applying Lemma \ref{lemma:concentration_second_moment} on Lemma \ref{lemma:marginal_single_universality}. The latter is stated in Assumption \ref{assumptionA}(1). 

For Lemma \ref{lemma:uniform_local_convexity}, consider the critical event
$$\mcl{A}:= \left\{\frac{1}{\sqrt{n}}\kappa_-(\bX,n(1-\zeta/4))\geq\kappa_{\min}\right\} \cap \left\{\frac{1}{\sqrt{n}}\|\bX\|_{\op}\leq C\right\} \cap \left\{\frac{1}{n}\#\{j:|\hat t_j|\geq 1 - \Delta/2\}\leq 1 - \zeta/2\right\}.$$

$\left\{\frac{1}{\sqrt{n}}\kappa_-(\bX,n(1-\zeta/4))\geq\kappa_{\min}\right\}$ happens with high probability according to Assumption \ref{assumptions}(2). $\left\{\frac{1}{\sqrt{n}}\|\bX\|_{\op}\leq C\right\}$ happens with high probability by Corollary \ref{largest_singular_value_universality}. Finally, \cite[Theorem E.5]{miolane2021distribution} can be extended to the universality version in the same way as we extended Section \ref{subsec:proof:lemma:marginal_single} in Section \ref{proof:lemma:marginal_single_universality}. As a direct corollary, $\left\{\frac{1}{n}\#\{j:|\hat t_j|\geq 1 - \Delta/2\}\leq 1 - \zeta/2\right\}$ happens with high probability.

For Lemma \ref{lemma:closeness_lasso_sparsity_perturbed}, we need to extend the critical events \eqref{lasso_sparsity_perturbed_event1}, \eqref{lasso_sparsity_perturbed_event2}, \eqref{lasso_sparsity_perturbed_event3}. \eqref{lasso_sparsity_perturbed_event1} follows naturally from the proof of the extended \ref{lemma:closeness_lasso_objective}. \eqref{lasso_sparsity_perturbed_event2} is a Corollary of \cite[Theorem F.4]{miolane2021distribution}, which again can be extended to the universality version in the same way as we extended Section \ref{subsec:proof:lemma:marginal_single} in Section \ref{proof:lemma:marginal_single_universality}. Lastly \eqref{lasso_sparsity_perturbed_event3} is stated in Assumption \ref{assumptionA}(2).

For Lemma \ref{lemma:closeness_lasso_subgradient_perturbed}, consider the critical events \eqref{lasso_gradient_perturbed_event1}, \eqref{lasso_gradient_perturbed_event2}, \eqref{lasso_gradient_perturbed_event3}. Proof of \eqref{lasso_gradient_perturbed_event1} remains unchanged (given Corollary \ref{largest_singular_value_universality}). \eqref{lasso_gradient_perturbed_event2} is a Corollary of \cite[Lemma E.9]{miolane2021distribution}, which can be similarly extended. Proof of \eqref{lasso_gradient_perturbed_event3} remains unchanged as well.

\end{document}